\newtheorem{theorem}{Theorem}
\newtheorem{corollary}[theorem]{Corollary}
\newtheorem{lemma}[theorem]{Lemma}
\theoremstyle{definition}
\newtheorem{assumption}[theorem]{Assumption}
\theoremstyle{remark}
\newtheorem{remark}[theorem]{Remark}
\numberwithin{equation}{section}
\numberwithin{theorem}{section}
\DeclareMathOperator{\Pas}{\mathds{P}\text{-as}}
\DeclareMathOperator*{\argmin}{argmin}
\DeclareMathOperator*{\Argmin}{Argmin}
\DeclareMathOperator*{\supp}{supp}
\DeclareMathOperator{\VaR}{VaR}
\DeclareMathOperator{\ES}{ES}
\DeclareMathOperator{\oo}{o}
\DeclareMathOperator{\OO}{O}
\DeclareMathOperator{\Cost}{Cost}
\DeclareMathOperator{\minimize}{minimise}
\DeclareMathOperator{\subjectTo}{subject~to}
\DeclareMathOperator{\Var}{\mathbb{V}ar}
\title{Asymptotic Error Analysis of Multilevel Stochastic Approximations for the Value-at-Risk and Expected Shortfall\
}
\author{%
St{\'e}phane~Cr{\'e}pey\footnote{
Universit\'e Paris Cit\'e, Laboratoire de Probabilit\'es, Statistique et Mod\'elisation (LPSM), CNRS UMR 8001.
The research of S. Cr\'epey has benefited from the support of the Chair \textit{Capital Markets Tomorrow: Modeling and Computational Issues} under the aegis of the Institut Europlace de Finance, a joint initiative of Laboratoire de Probabilit\'es, Statistique et Mod\'elisation (LPSM) / Universit\'e Paris Cit\'e and Cr\'edit Agricole CIB.
\texttt{stephane.crepey@lpsm.paris}
}
\and
Noufel~Frikha\footnote{
Universit\'e Paris 1 Panth\'eon-Sorbonne, Centre d'Economie de la Sorbonne (CES), 106 Boulevard de l’H\^opital, 75642 Paris Cedex 13.
The research of N. Frikha has benefited from the support of the Institut Europlace de Finance.
\texttt{noufel.frikha@univ-paris1.fr}
}
\and
Azar~Louzi\footnote{Universit\'e Paris Cit\'e, CNRS, Laboratoire de Probabilit\'es, Statistique et Mod\'elisation. \texttt{azar.louzi@lpsm.paris}}
\and
Gilles~Pag{\`e}s\footnote{
Sorbonne Universit\'e, Laboratoire de Probabilit\'es, Statistique et Mod\'elisation (LPSM), CNRS UMR 8001, case 158, 4 place Jussieu, F-75252 Paris Cedex 5. \texttt{gilles.pages@sorbonne-universite.fr}
}
}
\date{December 23, 2024}
\begin{document}

\maketitle

\begin{center}
\begin{minipage}{.8\linewidth}
\small
{\bf Abstract.}
Cr\'epey, Frikha, and Louzi (2025) introduced a nested stochastic approximation algorithm and its multilevel acceleration to compute the value-at-risk and expected shortfall of a random financial loss. We hereby establish central limit theorems for the renormalized estimation errors associated with both algorithms as well as their averaged versions. Our findings are substantiated through a numerical example.
\medskip

\noindent
{\bf Keywords.}
value-at-risk~; expected shortfall~; stochastic approximation~; multilevel Monte Carlo~; Polyak-Ruppert averaging~; central limit theorem

\noindent
{\bf MSC.}
65C05~; 62L20~; 62G32~; 91Gxx

\noindent
{\bf DOI.}
10.1214/24-EJP1246
\quad
\noindent
{\bf arXiv.}
2311.15333
\quad
\noindent
{\bf HAL.}
04304985

\end{minipage}
\end{center}

\section*{Introduction}

Stochastic approximation (SA) methods aim to find a root of a given function of which only noisy observations are available~\cite{RM51}. When the noisy observations are biased, with a bias that is reducible at an additional computational cost, nested SA (NSA) algorithms naively inject them into the considered SA scheme but score a complexity that increases significantly with lower biases~\cite{10.1214/15-AAP1109}. Multilevel stochastic approximation (MLSA) schemes accelerate such approaches by first producing a highly biased estimator at low cost, then correcting it incrementally with paired estimators of lower and lower biases~\cite{10.1214/15-AAP1109}.
They often score a significantly lower complexity than their nested counterparts when employed to attain some prescribed accuracy~\cite{10.1214/15-AAP1109,doi:10.1137/18M1173186,Dereich2019}.

The value-at-risk (VaR) and expected shortfall (ES) currently stand as the most widely used risk metrics in finance~\cite{BISFRTB} and in insurance~\cite{Solvency2,SolvencyII}.
An MLSA scheme was introduced in~\cite{mlsa} to estimate the VaR and ES of a random loss that is written as a conditional expectation, but not necessarily in a closed form.
The multilevel feature therein deals with the loss's needed amounts of inner Monte Carlo samplings, whereby biased noises emanate. Theoretical $L^2(\mathbb{P})$-error and complexity analyses, along with numerical case studies, have demonstrated the overperformance of the MLSA algorithm in comparison with its nested counterpart.

However, \cite{mlsa} does not delve into analyzing the asymptotic estimation errors, which are paramount to derive trust regions and confidence intervals for the VaR and ES.
Although a generic central limit theorem (CLT) for MLSA is already available in~\cite[Theorem~2.11]{10.1214/15-AAP1109}, our case is more delicate as it involves an objective function that is not strongly convex. Besides, unlike~\cite[Theorem~3.2 \&~3.3]{barrera:hal-01710394} (that rely on~\cite[Theorems~2.1 \&~3.2]{fort2015central}), our MLSA method employs a two-time-scale scheme for the VaR and ES.
Indeed, in line with~\cite{ebc8dfc8-2324-3c9a-9ce8-b1bdaf4c993d,e3e3cb93-9ed3-368b-85cf-e5a71e8cb33a}, it uses a slower learning rate $\{\gamma_n,n\geq1\}$ for the VaR, more accurate and independent of the ES, and a faster learning rate for the latter. But due to the nested nature of our framework, the results of the two aforementioned papers do not directly apply to our setting, rendering a thorough analysis necessary to derive CLTs for our NSA and MLSA estimators.
Furthermore, the optimal convergence rates of our algorithms are attained by selecting $\{\gamma_n=\gamma_1 n^{-1},n\geq1\}$, $\gamma_1>0$, as the VaR learning rate, with a non trivial constraint on $\gamma_1$~\cite{mlsa}.
This classical constraint is classicallly circumvented by the Polyak-Ruppert averaging principle~\cite{ruppert1991handbook,doi:10.1137/0330046}, hence motivating further asymptotic analyses of averaged versions of our algorithms.
\\

Throughout, we consider a probability space $(\Omega,\mathcal{F},\mathbb{P})$ that is rich enough to support all the random variables defined hereafter.
The VaR $\xi^0_\star$ and ES $\chi^0_\star$ of a random loss represented by an $L^1(\mathbb{P})$ real-valued random variable $X_0$, at some confidence level $\alpha\in(0,1)$, are defined by (\cite{doi:10.1002/9780470061602.eqf15003,ACERBI20021487})
\begin{align}
\xi^0_\star
&=\VaR_\alpha(X_0)
\coloneqq\inf{\big\{\xi\in\mathbb{R}\colon\mathbb{P}(X_0\leq\xi)\geq\alpha\big\}},\nonumber\\
\chi^0_\star
&=\ES_\alpha(X_0)
\coloneqq\frac1{1-\alpha}\int_\alpha^1\VaR_a(X_0)\mathrm{d}a
=\mathbb{E}[X_0\,|\,X_0\geq\xi^0_\star]
=\frac1{1-\alpha}\mathbb{E}[X_0\mathds{1}_{X_0\geq\xi^0_\star}].\label{eq:ES}
\end{align}
In some literature, these two quantities are referred to as the quantile and super-quantile
~\cite{10.1214/21-EJP648,10.1214/21-EJS1908}.
They can famously be retrieved as solutions to the problem (\cite[Theorem~1]{10.21314/JOR.2000.038}, \cite[Section~4.3]{10.1111/j.1467-9965.2007.00311.x} and~\cite[Proposition~2.1]{10.1007/978-3-642-04107-5_11})
\begin{equation}
\label{eq:sa:opt}
\min_{\xi\in\mathbb{R}}{V_0(\xi)},
\quad\mbox{ where }\quad
V_0(\xi)\coloneqq\xi+\frac1{1-\alpha}\mathbb{E}[(X_0-\xi)^+].
\end{equation}
More precisely, if the cdf $F_{X_0}$ of $X_0$ is continuous, then $V_0$ is convex and continuously differentiable on $\mathbb{R}$, with $V_0'(\xi)=(1-\alpha)^{-1}(F_{X_0}(\xi)-\alpha)$, $\xi\in\mathbb{R}$.
If $F_{X_0}$ is additionally increasing, then
\begin{equation*}
\xi^0_\star=\argmin{V_0}
\quad\mbox{ and }\quad
\chi^0_\star=\min{V_0}.
\end{equation*}
If $X_0$ admits a continuous pdf $f_{X_0}$, then $V_0$ is twice continuously differentiable on $\mathbb{R}$ with
$V_0''(\xi)=(1-\alpha)^{-1}f_{X_0}(\xi)$, $\xi\in\mathbb{R}$.

Henceforth, we assume
\begin{equation}
\label{def:X0:cond:expect}
X_0=\mathbb{E}[\varphi(Y,Z)|Y],
\end{equation}
where $Y\in\mathbb{R}^d$ and $Z\in\mathbb{R}^q$ are independent random variables and $\varphi\colon\mathbb{R}^d\times\mathbb{R}^q\to\mathbb{R}$ is a measurable function such that $\varphi(Y,Z)\in L^1(\mathbb{P})$. In financial applications, $\varphi(Y,Z)$ models the future cash flows of a portfolio, $Y$ the underlying risk factors up to a certain time horizon, $Z$ the underlying risk factors beyond that time horizon, and $X_0$ the loss of the portfolio at that time horizon.

Under the approach initiated by~\cite{BardouFrikhaPages+2009+173+210} (see also~\cite{10.1007/978-3-642-04107-5_11,bardou2016cvar}, as well as~\cite{doi:10.1137/120903142} for shortfall risk measures), assuming that exact i.i.d.~samples of $X_0$ are available, one may compute the pair $(\xi^0_\star,\chi^0_\star)$ using a two-time-scale SA algorithm.
However, we do not assume neither the distribution of $\varphi(Y,Z)$ conditionally on $Y$ to be known, rendering $X_0$ incomputable in closed form, nor the availability of i.i.d.~simulations of $X_0$.
We rather consider an ordinary setting where the portfolio's risk factors $Y$ and $Z$ emanate from a model, hence are simulatable, and $\varphi$ describing the corresponding future cash flows is computable.
See Section~\ref{sec:swap} for a concrete example.
\\

The paper is structured as follows.
Sections~\ref{sec:nested}--\ref{sec:mlsa:avg} introduce the NSA, averaged NSA (ANSA), MLSA and averaged MLSA (AMLSA) algorithms, establish CLTs for their renormalized VaR and ES estimation errors (in Corollaries~\ref{crl:nested:clt} and~\ref{crl:avg:nested:clt} and Theorems~\ref{thm:ml:clt} and~\ref{thm:avg:ml:clt}) and discuss their complexities and related parametrizations (in Thoerems~\ref{cost:nsa}, \ref{cost:ansa}, \ref{thm:cost:ml} and~\ref{thm:cost:aml}).
Section~\ref{sec:swap} presents a numerical case study supporting the paper's theoretical findings.

\section{Nested Stochastic Approximation Algorithm}
\label{sec:nested}

The brute force, naively nested SA algorithm introduced in~\cite{mlsa} for solving \eqref{eq:sa:opt}--\eqref{def:X0:cond:expect} is based on defining the approximate problem
\begin{equation}
\label{eq:sa:nested:opt}
\min_{\xi\in\mathbb{R}}{V_h(\xi)},
\quad\mbox{ where }\quad
V_h(\xi)\coloneqq\xi+\frac1{1-\alpha}\mathbb{E}[(X_h-\xi)^+]
\end{equation}
and $X_h$ is the empirical mean approximation of $X_0$ given by
\begin{equation*}
X_h\coloneqq\frac1K\sum_{k=1}^K\varphi(Y,Z^{(k)}),
\quad
h=\frac1K\in
\mathcal{H}\coloneqq\Big\{\frac1{K'},K'\in\mathbb{N}_0\Big\},
\end{equation*}
and $\{Z^{(k)},1\leq k\leq K\}\stackrel{\text{\tiny\rm i.i.d.}}{\sim}Z$ are independent of $Y$.
$X_h$, $h\in\mathcal{H}$, represents a biased (Monte Carlo) estimator of $X_0$ that, in principle, converges $\Pas$ to $X_0$ under suitable assumptions via the conditional law of large numbers. Note that the notion of bias here is linked to the ensuing solution $(\xi^h_\star,\chi^h_\star)$ to \eqref{eq:sa:nested:opt}. Indeed, for $h\in\mathcal{H}$, if the cdf $F_{X_h}$ of $X_h$ is continuous and increasing, then
\begin{equation}
\label{eq:sa:nested:sol}
\xi^h_\star =\argmin{V_h}
\quad\mbox{ and }\quad
\chi^h_\star=\min{V_h}
\end{equation}
provide biased estimators of $\xi^0_\star$ and $\chi^0_\star$.
The bias of $(\xi^h_\star,\chi^h_\star)$ with respect to $(\xi^0_\star,\chi^0_\star)$ is controlled by the \emph{bias parameter} $h$: ideally, the smaller the $h$, the smaller the approximation bias, and vice versa.

Let $h\in\mathcal{H}$.
If the cdf $F_{X_h}$ of $X_h$ is continuous, then $V_h$ is continuously differentiable on $\mathbb{R}$, with
\begin{equation}
V_h'(\xi)=\frac1{1-\alpha}(F_{X_h}(\xi)-\alpha)=\mathbb{E}[H_1(\xi,X_h)],
\quad
H_1(\xi,x)=1-\frac1{1-\alpha}\mathds{1}_{x \geq\xi},
\quad\xi,x\in\mathbb{R}.
\label{eq:H1}
\end{equation}
If $X_h$ admits a continuous pdf $f_{X_h}$, then $V_h$ is twice continuously differentiable on $\mathbb{R}$ with
$V_h''(\xi)=(1-\alpha)^{-1}f_{X_h}(\xi)$, $\xi\in\mathbb{R}$.

In view of \eqref{eq:H1}, if $F_{X_h}$ is continuous and increasing, then $V_h$ is strictly convex so that $\xi^h_\star$ is unique and $\left\{\xi^{h}_\star\right\} = \left\{ V'_{h}=0\right\}$. This hints at a stochastic gradient descent type algorithm based on \eqref{eq:H1} to approximate $\xi^h_\star$. Assuming that such an algorithm is driven by innovations $\{X_h^{(n)},n\geq1\}\stackrel{\text{\tiny\rm i.i.d.}}{\sim}X_h$ and produces iterates $\{\xi^h_n,n\geq0\}$ converging to $\xi^h_\star$ $\Pas$, then \eqref{eq:sa:nested:opt}, \eqref{eq:sa:nested:sol}, the law of large numbers together with Ces\`aro's lemma yield under some suitable assumptions:
\begin{equation*}
\chi^h_\star
=\mathbb{E}\Big[\xi^h_\star+\frac1{1-\alpha}(X_h-\xi^h_\star)^+\Big]
=\lim_{n\to\infty}
\frac1n\sum_{k=1}^n\Big(\xi^h_{k-1}+\frac1{1-\alpha}(X_h^{(k)}-\xi^h_{k-1})^+\Big).
\end{equation*}
Thus $\chi^h_\star$ can be approximated by the sample mean on the right hand side above.
Define
\begin{equation*}
H_2(\xi,\chi,x)=\chi-\Big(\xi+\frac1{1-\alpha}(x-\xi)^+\Big),
\quad\xi,\chi,x\in\mathbb{R}.
\end{equation*}
Following the previous discussion, we derive the two-time-scale NSA dynamics
\begin{align}
\xi^h_{n+1}&=\xi^h_n-\gamma_{n+1}H_1(\xi^h_n,X_h^{(n+1)}),
\label{eq:sa:nested:alg:xi}\\
\chi^h_{n+1}&=\chi^h_n-\frac1{n+1}H_2(\xi^h_n,\chi^h_n,X_h^{(n+1)}),
\label{eq:sa:nested:alg:C}
\end{align}
$n\in\mathbb{N}$, driven by the innovations $\{X_h^{(n)},n\geq1\}\stackrel{\text{\tiny\rm i.i.d.}}{\sim}X_h$, starting from real-valued random variables $\xi^h_0$ and $\chi^h_0=0$, $\xi^h_0$ being independent from $\{X_h^{(n)},n\geq1\}$, and advancing at the positive learning rate $\{\gamma_n,n\geq1\}$ such that $\sum_{n=1}^\infty\gamma_n=\infty$ and $\sum_{n=1}^\infty\gamma_n^2<\infty$.

\subsection{Convergence Rate Analysis}

We postulate below some essential assumptions on $\{X_h,h\in\mathcal{H}\cup\{0\}\}$.
\begin{assumption}[{\cite[Assumptions~2.2 \&~2.5]{mlsa}}]\label{asp:misc}\
\begin{enumerate}[(i)]
    \item\label{asp:misc:i}
For any $h\in \mathcal{H}$, $F_{X_h}$ admits the first order Taylor expansion
\begin{equation*}
F_{X_h}(\xi)-F_{X_0}(\xi)=v(\xi)h+w(\xi,h)h,
\quad\xi\in\mathbb{R},
\end{equation*}
for some functions $v,w(\,\cdot\,,h)\colon\mathbb{R}\to\mathbb{R}$ satisfying, for any $\xi^0_\star\in\Argmin{V_0}$,
\begin{equation*}
\int^\infty_{\xi^0_\star}v(\xi)\mathrm{d}\xi<\infty
,\qquad
\lim_{\mathcal{H}\ni h\downarrow0}w(\xi^0_\star,h)=\lim_{\mathcal{H}\ni h\downarrow0}\int^\infty_{\xi^0_\star}w(\xi,h)\mathrm{d}\xi=0.
\end{equation*}

    \item\label{asp:misc:ii}
For any $h\in\mathcal{H}\cup\{0\}$, the law of $X_h$ admits a continuous pdf $f_{X_h}$ with respect to the Lebesgue measure. Moreover, the pdf $\{f_{X_h},h\in\mathcal{H}\}$ converge locally uniformly to $f_{X_0}$.

    \item\label{asp:misc:iii}
For any $R>0$,
\begin{equation*}
\inf_{\substack{h\in\mathcal{H}\cup\{0\}\\\xi\in B(\xi^0_\star,R)}}{f_{X_h}(\xi)}>0.
\end{equation*}

    \item\label{asp:misc:iv}
The pdf $\{f_{X_h},h\in\mathcal{H}\cup\{0\}\}$ are uniformly bounded and uniformly Lipschitz, namely,
\begin{equation*}
\sup_{h\in\mathcal{H}\cup\{0\}}{(\|f_{X_h}\|_\infty+[f_{X_h}]_\mathrm{Lip})}<\infty,
\end{equation*}
where $[f_{X_h}]_\text{\rm Lip}$ denotes the Lipschitz constant of $f_{X_h}$, $h\in\mathcal{H}\cup\{0\}$.
\end{enumerate}
\end{assumption}

\begin{remark}\label{rmk:nested:misc}
\begin{enumerate}[\it (i)]
\item (\cite[Remarks~2.3 \&~2.6]{mlsa}).\;
Assumptions~\ref{asp:misc}(\ref{asp:misc:i}) and~\ref{asp:misc}(\ref{asp:misc:ii}) are guaranteed under the conditions of~\cite[Propositions~5.1(a) \&~5.1(b)]{Giorgi2020}.
The last part of Assumption~\ref{asp:misc}(\ref{asp:misc:i}) reads $w(\xi^0_\star,h)=\int_{\xi^0_\star}^\infty w(\xi,h)\mathrm{d}\xi=\oo(1)$ as $\mathcal{H}\ni h\downarrow0$, $\xi^0_\star\in\Argmin{V_0}$.

Assumption~\ref{asp:misc}(\ref{asp:misc:iii}) is natural in view of Assumption~\ref{asp:misc}(\ref{asp:misc:ii}) and an increasing $F_{X_0}$.
It may be weakened to $f_{X_0}(\xi^0_\star)>0$, but the current formulation eases the obtention of the Lyapunov properties~\cite[Lemma~D.1]{mlsa} that play a central role in controlling the statistical error~\cite[Theorem~2.7]{mlsa} (Lemma~\ref{lmm:error}(\ref{lmm:error:statistical}), especially the constant defined in \eqref{e:thelambda} whose value must be positive).

Assumption~\ref{asp:misc}(\ref{asp:misc:iv}) is natural in view of Assumption~\ref{asp:misc}(\ref{asp:misc:ii}).

\item
\label{rmk:V''(xi*)>0}
Assumptions~\ref{asp:misc}(\ref{asp:misc:ii}) and~\ref{asp:misc}(\ref{asp:misc:iii}) imply that, for any $h\in\mathcal{H}\cup\{0\}$ and any $\xi^h_\star\in\Argmin{V_h}$, $f_{X_h}(\xi^h_\star)>0$, hence $V_h''(\xi^h_\star)>0$ and $\Argmin{V_h}$ is reduced to a singleton $\{\xi^h_\star\}$.
\end{enumerate}
\end{remark}

The global error of the NSA scheme \eqref{eq:sa:nested:alg:xi}-\eqref{eq:sa:nested:alg:C} writes as a sum of statistical and bias errors:
\begin{align*}
\xi^h_n-\xi^0_\star
&=\big(\xi^h_n-\xi^h_\star\big)+(\xi^h_\star-\xi^0_\star),
\\
\chi^h_n-\chi^0_\star
&=\big(\chi^h_n-\chi^h_\star\big)+(\chi^h_\star-\chi^0_\star).
\end{align*}

\begin{assumption}\label{asp:supE[sup]}
Denoting $c_\alpha=1\vee\frac\alpha{1-\alpha}$, it holds that
\begin{equation*}
\sup_{h\in\mathcal{H}}{\mathbb{E}\Big[|\xi^h_0|^2\exp\Big(\frac4{1-\alpha}c_\alpha \sup_{h\in\mathcal{H}}\|f_{X_h}\|_\infty|\xi_0^h|\Big)\Big]}<\infty.
\end{equation*}
\end{assumption}
Set
\begin{equation}\label{e:thelambda}
\lambda=\inf_{h\in\mathcal{H}\cup\{0\}}\frac38V_h''(\xi^h_\star)\wedge \|V_h''\|_\infty\frac{V_h''(\xi^h_\star)^4}{4[V_h'']_\mathrm{Lip}^2}>0.
\end{equation}
We refer to Lemma~\ref{lmm:error} for bias and statistical error controls for the NSA scheme.
Our main CLT for the NSA scheme follows.
\begin{theorem}
\label{thm:nested:clt}
Suppose that Assumptions~\ref{asp:misc} and~\ref{asp:supE[sup]} hold, and that $\mathbb{E}[|\varphi(Y, Z)|^{2+\delta}]<\infty$ for some $\delta>0$.
If $\gamma_n=\gamma_1n^{-\beta}$, $n\geq1$, $\beta\in\big(\frac12,1\big]$, with $\lambda\gamma_1>1$ if $\beta=1$, then
\begin{equation*}
\begin{pmatrix}
h^{-\beta}\big(\xi^h_{\lceil h^{-2}\rceil}-\xi^h_\star\big)\\
h^{-1}\big(\chi^h_{\lceil h^{-2}\rceil}-\chi^h_\star\big)
\end{pmatrix}
\stackrel[]{\mathcal{L}}{\longrightarrow}\mathcal{N}(0,\Sigma_\beta)
\quad\mbox{ as }\quad\mathcal{H}\ni h\downarrow0,
\end{equation*}
where
\begin{equation}
\label{eq:Sigma}
\Sigma_\beta=\begin{pmatrix}
\frac{\alpha\gamma_1}{2f_{X_0}(\xi^0_\star)-\gamma_1^{-1}(1-\alpha)\mathds{1}_{\beta=1}}
&\alpha\frac{\chi^0_\star-\xi^0_\star}{f_{X_0}(\xi^0_\star)}\mathds{1}_{\beta=1}\\
\alpha\frac{\chi^0_\star-\xi^0_\star}{f_{X_0}(\xi^0_\star)}\mathds{1}_{\beta=1}
&\frac{\Var((X_0-\xi^0_\star)^+)}{(1-\alpha)^2}
\end{pmatrix}.
\end{equation}
\end{theorem}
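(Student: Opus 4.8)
The plan is to treat the recursions \eqref{eq:sa:nested:alg:xi}--\eqref{eq:sa:nested:alg:C} as a coupled two-time-scale scheme and to obtain the bivariate CLT by a Cram\'er--Wold reduction to one-dimensional martingale CLTs for triangular arrays, letting $n=\lceil h^{-2}\rceil\to\infty$ and $\mathcal H\ni h\downarrow0$ \emph{simultaneously}. Write $\zeta^h_n=\xi^h_n-\xi^h_\star$, $a_h=V_h''(\xi^h_\star)=(1-\alpha)^{-1}f_{X_h}(\xi^h_\star)>0$ (positive by Remark~\ref{rmk:nested:misc}), and $\psi(\xi,x)=\xi+\frac1{1-\alpha}(x-\xi)^+$, so that $\partial_\xi\psi(\xi,x)=H_1(\xi,x)$ and, by summing \eqref{eq:sa:nested:alg:C}, $\chi^h_n=\frac1n\sum_{k=1}^n\psi(\xi^h_{k-1},X^{(k)}_h)$ while $\chi^h_\star=\mathbb E[\psi(\xi^h_\star,X_h)]$. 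Throughout I rely on the uniform $L^2$ statistical-error controls of Lemma~\ref{lmm:error}, namely $\|\zeta^h_n\|_2=\OO(\gamma_n^{1/2})$ and $\|\chi^h_n-\chi^h_\star\|_2=\OO(n^{-1/2})$; these let me discard linearization remainders and replace the drift and noise coefficients by their values at $\xi^h_\star$.

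For the VaR marginal I linearize \eqref{eq:sa:nested:alg:xi}: since $V_h'(\xi^h_\star)=0$,
\[
\zeta^h_{n+1}=(1-\gamma_{n+1}a_h)\zeta^h_n-\gamma_{n+1}\Delta M^\xi_{n+1}+\gamma_{n+1}R^h_{n+1},
\qquad
\Delta M^\xi_{n+1}=H_1(\xi^h_n,X^{(n+1)}_h)-V_h'(\xi^h_n),
\]
with a remainder $R^h_{n+1}$ controlled in $L^2$ by Assumption~\ref{asp:misc}(\ref{asp:misc:iv}) and the $L^2$ rate. Solving yields $\zeta^h_n\simeq-\sum_{k\le n}\gamma_k\Pi^h_{k,n}\Delta M^\xi_k$ with $\Pi^h_{k,n}=\prod_{k<j\le n}(1-\gamma_j a_h)$, and a triangular-array martingale CLT gives the law of $h^{-\beta}\zeta^h_n$. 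The predictable bracket converges because the VaR noise variance is \emph{exactly} $\Var(H_1(\xi^h_\star,X_h))=\alpha/(1-\alpha)$ for every $h$ (a Bernoulli variance, as $\mathbb P(X_h\ge\xi^h_\star)=1-\alpha$) and $a_h\to a_0=(1-\alpha)^{-1}f_{X_0}(\xi^0_\star)$ by Assumption~\ref{asp:misc}(\ref{asp:misc:ii}); Lindeberg is immediate since $H_1$ is bounded. This produces the top-left entry of $\Sigma_\beta$: variance $\gamma_1\alpha/(2f_{X_0}(\xi^0_\star))$ for $\beta\in(\tfrac12,1)$, and $\alpha\gamma_1^2/(2\gamma_1f_{X_0}(\xi^0_\star)-(1-\alpha))$ for $\beta=1$, the constraint $\lambda\gamma_1>1$ (whence $2\gamma_1a_0>1$) ensuring positivity in the latter case.

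For the ES marginal I split
\[
\chi^h_n-\chi^h_\star=\frac1n\sum_{k=1}^n\big(\psi(\xi^h_\star,X^{(k)}_h)-\chi^h_\star\big)+\frac1n\sum_{k=1}^n\big(\psi(\xi^h_{k-1},X^{(k)}_h)-\psi(\xi^h_\star,X^{(k)}_h)\big).
\]
The first sum is an i.i.d.\ average; its $\sqrt n$-normalization converges, by the classical CLT and the uniform integrability afforded by $\mathbb E[|\varphi(Y,Z)|^{2+\delta}]<\infty$, to $\mathcal N(0,(1-\alpha)^{-2}\Var((X_0-\xi^0_\star)^+))$, the bottom-right entry of $\Sigma_\beta$. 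The second sum decomposes into a drift $\frac1n\sum_k(V_h(\xi^h_{k-1})-V_h(\xi^h_\star))\simeq\frac{a_h}{2n}\sum_k(\zeta^h_{k-1})^2=\OO(n^{-1}\sum_k\gamma_k)=\oo(n^{-1/2})$ for every $\beta>\tfrac12$, and a martingale part $\frac1n\sum_kH_1(\xi^h_\star,X^{(k)}_h)\zeta^h_{k-1}$ of variance $\OO(n^{-2}\sum_k\gamma_k)=\oo(n^{-1})$; both are negligible at the scale $h^{-1}\simeq\sqrt n$. Hence the ES marginal reduces to the i.i.d.\ average, independently of $\beta$.

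The off-diagonal entry, and simultaneously the main obstacle, comes from the cross-bracket of the VaR martingale $-\sum_k\gamma_k\Pi^h_{k,n}\Delta M^\xi_k$ and the ES average $\frac1n\sum_k(\psi(\xi^h_\star,X^{(k)}_h)-\chi^h_\star)$, both driven by $X^{(k)}_h$, with per-step cross-covariance
\[
\kappa_h=\mathrm{Cov}\big(H_1(\xi^h_\star,X_h),\psi(\xi^h_\star,X_h)\big)=-\frac{\alpha}{1-\alpha}(\chi^h_\star-\xi^h_\star),
\]
a short computation using $\mathbb E[H_1(\xi^h_\star,X_h)]=0$, $(X_h-\xi^h_\star)^+=\mathds 1_{X_h\ge\xi^h_\star}(X_h-\xi^h_\star)^+$ and $\mathbb E[(X_h-\xi^h_\star)^+]=(1-\alpha)(\chi^h_\star-\xi^h_\star)$. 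With $\sum_k\gamma_k\Pi^h_{k,n}\to a_0^{-1}$ and the normalizations $h^{-\beta}\simeq n^{\beta/2}$, $h^{-1}\simeq n^{1/2}$, the cross-covariance behaves like $-n^{(\beta-1)/2}\kappa_h/a_0$, vanishing for $\beta\in(\tfrac12,1)$ and tending to $-\kappa_0/a_0=\alpha(\chi^0_\star-\xi^0_\star)/f_{X_0}(\xi^0_\star)$ for $\beta=1$ --- exactly the factor $\mathds 1_{\beta=1}$ in $\Sigma_\beta$; a bivariate martingale CLT (Lindeberg again from the $(2+\delta)$-moment) then delivers the joint convergence. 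The hard part is the diagonal passage itself: the generic SA CLTs of~\cite{10.1214/15-AAP1109,fort2015central,barrera:hal-01710394} hold at a \emph{fixed} noise level, whereas here $a_h$, $\xi^h_\star$ and the law of $X_h$ all drift with $h$, so every estimate --- the $L^2$ rates, the bracket convergence, the Lindeberg bounds --- must be made uniform in $h\in\mathcal H$ and combined with the $h$-limits of Assumption~\ref{asp:misc}; moreover the lack of global strong convexity of $V_h$ and the non-smoothness of $H_1$ and $\psi$ forbid pointwise Taylor expansions, forcing the linearizations and the cross-bracket to be justified in $L^2$ via the density regularity of Assumption~\ref{asp:misc}(\ref{asp:misc:ii})--(\ref{asp:misc:iv}). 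The single most delicate point is the $\beta=1$ off-diagonal term, genuinely new to this nested two-time-scale setting, which requires tracking the exact recursion weighting $\gamma_k\Pi^h_{k,n}$ against the ES average rather than merely bounding a remainder.
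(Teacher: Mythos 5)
Your proposal is correct and follows essentially the same route as the paper's proof in Appendix~\ref{prf:nested:clt}: linearize and unroll the VaR recursion, discard the initialization, drift and noise-replacement remainders via the uniform $L^2$ rate of Lemma~\ref{lmm:error}(\ref{lmm:error:statistical}), reduce the ES error to the i.i.d.\ average at $\xi^h_\star$ by killing the $\theta$- and $\zeta$-type terms, and conclude with a martingale-array CLT whose limiting bracket (including the $\mathds{1}_{\beta=1}$ cross term via $\lim_n\sum_k\gamma_k\Pi_{k+1:n}=1/V_0''(\xi^0_\star)$ and $\mathrm{Cov}(\mathds{1}_{X_h\geq\xi^h_\star},(X_h-\xi^h_\star)^+)=\alpha\,\mathbb{E}[(X_h-\xi^h_\star)^+]$) reproduces \eqref{eq:Sigma} exactly. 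The only cosmetic differences are that the paper centers the linearization at $V_0''(\xi^0_\star)$ rather than $V_h''(\xi^h_\star)$ (so that $\Pi_{k:n}$ is $h$-free and the limit $\Sigma_n\to\Sigma_\star$ is a purely deterministic computation) and invokes the vector martingale-array CLT of~\cite[Corollary~3.1]{nla.cat-vn2887492} directly instead of a Cram\'er--Wold reduction.
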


\begin{proof}
See Appendix~\ref{prf:nested:clt}.
\end{proof}

\begin{corollary}
\label{crl:nested:clt}
Within the framework of Theorem~\ref{thm:nested:clt},
\begin{equation*}
\begin{pmatrix}
h^{-\beta}\big(\xi^h_{\lceil h^{-2}\rceil}-\xi^0_\star\big)\\
h^{-1}\big(\chi^h_{\lceil h^{-2}\rceil}-\chi^0_\star\big)
\end{pmatrix}
\stackrel[]{\mathcal{L}}{\longrightarrow}\mathcal{N}\left(\begin{pmatrix}
-\frac{v(\xi^0_\star)}{f_{X_0}(\xi\star)}\mathds{1}_{\beta=1}\\
-\int_{\xi^0_\star}^\infty\frac{v(\xi)}{1-\alpha}\mathrm{d}\xi
\end{pmatrix},
\Sigma_\beta\right)
\quad\mbox{ as }\quad\mathcal{H}\ni h\downarrow0,
\end{equation*}
where $\Sigma_\beta$ is given by \eqref{eq:Sigma}.
\end{corollary}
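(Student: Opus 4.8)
The plan is to separate the global error into a statistical part, already controlled by Theorem~\ref{thm:nested:clt}, and a purely deterministic bias part, and then to conclude by Slutsky's lemma. Writing $n_h=\lceil h^{-2}\rceil$, I would decompose
\[
h^{-\beta}\big(\xi^h_{n_h}-\xi^0_\star\big)=h^{-\beta}\big(\xi^h_{n_h}-\xi^h_\star\big)+h^{-\beta}\big(\xi^h_\star-\xi^0_\star\big),
\]
\[
h^{-1}\big(\chi^h_{n_h}-\chi^0_\star\big)=h^{-1}\big(\chi^h_{n_h}-\chi^h_\star\big)+h^{-1}\big(\chi^h_\star-\chi^0_\star\big),
\]
the statistical summands jointly converging in law to $\mathcal{N}(0,\Sigma_\beta)$ by Theorem~\ref{thm:nested:clt}. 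Since the bias summands are deterministic, Slutsky's lemma reduces the whole statement to proving that the two renormalized biases converge to the announced mean vector; the substance of the proof is therefore this deterministic asymptotic analysis.

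For the VaR bias I would exploit that $\xi^h_\star$ and $\xi^0_\star$ solve $F_{X_h}(\xi^h_\star)=\alpha=F_{X_0}(\xi^0_\star)$ (Remark~\ref{rmk:nested:misc}(ii)). Subtracting and inserting $F_{X_0}(\xi^h_\star)$ gives
\[
0=\big(F_{X_h}(\xi^h_\star)-F_{X_0}(\xi^h_\star)\big)+\big(F_{X_0}(\xi^h_\star)-F_{X_0}(\xi^0_\star)\big).
\]
Assumption~\ref{asp:misc}(\ref{asp:misc:i}) rewrites the first bracket as $\big(v(\xi^h_\star)+w(\xi^h_\star,h)\big)h$, while a first order Taylor expansion of $F_{X_0}$ around $\xi^0_\star$ (legitimate since $F_{X_0}'=f_{X_0}$ is continuous) turns the second into $f_{X_0}(\xi^0_\star)(\xi^h_\star-\xi^0_\star)+\oo(\xi^h_\star-\xi^0_\star)$. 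Using $\xi^h_\star\to\xi^0_\star$ (bias control of Lemma~\ref{lmm:error}) and $w(\xi^h_\star,h)\to0$, solving for the bias yields $h^{-1}(\xi^h_\star-\xi^0_\star)\to-v(\xi^0_\star)/f_{X_0}(\xi^0_\star)$; multiplying by $h^{1-\beta}$ then produces the limit $-v(\xi^0_\star)f_{X_0}(\xi^0_\star)^{-1}\mathds{1}_{\beta=1}$, the factor $h^{1-\beta}\to0$ killing the term whenever $\beta<1$.

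For the ES bias I would start from $\chi^h_\star=V_h(\xi^h_\star)$, $\chi^0_\star=V_0(\xi^0_\star)$, and split
\[
\chi^h_\star-\chi^0_\star=\big(V_h(\xi^h_\star)-V_0(\xi^h_\star)\big)+\big(V_0(\xi^h_\star)-V_0(\xi^0_\star)\big).
\]
Because $V_0'(\xi^0_\star)=0$ and $V_0''=(1-\alpha)^{-1}f_{X_0}$ is bounded (Assumption~\ref{asp:misc}(\ref{asp:misc:iv})), the second bracket is $\OO((\xi^h_\star-\xi^0_\star)^2)=\OO(h^2)=\oo(h)$, hence negligible after renormalization. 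For the first bracket I would use the layer-cake identity $\mathbb{E}[(X-\xi)^+]=\int_\xi^\infty(1-F_X(x))\mathrm{d}x$ together with Assumption~\ref{asp:misc}(\ref{asp:misc:i}) to get
\[
V_h(\xi)-V_0(\xi)=-\frac{h}{1-\alpha}\int_\xi^\infty\big(v(x)+w(x,h)\big)\mathrm{d}x,
\]
so that evaluating at $\xi=\xi^h_\star\to\xi^0_\star$ and invoking $\int_{\xi^0_\star}^\infty v<\infty$ and $\int_{\xi^0_\star}^\infty w(\cdot,h)\to0$ from Assumption~\ref{asp:misc}(\ref{asp:misc:i}) yields $h^{-1}(\chi^h_\star-\chi^0_\star)\to-(1-\alpha)^{-1}\int_{\xi^0_\star}^\infty v(\xi)\mathrm{d}\xi$, as claimed.

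The main obstacle is this deterministic analysis rather than any probabilistic step: one must justify the passage $\int_{\xi^h_\star}^\infty\to\int_{\xi^0_\star}^\infty$ for both $v$ and $w(\cdot,h)$ while the lower endpoint drifts with $h$, and control the Taylor remainders uniformly in $h$. These are handled by the local integrability and uniform regularity hypotheses of Assumption~\ref{asp:misc} together with the convergence $\xi^h_\star\to\xi^0_\star$ furnished by Lemma~\ref{lmm:error}. Once the two deterministic limits are secured, a coordinatewise application of Slutsky's lemma to the decomposition above delivers the joint convergence to $\mathcal{N}$ with the stated mean vector and covariance $\Sigma_\beta$.
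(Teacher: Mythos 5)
Your proof is correct and follows the paper's argument exactly: decompose the error into the statistical part (handled by Theorem~\ref{thm:nested:clt}) and the deterministic bias, then conclude by Slutsky. The only difference is that the paper simply cites Lemma~\ref{lmm:error}(\ref{lmm:error:weak}) for the two renormalized bias limits, whereas you re-derive them from Assumption~\ref{asp:misc}(\ref{asp:misc:i}); note that your re-derivation implicitly uses continuity (or local boundedness) of $v$ at $\xi^0_\star$ and smallness of $w(\,\cdot\,,h)$ at the moving point $\xi^h_\star$, details which Assumption~\ref{asp:misc}(\ref{asp:misc:i}) does not literally supply but which are precisely what the cited lemma, proved in the companion paper, takes care of.
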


\begin{proof}
For $h\in\mathcal{H}$,
\begin{equation*}
\begin{pmatrix}
h^{-\beta}\big(\xi^h_{\lceil h^{-2}\rceil}-\xi^0_\star\big)\\
h^{-1}\big(\chi^h_{\lceil h^{-2}\rceil}-\chi^0_\star\big)
\end{pmatrix}
=\begin{pmatrix}
h^{-\beta}\big(\xi^h_{\lceil h^{-2}\rceil}-\xi^h_\star\big)\\
h^{-1}\big(\chi^h_{\lceil h^{-2}\rceil}-\chi^h_\star\big)
\end{pmatrix}
+\begin{pmatrix}
h^{-\beta}(\xi^h_\star-\xi^0_\star)\\
h^{-1}(\chi^h_\star-\chi^0_\star)
\end{pmatrix}.
\end{equation*}
By Lemma~\ref{lmm:error}(\ref{lmm:error:weak}),
\begin{equation*}
\begin{pmatrix}
h^{-\beta}(\xi^h_\star-\xi^0_\star)\\
h^{-1}(\chi^h_\star-\chi^0_\star)
\end{pmatrix}
\to
-\begin{pmatrix}
\frac{v(\xi^0_\star)}{f_{X_0}(\xi\star)}\mathds{1}_{\beta=1}\\
\int_{\xi^0_\star}^\infty\frac{v(\xi)}{1-\alpha}\mathrm{d}\xi
\end{pmatrix}
\quad\mbox{ as }\quad\mathcal{H}\ni h\downarrow0.
\end{equation*}
Combining this limit with Theorem~\ref{thm:nested:clt} yields the desired result.
\end{proof}

Some comments are in order.

\begin{remark}\label{rmk:nsa:cv}
\begin{enumerate}[(i)]
\item\label{rmk:beta}
Unlike the non-asymptotic controls derived in~\cite[Theorem~2.7]{mlsa}, which are valid for all $\beta\in(0,1]$, the CLTs above are only valid for $\beta\in\big(\frac12,1\big]$.

As in~\cite{mlsa}, the bias parameter $h\in\mathcal{H}$ is used to control the bias error, the iterations amount and the convergence speed of NSA altogether.

Note that in Corollary~\ref{crl:nested:clt}, the bias parameter $h\in\mathcal{H}$ tends to $0$ while simultaneously the SA iterations amount $n=\lceil h^{-2}\rceil$ tends to $\infty$.

While the ES converges in $\OO(h)$ for any $\beta\in\big(\frac12,1\big]$, the VaR converges in $\OO(h^\beta)$, only reaching optimality at $\beta=1$. This comes at the cost of an additional non trivial constraint on $\gamma_1$ stipulating that $\lambda\gamma_1>1$, the value of $\lambda$ being inaccessible in view of \eqref{e:thelambda}.
Such restriction is classical in the stochastic approximation literature~\cite{D96}.

\item\label{rmk:nsa:cv:ii}
The VaR error's variance factor in \eqref{eq:Sigma} is affected by the choice of $\gamma_1$. 

When $\beta\in\big(\frac12,1\big)$, this factor is linear in $\gamma_1$, suggesting to take $\gamma_1$ small. But, by \eqref{eq:sa:nested:alg:xi}, $\lim_{\gamma_1\downarrow0}\xi^h_n=\xi^h_0$, i.e.~the algorithm does not learn for too small a value of $\gamma_1$. A fine-tuning of $\gamma_1$ must therefore be carried out.

The optimal setting $\beta=1$ affects this factor differently.
While it evolves in $\OO(\gamma_1)$ for $\gamma_1\uparrow\infty$, it evolves in $\OO\big(\big(\gamma_1-\frac{1-\alpha}{2f_{X_0}(\xi^0_\star)}\big)^{-1}\big)$ when $\gamma_1\downarrow\frac{1-\alpha}{2f_{X_0}(\xi^0_\star)}$, reflecting a high numerical instability.
Taking $\gamma_1=\frac{1-\alpha}{f_{X_0}(\xi^0_\star)}$ to set this factor to its minimum $\frac{\alpha(1-\alpha)}{f_{X_0}(\xi^0_\star)^2}$ is infeasible as neither $f_{X_0}$ nor $\xi^0_\star$ are a priori accessible.
Hence an upstream fine-tuning of $\gamma_1$ is required.

\item
The ES error's variance factor is independent of $\gamma_1$, reflecting a stronger numerical stability for the ES estimation.
This is to be nuanced, as ES dependency on $\gamma_1$ is dissimulated in secondary error terms in \eqref{eq:chih-chi*} that vanish faster than the main martingale component giving the CLT~\ref{crl:nested:clt}.

The choice $\beta=1$ affects the quality of the ES estimation, as the off-diagonal terms correlate positively the numerical instability of the VaR with the ES.

\item\label{rmk:nsa:cv:iv}
Although analytic, the covariance matrix \eqref{eq:Sigma} is not computable, as it depends on $f_{X_0}$, $\xi^0_\star$ and $\chi^0_\star$ that are a priori unknown. It must therefore be statistically estimated using several samplings of the renormalized estimation errors, in order to get confidence regions.
Alternatively, as in~\cite[Remark~3.4]{10.1214/21-EJP648}, the ES variance factor can be estimated using the Monte Carlo approximation
\begin{equation}
\label{eq:Var(X0-xi0)}
\Var\big((X_0-\xi^0_\star)^+\big)\approx\frac1{\lceil h^{-2}\rceil}\sum_{k=1}^{\lceil h^{-2}\rceil}\big((X_h^{(k)}-\xi^h_{k-1})^+\big)^2-\Big(\frac1{\lceil h^{-2}\rceil}\sum_{k=1}^{\lceil h^{-2}\rceil}(X_h^{(k)}-\xi^h_{k-1})^+\Big)^2.
\end{equation}
Besides, $f_{X_h}$ could be kernel-fitted by a $\widehat{f}_{X_h}$ so that e.g.~$f_{X_0}(\xi^0_\star)\approx\widehat{f}_{X_h}(\xi^h_{\lceil h^{-2}\rceil})$.
\end{enumerate}
\end{remark}

\begin{remark}\label{rmk:1.6:v-ix}
\begin{enumerate}[(i)]
\item\label{rmk:unbiased:1}
\cite[Theorem~1]{e3e3cb93-9ed3-368b-85cf-e5a71e8cb33a} provides a two-time-scale CLT for the unbiased case where exact simulations of $X_0$ are assumed to exist. Note that plugging a biased estimator $X_h$ therein does not achieve the sought concurrent control on the bias and iterations amount, as elaborated in Remark~\ref{rmk:nsa:cv}(\ref{rmk:beta}).

Besides, our method avoids the requirement of~\cite[Assumption~A4(iii)]{e3e3cb93-9ed3-368b-85cf-e5a71e8cb33a} that the martingale in Step~7 of Theorem~\ref{thm:nested:clt}'s proof, Appendix~\ref{prf:nested:clt}, which controls Corollary~\ref{crl:avg:nested:clt}, be $\frac2\beta$-integrable.

We also refer to~\cite[Theorem~4.1]{ebc8dfc8-2324-3c9a-9ce8-b1bdaf4c993d} on a two-time-scale unbiased CLT, with a linear dependency on the iterates for the update rules.

\item\label{rmk:unbiased:2}
\cite[Theorem~1.3]{10.1214/21-EJS1908} relies on~\cite[Theorem~1]{e3e3cb93-9ed3-368b-85cf-e5a71e8cb33a} to analyze a two-time-scale VaR-ES unbiased scheme.
Unlike \eqref{eq:sa:nested:alg:C}, their ES recursion writes
\begin{equation}\label{eq:ES:MC}
\chi^0_{n+1}=(1-b_{n+1})\chi^0_n+b_{n+1}X_0^{(n+1)}\mathds{1}_{X_0^{(n+1)}>\xi^0_n},
\end{equation}
echoing the ES definition \eqref{eq:ES}. The $\{\xi^0_k,k\geq1\}$ are supposingly iterates produced by \eqref{eq:sa:nested:alg:xi} by swapping $X_h$ with an exact simulation of $X_0$ and the $\{b_n,n\geq1\}$ are damping weights.

\item
The techniques employed in the aforementioned literature linearize the statistical error into a martingale and a drift terms. An $L^2(\mathbb{P})$-control guarantees that the drift vanishes faster than the martingale, which in turn yields the CLT via Lyapunov's theorem.

Due to our nested strategy, the linearization derived in Appendix~\ref{prf:nested:clt} entails two martingale and two drift terms. The first pair of drift and martingale terms deal with the effect of the bias parameter $h$ and the second one concerns the iterations amount $n=\lceil h^{-2}\rceil$.

Besides, we invoke the celebrated martingale arrays CLT~\cite[Corollary~3.1]{nla.cat-vn2887492}, a doubly indexed generalization of Lyapunov's theorem that nicely suits our multilevel analyses in subsequent sections.

\item
In a different direction, \cite[Theorem~2]{costa2024cvrpenalizedportfoliooptimization} proves the convergence of a stochastic mirror descent algorithm for portfolio optimization under an ES constraint with the presence of a time discretization bias. The biases are assumed to be adaptively controlled by bias parameters $\{h_n,n\geq1\}$ such that $\sum_{n=1}^\infty h_n<\infty$.

This is not the case with our scheme. Although the CLT varies both the bias and iterations amounts, running the algorithm fixes a single $h$ throughout all iterations. We refer to Remark~\ref{rmk:nested:cost}(\ref{rmk:2time}) for further comments in this regard.

\item
\cite[Theorem~2.7]{10.1214/15-AAP1109} presents a biased CLT similar to ours, however, it relies on the mean-reverting property~\cite[Assumption~(HMR)]{10.1214/15-AAP1109}, stipulating that the objective function being minimized is strongly convex. This is not the case with \eqref{eq:sa:nested:opt}.
Moreover, the aforementioned result concerns only the VaR component of our NSA scheme.

\end{enumerate}
\end{remark}

\subsection{Complexity Analysis}
Next, we discuss the complexity of NSA stemming from our CLTs.

\begin{theorem}
\label{cost:nsa}
Let $\varepsilon\in(0,1)$ be a prescribed accuracy. Within the framework of Theorem~\ref{thm:nested:clt}, setting
\begin{equation*}
h=\frac1{\big\lceil\varepsilon^{-\frac1\beta}\big\rceil}\sim\varepsilon^\frac1\beta
\quad\mbox{ and }\quad
n=\lceil h^{-2}\rceil=\lceil\varepsilon^{-\frac1\beta}\rceil^2\sim\varepsilon^{-\frac2\beta}
\end{equation*}
achieves a convergence rate in distribution for the NSA scheme \eqref{eq:sa:nested:alg:xi}-\eqref{eq:sa:nested:alg:C} of order $\varepsilon$ as $\varepsilon\downarrow0$.
The corresponding complexity satisfies
\begin{equation*}
\Cost^\beta_\text{\rm\tiny NSA}\leq
C\varepsilon^{-\frac3\beta},
\end{equation*}
for some positive constant $C<\infty$.
The optimal complexity, reached for $\beta=1$ under the constraint $\lambda\gamma_1>1$, satisfies
\begin{equation*}
\Cost^1_\text{\rm\tiny NSA}\leq C\varepsilon^{-3}.
\end{equation*}
\end{theorem}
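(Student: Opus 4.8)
The plan is to read the two estimation rates directly off the central limit theorem and then balance them against the per-sample cost; no new analytic input is needed beyond Corollary~\ref{crl:nested:clt}. That corollary asserts that $h^{-\beta}\big(\xi^h_{\lceil h^{-2}\rceil}-\xi^0_\star\big)$ and $h^{-1}\big(\chi^h_{\lceil h^{-2}\rceil}-\chi^0_\star\big)$ converge in law to a proper (non-degenerate) Gaussian vector as $\mathcal{H}\ni h\downarrow0$. Since a convergent-in-law sequence with a tight limit is itself tight, the renormalized errors are bounded in probability, whence
\[
\xi^h_{\lceil h^{-2}\rceil}-\xi^0_\star=\OO_{\mathbb{P}}(h^\beta),
\qquad
\chi^h_{\lceil h^{-2}\rceil}-\chi^0_\star=\OO_{\mathbb{P}}(h).
\]
As $\beta\le1$, the VaR component converges at the slower rate $h^\beta\ge h$ and is therefore the binding one.

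First I would fix $h$ by equalizing the slower rate with the target, imposing $h^\beta\sim\varepsilon$, i.e. $h=1/\lceil\varepsilon^{-\frac1\beta}\rceil\sim\varepsilon^{\frac1\beta}$, exactly as in the statement. With this choice the VaR error is $\OO_{\mathbb{P}}(h^\beta)=\OO_{\mathbb{P}}(\varepsilon)$, while the ES error is $\OO_{\mathbb{P}}(h)=\OO_{\mathbb{P}}(\varepsilon^{\frac1\beta})$, which is $\OO_{\mathbb{P}}(\varepsilon)$ (indeed $\oo_{\mathbb{P}}(\varepsilon)$ when $\beta<1$) because $1/\beta\ge1$ and $\varepsilon\in(0,1)$. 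Both components thus attain a distributional accuracy of order $\varepsilon$, which establishes the claimed convergence rate.

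Second, I would tally the cost. Each fresh innovation $X_h^{(n)}$ is an empirical mean over $K=1/h$ inner draws of $\varphi$, so producing one innovation costs $K=h^{-1}$ evaluations of $\varphi$, taken as the unit of cost; the two-time-scale recursion \eqref{eq:sa:nested:alg:xi}--\eqref{eq:sa:nested:alg:C} consumes exactly one innovation per step over $n=\lceil h^{-2}\rceil$ steps. Hence
\[
\Cost^\beta_\text{\rm\tiny NSA}=n\cdot K=\lceil h^{-2}\rceil\,h^{-1}=\lceil\varepsilon^{-\frac1\beta}\rceil^3\le C\varepsilon^{-\frac3\beta},
\]
where the last bound uses $\lceil\varepsilon^{-\frac1\beta}\rceil\le2\varepsilon^{-\frac1\beta}$ for $\varepsilon\in(0,1)$, so that $C=8$ works. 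Since $\beta\mapsto3/\beta$ is decreasing on $\big(\frac12,1\big]$, the exponent is minimized at $\beta=1$, giving the optimal bound $\Cost^1_\text{\rm\tiny NSA}\le C\varepsilon^{-3}$; this endpoint is admissible precisely under the extra constraint $\lambda\gamma_1>1$ required by Theorem~\ref{thm:nested:clt} when $\beta=1$.

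There is no genuine analytic obstacle at this stage: the heavy lifting resides in Theorem~\ref{thm:nested:clt} and Corollary~\ref{crl:nested:clt}, and what remains is a rate-balancing and sample-counting exercise. The two points deserving a little care are the passage from the CLT — a statement about \emph{renormalized} errors converging in law — to the $\OO_{\mathbb{P}}$ controls on the raw errors, which rests on the tightness of a convergent-in-law sequence, and the observation that, because VaR and ES converge at the distinct rates $h^\beta$ and $h$, it is the slower VaR rate that pins down the admissible $h$ and hence dictates the overall complexity $\varepsilon^{-3/\beta}$.
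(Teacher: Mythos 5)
Your proposal is correct and follows essentially the same route as the paper: read the rate $h^\beta$ (VaR, the binding component) off Corollary~\ref{crl:nested:clt}, set $h^\beta\le\varepsilon$, and bound the complexity by $C\,n/h$ with $n=\lceil h^{-2}\rceil$. The paper's proof is a terser version of exactly this argument; your added remarks on tightness giving the $\OO_{\mathbb{P}}$ control and on the per-innovation cost $K=h^{-1}$ are correct elaborations of what the paper leaves implicit.
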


\begin{proof}
The complexity of NSA satisfies
\begin{equation*}
\Cost^\beta_\text{\rm\tiny NSA}\leq C\frac{n}h,
\end{equation*}
for some positive constant $C<\infty$.
By Corollary~\ref{crl:nested:clt}, the convergence rate of NSA, with bias parameter $h$ and iterations amount $n=\lceil h^{-2}\rceil$, is of order $h^\beta$ as $\mathcal{H}\ni h\downarrow0$. The result follows by setting $h$ such that $h^\beta\leq\varepsilon$.
\end{proof}

\begin{remark}\label{rmk:nested:cost}
\begin{enumerate}[(i)]
\item
As in Corollary~\ref{crl:nested:clt}, NSA's optimal complexity is achieved when $\beta=1$, under the constraint $\lambda\gamma_1>1$.

\item
\label{rmk:2time}
As noted in~\cite[Remark~2.9]{mlsa}, unlike our NSA scheme, \cite[Algorithm~1]{barrera:hal-01710394} uses a single time scale for both the VaR and ES and, at the $n$-th iteration, uses a bias parameter $h_n$ such that $\mathcal{H}\ni h_n\downarrow0$ as $n\uparrow\infty$.

Although their approach results in an unbiased CLT~\cite[Theorem~3.2]{barrera:hal-01710394} (relying on~\cite[Theorem~2.1]{fort2015central}), it scores a complexity of order $\varepsilon^{-4}$.

As supported by Corollary~\ref{crl:nested:clt} and Theorem~\ref{cost:nsa}, our NSA scheme of single bias parameter $h$, not only achieves optimal convergence rate for the ES due to the two-time-scale strategy, but also runs in $\varepsilon^{-3}$ computational time.
\end{enumerate}
\end{remark}

\section{Averaged Nested Stochastic Approximation Algorithm}

According to Theorem~\ref{thm:nested:clt} and Corollary~\ref{crl:nested:clt}, the best CLT convergence rate is achieved for $\beta=1$, i.e.~by setting $\gamma_n=\gamma_1n^{-1}$ for the VaR NSA scheme.
This choice is only possible under the additional constraint $\lambda\gamma_1>1$, where $\lambda$ is an explicit yet inaccessible constant \eqref{e:thelambda}. In practice, $\gamma_1$ must be fine-tuned, adding a computational burden to the algorithm's usage.

To bypass this tuning issue, we explore the effect of the Polyak-Ruppert averaging principle~\cite{doi:10.1137/0330046,10.1214/15-AAP1109,ruppert1991handbook} on the VaR NSA scheme.
We follow in the footsteps of~\cite{10.1214/15-AAP1109,barrera:hal-01710394} and define, for a bias parameter $h\in\mathcal{H}$, the averaged VaR estimator
\begin{equation}
\label{eq:sa:avg:nested:alg:xi}
    \overline{\xi}^h_n=\frac1n\sum_{k=1}^n\xi^h_k =\Big(1-\frac1n\Big)\overline{\xi}^h_{n-1}+\frac1n\xi^h_n,
    \quad n\geq1,
\end{equation}
where $\overline{\xi}^h_0=0$ and $\{\xi^h_n,n\geq0\}$ are obtained via the NSA scheme \eqref{eq:sa:nested:alg:xi} with step sizes $\{\gamma_n=\gamma_1n^{-\beta},n\geq1\}$, $\gamma_1>0$, $\beta\in\big(\frac12,1\big)$.

\begin{remark}
Unlike~\cite[Theorem~3.3]{barrera:hal-01710394}, we do not average out the ES NSA estimators $\{\chi^h_n,0\leq n\leq\lceil h^{-2}\rceil\}$, inasmuch as in view of Theorem~\ref{thm:nested:clt}, their convergence rate of order $h$ is already optimal and their variance factor is independent of $\gamma_1$.
\end{remark}

\subsection{Convergence Rate Analysis}

\begin{theorem}
\label{thm:avg:nested:clt}
Within the framework of Theorem~\ref{thm:nested:clt}, if $\gamma_n=\gamma_1n^{-\beta}$, $n\geq1$, $\gamma_1>0$, $\beta\in\big(\frac12,1\big)$, then
\begin{equation*}
h^{-1}
\begin{pmatrix}
\overline{\xi}^h_{\lceil h^{-2}\rceil}-\xi^h_\star\\
\chi^h_{\lceil h^{-2}\rceil}-\chi^h_\star
\end{pmatrix}
\stackrel[]{\mathcal{L}}{\longrightarrow}\mathcal{N}(0,\overline\Sigma)
\quad\mbox{ as }\quad\mathcal{H}\ni h\downarrow0,
\end{equation*}
where
\begin{equation}
\label{eq:avg:Sigma}
\overline\Sigma=
\begin{pmatrix}
\frac{\alpha(1-\alpha)}{f_{X_0}(\xi^0_\star)^2}
&\frac\alpha{1-\alpha}\frac{\mathbb{E}[(X_0-\xi^0_\star)^+]}{f_{X_0}(\xi^0_\star)}\\
\frac\alpha{1-\alpha}\frac{\mathbb{E}[(X_0-\xi^0_\star)^+]}{f_{X_0}(\xi^0_\star)}
&\frac{\Var((X_0-\xi^0_\star)^+)}{(1-\alpha)^2}
\end{pmatrix}.
\end{equation}
\end{theorem}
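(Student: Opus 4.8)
The plan is to recycle the linearization of the NSA dynamics carried out in the proof of Theorem~\ref{thm:nested:clt} and graft onto it a Polyak--Ruppert averaging argument for the VaR component, the ES component being treated exactly as before since its recursion \eqref{eq:sa:nested:alg:C} is left unaveraged. Writing $\theta^h_n=\xi^h_n-\xi^h_\star$ and expanding $V_h'$ to first order around $\xi^h_\star$ (licit because $V_h\in C^2$ with Lipschitz second derivative by Assumption~\ref{asp:misc}(\ref{asp:misc:iv}), and $V_h''(\xi^h_\star)>0$ by Remark~\ref{rmk:V''(xi*)>0}), the VaR recursion \eqref{eq:sa:nested:alg:xi} rearranges into
\[
V_h''(\xi^h_\star)\,\theta^h_n=-\frac{\theta^h_{n+1}-\theta^h_n}{\gamma_{n+1}}-r^h_n-\Delta M^h_{n+1},
\]
where $\Delta M^h_{n+1}=H_1(\xi^h_n,X^{(n+1)}_h)-V_h'(\xi^h_n)$ is a martingale increment for the natural filtration and $r^h_n=V_h'(\xi^h_n)-V_h''(\xi^h_\star)\theta^h_n=\OO\big((\theta^h_n)^2\big)$. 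Averaging over $n=0,\dots,N-1$ with $N=\lceil h^{-2}\rceil$ and recalling $V_h''(\xi^h_\star)\to V_0''(\xi^0_\star)=(1-\alpha)^{-1}f_{X_0}(\xi^0_\star)$ as $\mathcal{H}\ni h\downarrow0$ (Assumption~\ref{asp:misc}(\ref{asp:misc:ii})),
\[
V_h''(\xi^h_\star)\,\frac1N\sum_{n=0}^{N-1}\theta^h_n=-\frac1N\sum_{n=0}^{N-1}\Delta M^h_{n+1}-\frac1N\sum_{n=0}^{N-1}r^h_n-\frac1N\sum_{n=0}^{N-1}\frac{\theta^h_{n+1}-\theta^h_n}{\gamma_{n+1}}.
\]
Since $\frac1N\sum_{n=0}^{N-1}\theta^h_n=\overline\theta^h_N+N^{-1}(\theta^h_0-\theta^h_N)$ with a correction that is $\oo_{\mathbb P}(h)$, the normalized averaged VaR error $h^{-1}\overline\theta^h_N=h^{-1}(\overline\xi^h_N-\xi^h_\star)$ will be governed by the martingale average once the other two sums are shown negligible.

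Next I would show that the remainder and telescoping sums are $\oo_{\mathbb P}(h)=\oo_{\mathbb P}(N^{-1/2})$, and this is where the restriction $\beta\in\big(\frac12,1\big)$ enters, both endpoints being needed. The statistical error control of Lemma~\ref{lmm:error}(\ref{lmm:error:statistical}) furnishes uniform-in-$h$ bounds $\mathbb{E}[(\theta^h_n)^2]\lesssim\gamma_n=\gamma_1 n^{-\beta}$; hence $\frac1N\sum_n|r^h_n|\lesssim\frac1N\sum_n\gamma_n\asymp N^{-\beta}$, which is $\oo(N^{-1/2})$ precisely because $\beta>\frac12$. For the telescoping-type sum I apply Abel summation,
\[
\sum_{n=0}^{N-1}\frac{\theta^h_{n+1}-\theta^h_n}{\gamma_{n+1}}=\frac{\theta^h_N}{\gamma_N}-\frac{\theta^h_0}{\gamma_1}-\sum_{n=1}^{N-1}\theta^h_n\Big(\frac1{\gamma_{n+1}}-\frac1{\gamma_n}\Big),
\]
bounding each piece via $\theta^h_n=\OO_{\mathbb P}(\gamma_n^{1/2})$ together with $\gamma_{n+1}^{-1}-\gamma_n^{-1}\asymp\gamma_1^{-1}\beta\,n^{\beta-1}$: the dominant contribution $N^{-1}\theta^h_N/\gamma_N=\OO_{\mathbb P}(N^{\beta/2-1})$ is $\oo_{\mathbb P}(N^{-1/2})$ exactly because $\beta<1$, and the weighted sum obeys the same scaling. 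This yields $\overline\theta^h_N=-(V_h''(\xi^h_\star))^{-1}\frac1N\sum_{n=0}^{N-1}\Delta M^h_{n+1}+\oo_{\mathbb P}(h)$.

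Finally I would establish the joint CLT. Stacking the VaR increments $\Delta M^h_{n+1}$ with the ES martingale increments already isolated in the proof of Theorem~\ref{thm:nested:clt}, I form a bivariate, doubly indexed (in $h$ and $n$) martingale array normalized by $N^{-1/2}=h$ and apply the martingale arrays CLT~\cite[Corollary~3.1]{nla.cat-vn2887492}. Its Lindeberg condition follows from $\mathbb{E}[|\varphi(Y,Z)|^{2+\delta}]<\infty$ and the uniform integrability it provides, while the predictable bracket converges, by the conditional law of large numbers and the local uniform convergence $f_{X_h}\to f_{X_0}$ (Assumption~\ref{asp:misc}(\ref{asp:misc:ii})), to the noise covariance evaluated at $(\xi^0_\star,\chi^0_\star)$. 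A direct computation gives $\Var(H_1(\xi^0_\star,X_0))=\frac\alpha{1-\alpha}$, which after multiplication by $(V_0''(\xi^0_\star))^{-2}=\frac{(1-\alpha)^2}{f_{X_0}(\xi^0_\star)^2}$ produces the diagonal entry $\frac{\alpha(1-\alpha)}{f_{X_0}(\xi^0_\star)^2}$ of \eqref{eq:avg:Sigma}; the ES variance factor $\frac{\Var((X_0-\xi^0_\star)^+)}{(1-\alpha)^2}$ carries over unchanged from Theorem~\ref{thm:nested:clt}; and the off-diagonal entry stems from the noise covariance $\mathrm{Cov}\big(\mathds{1}_{X_0\geq\xi^0_\star},(X_0-\xi^0_\star)^+\big)=\alpha\,\mathbb{E}[(X_0-\xi^0_\star)^+]$, which after the same rescaling gives $\frac\alpha{1-\alpha}\frac{\mathbb{E}[(X_0-\xi^0_\star)^+]}{f_{X_0}(\xi^0_\star)}$. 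I expect the main obstacle to be the uniform-in-$h$ $L^2$ control of $\theta^h_n$ required to discard the Abel boundary and remainder terms at the scale $h$, performed jointly with the simultaneous passage $h\downarrow0$, $N=\lceil h^{-2}\rceil\to\infty$ through the martingale arrays framework rather than a fixed-array CLT, and the careful identification of the off-diagonal entry through the joint bracket.
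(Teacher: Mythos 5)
Your proposal follows essentially the same route as the paper's proof in Appendix~\ref{prf:avg:nested:clt}: the same linearization of \eqref{eq:sa:nested:alg:xi} around $\xi^h_\star$ divided by $V_h''(\xi^h_\star)$, the same Abel-summation treatment of the telescoping term $a^h_n$ (using Lemma~\ref{lmm:error}(\ref{lmm:error:statistical}) and $\beta<1$), the same $\OO(N^{-\beta})$ bound on the Taylor remainder (using $\beta>\tfrac12$), and the same bivariate martingale-array CLT via~\cite[Corollary~3.1]{nla.cat-vn2887492} with the identical bracket computations for the three entries of $\overline\Sigma$. The only cosmetic difference is that you keep the noise $H_1(\xi^h_n,X_h^{(n+1)})-V_h'(\xi^h_n)$ evaluated at the running iterate as a single martingale, whereas the paper first peels off the increment $\rho^h_n$ comparing $\xi^h_{n-1}$ to $\xi^h_\star$ and shows it is $L^2$-negligible before applying the CLT to the frozen-point term $e^h_n$; both are standard and equivalent here.
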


\begin{proof}
See Appendix~\ref{prf:avg:nested:clt}.
\end{proof}

\begin{corollary}
\label{crl:avg:nested:clt}
Within the framework of Theorem~\ref{thm:avg:nested:clt},
\begin{equation*}
h^{-1}
\begin{pmatrix}
\overline{\xi}^h_{\lceil h^{-2}\rceil}-\xi^0_\star\\
\chi^h_{\lceil h^{-2}\rceil}-\chi^0_\star
\end{pmatrix}
\stackrel[]{\mathcal{L}}{\longrightarrow}\mathcal{N}\left(\begin{pmatrix}
-\frac{v(\xi^0_\star)}{f_{X_0}(\xi\star)}\\
-\int_{\xi^0_\star}^\infty\frac{v(\xi)}{1-\alpha}\mathrm{d}\xi
\end{pmatrix},
\overline\Sigma\right)
\quad\mbox{ as }\quad\mathcal{H}\ni h\downarrow0,
\end{equation*}
where $\overline\Sigma$ is given by \eqref{eq:avg:Sigma}.
\end{corollary}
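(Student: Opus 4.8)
The plan is to mirror the proof of Corollary~\ref{crl:nested:clt}, decomposing the global error into an averaged statistical error and a deterministic bias error. For $h\in\mathcal{H}$ I would write
\begin{equation*}
h^{-1}
\begin{pmatrix}
\overline{\xi}^h_{\lceil h^{-2}\rceil}-\xi^0_\star\\
\chi^h_{\lceil h^{-2}\rceil}-\chi^0_\star
\end{pmatrix}
=h^{-1}
\begin{pmatrix}
\overline{\xi}^h_{\lceil h^{-2}\rceil}-\xi^h_\star\\
\chi^h_{\lceil h^{-2}\rceil}-\chi^h_\star
\end{pmatrix}
+h^{-1}
\begin{pmatrix}
\xi^h_\star-\xi^0_\star\\
\chi^h_\star-\chi^0_\star
\end{pmatrix},
\end{equation*}
so that the first summand isolates the statistical fluctuation governed by Theorem~\ref{thm:avg:nested:clt}, while the second summand carries only the deterministic bias introduced by the inner approximation $X_h$.

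For the bias term I would appeal to Lemma~\ref{lmm:error}(\ref{lmm:error:weak}). Because the averaging places the VaR estimator at the optimal $h^{-1}$ renormalization, I expect both coordinates of the bias to survive in the limit, giving
\begin{equation*}
h^{-1}
\begin{pmatrix}
\xi^h_\star-\xi^0_\star\\
\chi^h_\star-\chi^0_\star
\end{pmatrix}
\longrightarrow
-\begin{pmatrix}
\dfrac{v(\xi^0_\star)}{f_{X_0}(\xi^0_\star)}\\
\displaystyle\int_{\xi^0_\star}^\infty\frac{v(\xi)}{1-\alpha}\mathrm{d}\xi
\end{pmatrix}
\quad\mbox{ as }\quad\mathcal{H}\ni h\downarrow0.
\end{equation*}
This is precisely the $\beta=1$ specialization of the limit appearing in Corollary~\ref{crl:nested:clt}, now holding for every $\beta\in\big(\frac12,1\big)$.

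I would then conclude by Slutsky's theorem: Theorem~\ref{thm:avg:nested:clt} sends the statistical summand to $\mathcal{N}(0,\overline\Sigma)$, the bias summand converges to the deterministic vector above, and adding the two shifts the mean while leaving the covariance $\overline\Sigma$ unchanged. The one point requiring care is reconciling the surviving VaR bias here with its suppression by the factor $\mathds{1}_{\beta=1}$ in Corollary~\ref{crl:nested:clt}. This is not a contradiction but an effect of the differing renormalizations: in the non-averaged scheme the VaR error is scaled by $h^{-\beta}$, so the $\OO(h)$ bias contributes $\OO(h^{1-\beta})\to0$ unless $\beta=1$, whereas averaging restores the rate $h$ for the VaR statistical error (the content of Theorem~\ref{thm:avg:nested:clt}), forcing the $h^{-1}$ scaling at which the full $\OO(h)$ bias persists. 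Checking that Lemma~\ref{lmm:error}(\ref{lmm:error:weak}) indeed delivers this $h^{-1}$-scaled limit is the only nontrivial step; the remainder is a routine Slutsky argument identical to the one used for Corollary~\ref{crl:nested:clt}.
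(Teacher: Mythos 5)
Your proposal is correct and follows exactly the paper's own argument: the same decomposition into statistical and bias components, the same appeal to Lemma~\ref{lmm:error}(\ref{lmm:error:weak}) for the $h^{-1}$-scaled bias limit, and the same Slutsky-type conclusion combining it with Theorem~\ref{thm:avg:nested:clt}. Your side remark explaining why the VaR bias survives here (because averaging forces the $h^{-1}$ renormalization for both coordinates) but is killed by the $\mathds{1}_{\beta=1}$ factor in Corollary~\ref{crl:nested:clt} is accurate, though the paper does not spell it out.
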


\begin{proof}
For $h\in\mathcal{H}$, we decompose
\begin{equation*}
\begin{pmatrix}
\overline{\xi}^h_{\lceil h^{-2}\rceil}-\xi^0_\star\\
\chi^h_{\lceil h^{-2}\rceil}-\chi^0_\star
\end{pmatrix}
=\begin{pmatrix}
\overline{\xi}^h_{\lceil h^{-2}\rceil}-\xi^h_\star\\
\chi^h_{\lceil h^{-2}\rceil}-\chi^h_\star
\end{pmatrix}
+\begin{pmatrix}
\xi^h_\star-\xi^0_\star\\
\chi^h_\star-\chi^0_\star
\end{pmatrix}.
\end{equation*}
By Lemma~\ref{lmm:error}(\ref{lmm:error:weak}),
\begin{equation*}
h^{-1}\begin{pmatrix}
\xi^h_\star-\xi^0_\star\\
\chi^h_\star-\chi^0_\star
\end{pmatrix}
\to
-\begin{pmatrix}
\frac{v(\xi^0_\star)}{f_{X_0}(\xi\star)}\\
\int_{\xi^0_\star}^\infty\frac{v(\xi)}{1-\alpha}\mathrm{d}\xi
\end{pmatrix}
\quad\mbox{ as }\quad\mathcal{H}\ni h\downarrow0.
\end{equation*}
Using Theorem~\ref{thm:avg:nested:clt} and the above result concludes the proof.
\end{proof}

Let us comment the obtained CLT.

\begin{remark}
\begin{enumerate}[(i)]
\item
Unlike NSA, ANSA scores a convergence speed in $\OO(h)$ for both the VaR and ES, for any $\beta\in\big(\frac12,1\big)$.
The choice of $\beta$ in the range $\big(\frac12,1\big)$ is free of any constraint on the choice of $\gamma_1$.

\item
The covariance matrix \eqref{eq:avg:Sigma} is independent of $\gamma_1$, reflecting a strong numerical stability for ANSA. The VaR and ES estimates remain nonetheless asymptotically correlated.

The optimal variance factor for the NSA VaR estimate, derived in Remark~\ref{rmk:nsa:cv}(\ref{rmk:nsa:cv:ii}) for $\beta=1$, coincides with ANSA's variance factor for the VaR.

The ES variance factor is identical for both NSA and ANSA, since ANSA reuses the ES NSA scheme.

\item
Although the covariance matrix \eqref{eq:avg:Sigma} is not computable, it is approximatable.
See Remark~\ref{rmk:nsa:cv}(\ref{rmk:nsa:cv:iv}) for additional comments as well as an estimation formula of the ES variance factor.
\end{enumerate}
\end{remark}

\begin{remark}
\begin{enumerate}[(i)]
\item
\cite[Theorem~1.3]{10.1214/21-EJS1908} analyzes a two-time-scale unbiased algorithm where the ES update injects the averaged VaR estimator into \eqref{eq:ES:MC}. We refer to Remarks~\ref{rmk:1.6:v-ix}(\ref{rmk:unbiased:1})-(\ref{rmk:unbiased:2}) on the suitability of such result to our biased case.
The technique used to prove the aforementioned theorem relies on a spectral analysis of the algorithm's linearization, coupled with a tightness study of its diagonalized interpolation.

We also refer to~\cite{GADAT2023312} for non-asymptotic analyses of Polyak-Ruppert stochastic algorithms for strongly convex as well as globally {\L}ojasiewicz functions with a {\L}ojasiewicz exponent in $\big[0,\frac12\big]$.

\item
\cite[Theorem~2.8]{10.1214/15-AAP1109} establishes a CLT for biased Polyak-Ruppert schemes under a mean-reverting property~\cite[Assumption~(HMR)]{10.1214/15-AAP1109} that only deals with the VaR component of our ANSA scheme.

\item
See Remark~\ref{rmk:nested:cost}(\ref{rmk:2time}) for a comment on the absence of a bias in~\cite[Theorem~3.3]{barrera:hal-01710394} and its reliance on the unbiased averaged CLT~\cite[Theorem~3.2]{fort2015central}.

\end{enumerate}
\end{remark}

\subsection{Complexity Analysis}

\begin{theorem}
\label{cost:ansa}
Let $\varepsilon\in(0,1)$ be a prescribed accuracy. Within the framework of Theorem~\ref{thm:avg:nested:clt},
setting
\begin{equation*}
h=\frac1{\lceil\varepsilon^{-1}\rceil}\sim\varepsilon
\quad\mbox{ and }\quad
n=\lceil h^{-2}\rceil=\lceil\varepsilon^{-1}\rceil^2\sim\varepsilon^{-2}
\end{equation*}
achieves a convergence rate in distribution for the ANSA scheme \eqref{eq:sa:avg:nested:alg:xi}-\eqref{eq:sa:nested:alg:C} of order $\varepsilon$ as $\varepsilon\downarrow0$.
The corresponding complexity satisfies
\begin{equation*}
\Cost_\text{\rm\tiny ANSA}\leq C\varepsilon^{-3},
\end{equation*}
for some positive constant $C<\infty$.
\end{theorem}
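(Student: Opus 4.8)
The plan is to mirror the proof of Theorem~\ref{cost:nsa}, the only genuinely new ingredient being the observation that Polyak--Ruppert averaging improves the rate without inflating the complexity order. First I would fix the complexity model. As for NSA, the dominant computational cost of ANSA stems from the inner Monte Carlo samplings: each iteration of \eqref{eq:sa:nested:alg:xi}--\eqref{eq:sa:nested:alg:C} requires $K=1/h$ evaluations of $\varphi$ to produce a single innovation $X_h^{(n+1)}$, so running $n$ iterations costs $\OO(n/h)$. The recursive averaging \eqref{eq:sa:avg:nested:alg:xi} is carried out online, contributing only $\OO(1)$ arithmetic and $\OO(1)$ storage per step, hence $\OO(n)$ in aggregate, which is dominated by the sampling cost $\OO(n/h)$. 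This yields $\Cost_{\text{\rm\tiny ANSA}}\leq C\,n/h$ for some finite $C>0$.

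Next I would invoke Corollary~\ref{crl:avg:nested:clt}: with bias parameter $h\in\mathcal{H}$ and iterations amount $n=\lceil h^{-2}\rceil$, both renormalized errors $\overline{\xi}^h_{\lceil h^{-2}\rceil}-\xi^0_\star$ and $\chi^h_{\lceil h^{-2}\rceil}-\chi^0_\star$ are of order $h$ as $\mathcal{H}\ni h\downarrow0$, and crucially this holds for \emph{every} $\beta\in\big(\tfrac12,1\big)$ with no constraint on $\gamma_1$. To attain a distributional convergence rate of order $\varepsilon$ it therefore suffices to pick $h\leq\varepsilon$; the canonical admissible element of $\mathcal{H}$ is $h=1/\lceil\varepsilon^{-1}\rceil\sim\varepsilon$, which in turn forces $n=\lceil h^{-2}\rceil=\lceil\varepsilon^{-1}\rceil^2\sim\varepsilon^{-2}$.

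Finally, substituting these calibrations into the complexity bound gives $\Cost_{\text{\rm\tiny ANSA}}\leq C\,n/h\sim C\,\varepsilon^{-2}/\varepsilon=C\,\varepsilon^{-3}$, as claimed. The step I expect to be the crux is the very first one, namely certifying that the averaging overhead is asymptotically negligible relative to the nested sampling cost so that the model $\Cost_{\text{\rm\tiny ANSA}}\leq C\,n/h$ remains valid; once this is in place, the calibration is a direct transcription of the argument behind Theorem~\ref{cost:nsa}. The qualitative payoff to emphasize is that, whereas NSA reaches the order-$h$ VaR rate only at $\beta=1$ under the inaccessible constraint $\lambda\gamma_1>1$, ANSA delivers the same $\varepsilon^{-3}$ complexity uniformly over $\beta\in\big(\tfrac12,1\big)$ and free of any fine-tuning of $\gamma_1$.
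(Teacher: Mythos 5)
Your proposal is correct and follows essentially the same route as the paper's proof: bound the cost by $C\,n/h$, invoke Corollary~\ref{crl:avg:nested:clt} for the order-$h$ rate at $n=\lceil h^{-2}\rceil$, and calibrate $h\leq\varepsilon$. The only addition is your explicit check that the online averaging contributes $\OO(n)$ work dominated by the $\OO(n/h)$ sampling cost, which the paper leaves implicit but is a worthwhile remark.
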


\begin{proof}
The complexity of ANSA satisfies
\begin{equation*}
\Cost_\text{\rm\tiny ANSA}
\leq C\frac{n}h,
\end{equation*}
for some positive constant $C<\infty$.
By Corollary~\ref{crl:avg:nested:clt}, the convergence rate of NSA with bias parameter $h$ and iterations amount $n=\lceil h^{-2}\rceil$ is of order $h$ as $\mathcal{H}\ni h\downarrow0$, hence the result by setting $h$ such that $h\leq\varepsilon$.
\end{proof}

\begin{remark}
Theorems~\ref{cost:nsa} and~\ref{cost:ansa} reveal that, unlike NSA, ANSA achieves a complexity of order $\varepsilon^{-3}$ regardless of the value of $\beta\in\big(\frac12,1\big)$. This allows to circumvent the constraint on $\gamma_1$ that appears for the NSA scheme in the optimal case $\beta=1$.
\end{remark}

\section{Multilevel Stochastic Approximation Algorithm}

Viewing the complexity of order $\varepsilon^{-3}$, that is attained by NSA and ANSA, as a baseline for biased VaR and ES estimation, \cite{mlsa} suggests to adopt a multilevel strategy to further reduce it to an order of $\varepsilon^{-(2+\delta)}$, for an adequate $\delta\in(0,1)$.
In this section, we present an MLSA scheme for estimating the VaR and ES, then state an associated CLT before studying its complexity.

Let $h_0=\frac1K\in\mathcal{H}$. For some fixed integer $M>1$ and a number of levels $L\in\mathbb{N}^*$, we define the bias parameters
\begin{equation*}
\Big\{h_\ell\coloneqq\frac{h_0}{M^\ell}=\frac1{KM^\ell},0\leq\ell\leq L\Big\}\in\mathcal{H}^{L+1}.
\end{equation*}
Rather than simulating the biased solutions $\xi^{h_L}_\star$ and $\chi^{h_L}_\star$ as NSA and ANSA do, MLSA simulates their telescopic summations
\begin{align}
\xi^{h_L}_\star
&=\xi^{h_0}_\star+\sum_{\ell=1}^L\xi^{h_\ell}_\star-\xi^{h_{\ell-1}}_\star,
\label{eq:xi*^hL}
\\
\chi^{h_L}_\star
&=\chi^{h_0}_\star+\sum_{\ell=1}^L\chi^{h_\ell}_\star-\chi^{h_{\ell-1}}_\star.
\label{eq:C*^hL}
\end{align}
Indeed, following~\cite{mlsa,10.1214/15-AAP1109}, assuming the $F_{X_{h_\ell}}$ continuous and increasing for all $\ell\geq0$, for a sequence of positive integer iterations amounts $\mathbf{N}=\{N_\ell,0\leq\ell\leq L\}$, the MLSA estimators of the VaR and ES are given by
\begin{align}
\xi^\text{\tiny\rm ML}_\mathbf{N}
&=\xi^{h_0,0}_{N_0}+\sum_{\ell=1}^L\xi^{h_\ell,\ell}_{N_\ell}-\xi^{h_{\ell-1},\ell}_{N_\ell},
\label{eq:xi:ML}
\\
\chi^\text{\tiny\rm ML}_\mathbf{N}
&=\chi^{h_0,0}_{N_0}+\sum_{\ell=1}^L\chi^{h_\ell,\ell}_{N_\ell}-\chi^{h_{\ell-1},\ell}_{N_\ell}.
\label{eq:C:ML}
\end{align}
Each level $0\leq\ell\leq L$ is simulated independently.
First, we simulate the level $0$ pair $(\xi^{h_0,0}_{N_0},\chi^{h_0,0}_{N_0})$ using $N_0$ iterations of the NSA scheme \eqref{eq:sa:nested:alg:xi}-\eqref{eq:sa:nested:alg:C} with bias $h_0$.
Then, at each level $1\leq\ell\leq L$, given innovations $\{(X^{(n)}_{h_{\ell-1},\ell},X^{(n)}_{h_\ell,\ell}),1\leq n\leq N_\ell\}\stackrel{\text{\rm\tiny i.i.d.}}{\sim}(X_{h_{\ell-1},\ell},X_{h_\ell,\ell})$, for $j\in\{\ell-1,\ell\}$, we perform $N_\ell$ iterations of the dynamics
\begin{align}
\xi^{h_j,\ell}_{n+1}
&=\xi^{h_j,\ell}_n-\gamma_{n+1}H_1(\xi^{h_j,\ell}_n,X_{h_j,\ell}^{(n+1)}),
\label{eq:sa:ml:alg:xi}
\\
\chi^{h_j,\ell}_{n+1}
&=\chi^{h_j,\ell}_n-\frac1{n+1}H_2(\chi^{h_j,\ell}_n,\xi^{h_j,\ell}_n,X_{h_j,\ell}^{(n+1)}),
\label{eq:sa:ml:alg:chi}
\end{align}
starting from real-valued random variables $\xi^{h_{\ell-1},\ell}_0$, $\xi^{h_\ell,\ell}_0$ and $\chi^{h_{\ell-1},\ell}_0=\chi^{h_\ell,\ell}_0=0$, the pair $(\xi^{h_{\ell-1,\ell}}_0,\xi^{h_\ell,\ell}_0)$ being independent from $\{(X^{(n)}_{h_{\ell-1},\ell},X^{(n)}_{h_\ell,\ell}),1\leq n\leq N_\ell\}$.
Note importantly that $X_{h_{\ell-1},\ell}$ and $X_{h_\ell,\ell}$ must be perfectly correlated in the following sense: after computing $X_{h_{\ell-1,\ell}}$ via
\begin{equation*}
X_{h_{\ell-1},\ell}=\frac1{KM^{\ell-1}}\sum_{k=1}^{KM^{\ell-1}}\varphi(Y_\ell,Z_\ell^{(k)}),
\end{equation*}
where $\{Z_\ell^{(k)},1\leq k\leq KM^{\ell-1}\}\stackrel{\text{\rm\tiny i.i.d.}}{\sim}Z$ are independent from $Y_\ell\sim Y$, $X_{h_\ell,\ell}$ is computed by sampling additional random variables $\big\{Z_\ell^{(k)},KM^{\ell-1}<k\leq KM^\ell\big\}\stackrel{\text{\rm\tiny i.i.d.}}{\sim}Z$ independently from $\{Z_\ell^{(k)},1\leq k\leq KM^{\ell-1}\}$ and $Y_\ell$, and taking
\begin{equation*}
X_{h_\ell,\ell}=\frac1MX_{h_{\ell-1},\ell}+\frac1{KM^\ell}\sum_{k=KM^{\ell-1}+1}^{KM^\ell}\varphi(Y_\ell,Z_\ell^{(k)}).
\end{equation*}

\subsection{Convergence Rate Analysis}

\begin{lemma}\label{lmm:technical}
Suppose that $\varphi(Y,Z)\in L^2(\mathbb{P})$.
\begin{enumerate}[(i)]
    \item\label{lmm:technical-i}
Then,
\begin{equation*}
h_\ell^{-\frac12}(X_{h_\ell}-X_{h_{\ell-1}})
\eqqcolon G_\ell
\stackrel[]{\mathcal{L}}{\longrightarrow}
G\coloneqq
\big((M-1)\Var(\varphi(Y,Z)|Y)\big)^\frac12\mathfrak{N}
\quad\mbox{ as }\quad
\ell\uparrow\infty,
\end{equation*}
where $\mathfrak{N}\sim\mathcal{N}(0,1)$ is independent of $Y$.

    \item\label{lmm:technical-ii}
Assume that, for all $\ell\geq1$, $F_{X_{h_{\ell-1}}|G_\ell}$ is $\Pas$ continuously differentiable with derivative $f_{X_{h_{\ell-1}}|G_\ell}$, and that $\{(x,g)\mapsto f_{X_{h_{\ell-1}}|G_\ell=g}(x),\ell\geq1\}$ are bounded uniformly in $\ell\geq1$ and converge locally uniformly to some bounded and continuous function $(x,g)\mapsto f_g(x)$.
Then, for any $\xi\in\mathbb{R}$,
\begin{equation*}
{h_\ell^{-\frac12}\mathbb{E}\big[\big|\mathds{1}_{X_{h_\ell}>\xi}-\mathds{1}_{X_{h_{\ell-1}}>\xi}\big|\big]}
\to\mathbb{E}[|G|f_G(\xi)]
\quad\mbox{ as }\quad\ell\uparrow\infty.
\end{equation*}

    \item\label{lmm:technical-iii}
Assume that $F_{X_0}$ is continuous. Then, for any $\xi\in\mathbb{R}$,
\begin{equation*}
h_\ell^{-\frac12}\big((X_{h_\ell}-\xi)^+-(X_{h_{\ell-1}}-\xi)^+\big)
\stackrel[]{\mathcal{L}}{\longrightarrow}\mathds{1}_{X_0>\xi}\,G
\quad\mbox{ as }\quad\ell\uparrow\infty.
\end{equation*}
\end{enumerate}
\end{lemma}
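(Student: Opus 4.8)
I begin by exploiting the algebraic structure of the prescribed coupling. Writing $n_\ell=KM^\ell$ and letting $\bar\varphi_\ell^{\mathrm{new}}$ denote the empirical mean of the $n_\ell-n_{\ell-1}=(M-1)n_{\ell-1}$ fresh inner samples drawn at level $\ell$, the correlation rule gives the exact identity $X_{h_\ell}-X_{h_{\ell-1}}=\tfrac{M-1}{M}\bigl(\bar\varphi_\ell^{\mathrm{new}}-X_{h_{\ell-1}}\bigr)$, a rescaled difference of two empirical means of $\varphi(Y,Z)$ that are independent conditionally on $Y$. Setting $\sigma^2(Y)=\Var(\varphi(Y,Z)\,|\,Y)$, a direct second-moment computation yields $\mathbb{E}[G_\ell\,|\,Y]=0$ and, for every $\ell\geq1$, the \emph{exact} identity $\Var(G_\ell\,|\,Y)=(M-1)\sigma^2(Y)$; in particular $\sup_\ell\mathbb{E}[G_\ell^2]=(M-1)\mathbb{E}[\sigma^2(Y)]<\infty$, so $\{G_\ell,\ell\geq1\}$ is bounded in $L^2(\mathbb{P})$ and $\{|G_\ell|,\ell\geq1\}$ is uniformly integrable. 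For \eqref{lmm:technical-i}\ I would prove a conditional CLT: conditionally on $Y=y$, $G_\ell$ is a weighted sum of i.i.d.\ centred square-integrable terms whose maximal weight is $\OO(n_\ell^{-1/2})$, so the Lindeberg--Feller theorem gives, for a.e.\ $y$, $\mathbb{E}[\e^{\mathrm{i}tG_\ell}\,|\,Y=y]\to\exp(-\tfrac12t^2(M-1)\sigma^2(y))$. Integrating against the law of $Y$ and invoking dominated convergence (the conditional characteristic functions are bounded by $1$) yields convergence of $\mathbb{E}[\e^{\mathrm{i}tG_\ell}]$ to the characteristic function of the Gaussian mixture $\bigl((M-1)\sigma^2(Y)\bigr)^{1/2}\mathfrak{N}$, and Lévy's continuity theorem concludes.

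For \eqref{lmm:technical-ii}\ I would condition on $G_\ell$. Since $X_{h_\ell}=X_{h_{\ell-1}}+h_\ell^{1/2}G_\ell$, exactly one of $X_{h_\ell},X_{h_{\ell-1}}$ exceeds $\xi$ iff $X_{h_{\ell-1}}$ falls in the interval of length $h_\ell^{1/2}|G_\ell|$ with endpoint $\xi$, on the side dictated by $\mathrm{sgn}(G_\ell)$. The hypothesis that $F_{X_{h_{\ell-1}}|G_\ell}$ is $\Pas$ $C^1$ lets me express this conditional probability as $\int f_{X_{h_{\ell-1}}|G_\ell=g}$ over that interval; rescaling by $h_\ell^{-1/2}$ and substituting turns the conditional expectation into $|g|\int_0^1 f_{X_{h_{\ell-1}}|G_\ell=g}\bigl(\xi\mp h_\ell^{1/2}|g|u\bigr)\,\mathrm{d}u$, which converges to $|g|f_g(\xi)$ by the assumed local uniform convergence of the densities. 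Writing $\Psi_\ell(\xi)=h_\ell^{-1/2}\mathbb{E}\bigl[|\mathds{1}_{X_{h_\ell}>\xi}-\mathds{1}_{X_{h_{\ell-1}}>\xi}|\bigr]$ as the $G_\ell$-expectation of this quantity, I would then pass to the limit: the integrand is dominated by $\sup_\ell\|f_{X_{h_{\ell-1}}|G_\ell}\|_\infty\,|g|$, so combining $G_\ell\xrightarrow{\mathcal{L}}G$ from \eqref{lmm:technical-i}, continuity of $g\mapsto f_g(\xi)$, and the uniform integrability of $\{|G_\ell|\}$, a generalised dominated-convergence argument (splitting on $\{|g|\leq R\}$ and $\{|g|>R\}$) gives $\Psi_\ell(\xi)\to\mathbb{E}[|G|f_G(\xi)]$.

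For \eqref{lmm:technical-iii}\ I would linearise via the fundamental theorem of calculus applied to $x\mapsto(x-\xi)^+$: with $X_{h_\ell}-X_{h_{\ell-1}}=h_\ell^{1/2}G_\ell$,
\begin{equation*}
h_\ell^{-1/2}\bigl((X_{h_\ell}-\xi)^+-(X_{h_{\ell-1}}-\xi)^+\bigr)=G_\ell\int_0^1\mathds{1}_{X_{h_{\ell-1}}+sh_\ell^{1/2}G_\ell>\xi}\,\mathrm{d}s.
\end{equation*}
I would first show this differs from $G_\ell\mathds{1}_{X_{h_{\ell-1}}>\xi}$ by a term bounded in absolute value by $|G_\ell|\mathds{1}_{|X_{h_{\ell-1}}-\xi|\leq h_\ell^{1/2}|G_\ell|}$, whose expectation is $\OO\bigl(h_\ell^{1/2}\mathbb{E}[G_\ell^2]\bigr)\to0$ by the same conditioning-on-$G_\ell$ estimate as in \eqref{lmm:technical-ii}; hence the remainder vanishes in $L^1(\mathbb{P})$, a fortiori in probability. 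It then remains to prove $G_\ell\mathds{1}_{X_{h_{\ell-1}}>\xi}\xrightarrow{\mathcal{L}}G\mathds{1}_{X_0>\xi}$, which follows from the joint convergence $(X_{h_{\ell-1}},G_\ell)\xrightarrow{\mathcal{L}}(X_0,G)$ --- here $X_{h_{\ell-1}}\to X_0$ in $L^2(\mathbb{P})$ while the fresh-sample part driving $G_\ell$ is conditionally independent of $X_{h_{\ell-1}}$ given $Y$ --- together with the continuous mapping theorem applied to $(x,g)\mapsto g\mathds{1}_{x>\xi}$, whose discontinuity set $\{x=\xi\}$ is null under the law of $X_0$ because $F_{X_0}$ is continuous. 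Slutsky's lemma then delivers the claim.

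The main obstacle throughout is the interplay between the two sources of randomness---the inner Monte Carlo average $X_{h_{\ell-1}}$ and the rescaled increment $G_\ell$---which are neither independent nor jointly Gaussian at finite $\ell$. The crux is to disentangle them: the exact conditional-variance identity furnishes the $L^2$-bound that powers every uniform-integrability step, while the conditional-independence-given-$Y$ structure of the coupling is what renders both the conditional CLT in \eqref{lmm:technical-i}\ and the joint limit in \eqref{lmm:technical-iii}\ tractable.
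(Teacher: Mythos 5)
Your treatment of parts (i) and (ii) follows essentially the same route as the paper: the exact rewriting of $X_{h_\ell}-X_{h_{\ell-1}}$ as a normalized combination of the old and fresh inner sample averages, a CLT conditional on $Y$ integrated against the law of $Y$, and, for (ii), conditioning on $G_\ell$, expressing the crossing probability through the conditional density over an interval of length $h_\ell^{1/2}|G_\ell|$, and passing to the limit by a truncation on $\{|G_\ell|\le R\}$ combined with the $L^2$-bound $\sup_\ell\mathbb{E}[G_\ell^2]<\infty$ and the local uniform convergence of the conditional densities. The exact conditional-variance identity $\Var(G_\ell\,|\,Y)=(M-1)\Var(\varphi(Y,Z)|Y)$ is correct and is what the paper uses implicitly.

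For part (iii) there are two points to flag. First, your $L^1(\mathbb{P})$ bound on the remainder, $\mathbb{E}\big[|G_\ell|\mathds{1}_{|X_{h_{\ell-1}}-\xi|\le h_\ell^{1/2}|G_\ell|}\big]=\OO(h_\ell^{1/2})$, invokes the uniformly bounded conditional densities $f_{X_{h_{\ell-1}}|G_\ell}$, which are hypotheses of (ii) but \emph{not} of (iii): part (iii) only assumes $F_{X_0}$ continuous. The conclusion you actually need is convergence in probability, and that is obtainable under (iii)'s hypotheses alone, since $\mathbb{P}(|X_{h_{\ell-1}}-\xi|\le h_\ell^{1/2}|G_\ell|)\le\mathbb{P}(h_\ell^{1/2}|G_\ell|>\delta)+\mathbb{P}(|X_{h_{\ell-1}}-\xi|\le\delta)$, where the first term vanishes by tightness of $\{G_\ell\}$ and the second tends to $\mathbb{P}(X_0=\xi)=0$ as $\ell\uparrow\infty$ then $\delta\downarrow0$; this is in the spirit of the paper's argument, which Taylor-expands both $(X_{h_\ell}-\xi)^+$ and $(X_{h_{\ell-1}}-\xi)^+$ around $X_0$ and kills the remainders using only tightness of $h_\ell^{-1/2}(X_{h_j}-X_0)$ and $\mathbb{P}(X_0=\xi)=0$. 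Second, your justification of the joint convergence $(X_{h_{\ell-1}},G_\ell)\stackrel{\mathcal{L}}{\to}(X_0,G)$ via ``the fresh-sample part driving $G_\ell$ is conditionally independent of $X_{h_{\ell-1}}$ given $Y$'' is imprecise: $G_\ell$ itself is a linear combination of the fresh-sample average \emph{and} $X_{h_{\ell-1}}$, so the two are correlated given $Y$. The joint limit nonetheless holds because the limit of the first coordinate, $X_0$, is $Y$-measurable, so one can condition on $Y$ and apply the conditional CLT for $G_\ell$ directly — which is precisely how the paper obtains $h_\ell^{-1/2}\mathds{1}_{X_0>\xi}(X_{h_\ell}-X_{h_{\ell-1}})\stackrel{\mathcal{L}}{\to}\mathds{1}_{X_0>\xi}\,G$ before invoking Slutsky. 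Both issues are repairable, but as written the proof of (iii) rests on an assumption it does not have and on a conditional-independence claim that is not literally true.
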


\begin{proof}
See Appendix~\ref{prf:technical}.
\end{proof}

\begin{remark}
The framework of Lemma~\ref{lmm:technical}(\ref{lmm:technical-ii}) strengthens that of~\cite[Proposition~3.2(ii)]{mlsa} (recalled in Lemma~\ref{lmm:recall}(\ref{lmm:lipschitz}))
by supposing the existence and convergence of pdf for the conditional laws $X_{h_{\ell-1}}\,|\,G_\ell$, $\ell\geq1$.
As shown in Step~9 of Theorem~\ref{thm:ml:clt}'s proof, Appendix~\ref{prf:ml:clt}, such framework allows to retrieve an asymptotic distribution for the martingale part of the VaR error decomposition \eqref{eq:xiML-xi*(hL)}.
While remaining weaker than~\cite[Assumption~2.5]{Gordy}, it reaches optimal complexity according to~\cite[Remark~3.3 \& Theorem~3.9(i)]{mlsa}.
\end{remark}

\begin{assumption}[{\cite[Assumption~3.4]{mlsa}}]\label{asp:fh-f0}
There exist positive constants $C,\delta_0<\infty$ such that, for any $h\in\mathcal{H}$ and any compact set $\mathcal{K}\subset\mathbb{R}$,
\begin{equation*}
\sup_{\xi\in\mathcal{K}}{|f_{X_h}(\xi)-f_{X_0}(\xi)|}\leq Ch^{\frac14+\delta_0}.
\end{equation*}
\end{assumption}

\begin{remark}[{\cite[Remark~3.5]{mlsa}}]
Assumption~\ref{asp:fh-f0} relaxes the postulate of~\cite[Proposition~5.1(a)]{Giorgi2020}.
\end{remark}

We state below a CLT for the MLSA scheme.
According to context, we use the notation $N_\ell$ to designate both $N_\ell$ and $\big\lceil N_\ell\big\rceil$ interchangeably.

\begin{theorem}
\label{thm:ml:clt}
Suppose that the frameworks of Theorem~\ref{thm:nested:clt} and Lemma~\ref{lmm:technical}(\ref{lmm:technical-ii}) hold and that Assumption~\ref{asp:fh-f0} is satisfied.
If $\gamma_n=\gamma_1n^{-\beta}$, $n\geq1$, $\gamma_1>0$, $\beta\in\big(\frac12,1\big]$, with $\lambda\gamma_1>1$ if $\beta=1$, then, setting
\begin{equation}
\label{eq:N_ell}
N_\ell=h_L^{-\frac2\beta}\bigg(\sum_{\ell'=0}^Lh_{\ell'}^{-\frac{2\beta-1}{2(1+\beta)}}\bigg)^\frac1\beta h_\ell^\frac3{2(1+\beta)},
\quad0\leq\ell\leq L,
\end{equation}
it holds that
\begin{equation*}
\begin{pmatrix}
h_L^{-1}\big(\xi^\text{\tiny\rm ML}_\mathbf{N}-\xi^{h_L}_\star\big)\\
h_L^{-\frac1\beta-\frac{2\beta-1}{4\beta(1+\beta)}}\big(\chi^\text{\tiny\rm ML}_\mathbf{N}-\chi^{h_L}_\star\big)
\end{pmatrix}
\stackrel[]{\mathcal{L}}{\longrightarrow}
\mathcal{N}(0,\Sigma^\text{\tiny\rm ML}_\beta)
\quad\mbox{ as }\quad
L\uparrow\infty,
\end{equation*}
where $\Sigma^\text{\tiny\rm ML}_\beta=$
\begin{equation}
\label{eq:Sigma:ML}
\begin{pmatrix}
\frac{\gamma_1\mathbb{E}[|G|f_G(\xi^0_\star)]}{(1-\alpha)(2f_{X_0}(\xi^0_\star)-(1-\alpha)\gamma_1^{-1}\mathds{1}_{\beta=1})}
&0\\
0
&\frac{h_0^\frac{2\beta-1}{2(1+\beta)}\big(M^\frac{2\beta-1}{2(1+\beta)}-1\big)^\frac1\beta}{(1-\alpha)^2}
\Big(\frac{h_0^{-1}\Var((X_{h_0}-\xi^{h_0}_\star)^+)}{M^\frac{2\beta-1}{2\beta(1+\beta)}}+\frac{\Var(\mathds{1}_{X_0>\xi^0_\star}\,G)}{M^\frac{2\beta-1}{2(1+\beta)}-1}\Big)
\end{pmatrix},
\end{equation}
with $G$ being defined in Lemma~\ref{lmm:technical}(\ref{lmm:technical-i}).
\end{theorem}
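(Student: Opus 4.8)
The plan is to exploit the telescopic representations \eqref{eq:xi*^hL}--\eqref{eq:C*^hL} to write each renormalized MLSA error as a sum over levels of \emph{independent} contributions, and then to treat the resulting doubly-indexed array with the martingale arrays CLT~\cite[Corollary~3.1]{nla.cat-vn2887492}. Subtracting \eqref{eq:xi*^hL} from \eqref{eq:xi:ML} (and likewise \eqref{eq:C*^hL} from \eqref{eq:C:ML}) isolates the statistical errors, the bias parts $\xi^{h_\ell}_\star-\xi^0_\star$ having already been peeled off by the telescoping:
\[
\xi^\text{\tiny\rm ML}_\mathbf{N}-\xi^{h_L}_\star
=\big(\xi^{h_0,0}_{N_0}-\xi^{h_0}_\star\big)
+\sum_{\ell=1}^L\Big[\big(\xi^{h_\ell,\ell}_{N_\ell}-\xi^{h_\ell}_\star\big)-\big(\xi^{h_{\ell-1},\ell}_{N_\ell}-\xi^{h_{\ell-1}}_\star\big)\Big].
\]
Since each level is driven by its own independent innovations, the level-$0$ term and the $L$ paired increments are mutually independent, so only their variances matter.

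First I would linearize each single statistical error as in the proof of Theorem~\ref{thm:nested:clt}: by \eqref{eq:H1} one writes $H_1(\xi^{h}_{k-1},X_{h}^{(k)})=V_{h}'(\xi^{h}_{k-1})+\Delta\mathcal{M}^{h}_k$ with $\Delta\mathcal{M}^{h}_k$ a martingale increment, and Taylor-expands $V_{h}'$ around $\xi^{h}_\star$, obtaining $\xi^{h}_{N}-\xi^{h}_\star=-\sum_{k=1}^{N}\pi_{k,N}\gamma_k\Delta\mathcal{M}^{h}_k+\text{(drift)}$, where the $\pi_{k,N}$ are products of linearized step factors and the drift is controlled in $L^2(\mathbb{P})$ by Lemma~\ref{lmm:error}. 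For the paired increment at level $\ell$ I would subtract the two such expansions; the surviving martingale then carries the difference $H_1(\xi^{h_\ell}_{k-1},X_{h_\ell}^{(k)})-H_1(\xi^{h_{\ell-1}}_{k-1},X_{h_{\ell-1}}^{(k)})$, whose conditional second moment is driven by $\mathds{1}_{X_{h_\ell}\geq\xi}-\mathds{1}_{X_{h_{\ell-1}}\geq\xi}$ and hence, after renormalization, by Lemma~\ref{lmm:technical}(\ref{lmm:technical-ii}). The same scheme applies to the ES increments \eqref{eq:sa:ml:alg:chi}, for which the leading summand is $(1-\alpha)^{-1}\big((X_{h_\ell}-\xi^0_\star)^+-(X_{h_{\ell-1}}-\xi^0_\star)^+\big)$, whose limiting law is supplied by Lemma~\ref{lmm:technical}(\ref{lmm:technical-iii}).

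Next I would assemble the across-level martingales into a single array and verify the two hypotheses of~\cite[Corollary~3.1]{nla.cat-vn2887492}: convergence of the predictable bracket and the conditional Lindeberg condition. For the bracket I would compute, level by level, the contribution of each renormalized increment: the level-$\ell$ VaR contribution scales like $\gamma_1 N_\ell^{-\beta}h_\ell^{\frac12}\mathbb{E}[|G|f_G(\xi^0_\star)]$ up to the factor $2f_{X_0}(\xi^0_\star)$ via Lemma~\ref{lmm:technical}(\ref{lmm:technical-ii}), while the ES contribution scales like $N_\ell^{-1}h_\ell\,\Var(\mathds{1}_{X_0>\xi^0_\star}G)$ for $\ell\geq1$ via Lemma~\ref{lmm:technical}(\ref{lmm:technical-iii}), plus the level-$0$ term $N_0^{-1}\Var((X_{h_0}-\xi^{h_0}_\star)^+)$. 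Inserting the allocation \eqref{eq:N_ell} turns each such per-level variance into a geometric series in $h_\ell$ with common ratio governed by the exponent $\frac{2\beta-1}{2(1+\beta)}$; summing these series is precisely what forces the renormalizations $h_L^{-1}$ (VaR) and $h_L^{-\frac1\beta-\frac{2\beta-1}{4\beta(1+\beta)}}$ (ES) and reproduces the diagonal entries of $\Sigma^\text{\tiny\rm ML}_\beta$ in \eqref{eq:Sigma:ML} (the VaR limit coming entirely from levels $\ell\geq1$, the level-$0$ VaR term being lower order). For the Lindeberg condition I would use the $2+\delta$ integrability of $\varphi(Y,Z)$, Assumption~\ref{asp:fh-f0}, and the uniform density bounds of Assumption~\ref{asp:misc}(\ref{asp:misc:iv}) to dominate the $(2+\delta)$-moments of the increments uniformly and make each Lindeberg summand negligible. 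The vanishing off-diagonal entries would follow by checking that the cross-bracket between the renormalized VaR and ES martingales is of strictly lower order than the product of their normalizations, the two limits being supported on different scaling regimes across levels.

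The main obstacle will be the bracket computation: matching the intricate renormalization exponents to the closed-form variances in \eqref{eq:Sigma:ML}. One must track how the paired statistical error at level $\ell$ inherits the factor $h_\ell^{\frac12}$ from Lemma~\ref{lmm:technical} while decaying like $N_\ell^{-\beta}$ (VaR) or $N_\ell^{-1}$ (ES), then balance the resulting geometric series through the optimized choice \eqref{eq:N_ell}. The delicate coupling of the two \emph{perfectly correlated} NSA runs inside a level --- needed to ensure that the \emph{difference} of statistical errors, rather than each error separately, contributes at order $h_\ell^{\frac12}$ --- together with the precise tracking of the $M$-dependent constants, is where the bulk of the technical work lies.
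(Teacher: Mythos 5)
Your proposal follows essentially the same route as the paper's proof in Appendix~\ref{prf:ml:clt}: the telescopic decomposition into levelwise-independent statistical errors, the NSA-style linearization into martingale plus drift terms (with the drifts killed in $L^1(\mathbb{P})$ or $L^2(\mathbb{P})$ via Lemma~\ref{lmm:error}), the separate treatment of the non-negligible level-$0$ ES contribution, and the application of~\cite[Corollary~3.1]{nla.cat-vn2887492} to the assembled array with bracket limits supplied by Lemma~\ref{lmm:technical}(\ref{lmm:technical-ii})--(\ref{lmm:technical-iii}) and a vanishing cross-bracket. The only presentational detail you leave implicit is that the double sum must be reordered (outer index $k$, inner index $\ell$) so that the array is a genuine martingale array in $k$; otherwise the plan matches the paper's argument.
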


\begin{proof}
See Appendix~\ref{prf:ml:clt}.
\end{proof}

\begin{remark}\label{rmk:ml}
\begin{enumerate}[(i)]
\item
As noted in Remark~\ref{rmk:nested:misc}(\ref{rmk:beta}), while the $L^2(\mathbb{P})$-controls of~\cite[Theorem~3.6]{mlsa} are available for all $\beta\in(0,1]$, the CLT~\ref{thm:ml:clt} is rather available for $\beta\in\big(\frac12,1\big]$.

\item
We used in \eqref{eq:N_ell} the optimal iterations amounts for the VaR MLSA scheme according to~\cite[Theorem~3.9(i)]{mlsa}. They were obtained by optimizing the MLSA complexity while constraining the VaR estimation error to a prescribed accuracy $\varepsilon$.

\item
The VaR and ES evolve at different speeds: in $\OO(h_L)$ for the VaR and in $\OO\big(h_L^{\frac1\beta+\frac{2\beta-1}{4\beta(1+\beta)}}\big)$ for the ES as $L\uparrow\infty$. The optimal ES convergence speed in $\OO\big(h_L^\frac98\big)$ is attained for $\beta=1$ under the condition $\lambda\gamma_1>1$.

\item
The VaR variance factor depends on $\gamma_1$ in the same way the NSA covariance matrix \eqref{eq:Sigma} does (Remark~\ref{rmk:nested:misc}(\ref{rmk:nsa:cv:ii})).

To ensure the additional condition $\lambda\gamma_1>1$ in the optimal case $\beta=1$, a rule of thumb would be to choose $\gamma_1$ large enough with regard to the variables of the problem. But this is not satisfactory, inasmuch as the VaR component of the covariance matrix $\Sigma_1$ in \eqref{eq:Sigma} evolves in $\OO(\gamma_1)$ for $\gamma_1$ large, damaging the algorithm's convergence.

Setting it too small blows up the VaR convergence factor, as it evolves in $\OO\big(\big(\gamma_1-\frac{1-\alpha}{2f_{X_0}(\xi^0_\star)}\big)^{-1}\big)$ if $\gamma_1\downarrow\frac{1-\alpha}{2f_{X_0}(\xi^0_\star)}$.

Such a behavior translates a high numerical instability for the VaR estimation, as has been observed empirically in~\cite{mlsa}.
The optimal $\gamma_1$, although explicit, is not calculable since the value of $f_{X_0}(\xi^0_\star)$ is inaccessible.
Thus $\gamma_1$ must be carefully fine-tuned via a grid search.

\item\label{rmk:h0:L}
The ES variance factor depends on $\beta$ and is minimal for $\beta=1$.
It is independent of $\gamma_1$, which should entail a strong numerical stability of the ES MLSA scheme, as observed empirically in~\cite{mlsa}.

The ES variance factor is composed of two terms: the first one stems from the initial level $\ell=0$, dependent on $X_{h_0}$ and $\xi^{h_0}_\star$,  and the second one stems from the asymptotic behavior of the algorithm, dependent on $G$, $X_0$ and $\xi^0_\star$.

Setting $h_0$ sufficiently small should help reduce $\Var((X_{h_0}-\xi^{h_0}_\star)^+)$, hence the impact of the initial level simulation on the ES variance.
However, as clarified in Theorem~\ref{thm:cost:ml}, for a prescribed accuracy $\varepsilon\in(0,1)$, $h_0$ must be floored by $\varepsilon$, and $L$ set as in \eqref{eq:L}. Reducing $h_0$ would consequently reduce $L$, hence exiting the asymptotic regime.
$h_0$ should therefore be fine-tuned.
See~\cite[Remark~2.6]{10.1214/15-AAP1109} for further comments on setting $h_0$.

\item
The VaR-ES asymptotic MLSA correlation is null, which is linked to the speed differential between the VaR and ES. It expresses an independent behavior of the VaR and ES as $L\uparrow\infty$. The strong stability of the ES is thus hardly affected by the instability of the VaR. This may seem surprising since the scheme \eqref{eq:sa:ml:alg:chi} clearly incorporates the VaR estimates in the ES approximation.
However, it can be explained by the robustifying effect of the averaging property of the ES estimates \eqref{eq:sa:ml:alg:chi}.

\item\label{rmk:sim}
The covariance matrix \eqref{eq:Sigma:ML} is not calculable, as so many unknowable variables intervene in its definition. It should consequently be estimated using multiple runs of the MLSA scheme.

The ES variance factor could notwithstanding be estimated by approximating $\Var((X_{h_0}-\xi^{h_0}_\star)^+)$ as in \eqref{eq:Var(X0-xi0)} with
\begin{equation*}
\Var((X_{h_0}-\xi^{h_0}_\star)^+)\approx\frac1{N_0}\sum_{k=1}^{N_0}\big((X_{h_0}^{(k)}-\xi^{h_0}_{k-1})^+\big)^2-\bigg(\frac1{N_0}\sum_{k=1}^{N_0}(X_{h_0}^{(k)}-\xi^{h_0}_{k-1})^+\bigg)^2,
\end{equation*}
$G$ with
\begin{equation*}
G\approx G_{h_L}=\bigg\{(M-1)\bigg(\frac1{\lceil h_L^{-1}\rceil}\sum_{k=1}^{\lceil h_L^{-1}\rceil}\varphi(Y,Z^{(k)})^2-X_{h_L}^2\bigg)\bigg\}^\frac12\mathcal{N}(0,1),
\end{equation*}
and $\Var(\mathds{1}_{X_0>\xi^0_\star}\,G)$ with
\begin{equation*}
\Var(\mathds{1}_{X_0>\xi^0_\star}\,G)\approx\frac1{N_L}\sum_{k=1}^{N_L}\big(\mathds{1}_{X_{h_L}^{(k)}>\xi^{h_L}_{k-1}}G_{h_L}^{(k)}\big)^2-\bigg(\frac1{N_L}\sum_{k=1}^{N_L}\mathds{1}_{X_{h_L}^{(k)}>\xi^{h_L}_{k-1}}G_{h_L}^{(k)}\bigg)^2.
\end{equation*}

\item
\cite[Theorem~2.11]{10.1214/15-AAP1109} establishes a CLT for biased multilevel schemes under a mean-reverting property~\cite[Assumption~(HMR)]{10.1214/15-AAP1109} that only concerns the VaR component of our MLSA scheme.
Additionally, the martingale array studied in the proof of the aforementioned theorem is different from the one in Step~9 of the proof of Theorem~\ref{thm:ml:clt}, Appendix~\ref{prf:ml:clt}. Indeed, due to the discontinuity of the gradient $H_1$ \eqref{eq:H1}, the summation order of the martingale array had to be swapped to allow invoking the CLT for martingale arrays~\cite[Corollary~3.1]{nla.cat-vn2887492}.

\end{enumerate}
\end{remark}

\subsection{Complexity Analysis}

The global error of the MLSA scheme \eqref{eq:xi:ML}-\eqref{eq:C:ML} decomposes into statistical and bias errors:
\begin{align}
\xi^\text{\tiny\rm ML}_\mathbf{N}-\xi^0_\star
&=\big(\xi^\text{\tiny\rm ML}_\mathbf{N}-\xi^{h_L}_\star\big)+(\xi^{h_L}_\star-\xi^0_\star),
\label{eq:global:error:VaR:MLSA}
\\
\chi^\text{\tiny\rm ML}_\mathbf{N}-\chi^0_\star
&=\big(\chi^\text{\tiny\rm ML}_\mathbf{N}-\chi^{h_L}_\star\big)+(\chi^{h_L}_\star-\chi^0_\star).
\label{eq:global:error:ES:MLSA}
\end{align}

\begin{theorem}\label{thm:cost:ml}
Let $\varepsilon\in(0,1)$ be a prescribed accuracy.
Within the framework of Theorem~\ref{thm:ml:clt}, if $h_0>\varepsilon$, then,
setting
\begin{equation}\label{eq:L}
L=\bigg\lceil\frac{\ln{(h_0/\varepsilon)}}{\ln{M}}\bigg\rceil\sim\frac{\left|\ln{\varepsilon}\right|}{\ln{M}}
\end{equation}
achieves a global convergence rate in distribution for the MLSA scheme \eqref{eq:xi:ML}-\eqref{eq:C:ML} of order $\varepsilon$ as $\varepsilon\downarrow0$.
The corresponding complexity satisfies
\begin{equation*}
\Cost^\beta_\text{\rm\tiny MLSA}\leq C\varepsilon^{-1-\frac3{2\beta}},
\end{equation*}
for some positive constant $C<\infty$.
The optimal complexity, reached for $\beta=1$ under the constraint $\lambda\gamma_1>1$, satisfies
\begin{equation*}
\Cost^1_\text{\rm\tiny MLSA}\leq C\varepsilon^{-\frac52}.
\end{equation*}
\end{theorem}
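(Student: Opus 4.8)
The plan is to establish the complexity bound by combining the global error decompositions \eqref{eq:global:error:VaR:MLSA}--\eqref{eq:global:error:ES:MLSA} with the CLT of Theorem~\ref{thm:ml:clt}, and then to count the total computational cost induced by the iterations amounts \eqref{eq:N_ell} and the per-iteration inner sampling cost. First I would verify that the chosen level count $L$ in \eqref{eq:L} indeed drives the global error to order $\varepsilon$. The statistical parts $\xi^\text{\tiny\rm ML}_\mathbf{N}-\xi^{h_L}_\star$ and $\chi^\text{\tiny\rm ML}_\mathbf{N}-\chi^{h_L}_\star$ are of order $h_L$ and $h_L^{\frac1\beta+\frac{2\beta-1}{4\beta(1+\beta)}}$ respectively by Theorem~\ref{thm:ml:clt}, while the bias parts $\xi^{h_L}_\star-\xi^0_\star$ and $\chi^{h_L}_\star-\chi^0_\star$ are $\OO(h_L)$ by Lemma~\ref{lmm:error}(\ref{lmm:error:weak}). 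With $L=\lceil\ln(h_0/\varepsilon)/\ln M\rceil$, one has $h_L=h_0 M^{-L}\leq\varepsilon$, so both the VaR statistical and bias errors are $\OO(\varepsilon)$; for the ES, the exponent $\frac1\beta+\frac{2\beta-1}{4\beta(1+\beta)}\geq1$ for $\beta\in(\frac12,1]$, so the ES statistical error is $\oo(\varepsilon)$ and the $\OO(h_L)$ bias dominates at order $\varepsilon$. This confirms the order-$\varepsilon$ convergence rate in distribution.

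Next I would carry out the cost count. The dominant cost is the inner Monte Carlo sampling: at level $\ell$, each of the $N_\ell$ iterations requires $\OO(h_\ell^{-1})=\OO(KM^\ell)$ samples of $\varphi$, so the total complexity is, up to a constant,
\begin{equation*}
\Cost^\beta_\text{\rm\tiny MLSA}\leq C\sum_{\ell=0}^L N_\ell\, h_\ell^{-1}.
\end{equation*}
Substituting the optimal $N_\ell$ from \eqref{eq:N_ell} gives
\begin{equation*}
\Cost^\beta_\text{\rm\tiny MLSA}\leq C\,h_L^{-\frac2\beta}\bigg(\sum_{\ell'=0}^L h_{\ell'}^{-\frac{2\beta-1}{2(1+\beta)}}\bigg)^\frac1\beta\sum_{\ell=0}^L h_\ell^{\frac3{2(1+\beta)}-1}.
\end{equation*}
Here I would recognize the same optimized complexity expression already derived in the $L^2(\mathbb{P})$ analysis of \cite[Theorem~3.9]{mlsa}, since the $N_\ell$ were deliberately chosen to match that optimum. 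The remaining task is a routine geometric-sum estimate: both sums $\sum_\ell h_\ell^{-\frac{2\beta-1}{2(1+\beta)}}$ and $\sum_\ell h_\ell^{\frac3{2(1+\beta)}-1}$ are geometric in $\ell$ with ratio governed by powers of $M$, so each is controlled by its dominant term (up to a multiplicative constant depending on $M$ and $\beta$), and the whole expression collapses to a power of $h_L^{-1}\sim\varepsilon^{-1}$.

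The key step requiring care is tracking the exponent of $\varepsilon$ through this collapse. After bounding each geometric sum by its largest summand and using $h_L\sim\varepsilon$, one should obtain the exponent $-1-\frac3{2\beta}$ on $\varepsilon$; I would double-check the arithmetic of combining $-\frac2\beta$, the $\frac1\beta$-th power of the first sum, and the second sum's leading exponent, as this is the algebraically delicate part and the main opportunity for a sign or factor error. Finally, specializing to $\beta=1$ — which is admissible only under the constraint $\lambda\gamma_1>1$ inherited from Theorem~\ref{thm:ml:clt} — yields the exponent $-1-\frac32=-\frac52$, giving the stated optimal bound $\Cost^1_\text{\rm\tiny MLSA}\leq C\varepsilon^{-\frac52}$. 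I would close by remarking that, since the $N_\ell$ realize the \cite{mlsa} optimum, the complexity matches the known $\OO(\varepsilon^{-(2+\delta)})$ target with $\delta=\frac3{2\beta}-1$, improving on the $\varepsilon^{-3}$ baseline of NSA and ANSA.
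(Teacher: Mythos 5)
Your proposal is correct and follows essentially the same route as the paper: bound the cost by $C\sum_{\ell=0}^L N_\ell h_\ell^{-1}$, use the error decompositions \eqref{eq:global:error:VaR:MLSA}--\eqref{eq:global:error:ES:MLSA} together with Lemma~\ref{lmm:error}(\ref{lmm:error:weak}) and Theorem~\ref{thm:ml:clt} to see the global rate is of order $h_L$, and pick $L$ so that $h_L\leq\varepsilon$. The paper omits the remaining computation as ``standard,'' whereas you carry it out; your geometric-sum bookkeeping is correct (both sums are dominated by their $\ell=L$ terms of exponent $-\frac{2\beta-1}{2(1+\beta)}$, and $-\frac2\beta-\frac{2\beta-1}{2\beta(1+\beta)}-\frac{2\beta-1}{2(1+\beta)}=-1-\frac3{2\beta}$), giving the stated bounds.
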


\begin{proof}
The complexity of MLSA satisfies
\begin{equation*}
\Cost^\beta_\text{\rm\tiny MLSA}
\leq C\sum_{\ell=0}^L\frac{N_\ell}{h_\ell},
\end{equation*}
for some positive constant $C<\infty$.
In view of \eqref{eq:global:error:VaR:MLSA}-\eqref{eq:global:error:ES:MLSA}, Lemma~\ref{lmm:error}(\ref{lmm:error:weak}) and Theorem~\ref{thm:ml:clt} guarantee that the convergence rate of the MLSA scheme is of order $h_L$ as $L\uparrow\infty$. Thus, to achieve an error of order $\varepsilon$, we must choose $h_L\leq\varepsilon$, hence \eqref{eq:L} follows. The remaining complexity computations are standard and are thereby omitted.
\end{proof}

\begin{remark}
MLSA is computationally optimal for $\beta=1$ under the condition $\lambda\gamma_1>1$. It runs optimally in $\OO(\varepsilon^{-\frac52})$ time, an order of magnitude faster than NSA and ANSA that run optimally in $\OO(\varepsilon^{-3})$ time.
\end{remark}

\section{Averaged Multilevel Stochastic Approximation Algorithm}
\label{sec:mlsa:avg}

As shown in Theorem~\ref{thm:cost:ml}, the MLSA scheme \eqref{eq:xi:ML}-\eqref{eq:C:ML} is optimal for $\beta=1$ under the non trivial constraint $\lambda\gamma_1>1$.
The hyperparameter $\gamma_1$ must be carefully fine-tuned, adding a significant tuning phase to MLSA's execution time. To address this limitation, we look into applying the Polyak-Ruppert averaging principle to the multilevel paradigm.

For each level $0\leq\ell\leq L$ and $j\in\{(\ell-1)^+,\ell\}$, setting $\bar\xi^{h_j,\ell}_0=0$, we consider the estimate $\overline{\xi}^{h_j,\ell}_{N_\ell}$ calculated by averaging out the simulations $\{\xi^{h_j,\ell}_n,1\leq n\leq N_\ell\}$ from \eqref{eq:sa:ml:alg:xi}.
The AMLSA estimator of the VaR is defined by
\begin{equation}
\label{eq:xi:ML:avg}
\overline{\xi}^\text{\tiny\rm ML}_\mathbf{N}=\overline{\xi}^{h_0,0}_{N_0}
+\sum_{\ell=1}^L\overline{\xi}^{h_\ell,\ell}_{N_\ell}-\overline{\xi}^{h_{\ell-1},\ell}_{N_\ell}.
\end{equation}

\subsection{Convergence Rate Analysis}

\begin{theorem}
\label{thm:avg:ml:clt}
Suppose that the framework of Theorem~\ref{thm:ml:clt} holds and that $\delta_0\geq\frac18$. If $\gamma_n=\gamma_1n^{-\beta}$, $n\geq1$, $\gamma_1>0$, $\beta\in\big(\frac89,1\big)$, then, setting
\begin{equation}
\label{eq:N_ell:avg}
N_\ell
=h_L^{-2}\bigg(\sum_{\ell'=0}^Lh_{\ell'}^{-\frac14}\bigg)h_\ell^\frac34,
\quad0\leq\ell\leq L,
\end{equation}
it holds that
\begin{equation*}
\begin{pmatrix}
h_L^{-1}\big(\overline{\xi}^\text{\tiny\rm ML}_\mathbf{N}-\xi^{h_L}_\star\big)\\
h_L^{-\frac98}\big(\chi^\text{\tiny\rm ML}_\mathbf{N}-\chi^{h_L}_\star\big)
\end{pmatrix}
\stackrel[]{\mathcal{L}}{\longrightarrow}
\mathcal{N}(0,\overline\Sigma^\text{\tiny\rm ML})
\quad\mbox{ as }\quad
L\uparrow\infty,
\end{equation*}
where
\begin{equation}\label{eq:avg:Sigma:ML}
\overline\Sigma^\text{\tiny\rm ML}=
\begin{pmatrix}
\frac{\mathbb{E}[|G|f_G(\xi^0_\star)]}{(1-\alpha)^2(1-M^{-1/4})}
&0\\
0
&\frac{h_0^{-3/8}(1-M^{-1/4})^{1/2}\Var((X_{h_0}-\xi^{h_0}_\star)^+)}{(1-\alpha)^2}+\frac{h_0^{1/4}\Var(\mathds{1}_{X_0>\xi^0_\star}G)}{(1-\alpha)^2M^{1/4}}
\end{pmatrix},
\end{equation}
with $G$ being defined in Lemma~\ref{lmm:technical}(\ref{lmm:technical-i}).
\end{theorem}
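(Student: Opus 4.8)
The plan is to graft the Polyak--Ruppert linearization used for the averaged nested scheme (Appendix~\ref{prf:avg:nested:clt}) onto the multilevel martingale-array machinery of Theorem~\ref{thm:ml:clt} (Appendix~\ref{prf:ml:clt}). Starting from the telescopic identities \eqref{eq:xi*^hL}--\eqref{eq:C*^hL}, I would write
\[
\overline{\xi}^\text{\tiny\rm ML}_\mathbf{N}-\xi^{h_L}_\star=\big(\overline{\xi}^{h_0,0}_{N_0}-\xi^{h_0}_\star\big)+\sum_{\ell=1}^L\Big[\big(\overline{\xi}^{h_\ell,\ell}_{N_\ell}-\xi^{h_\ell}_\star\big)-\big(\overline{\xi}^{h_{\ell-1},\ell}_{N_\ell}-\xi^{h_{\ell-1}}_\star\big)\Big],
\]
together with the analogous expansion of $\chi^\text{\tiny\rm ML}_\mathbf{N}-\chi^{h_L}_\star$, the summands being independent across levels. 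Since the ES component coincides with the (unaveraged) MLSA scheme, its treatment is inherited from Theorem~\ref{thm:ml:clt}: the martingale part of $\chi^{h_j,\ell}_{N_\ell}-\chi^{h_j}_\star$ is produced by the self-averaging $\frac1{n+1}$ recursion \eqref{eq:sa:ml:alg:chi} and its paired increments are controlled through Lemma~\ref{lmm:technical}(\ref{lmm:technical-iii}); I would only re-run this computation under the new budget \eqref{eq:N_ell:avg} and scaling $h_L^{-9/8}$ to confirm that the two surviving contributions --- the initial level $\ell=0$, carrying $\Var((X_{h_0}-\xi^{h_0}_\star)^+)$, and the asymptotic increments, carrying $\Var(\mathds{1}_{X_0>\xi^0_\star}G)$ --- reassemble into the second diagonal entry of $\overline\Sigma^\text{\tiny\rm ML}$.

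The genuinely new work concerns the averaged VaR. For each $j\in\{\ell-1,\ell\}$ I would invoke the averaged stochastic-approximation expansion
\[
\overline{\xi}^{h_j,\ell}_{N_\ell}-\xi^{h_j}_\star=-\frac1{V_{h_j}''(\xi^{h_j}_\star)}\,\frac1{N_\ell}\sum_{n=1}^{N_\ell}\Delta M^{(j,\ell)}_n+R^{(j,\ell)}_{N_\ell},
\]
where $\Delta M^{(j,\ell)}_n=H_1(\xi^{h_j,\ell}_{n-1},X^{(n)}_{h_j,\ell})-V_{h_j}'(\xi^{h_j,\ell}_{n-1})$ is the gradient martingale increment and $R^{(j,\ell)}_{N_\ell}$ gathers the boundary and second-order terms produced by the Abel summation. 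By Assumptions~\ref{asp:misc} and~\ref{asp:fh-f0} (with $\delta_0\geq\frac18$), the prefactors $V_{h_j}''(\xi^{h_j}_\star)=(1-\alpha)^{-1}f_{X_{h_j}}(\xi^{h_j}_\star)$ converge to $V_0''(\xi^0_\star)$ quickly enough that the level-$\ell$ increment collapses, modulo remainders, to $-V_0''(\xi^0_\star)^{-1}N_\ell^{-1}\sum_n(\Delta M^{(\ell,\ell)}_n-\Delta M^{(\ell-1,\ell)}_n)$. The identity $(\mathds{1}_{a\geq\xi}-\mathds{1}_{b\geq\xi})^2=|\mathds{1}_{a\geq\xi}-\mathds{1}_{b\geq\xi}|$ turns the conditional second moment of this paired increment into an $L^1$ quantity which, evaluated at the running iterate $\xi^{h_\ell,\ell}_{n-1}\to\xi^0_\star$ and renormalized by $h_\ell^{-1/2}$, is governed by Lemma~\ref{lmm:technical}(\ref{lmm:technical-ii}) and yields the factor $\mathbb{E}[|G|f_G(\xi^0_\star)]$. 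A direct estimate shows the $\ell=0$ VaR term is $\OO(h_L^{1/8})$, hence negligible at scale $h_L^{-1}$, so the VaR variance is built entirely from the telescopic increments.

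I would then collect the renormalized leading terms into a single martingale array indexed by the pairs $(\ell,n)$ and apply the martingale-array CLT~\cite[Corollary~3.1]{nla.cat-vn2887492}. As in Theorem~\ref{thm:ml:clt}, the discontinuity of $H_1$ obliges me to order the array by $n$ within each level, and two conditions must be checked: (i) the sum of conditional variances converges in probability to $\overline\Sigma^\text{\tiny\rm ML}$, the diagonal entries emerging from the geometric summation of $N_\ell=h_L^{-2}\big(\sum_{\ell'}h_{\ell'}^{-1/4}\big)h_\ell^{3/4}$ against the per-level variances, and the off-diagonal entry vanishing because the VaR and ES are renormalized at the mismatched speeds $h_L^{-1}$ and $h_L^{-9/8}$, which leaves their within-level cross-covariance of order $h_L^{1/8}L\to0$; and (ii) a conditional Lindeberg condition, immediate for the bounded VaR increments and following from $\mathbb{E}[|\varphi(Y,Z)|^{2+\delta}]<\infty$ for the ES increments.

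The main obstacle is the uniform control of the linearization remainders $R^{(j,\ell)}_{N_\ell}$ at the multilevel scale. Unlike the single-level averaged case, where $\beta\in\big(\frac12,1\big)$ suffices, here each remainder is magnified by $h_L^{-1}$ --- and, through the $V$-curvature drift $\tfrac12V_{h_j}''(\xi^{h_j}_\star)(\xi^{h_j,\ell}_{n-1}-\xi^{h_j}_\star)^2$ that the non-averaged VaR iterates inject into the ES, by $h_L^{-9/8}$ --- before being summed over the $L\uparrow\infty$ levels. Bounding the $L^2$-norms of the boundary term $N_\ell^{-1}\gamma_{N_\ell}^{-1}(\xi^{h_j,\ell}_{N_\ell}-\xi^{h_j}_\star)$ and of the second-order Taylor term $N_\ell^{-1}\sum_n(\xi^{h_j,\ell}_{n-1}-\xi^{h_j}_\star)^2$ --- using the uniform-in-$h$ iterate controls of~\cite{mlsa} together with the coupled $\OO(h_\ell^{1/4})$ decay of the paired differences and the density-mismatch bound of Assumption~\ref{asp:fh-f0} --- and requiring their renormalized level-sums to be $\oo(1)$ is where the sharpened constraints $\beta>\frac89$ and $\delta_0\geq\frac18$ enter. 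Carrying out these negligibility estimates uniformly in $h$ and across all levels is the technical heart of the argument.
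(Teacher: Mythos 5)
Your plan is correct and follows essentially the same route as the paper's proof in Appendix~\ref{prf:avg:ml:clt}: the same telescopic decomposition, the same Polyak--Ruppert linearization per level (your normalization by $V_{h_j}''(\xi^{h_j}_\star)$ versus the paper's direct use of $V_0''(\xi^0_\star)$ plus a curvature-mismatch term $g^h_n$ is only a cosmetic reshuffling, and both hinge on Assumption~\ref{asp:fh-f0} with $\delta_0\geq\frac18$), the same negligibility analysis of the Abel boundary, second-order Taylor and level-$0$ terms where $\beta>\frac89$ enters, and the same application of the martingale-array CLT with the summation order swapped, including the off-diagonal bound of order $h_L^{1/8}L$. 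The remaining work you identify as the technical heart is exactly what the paper's Steps~1--9 carry out.
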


\begin{proof}
See Appendix~\ref{prf:avg:ml:clt}.
\end{proof}

\begin{remark}
\begin{enumerate}[(i)]
    \item
    The iterations amounts in \eqref{eq:N_ell:avg} are determined in Appendix~\ref{apx:B2}.

    \item
    In contrast with classical Polyak-Ruppert type algorithms, Theorem~\ref{thm:avg:ml:clt} does not support all $\beta\in\big(\frac12,1\big)$, but instead restricts $\beta$ to $\big(\frac89,1\big)$.
    The lower threshold on the admissible values of $\beta$ manifests when studying the statistical error of the VaR.

    \item
    The convergence rate for the VaR is in $\OO(h_L)$, as with MLSA. The convergence rate for the ES is in $\OO(h_L^\frac98)$ for any $\beta\in\big(\frac89,1\big)$, without any extra condition on $\gamma_1$.
    This rate matches the optimal convergence rate of the ES MLSA scheme, albeit with the additional constraint $\lambda\gamma_1>1$.

    \item
     As with MLSA, the ES variance factor is composed of two terms, the first one affected by the level $0$ simulations and the second one affected by the asymptotics of the ES. A careful fine-tuning of $h_0$ is hence necessary to reduce the impact of the initial level simulation on the ES.
     See Remark~\ref{rmk:ml}(\ref{rmk:h0:L}) for complementary comments in this regard.

    \item
    The asymptotic covariance matrix is independent of $\gamma_1$ (and $\beta$), suggesting numerical stability for the VaR and ES estimations for any $\beta\in\big(\frac89,1\big)$. It also suggests asymptotic decorrelation of the VaR and ES schemes, reinforcing their robustness.

    \item
    The covariance matrix \eqref{eq:avg:Sigma:ML} is not determinable numerically, but it is estimatable using multiple runs of the AMLSA scheme.

    The ES variance factor could be estimated with the help of the formulas detailed in Remark~\ref{rmk:ml}(\ref{rmk:sim}).
\end{enumerate}
\end{remark}

\subsection{Complexity Analysis}

Recalling the ES global error decomposition \eqref{eq:global:error:ES:MLSA}, we decompose similarly the VaR's global error into a statistical and a bias error:
\begin{equation}
\overline{\xi}^\text{\tiny\rm ML}_\mathbf{N}-\xi^0_\star
=\big(\overline{\xi}^\text{\tiny\rm ML}_\mathbf{N}-\xi^{h_L}_\star\big)+(\xi^{h_L}_\star-\xi^0_\star).
\label{eq:global:error:VaR:AMLSA}
\end{equation}

\begin{theorem}\label{thm:cost:aml}
Let $\varepsilon\in(0,1)$ be some prescribed accuracy.
Within the framework of Theorem~\ref{thm:avg:ml:clt}, if $h_0>\varepsilon$, then, setting $L$ as in \eqref{eq:L} achieves a global convergence rate in distribution for the AMLSA scheme \eqref{eq:xi:ML:avg}-\eqref{eq:C:ML} of order $\varepsilon$ as $\varepsilon\downarrow0$.
The corresponding complexity satisfies
\begin{equation*}
\Cost_\text{\rm\tiny AMLSA}\leq C\varepsilon^{-\frac52},
\end{equation*}
for some positive constant $C<\infty$.
\end{theorem}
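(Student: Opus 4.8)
The plan is to follow the proof of Theorem~\ref{thm:cost:ml}, the key observation being that Polyak--Ruppert averaging only adds an $\OO(1)$ per-iteration overhead, so that the cost of AMLSA coincides, up to a multiplicative constant, with that of MLSA:
\begin{equation*}
\Cost_\text{\rm\tiny AMLSA}\leq C\sum_{\ell=0}^L\frac{N_\ell}{h_\ell},
\end{equation*}
where the factor $h_\ell^{-1}$ counts the inner samples of $\varphi$ needed to produce one innovation at level $\ell$ (at level $\ell$ the two coupled chains share $\OO(h_\ell^{-1})$ inner draws). First I would pin down the global convergence rate. Decomposing the errors through \eqref{eq:global:error:VaR:AMLSA} and \eqref{eq:global:error:ES:MLSA}, Lemma~\ref{lmm:error}(\ref{lmm:error:weak}) controls the bias errors $\xi^{h_L}_\star-\xi^0_\star$ and $\chi^{h_L}_\star-\chi^0_\star$ at order $h_L$, while Theorem~\ref{thm:avg:ml:clt} controls the statistical errors $\overline{\xi}^\text{\tiny\rm ML}_\mathbf{N}-\xi^{h_L}_\star$ and $\chi^\text{\tiny\rm ML}_\mathbf{N}-\chi^{h_L}_\star$ at orders $h_L$ and $h_L^{9/8}$ respectively. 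The VaR term dominates, so the global rate is $\OO(h_L)$, and enforcing $h_L\leq\varepsilon$ yields the choice \eqref{eq:L}.

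For the complexity bound I would substitute the allocation \eqref{eq:N_ell:avg} into the cost estimate. Since $N_\ell/h_\ell=h_L^{-2}\big(\sum_{\ell'=0}^L h_{\ell'}^{-1/4}\big)h_\ell^{-1/4}$, summing over $\ell$ gives
\begin{equation*}
\sum_{\ell=0}^L\frac{N_\ell}{h_\ell}=h_L^{-2}\bigg(\sum_{\ell=0}^L h_\ell^{-1/4}\bigg)^{2}.
\end{equation*}
Writing $h_\ell=h_0M^{-\ell}$, the geometric sum satisfies $\sum_{\ell=0}^L h_\ell^{-1/4}=h_0^{-1/4}\frac{M^{(L+1)/4}-1}{M^{1/4}-1}\sim C\,h_0^{-1/4}M^{L/4}=C\,h_L^{-1/4}$ as $L\uparrow\infty$, the last step using $M^L=h_0/h_L$. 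Hence $\sum_{\ell=0}^L N_\ell/h_\ell\leq C\,h_L^{-5/2}$, and since $h_L\sim\varepsilon$ by \eqref{eq:L}, $\Cost_\text{\rm\tiny AMLSA}\leq C\varepsilon^{-5/2}$.

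I expect no genuine obstacle in this statement: the analytic difficulty is already discharged by the averaged CLT of Theorem~\ref{thm:avg:ml:clt} and the bias/statistical controls of Lemma~\ref{lmm:error}, leaving only standard multilevel geometric-series bookkeeping. The point worth stressing is that the allocation \eqref{eq:N_ell:avg}---which is precisely \eqref{eq:N_ell} evaluated at $\beta=1$---produces the exponent $-\frac52$ uniformly over $\beta\in\big(\frac89,1\big)$. This is the payoff of averaging: it recovers the optimal MLSA complexity $\varepsilon^{-5/2}$ across the whole admissible range of $\beta$, without the constraint $\lambda\gamma_1>1$ that MLSA needs to attain it at $\beta=1$.
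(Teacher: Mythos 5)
Your proposal is correct and follows essentially the same route as the paper: bound the cost by $C\sum_{\ell=0}^L N_\ell/h_\ell$, use Lemma~\ref{lmm:error}(\ref{lmm:error:weak}) together with the averaged CLT to get a global rate of order $h_L$, enforce $h_L\leq\varepsilon$ to obtain \eqref{eq:L}, and conclude by the geometric-series computation. The paper declares that last computation ``standard'' and omits it; you carry it out explicitly (and correctly), which is the only difference.
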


\begin{proof}
The complexity of AMLSA satisfies
\begin{equation*}
\Cost_\text{\rm\tiny AMLSA}
\leq C\sum_{\ell=0}^L\frac{N_\ell}{h_\ell},
\end{equation*}
for some positive constant $C<\infty$.
Lemma~\ref{lmm:error}(\ref{lmm:error:weak}) and Theorem~\ref{thm:ml:clt} guarantee that, with regard to \eqref{eq:global:error:VaR:AMLSA}-\eqref{eq:global:error:ES:MLSA}, the convergence rate of the AMLSA scheme is of order $h_L$ as $L\uparrow\infty$. Therefore, to attain an error of order $\varepsilon$, we must ensure that $h_L\leq\varepsilon$, hence \eqref{eq:L} follows. The remaining cost computations are standard and are therefore skipped.
\end{proof}

\begin{remark}
The complexity of AMLSA is independent of $\beta\in\big(\frac89,1\big)$ and bears no additional condition on $\gamma_1>0$.
\end{remark}

To ease comparison, the previous convergence rate and complexity results are gathered in Tables~\ref{tbl:cvr} and~\ref{tbl:cpx}.

\begin{table}[H]
    \centering
    \begin{tabular}{|l|l|c|c|l|}
         \hline
         \multirow{2}{*}{Algorithm} & \multirow{2}{10mm}{\centering CLT} & \multicolumn{2}{c|}{Convergence Rate} & \multirow{2}{10mm}{Conditions} \\
         \cline{3-4}
         & & VaR & ES & \\
         \specialrule{0.1em}{0em}{0em}
         \multirow{2}{*}{NSA \eqref{eq:sa:nested:alg:xi}-\eqref{eq:sa:nested:alg:C}}
         & \multirow{2}{*}{Corollary~\ref{crl:nested:clt}}
         & \multirow{2}{*}{\centering $h^\beta$}
         & \multirow{2}{*}{\centering $h$}
         & $\mathcal{H}\ni h\downarrow0$, $\beta\in\big(\frac12,1\big]$,\\
         & & & & with $\lambda\gamma_1>1$ if $\beta=1$\\
         \hline
         ANSA \eqref{eq:sa:avg:nested:alg:xi}-\eqref{eq:sa:nested:alg:C}
         & Corollary~\ref{crl:avg:nested:clt}
         & $h$
         & $h$
         & $\mathcal{H}\ni h\downarrow0$, $\beta\in\big(\frac12,1\big)$\\
         \hline
         \multirow{2}{*}{MLSA \eqref{eq:sa:ml:alg:xi}-\eqref{eq:sa:ml:alg:chi}}
         & \multirow{2}{*}{Theorem~\ref{thm:ml:clt}}
         & \multirow{2}{*}{\centering $h_L$}
         & \multirow{2}{*}{\centering $h_L^{\frac1\beta+\frac{2\beta-1}{4\beta(1+\beta)}}$}
         & $L\uparrow\infty$, $\beta\in\big(\frac12,1\big]$,\\
         & & & & with $\lambda\gamma_1>1$ if $\beta=1$\\
         \hline
         AMLSA \eqref{eq:xi:ML:avg}-\eqref{eq:sa:ml:alg:chi}
         & Theorem~\ref{thm:avg:ml:clt}
         & $h_L$
         & $h_L^\frac98$
         & $L\uparrow\infty$, $\beta\in\big(\frac89,1\big)$\\
         \hline
    \end{tabular}
    \caption{Summary table of the convergence rate results.
    $h\in\mathcal{H}$ is a bias parameter and $L$ is a number of levels.}
    \label{tbl:cvr}
\end{table}

\begin{table}[H]
    \centering
    \begin{tabular}{|c|c|c|}
         \hline
         Algorithm & Complexity Theorem & Complexity \\
         \specialrule{0.1em}{0em}{0em}
         NSA \eqref{eq:sa:nested:alg:xi}-\eqref{eq:sa:nested:alg:C}
         & Theorem~\ref{cost:nsa}
         & $\varepsilon^{-\frac3\beta}$, $\beta\in\big(\frac12,1\big]$, with $\lambda\gamma_1>1$ if $\beta=1$\\
         \hline
         ANSA \eqref{eq:sa:avg:nested:alg:xi}-\eqref{eq:sa:nested:alg:C}
         & Theorem~\ref{cost:ansa}
         & $\varepsilon^{-3}$, $\beta\in\big(\frac12,1\big)$\\
         \hline
         MLSA \eqref{eq:sa:ml:alg:xi}-\eqref{eq:sa:ml:alg:chi}
         & Theorem~\ref{thm:cost:ml}
         & $\varepsilon^{-1-\frac3{2\beta}}$, $\beta\in\big(\frac12,1\big]$, with $\lambda\gamma_1>1$ if $\beta=1$\\
         \hline
         AMLSA \eqref{eq:xi:ML:avg}-\eqref{eq:sa:ml:alg:chi}
         & Theorem~\ref{thm:cost:aml}
         & $\varepsilon^{-\frac52}$, $\beta\in\big(\frac89,1\big)$\\
         \hline
    \end{tabular}
    \caption{Summary table of the complexity results. $\varepsilon\in(0,1)$ designates a prescribed accuracy.}
    \label{tbl:cpx}
\end{table}

\section{Financial Case Study}
\label{sec:swap}
The goal of this numerical study is to check and compare empirically the validity of the Corollaries~\ref{crl:nested:clt} and~\ref{crl:avg:nested:clt} and Theorems~\ref{thm:ml:clt} and~\ref{thm:avg:ml:clt}.
For benchmarking purposes, we consider a numerical setting that allows to retrieve the VaR and ES of a financial loss either analytically, using an unbiased scheme or using a biased one like the ones studied in the previous sections.
The code for this numerical case study is available at \href{https://github.com/azarlouzi/avg_mlsa}{\texttt{github.com/azarlouzi/avg\_mlsa}}.

The risk-free rate is $r$ and derivative pricing is done under the probability measure $\mathbb{P}$.
We consider a swap of strike $\bar{K}$ and maturity $T$ on some underlying (FX or interest) rate.
The swap is issued at par.
The rate's risk neutral model $\{S_t,0\leq t\leq T\}$ is a Bachelier process of inital value $S_0$, drift $\kappa$ and volatility $\sigma$.
The swap pays at coupon
dates $0<T_1<\dots<T_d=T$ the cash flows $\Delta T_i(S_{T_{i-1}}-\bar{K})$, where $\Delta T_i=T_i-T_{i-1}$ with $T_0=0$.
The swap's nominal $\bar{N}$ is set such that each leg is worth $100$ at inception.
For $t\in[0,T]$, we denote $i_t$ the integer such that $t\in[T_{i_t-1},T_{i_t})$, if $t\in[0,T)$, and $+\infty$ otherwise.

Hence
\begin{equation}
\label{eq:rate}
\mathrm{d}S_t=\kappa S_t\mathrm{d}t+\sigma\mathrm{d}W_t,
\quad\mbox{ i.e. }\quad
S_t=S_0\mathrm{e}^{\kappa t}+\sigma\int_0^t\mathrm{e}^{-\kappa(s-t)}\mathrm{d}W_s,
\end{equation}
where $\{W_t,0\leq t\leq T\}$ is a standard Brownian motion.s
The fair value of the swap at time $t\in[0,T]$ is
\begin{equation*}
P_t
=\bar{N}\,\mathbb{E}\bigg[\sum_{i=i_t}^d\mathrm{e}^{-r(T_i-t)}\Delta T_i(S_{T_{i-1}}-\bar{K})\bigg|\mathcal{F}_t\bigg].
\end{equation*}
The loss at some short time horizon $\delta\in(0,T_1)$ on a short position on the swap is
\begin{equation*}
X_0=\mathrm{e}^{-r\delta}P_\delta.
\end{equation*}

We are interested in retrieving the VaR $\xi^0_\star$ and ES $\chi^0_\star$ of such a position, at some confidence level $\alpha\in(0,1)$.

\subsection{Analytical and Simulation Formulas}

The swap being issued at par, i.e.~$P_0=0$, it follows that
\begin{equation}
\label{eq:strike}
\bar{K}=S_0\frac{\sum_{i=1}^d\mathrm{e}^{-rT_i}\Delta T_i\,\mathrm{e}^{\kappa T_{i-1}}}{\sum_{i=1}^d\mathrm{e}^{-rT_i}\Delta T_i}.
\end{equation}
Note that $i_\delta=1$, so that, by \eqref{eq:rate}-\eqref{eq:strike} and the fact that $\pm\int_0^\delta\mathrm{e}^{-\kappa s}\mathrm{d}W_s\sim\mathcal{N}\big(0,\int_0^\delta\mathrm{e}^{-2\kappa s}\mathrm{d}s\big)$,
\begin{equation}
\label{eq:X0=Phi(Y):bis}
X_0\stackrel{\mathcal{L}}{=}\eta Y,
\quad\mbox{ where }\quad
\eta=\bar{N}\sigma\sqrt{\frac{1-\mathrm{e}^{-2\kappa\delta}}{2\kappa}}\sum_{i=2}^d\mathrm{e}^{-rT_i}\Delta T_i\,\mathrm{e}^{\kappa T_{i-1}}
\end{equation}
and $Y\sim\mathcal{N}(0,1)$ is independent of $\{W_t,0\leq t\leq T\}$.
This allows to simulate $X_0$ exactly.

The values of $\xi^0_\star$ and $\chi^0_\star$ can be obtained analytically. Indeed,
\begin{equation}
\label{eq:xi:*}
\alpha=\mathbb{P}(X_0\leq\xi^0_\star)=\mathbb{P}(\eta Y\leq\xi^0_\star),
\quad\mbox{ i.e. }\quad
\xi^0_\star=\eta F^{-1}(\alpha),
\end{equation}
where $F$ denotes the standard Gaussian cdf.
Likewise,
\begin{equation}
\label{eq:chi:*}
\chi^0_\star=\mathbb{E}[X_0|X_0\geq\xi^0_\star]
=\frac\eta{1-\alpha}\,\mathbb{E}[Y\mathds{1}_{\eta Y\geq\xi^0_\star}],
\quad\mbox{ i.e. }\quad
\chi^0_\star=\frac\eta{1-\alpha}f\Big(\frac{\xi^0_\star}\eta\Big),
\end{equation}
where $f$ designates the standard Gaussian pdf.

We also have
\begin{equation*}
X_0
=\bar{N}\sigma\mathbb{E}\bigg[\sum_{i=2}^d\mathrm{e}^{-rT_i}\Delta T_i\,\mathrm{e}^{\kappa T_{i-1}}\int_0^{T_{i-1}}\mathrm{e}^{-\kappa s}\mathrm{d}W_s\bigg|\mathcal{F}_\delta\bigg]
\stackrel{\mathcal{L}}{=}\mathbb{E}[\varphi(Y,Z)|Y],
\end{equation*}
where
\begin{align*}
Y&=\sqrt{\frac{1-\mathrm{e}^{-2\kappa\delta}}{2\kappa}}\,U_0
\sim\pm\int_0^\delta\mathrm{e}^{-\kappa s}\mathrm{d}W_s,\\
Z_1&=\sqrt{\frac{1-\mathrm{e}^{-2\kappa(T_1-\delta)}}{2\kappa}}\,U_1
\sim\pm\int_\delta^{T_1}\mathrm{e}^{-\kappa s}\mathrm{d}W_s,\\
Z_i&=\sqrt{\frac{1-\mathrm{e}^{-2\kappa\Delta T_i}}{2\kappa}}\,U_i
\sim\pm\int_{T_{i-1}}^{T_i}\mathrm{e}^{-\kappa s}\mathrm{d}W_s,
\quad 2\leq i\leq d-1,\\
\varphi(y,z)&=\bar{N}\sigma\sum_{i=2}^d\mathrm{e}^{-rT_i}\Delta T_i\,\mathrm{e}^{\kappa T_{i-1}}\bigg(y+\sum_{j=1}^{i-1}z_j\bigg),
\quad y\in\mathbb{R},z=(z_1,\dots,z_{d-1})\in\mathbb{R}^{d-1},
\end{align*}
with $\{U_i,0\leq i\leq d-1\}\stackrel{\text{\rm\tiny i.i.d.}}{\sim}\mathcal{N}(0,1)$.
On this basis, $X_h$ can be simulated as
\begin{equation*}
X_h=\frac1K\sum_{k=1}^K\varphi(Y,Z^{(k)}),
\end{equation*}
where $h=\frac1K\in\mathcal{H}$ and $\{Z^{(k)},1\leq k\leq K\}\stackrel{\text{\rm\tiny i.i.d.}}{\sim}Z$ are independent from $Y$.

\subsection{Numerical Results}

We aim to compare numerically the asymptotic error distributions of the schemes introduced in Sections~\ref{sec:nested}--\ref{sec:mlsa:avg}.
We fix a small prescribed accuracy $\varepsilon\in(0,1)$, set $h$ as in Theorems~\ref{cost:nsa} and~\ref{cost:ansa} for the NSA and ANSA schemes and set $L$ and $\mathbf{N}=\{N_\ell,0\leq\ell\leq L\}$ as in Theorems~\ref{thm:cost:ml} and~\ref{thm:cost:aml} for the MLSA and AMLSA schemes.
We run each scheme $5000$ times and plot the corresponding joint distributions of the renormalized $(\text{VaR},\text{ES})$ estimation errors as in Corollaries~\ref{crl:nested:clt} and~\ref{crl:avg:nested:clt} and Theorems~\ref{thm:ml:clt} and~\ref{thm:avg:ml:clt}.
The unbiased SA and averaged SA (ASA) schemes of~\cite[Theorem~3.4]{10.1214/21-EJP648} and~\cite[Theorem~2.4]{BardouFrikhaPages+2009+173+210}, that are based on simulating the loss $X_0$ exactly via \eqref{eq:X0=Phi(Y):bis}, are also run for benchmarking purposes.
But, unlike~\cite[Theorem~2.4]{BardouFrikhaPages+2009+173+210}, we do not average out the ES component. Qualitatively however, the resulting outputs should be comparable.

For the case study, we set $S_0=1$, $r=2\%$, $\kappa=12\%$, $\sigma=20\%$, $T=1\,\mathrm{year}$, $\Delta T_i=3\,\mathrm{months}$, $\delta=7\;\mathrm{days}$ and $\alpha=85\%$. We use a $30/360$ day count fraction convention.
\eqref{eq:xi:*} and \eqref{eq:chi:*} yield $\xi^0_\star\approx2.19$ and $\chi^0_\star\approx3.29$.
The biased risk measures $\xi^{h_L}_\star$ and $\chi^{h_L}_\star$ needed for the CLTs of Theorems~\ref{thm:ml:clt} and~\ref{thm:avg:ml:clt} are computed by averaging out $200$ outcomes of the NSA scheme with bias $h=h_L$ and $n=10^5$ iterations, yielding $\xi^{h_L}_\star\approx2.17$ and $\chi^{h_L}_\star\approx3.41$.
We set $\varepsilon=\frac1{256}$ and $\beta=0.9$ for all six SA schemes.
The choice of $\beta$ is such that it is in the interval $\big(\frac89,1\big)$ for which Theorem~\ref{thm:avg:ml:clt} is proven. It is equal for all schemes to allow comparison.
We adopt $\{\gamma_n=1\times n^{-0.9},n\geq1\}$ for the unbiased SA and ASA algorithms, $\{\gamma_n=0.1\times(250+n)^{-0.9},n\geq1\}$ for the NSA and ANSA algorithms, and $h_0=\frac1{32}$, $M=2$, $L=3$ and $\{\gamma_n=0.1\times(1500+n)^{-0.9},n\geq1\}$ for the MLSA and AMLSA algorithms.
We also compute the theoretical ES variance factors given in the CLTs via the Monte Carlo formulas in Remarks~\ref{rmk:nsa:cv}(\ref{rmk:nsa:cv:iv}) and~\ref{rmk:ml}(\ref{rmk:sim}).

For each algorithm, we fit a bivariate Gaussian pdf on the renormalized estimation errors. The parameters of the fitted Gaussian laws are reported in Table~\ref{tbl:gaussian}.
We use the obtained Gaussian covariance matrices to derive confidence ellipses at $95\%$, using the formulas given in~\cite{errorellipses}.
Figure~\ref{fig:euption} plots, for each algorithm, a $95\%$-confidence ellipse and Gaussian marginals of the joint distributions of the renormalized $(\text{VaR},\text{ES})$ estimation errors.
We also plot in black crosses the ES Gaussian laws estimated by Monte Carlo, and report the corresponding variances in Table~\ref{tbl:fit:vs:MC}.

\begin{figure}[H]
\centering
\begin{minipage}{\textwidth}
\centerline{
\begin{subfigure}{.475\textwidth}
\flushright
\includegraphics[width=.975\textwidth]{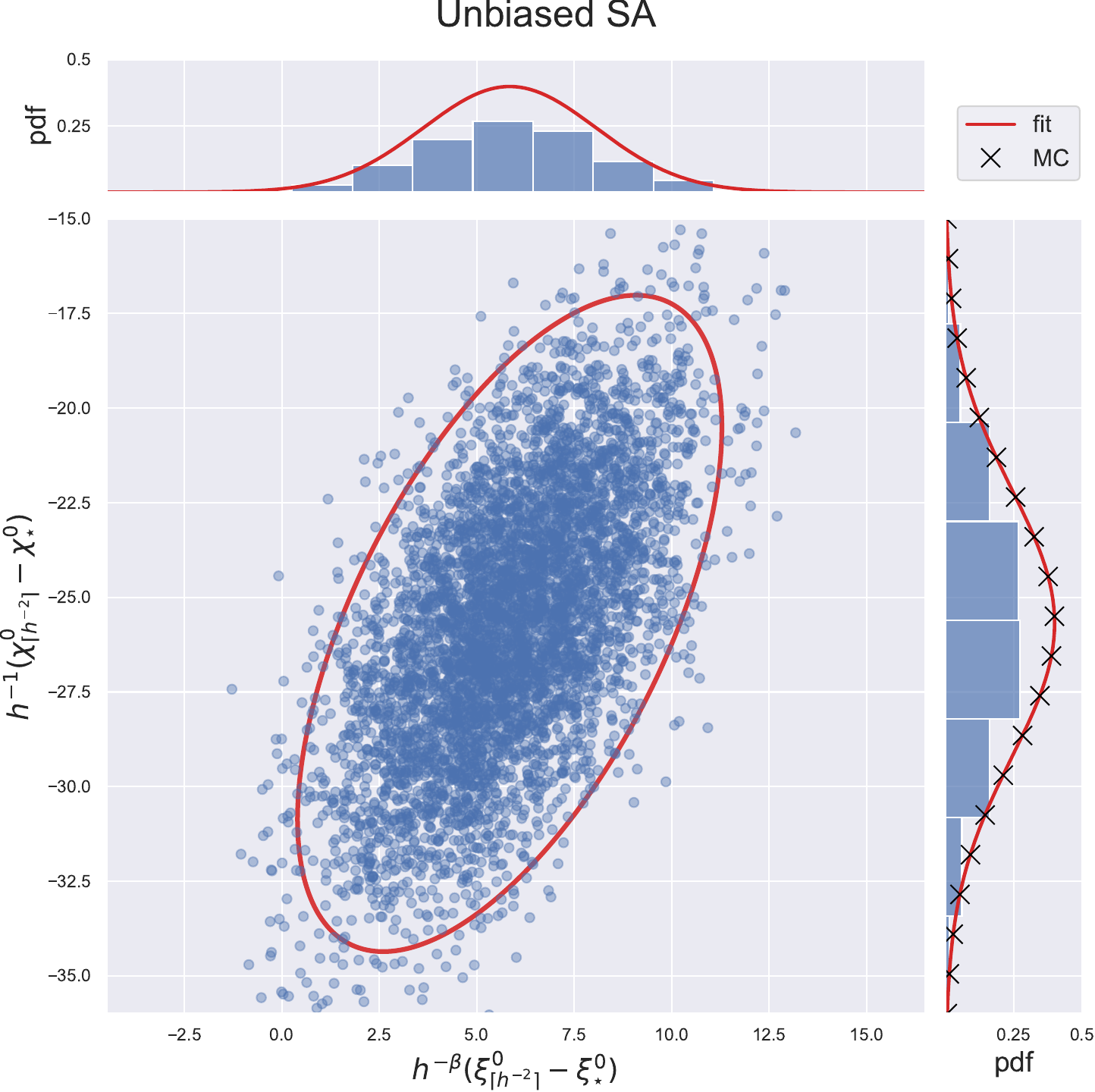}
\end{subfigure}

\hfill

\begin{subfigure}{.475\textwidth}
\flushleft
\includegraphics[width=.975\textwidth]{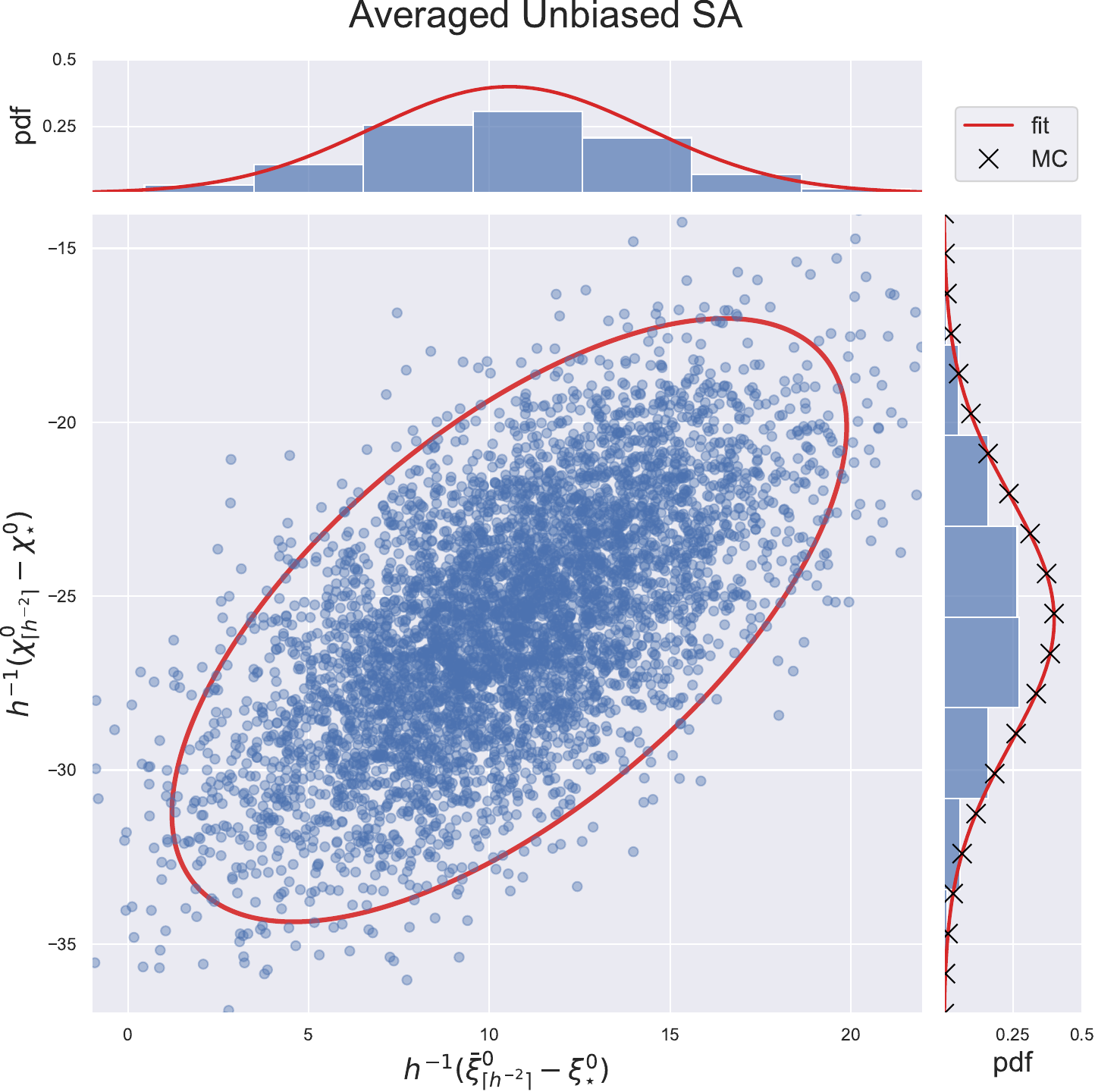}
\end{subfigure}
}
\end{minipage}

\bigskip

\begin{minipage}{\textwidth}
\centerline{
\begin{subfigure}{.475\textwidth}
\flushright
\includegraphics[width=.975\textwidth]{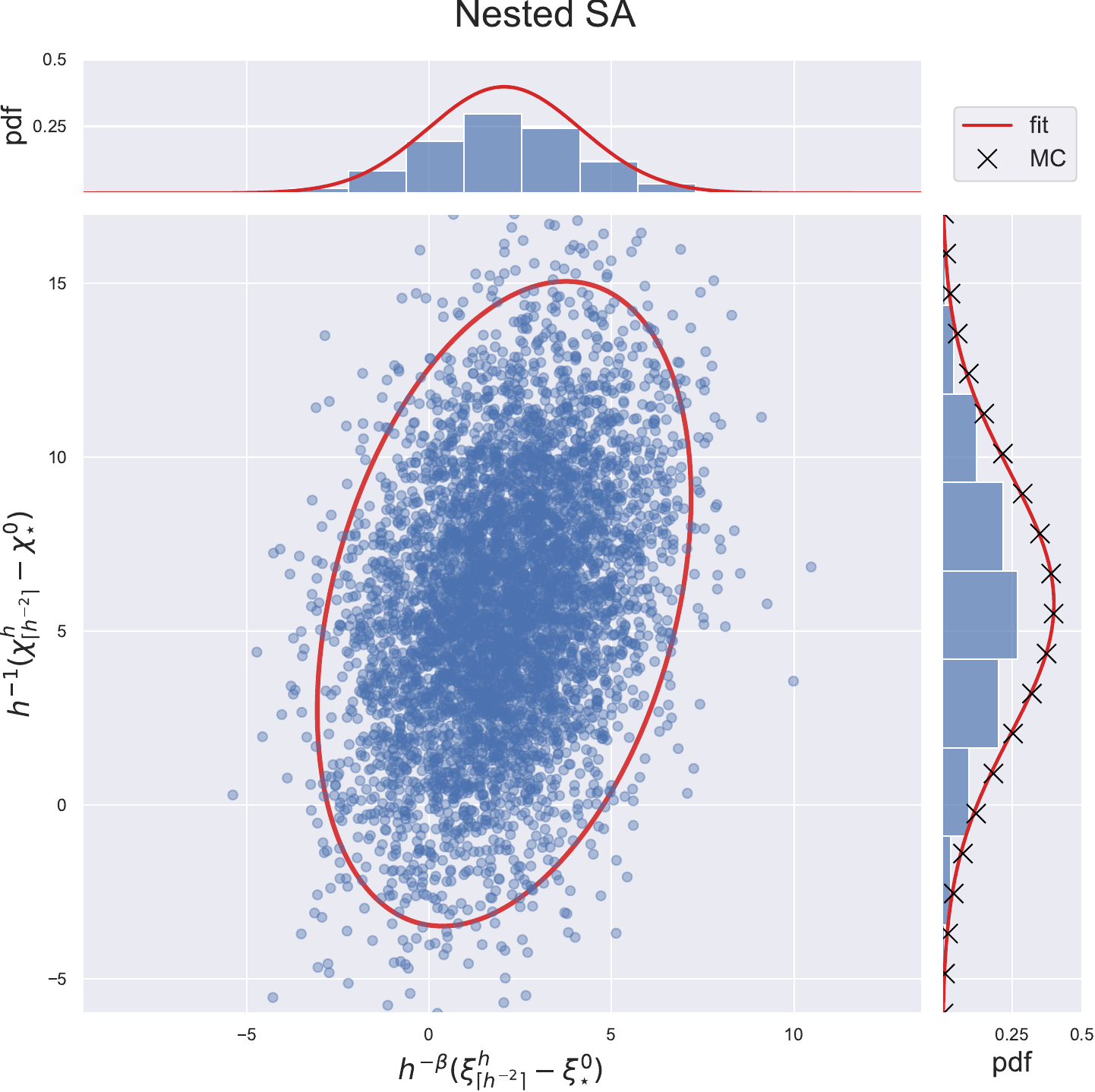}
\end{subfigure}

\hfill

\begin{subfigure}{.475\textwidth}
\flushleft
\includegraphics[width=.975\textwidth]{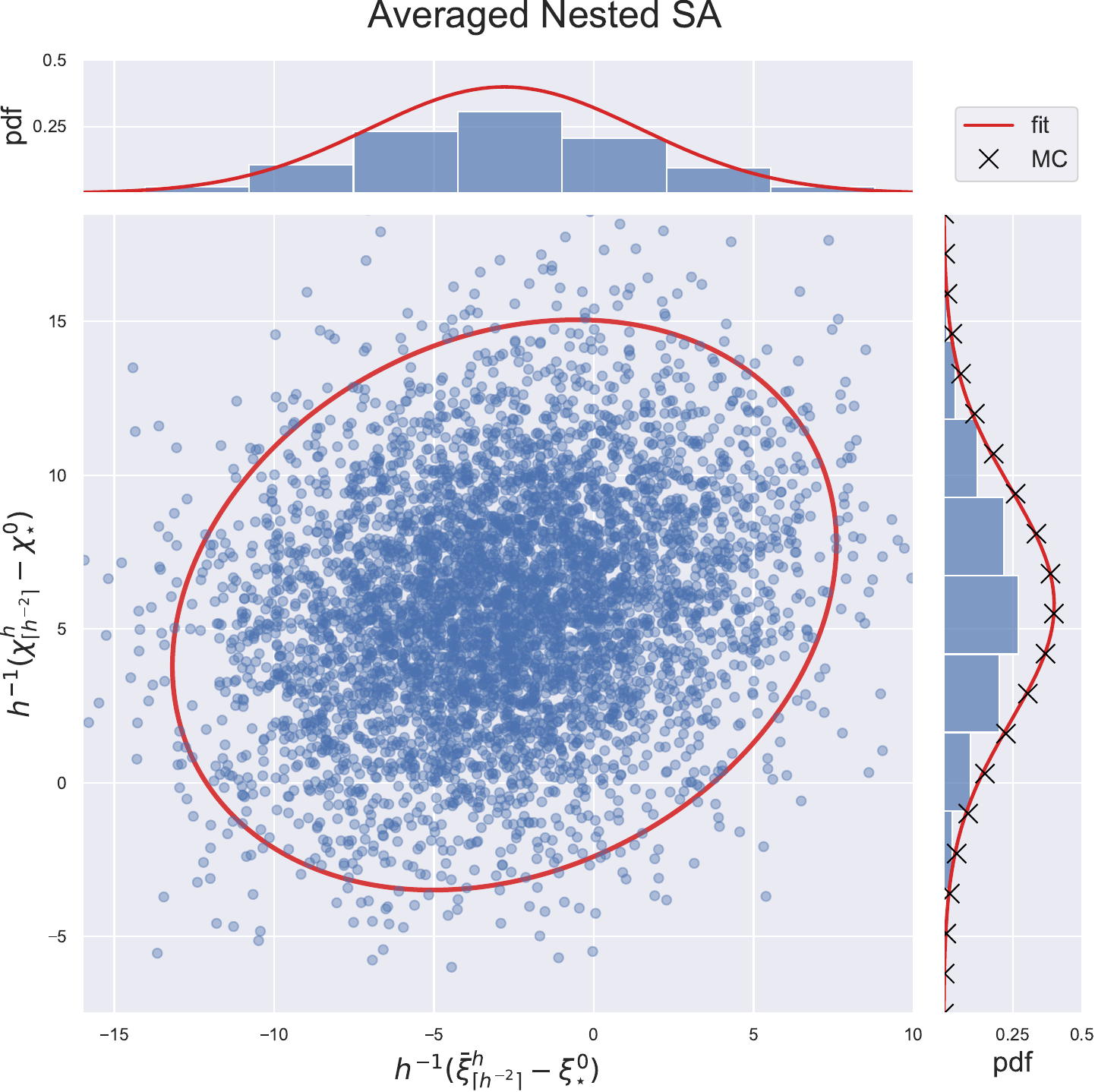}
\end{subfigure}
}
\end{minipage}

\bigskip

\begin{minipage}{\textwidth}
\centerline{
\begin{subfigure}{.475\textwidth}
\flushright
\includegraphics[width=.975\textwidth]{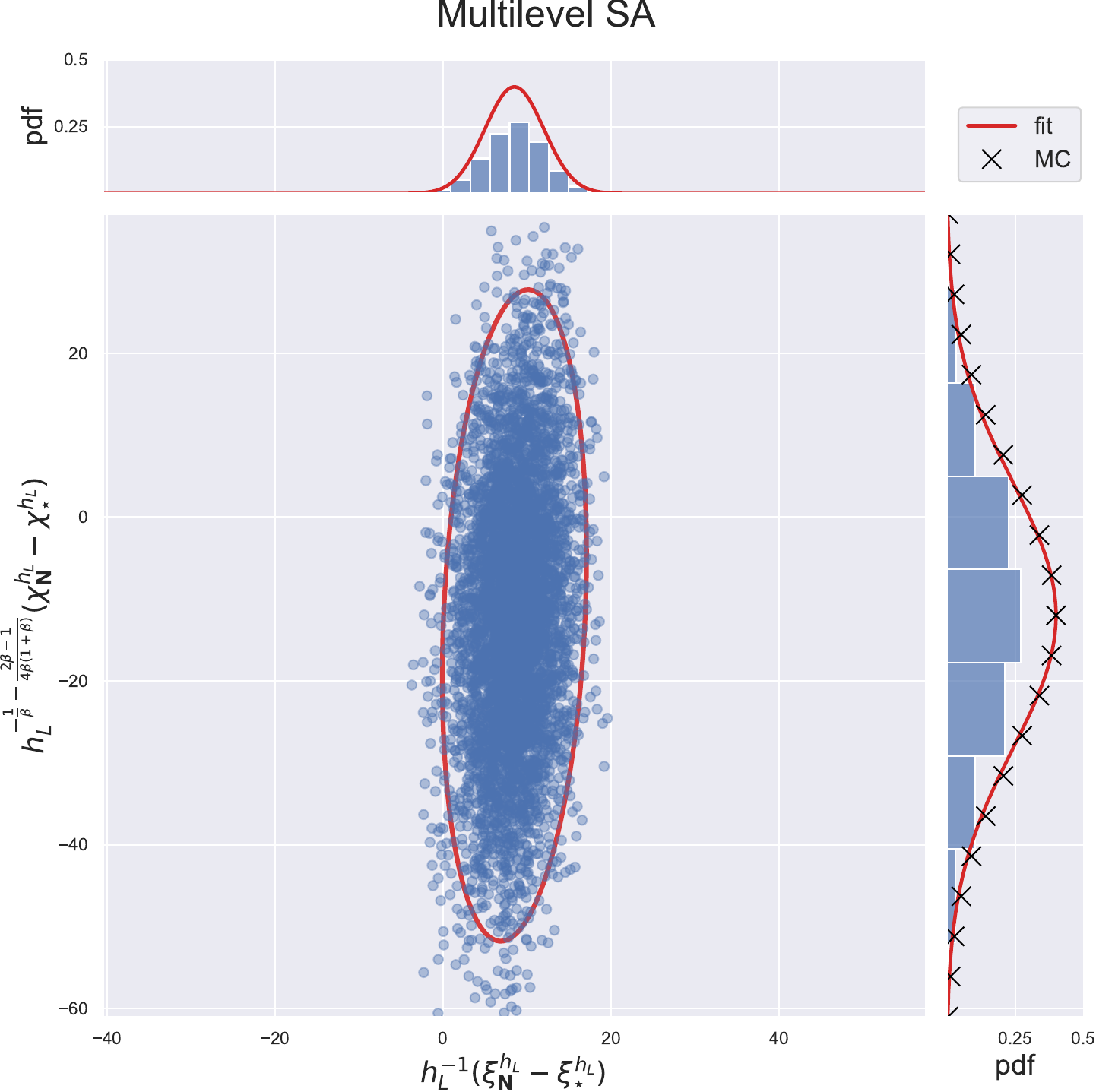}
\end{subfigure}

\hfill

\begin{subfigure}{.475\textwidth}
\flushleft
\includegraphics[width=.975\textwidth]{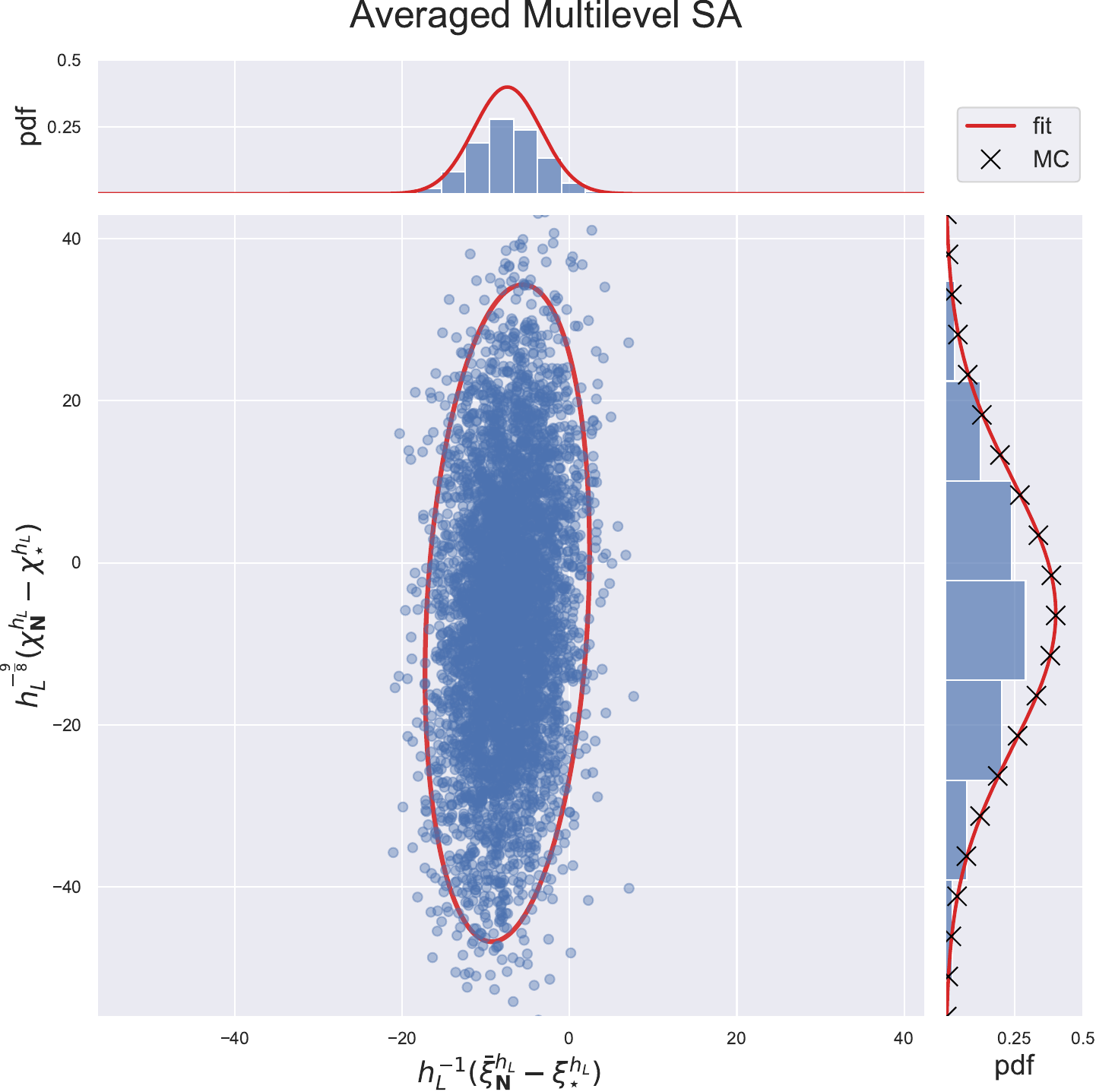}
\end{subfigure}
}
\end{minipage}
\caption{Joint distributions of the renormalized VaR and ES estimation errors.}
\label{fig:euption}
\end{figure}

\begin{table}[H]
    \centering
    \begin{tabular}{|c|c|}
        \hline
        Unbiased SA & Averaged Unbiased SA \\
         $\mu=\begin{pmatrix}
            5.84\\
            -25.69
         \end{pmatrix}$,
         $\Sigma=\begin{pmatrix}
             4.94 & 4.70 \\
             4.70 & 12.55
         \end{pmatrix}$
         & $\mu=\begin{pmatrix}
             10.55\\
             -25.69
         \end{pmatrix}$,
         $\Sigma=\begin{pmatrix}
            14.54 & 8.65 \\
            8.65 & 12.55
         \end{pmatrix}$\\
         \hline
         Nested SA & Averaged Nested SA \\
         $\mu=\begin{pmatrix}
             2.06\\
             5.78
         \end{pmatrix}$,
         $\Sigma=\begin{pmatrix}
             4.38 & 2.62 \\
             2.62 & 14.33
         \end{pmatrix}$
         & $\mu=\begin{pmatrix}
             -2.80\\
             5.78
         \end{pmatrix}$,
         $\Sigma=\begin{pmatrix}
              18.00 & 3.46 \\
              3.46 & 14.34
         \end{pmatrix}$\\
         \hline
         Multilevel SA & Averaged Multilevel SA \\
         $\mu=\begin{pmatrix}
             8.47\\
             -12.01
         \end{pmatrix}$,
         $\Sigma=\begin{pmatrix}
             12.30 & 10.82 \\
             10.82 & 264.10
         \end{pmatrix}$
         & $\mu=\begin{pmatrix}
             -7.44\\
             -6.22
         \end{pmatrix}$,
         $\Sigma=\begin{pmatrix}
             16.13 & 12.65 \\
             12.65 & 274.30
         \end{pmatrix}$\\
         \hline
    \end{tabular}
    \caption{Empirical means ($\mu$) and covariances ($\Sigma$) fitted on the renormalized VaR-ES estimation errors in Figure~\ref{fig:euption}.}
    \label{tbl:gaussian}
\end{table}

\begin{table}[H]
    \centering
    \begin{tabular}{|c!{\vrule width 1pt}c|c|c|c|c|c|}
        \hline
        Scheme & SA & ASA & NSA & ANSA & MLSA & AMLSA \\
        \specialrule{0.1em}{0em}{0em}
        Fitted & 12.55 & 12.55 & 14.34 & 14.34 & 264.10 & 274.30 \\
        \hline
        MC     & 12.61 & 12.61 & 15.28 & 15.28 & 292.50 & 269.39 \\
        \hline
    \end{tabular}
    \caption{Comparison of empirical and Monte Carlo estimations of the ES variance.}
    \label{tbl:fit:vs:MC}
\end{table}

Qualitatively, the fitted joint distributions appear centrosymmetric and unimodal for all algorithms, suggesting a Gaussian behavior in line with the CLTs proven in the previous sections and with~\cite[Theorem~3.4]{10.1214/21-EJP648} and~\cite[Theorem~2.4]{BardouFrikhaPages+2009+173+210}. This Gaussianity is further endorsed by the marginals' histograms that match the overlayed bell curves.
Unlike Theorem~\ref{thm:nested:clt}, the axes tilt of the fitted ellipse on NSA suggests some asymptotic correlation between the VaR and ES. This may be due to the choice $\beta=0.9$ that is too close $1$, where correlation theoretically exists.
By contrast, consistent with the Theorems~\ref{thm:avg:nested:clt}, \ref{thm:ml:clt} and~\ref{thm:avg:ml:clt}, the ANSA algorithm shows an asymptotic correlation between the VaR and ES, while the MLSA and AMLSA algorithms display barely any.
Finally, as noted for the NSA algorithm, the unbiased SA and ASA algorithms present some asymptotic correlation, that may stem from the closeness of the pick $\beta=0.9$ to $1$, for which correlation occurs.
Numerically, the averaged algorithms turned out significantly more stable than the non-averaged ones. In particular, they required much less time to parametrize the step size initialization $\gamma_1$.

Quantitatively, the off-diagonal correlation term may not be null as Theorems~\ref{thm:nested:clt}, \ref{thm:ml:clt} and~\ref{thm:avg:ml:clt} show, but it remains of low magnitude with respect to the diagonal values.
The biases in MLSA and AMLSA may reflect unattained asymptotic regime, but they remain also an order of magnitude lower than the covariance diagonal terms. The VaR MLSA variance is small due to the choice of a learning rate with small initialization $\gamma_1$, as predicted by Theorem~\ref{thm:ml:clt}.
Table~\ref{tbl:fit:vs:MC} demonstrates that the ES variances predicated by our CLTs match quite accurately the empirically observed ones.

\section*{Conclusion}

In~\cite{mlsa}, a nested stochastic approximation algorithm for VaR and ES estimation, as well as a multilevel acceleration thereof, were presented and compared in terms of non-asymptotic $L^2(\mathbb{P})$-errors.
The present article complements the aforementioned one by analyzing the corresponding asymptotic error distributions, as required for delimiting VaR and ES trust regions and confidence intervals.
Further averaged extensions of these algorithms are also presented and shown to achieve better convergence rates than their original counterparts.
A financial case study where exact VaR and ES values as well as unbiased SA schemes are available for benchmarking purposes, supports our theoretical findings.

All things considered, the analyses in this paper hint at resorting to a multilevel scheme rather than a simply nested one in order to reduce complexity, and at applying an averaged scheme rather than a non-averaged one in order to increase numerical stability.
For some prescribed accuracy $\varepsilon>0$, the optimal complexity attained by the presented multilevel algorithms is in $\OO(\varepsilon^{-\frac52})$.
Given a learning rate $\{\gamma_n=\gamma_1n^{-\beta},n\geq1\}$, $\gamma_1>0$, $\beta\in\big(\frac12,1\big]$, this complexity is achieved by MLSA for $\beta=1$ under the constraint $\lambda\gamma_1>1$, where the constant $\lambda>0$ in \eqref{e:thelambda} is explicit but tedious to compute.
This complexity is also achieved by AMLSA for $\beta\in\big(\frac89,1\big)$, without any constraint on $\gamma_1$.

However, $\OO(\varepsilon^{-\frac52})$ remains higher than the theoretical optimum of $\OO(\varepsilon^{-2})$ that multilevel algorithms are known for~\cite{10.1214/15-AAP1109,10.1287/opre.1070.0496}.
This gap in performance stems from the discontinuity of the gradient used in the updating formula \eqref{eq:sa:nested:alg:xi}, registering an $\OO(1)$ error whenever the generated loss $X_h$ is too close to the estimate $\xi^h_n$ but falls on the opposite side of the discontinuity with respect to the simulation target $X_0$~\cite{doi:10.1137/21M1447064,giles2023efficient}.
This limitation should be addressed in future research.

\appendix

\section{Auxiliary Results}

\begin{lemma}
\label{lmm:limsup}
Let $\{\gamma_n=\gamma_1n^{-\beta},n\geq1\}$, $\gamma_1>0$, $\beta\in(0,1]$.
\begin{enumerate}[(i)]
\item\label{lmm:limsup:i}
Let $\{e_n,n\geq1\}$ be a non-negative sequence. For $b\geq0$ and $\lambda>0$, if $\lambda\gamma_1>b$ when $\beta=1$, then
\begin{equation*}
\limsup_{n\to\infty}\gamma_n^{-b}\sum_{k=1}^n\gamma_k^{1+b}e_k\exp\bigg(-\lambda\sum_{j=k+1}^n\gamma_j\bigg)
\leq\frac1C\limsup_{n\to\infty}e_n,
\end{equation*}
where
\begin{equation*}
C\coloneqq
\begin{cases}
   \lambda-b/\gamma_1
      &\text{if $\beta=1$ and $\lambda\gamma_1>b$,}\\
   \lambda
      &\text{if $\beta\in(0,1)$.}
\end{cases}
\end{equation*}
\item\label{lmm:limsup:ii}
For $b\geq0$ and $\lambda>0$, if $\lambda\gamma_1>b$ when $\beta=1$, then
\begin{equation*}
\limsup_{n\to\infty}{\gamma_n^{-b}\exp\bigg(-\lambda\sum_{j=1}^n\gamma_j\bigg)}=0.
\end{equation*}
\end{enumerate}
\end{lemma}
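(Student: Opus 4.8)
The plan is to prove part~(\ref{lmm:limsup:ii}) first, since it feeds the treatment of part~(\ref{lmm:limsup:i}). Everything rests on the asymptotics of the partial sums $\Gamma_n:=\sum_{j=1}^n\gamma_j$. When $\beta\in(0,1)$, comparison with $\int_1^n x^{-\beta}\,\mathrm{d}x$ gives $\Gamma_n=\frac{\gamma_1}{1-\beta}n^{1-\beta}(1+\oo(1))$, so that $\gamma_n^{-b}\exp(-\lambda\Gamma_n)$ behaves like $n^{b\beta}\exp(-c\,n^{1-\beta})$ for a constant $c>0$; here the stretched-exponential decay overwhelms the polynomial prefactor and the limit is $0$ unconditionally. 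When $\beta=1$, $\Gamma_n=\gamma_1(\ln n+\OO(1))$, whence $\gamma_n^{-b}\exp(-\lambda\Gamma_n)$ is of order $n^{b-\lambda\gamma_1}$, which tends to $0$ precisely under the stated hypothesis $\lambda\gamma_1>b$. This already pins down where the constraint on $\gamma_1$ must enter.

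For part~(\ref{lmm:limsup:i}), I would first reduce to the case $e_k\equiv1$. Writing $\ell=\limsup_n e_n$ (assuming $\ell<\infty$, otherwise there is nothing to prove) and fixing $\epsilon>0$, I choose $N$ with $e_k\le\ell+\epsilon$ for $k\ge N$ and split the sum at $N$. For each fixed $k<N$, the factor $\gamma_n^{-b}\exp(-\lambda\sum_{j=k+1}^n\gamma_j)=\exp(\lambda\Gamma_k)\,\gamma_n^{-b}\exp(-\lambda\Gamma_n)$ vanishes as $n\to\infty$ by part~(\ref{lmm:limsup:ii}), so the finite head contributes nothing to the $\limsup$. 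The tail is bounded above by $(\ell+\epsilon)T_n$, where $T_n:=\gamma_n^{-b}\sum_{k=1}^n\gamma_k^{1+b}\exp(-\lambda\sum_{j=k+1}^n\gamma_j)$ is the quantity obtained by taking $e_k\equiv1$. Thus it suffices to show $\limsup_n T_n\le1/C$, after which letting $\epsilon\downarrow0$ delivers the claim.

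The core of the argument is the analysis of $T_n$. Setting $u_n:=\sum_{k=1}^n\gamma_k^{1+b}\exp(-\lambda\sum_{j=k+1}^n\gamma_j)$, peeling off the last summand yields the one-step recursion $u_n=\gamma_n^{1+b}+\exp(-\lambda\gamma_n)u_{n-1}$, which in terms of $T_n=\gamma_n^{-b}u_n$ reads $T_n=\gamma_n+\rho_n T_{n-1}$ with $\rho_n:=\exp(-\lambda\gamma_n)(\gamma_{n-1}/\gamma_n)^b$. Since $\gamma_{n-1}/\gamma_n=(n/(n-1))^\beta=1+\beta/n+\OO(n^{-2})$, I expand $\rho_n=(1-\lambda\gamma_n+\OO(\gamma_n^2))(1+b\beta/n+\OO(n^{-2}))$. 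For $\beta\in(0,1)$ one has $1/n=\oo(\gamma_n)$, so $\rho_n=1-\lambda\gamma_n+\oo(\gamma_n)$ and $C=\lambda$; for $\beta=1$ one has $1/n=\gamma_n/\gamma_1$, so the ratio contributes an extra $(b/\gamma_1)\gamma_n$ and $\rho_n=1-(\lambda-b/\gamma_1)\gamma_n+\oo(\gamma_n)$, i.e.\ $C=\lambda-b/\gamma_1$, which is positive exactly when $\lambda\gamma_1>b$. In both cases $\rho_n=1-C\gamma_n+\oo(\gamma_n)$ with $C>0$. To conclude, I fix $\epsilon\in(0,C)$, pick $N_0$ so that $\rho_n\le1-(C-\epsilon)\gamma_n\in(0,1)$ for $n\ge N_0$, and set $D_n:=T_n-1/(C-\epsilon)$; using $T_{n-1}\ge0$ one checks the contraction $D_n\le(1-(C-\epsilon)\gamma_n)D_{n-1}$. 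Since $\prod_j(1-(C-\epsilon)\gamma_j)\to0$ (because $\sum_j\gamma_j=\infty$), a short case split on the sign of $D_n$ gives $\limsup_n D_n\le0$, i.e.\ $\limsup_n T_n\le1/(C-\epsilon)$; letting $\epsilon\downarrow0$ finishes the proof.

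I expect the main obstacle to be the sharp expansion $\rho_n=1-C\gamma_n+\oo(\gamma_n)$, and in particular the bookkeeping of the ratio $(\gamma_{n-1}/\gamma_n)^b$ in the critical regime $\beta=1$: there the step-size ratio is no longer negligible at order $\gamma_n$ and precisely shifts the effective contraction rate from $\lambda$ to $\lambda-b/\gamma_1$, which is what forces the hypothesis $\lambda\gamma_1>b$ and produces the two-branch constant $C$. Everything downstream, namely the fixed-point/contraction step and the vanishing of the head sum, is routine once this expansion is in hand.
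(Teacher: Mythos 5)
Your argument is correct. Note that the paper does not actually prove this lemma: it dispatches part~(i) as a special case of an external result (Lemma~5.9 of Fort, 2015) and part~(ii) as a special case of part~(i), so your self-contained proof is necessarily a different route. The two key steps both check out: the reduction to $e_k\equiv1$ is sound because the head $\sum_{k<N}$ is a finite sum each of whose terms is a fixed constant times $\gamma_n^{-b}\exp(-\lambda\Gamma_n)\to0$, and the tail is dominated by $(\ell+\epsilon)T_n$ by non-negativity; and the recursion $T_n=\gamma_n+\rho_nT_{n-1}$ with $\rho_n=\e^{-\lambda\gamma_n}(\gamma_{n-1}/\gamma_n)^b=1-C\gamma_n+\oo(\gamma_n)$ is exactly where the two-branch constant arises, since $(\gamma_{n-1}/\gamma_n)^b=1+b\beta/n+\OO(n^{-2})$ contributes at order $\gamma_n$ only when $\beta=1$, shifting $\lambda$ to $\lambda-b/\gamma_1$. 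The contraction step is also fine: with $a=1/(C-\epsilon)$ one has $D_n-(1-(C-\epsilon)\gamma_n)D_{n-1}=(\rho_n-1+(C-\epsilon)\gamma_n)T_{n-1}\leq0$, and since $\prod_{j\geq N_0}(1-(C-\epsilon)\gamma_j)\leq\exp(-(C-\epsilon)\sum_j\gamma_j)\to0$, the product times the fixed initial value $D_{N_0-1}$ tends to $0$ regardless of its sign (no case split is even needed). Your order of proof is the reverse of the paper's (you prove (ii) directly from the asymptotics of $\Gamma_n$ and use it for the head sum in (i), whereas the paper derives (ii) from (i) by choosing $e_k$ supported at $k=1$); both are legitimate, and yours has the advantage of making the role of the constraint $\lambda\gamma_1>b$ completely explicit in each regime.
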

Lemma~\ref{lmm:limsup}(\ref{lmm:limsup:i}) is a special case of~\cite[Lemma~5.9]{fort2015central} and Lemma~\ref{lmm:limsup}(\ref{lmm:limsup:ii}) is a special case of Lemma~\ref{lmm:limsup}(\ref{lmm:limsup:i}).

\begin{lemma}[{\cite[Lemma~3.2 \& Proposition~5.2]{Giorgi2020}}]
\label{lmm:aux}\
\begin{enumerate}[(i)]
    \item\label{lmm:aux-i}
If $\mathbb{E}[|\varphi(Y, Z)-\mathbb{E}[\varphi(Y,Z)|Y]|^p]<\infty$ for some $p>1$, then
\begin{equation*}
\mathbb{E}[|X_h-X_{h'}|^p]\leq C|h-h'|^\frac{p}2,
\quad
h,h'\in\mathcal{H}\cup\{0\}.
\end{equation*}

    \item
If, additionally, the $X_h$ admit pdf $f_{X_h}$ that are bounded uniformly in $h\in\mathcal{H}\cup\{0\}$, then, for any $\xi\in\mathbb{R}$,
\begin{equation*}
\mathbb{E}\big[\big|\mathds{1}_{X_h>\xi}-\mathds{1}_{X_{h'}>\xi}\big|\big]\leq C|h-h'|^\frac{p}{2(1+p)},
\quad
h,h'\in\mathcal{H}\cup\{0\}.
\end{equation*}
\end{enumerate}
\end{lemma}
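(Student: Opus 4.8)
The plan is to treat the two parts in sequence, obtaining (i) from a conditional moment inequality for i.i.d.\ sums and then deducing (ii) from (i) by a level-crossing argument.

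For part (i), I would first reduce to $h\ge h'$ and write $h=1/K$, $h'=1/K'$ with $K\le K'$ (the case $h'=0$ being recovered as $K'\uparrow\infty$, using $X_0=\mathbb{E}[\varphi(Y,Z)\,|\,Y]$). Exploiting the natural coupling in which $X_h$ and $X_{h'}$ share the same $Y$ and the same inner draws $Z^{(1)},\dots,Z^{(K)}$, I would center the summands by setting $\zeta_k=\varphi(Y,Z^{(k)})-X_0$, which are i.i.d.\ and centered conditionally on $Y$, and decompose
\[
X_h-X_{h'}=\Big(\tfrac1K-\tfrac1{K'}\Big)\sum_{k=1}^{K}\zeta_k-\tfrac1{K'}\sum_{k=K+1}^{K'}\zeta_k.
\]
The heart of the argument is the Marcinkiewicz--Zygmund inequality applied conditionally on $Y$: combined with the power-mean inequality when $p\ge2$, and with Jensen's inequality together with finite variance of $\varphi(Y,Z)$ when $1<p<2$, it yields $\big\|\sum_{k=1}^{m}\zeta_k\big\|_{L^p(\mathbb{P})}\le C\,m^{1/2}$. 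Plugging this into the decomposition above via Minkowski's inequality and simplifying the resulting combination $\frac{(K'-K)K^{1/2}}{KK'}+\frac{(K'-K)^{1/2}}{K'}$ — each term of which is at most $\big(\frac{K'-K}{KK'}\big)^{1/2}=|h-h'|^{1/2}$ since $K\le K'$ — delivers $\mathbb{E}[|X_h-X_{h'}|^p]\le C|h-h'|^{p/2}$.

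For part (ii), I would use that a disagreement of the two indicators forces $\xi$ to lie between $X_h$ and $X_{h'}$, so that $|X_{h'}-\xi|\le|X_h-X_{h'}|$ on the event $\{\mathds{1}_{X_h>\xi}\ne\mathds{1}_{X_{h'}>\xi}\}$. Hence, for every threshold $\theta>0$,
\[
\mathbb{E}\big[\big|\mathds{1}_{X_h>\xi}-\mathds{1}_{X_{h'}>\xi}\big|\big]
\le\mathbb{P}\big(|X_h-X_{h'}|>\theta\big)+\mathbb{P}\big(|X_{h'}-\xi|\le\theta\big).
\]
The first probability is controlled by Markov's inequality and part (i) by $C\theta^{-p}|h-h'|^{p/2}$, while the second is bounded by $2\theta\sup_{h}\|f_{X_h}\|_\infty$ using the uniform boundedness of the densities. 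Optimizing the bound $C\theta^{-p}|h-h'|^{p/2}+C'\theta$ over $\theta$, i.e.\ taking $\theta\propto|h-h'|^{p/(2(1+p))}$, produces the claimed rate $|h-h'|^{p/(2(1+p))}$.

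The main obstacle I anticipate lies in part (i): the coupling bookkeeping must be done carefully so that the estimate genuinely collapses to $|h-h'|^{1/2}$ rather than the cruder $h^{1/2}+h'^{1/2}$ that a naive triangle inequality through $X_0$ would give, and the Marcinkiewicz--Zygmund step must be handled in both moment regimes, the sub-$L^2$ case $1<p<2$ relying on the finite-variance assumption to recover the $m^{1/2}$ scaling (a bare $p$-th moment would only give $m^{1/p}$, hence the weaker rate $h^{p-1}$). Once (i) is in hand, part (ii) is a routine crossing-probability argument.
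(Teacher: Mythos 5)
The paper does not actually prove this lemma: it is imported verbatim from \cite{Giorgi2020} (Lemma~3.2 \& Proposition~5.2), so there is no internal proof to compare against. Your argument is a correct, self-contained reconstruction of the cited result and follows what is essentially the standard route. For part (i), the nested coupling plus the conditional Marcinkiewicz--Zygmund inequality works, and your telescoped decomposition together with the elementary estimates $\frac{K'-K}{K^{1/2}K'}\leq\big(\frac{K'-K}{KK'}\big)^{1/2}$ and $\frac{(K'-K)^{1/2}}{K'}\leq\big(\frac{K'-K}{KK'}\big)^{1/2}$ for $K\leq K'$ both check out, yielding $\|X_h-X_{h'}\|_{L^p(\mathbb{P})}\leq C|h-h'|^{1/2}$. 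For part (ii), the splitting into a Markov tail bound and a small-ball probability controlled by the uniform density bound, optimized at $\theta\propto|h-h'|^{p/(2(1+p))}$, is exactly how the exponent $\frac{p}{2(1+p)}$ arises. Your caveat about the regime $1<p<2$ is well taken and deserves to be made explicit rather than parenthetical: with only a $p$-th conditional moment and no second moment, the $m^{1/2}$ scaling genuinely fails (a symmetric $\alpha$-stable inner noise with $p<\alpha<2$ gives $\|X_h-X_0\|_{L^p(\mathbb{P})}$ of order $h^{1-1/\alpha}$, which dominates $h^{1/2}$), so the statement as transcribed for ``some $p>1$'' is not literally correct without additionally assuming $\varphi(Y,Z)-\mathbb{E}[\varphi(Y,Z)|Y]\in L^2(\mathbb{P})$. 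This is harmless in context, since every invocation in the paper takes $p=2+\delta$ (see the bound \eqref{eq:eta^2+d<}) and the surrounding theorems assume at least square integrability, but a clean write-up should either restrict to $p\geq2$ or add the finite-variance hypothesis for the sub-$L^2$ case.
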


\begin{lemma}[{\cite[Proposition~2.4 \& Theorem~2.7(i)]{mlsa}}]
\label{lmm:error}\
\begin{enumerate}[(i)]
    \item\label{lmm:error:weak}
Suppose that $\varphi(Y,Z)\in L^1(\mathbb{P})$, that Assumptions~\ref{asp:misc}(\ref{asp:misc:i}) and~\ref{asp:misc}(\ref{asp:misc:ii}) hold and that the pdf $f_{X_0}$ is positive. Then, as $\mathcal{H}\ni h\downarrow0$,
\begin{equation*}
\xi^h_\star-\xi^0_\star
=-\frac{v(\xi^0_\star)}{f_{X_0}(\xi^0_\star)}h+\oo(h)
,\qquad
\chi^h_\star-\chi^0_\star
=-h\int_{\xi^0_\star}^\infty \frac{v(\xi)}{1-\alpha}\mathrm{d}\xi+\oo(h).
\end{equation*}
    \item\label{lmm:error:statistical}
Suppose that $\varphi(Y,Z)\in L^2(\mathbb{P})$ and that Assumptions~\ref{asp:misc} and~\ref{asp:supE[sup]} hold.
If $\gamma_n=\gamma_1n^{-\beta}$, $n\geq1$, $\gamma_1>0$, $\beta\in(0,1]$,
with $\lambda\gamma_1>1$ if $\beta=1$, then there exists a positive constant $C<\infty$ such that, for any positive integer $n$,
\begin{equation*}
\sup_{h\in\mathcal{H}}\mathbb{E}[(\xi^h_n-\xi^h_\star)^2]\leq C\gamma_n.
\end{equation*}
\end{enumerate}
\end{lemma}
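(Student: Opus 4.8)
The plan is to treat the two parts separately, as they are of different natures: part~(\ref{lmm:error:weak}) is a deterministic first-order perturbation analysis of the minimizers, whereas part~(\ref{lmm:error:statistical}) is an $L^2(\mathbb{P})$ stochastic approximation estimate. For the VaR bias in part~(\ref{lmm:error:weak}), I would start from the implicit characterization $F_{X_h}(\xi^h_\star)=\alpha=F_{X_0}(\xi^0_\star)$, valid since each $V_h'$ vanishes exactly at its unique minimizer (Remark~\ref{rmk:nested:misc}(\ref{rmk:V''(xi*)>0})). First I would establish consistency $\xi^h_\star\to\xi^0_\star$ as $\mathcal{H}\ni h\downarrow0$, using the local uniform convergence $f_{X_h}\to f_{X_0}$ of Assumption~\ref{asp:misc}(\ref{asp:misc:ii}) together with $f_{X_0}(\xi^0_\star)>0$ from Assumption~\ref{asp:misc}(\ref{asp:misc:iii}), so that the $\alpha$-level sets converge. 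Writing $\alpha=F_{X_h}(\xi^h_\star)=F_{X_0}(\xi^h_\star)+v(\xi^h_\star)h+w(\xi^h_\star,h)h$ and subtracting $\alpha=F_{X_0}(\xi^0_\star)$, a first-order Taylor expansion of the continuously differentiable $F_{X_0}$ gives $f_{X_0}(\xi^0_\star)(\xi^h_\star-\xi^0_\star)+\oo(\xi^h_\star-\xi^0_\star)=-v(\xi^h_\star)h-w(\xi^h_\star,h)h$. Dividing by $h$ and invoking consistency, the continuity of $v$ and the vanishing of $w(\xi^0_\star,h)$ from Assumption~\ref{asp:misc}(\ref{asp:misc:i}) yields the claimed expansion $\xi^h_\star-\xi^0_\star=-v(\xi^0_\star)/f_{X_0}(\xi^0_\star)\,h+\oo(h)$.

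For the ES bias I would exploit the minimum representation $\chi^h_\star=V_h(\xi^h_\star)=\min V_h$ together with the identity $\mathbb{E}[(X-\xi)^+]=\int_\xi^\infty(1-F_X(x))\mathrm{d}x$, so that $V_h(\xi)-V_0(\xi)=-\frac1{1-\alpha}\int_\xi^\infty(F_{X_h}(x)-F_{X_0}(x))\mathrm{d}x=-\frac{h}{1-\alpha}\int_\xi^\infty(v(x)+w(x,h))\mathrm{d}x$. Decomposing $\chi^h_\star-\chi^0_\star=[V_h(\xi^h_\star)-V_0(\xi^h_\star)]+[V_0(\xi^h_\star)-V_0(\xi^0_\star)]$, the first bracket equals $-\frac{h}{1-\alpha}\int_{\xi^h_\star}^\infty(v+w(\cdot,h))$, which by consistency and the integral conditions of Assumption~\ref{asp:misc}(\ref{asp:misc:i}) converges to $-\frac{h}{1-\alpha}\int_{\xi^0_\star}^\infty v+\oo(h)$; the second bracket is $\OO((\xi^h_\star-\xi^0_\star)^2)=\OO(h^2)=\oo(h)$ because $V_0'(\xi^0_\star)=0$ and $V_0$ is twice continuously differentiable. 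Summing the two gives the stated ES expansion.

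For the statistical control in part~(\ref{lmm:error:statistical}), set $e_n=\xi^h_n-\xi^h_\star$ and rewrite the increment in \eqref{eq:sa:nested:alg:xi} as $H_1(\xi^h_n,X_h^{(n+1)})=V_h'(\xi^h_n)+\Delta M_{n+1}$, where $\Delta M_{n+1}$ is a martingale increment for the natural filtration $\{\mathcal{F}_n\}$. Expanding $\mathbb{E}[e_{n+1}^2\mid\mathcal{F}_n]=e_n^2-2\gamma_{n+1}e_nV_h'(\xi^h_n)+\gamma_{n+1}^2(V_h'(\xi^h_n)^2+\mathbb{E}[\Delta M_{n+1}^2\mid\mathcal{F}_n])$, the last term is $\OO(\gamma_{n+1}^2)$ uniformly in $h$ because $H_1$ is bounded by $c_\alpha$, hence so are $V_h'$ and the conditional variance of $\Delta M_{n+1}$. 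The crux is the lower bound on the drift $e_nV_h'(\xi^h_n)$, and this is where the absence of global strong convexity of $V_h$ bites: since $H_1$ is bounded, $V_h'$ is bounded, so $e\mapsto eV_h'(\xi^h_\star+e)$ cannot be bounded below by a global quadratic. I would instead invoke the uniform Lyapunov property of \cite[Lemma~D.1]{mlsa}, which splices the local quadratic behaviour near $\xi^h_\star$ (governed by $V_h''(\xi^h_\star)>0$) with the linear growth at infinity into a mean-reversion inequality carrying the constant $\lambda>0$ of \eqref{e:thelambda}; the fourth-power structure of $\lambda$ reflects the radius $V_h''(\xi^h_\star)/(2[V_h'']_\mathrm{Lip})$ of local strong convexity. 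This leads to a recursion of the form $\mathbb{E}[e_{n+1}^2]\leq(1-2\lambda\gamma_{n+1}+\OO(\gamma_{n+1}^2))\mathbb{E}[e_n^2]+\OO(\gamma_{n+1}^2)$, to which I would apply Lemma~\ref{lmm:limsup}(\ref{lmm:limsup:i}) with $b=1$ — precisely the source of the constraint $\lambda\gamma_1>1$ at $\beta=1$ — to conclude $\sup_n\gamma_n^{-1}\mathbb{E}[e_n^2]<\infty$.

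The main obstacle is exactly this last drift estimate: producing a mean-reversion inequality that is simultaneously strong enough to drive the $\OO(\gamma_n)$ rate despite the bounded, non-strongly-convex gradient, and \emph{uniform} over $h\in\mathcal{H}$. Uniformity is what forces the lower envelope in Assumption~\ref{asp:misc}(\ref{asp:misc:iii}), the uniform bounds and Lipschitz constants in Assumption~\ref{asp:misc}(\ref{asp:misc:iv}), and the exponential-moment control on the initializations in Assumption~\ref{asp:supE[sup]}; together they guarantee $\lambda>0$ and a uniform handle on the Lyapunov function, without which the constant $C$ could degenerate as $\mathcal{H}\ni h\downarrow0$.
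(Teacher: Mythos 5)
First, note that this lemma is not proved in the present paper at all: as the bracketed citation in its header indicates, it is quoted from the companion work \cite{mlsa} (Proposition~2.4 for part~(\ref{lmm:error:weak}), Theorem~2.7(i) for part~(\ref{lmm:error:statistical})), so there is no in-paper argument to compare yours against. On its own merits your reconstruction has the right architecture: part~(\ref{lmm:error:weak}) is a perturbation analysis of the first-order condition $F_{X_h}(\xi^h_\star)=\alpha$ combined with $\mathbb{E}[(X-\xi)^+]=\int_\xi^\infty(1-F_X(x))\,\mathrm{d}x$, and part~(\ref{lmm:error:statistical}) is a Robbins--Monro $L^2(\mathbb{P})$ recursion whose drift must be supplied by the uniform Lyapunov estimate of \cite[Lemma~D.1]{mlsa}, since the bounded gradient $V_h'$ rules out any global quadratic lower bound; your localisation of the constraint $\lambda\gamma_1>1$ at the application of Lemma~\ref{lmm:limsup} with $b=1$ is indeed where it arises.

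There is, however, a gap in part~(\ref{lmm:error:weak}) as written: Assumption~\ref{asp:misc}(\ref{asp:misc:i}) controls $v$ and $w(\cdot,h)$ only at the fixed point $\xi^0_\star$ and through their integrals over $[\xi^0_\star,\infty)$; it asserts neither continuity of $v$ nor $w(\xi^h_\star,h)\to0$. Your VaR argument evaluates the expansion at the moving point $\xi^h_\star$ and then lets $v(\xi^h_\star)\to v(\xi^0_\star)$ and $w(\xi^h_\star,h)\to0$, which is not licensed. The repair is to anchor at $\xi^0_\star$: write $v(\xi^0_\star)h+w(\xi^0_\star,h)h=F_{X_h}(\xi^0_\star)-\alpha=F_{X_h}(\xi^0_\star)-F_{X_h}(\xi^h_\star)=f_{X_h}(\tilde\xi_h)(\xi^0_\star-\xi^h_\star)$ by the mean value theorem, then use consistency and the local uniform convergence of Assumption~\ref{asp:misc}(\ref{asp:misc:ii}) to send $f_{X_h}(\tilde\xi_h)\to f_{X_0}(\xi^0_\star)>0$. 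The same issue affects your ES step, where $\int_{\xi^h_\star}^\infty(v+w(\cdot,h))$ must become $\int_{\xi^0_\star}^\infty v+\oo(1)$ with no hypothesis controlling the boundary piece $\int_{\xi^0_\star}^{\xi^h_\star}(v+w(\cdot,h))$; the clean route is the decomposition $\chi^h_\star-\chi^0_\star=\big(V_h(\xi^h_\star)-V_h(\xi^0_\star)\big)+\big(V_h(\xi^0_\star)-V_0(\xi^0_\star)\big)$, whose first bracket is $\OO\big((\xi^h_\star-\xi^0_\star)^2\big)=\oo(h)$ because $V_h'(\xi^h_\star)=0$ and the $f_{X_h}$ are locally uniformly bounded, and whose second bracket equals $-\frac{h}{1-\alpha}\int_{\xi^0_\star}^\infty(v(x)+w(x,h))\,\mathrm{d}x$, to which the integral conditions of Assumption~\ref{asp:misc}(\ref{asp:misc:i}) apply verbatim. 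Part~(\ref{lmm:error:statistical}) is a fair sketch, but it defers the genuinely hard step (the uniform mean-reversion inequality behind the constant $\lambda$ of \eqref{e:thelambda}, and the role of the exponential moments in Assumption~\ref{asp:supE[sup]}) to \cite[Lemma~D.1]{mlsa} rather than proving it, which is acceptable only because the lemma itself is a quoted external result.
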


\section{Proof of Theorem~\ref{thm:nested:clt}}
\label{prf:nested:clt}

We denote by $C<\infty$ a positive constant that may change from line to line but does not depend upon $h$ or $n$.
We follow a similar strategy to~\cite[Theorem~2.7]{10.1214/15-AAP1109}.
\\

Let $h\in\mathcal{H}$ and define $\{\mathcal{F}^h_n,n\geq0\}$ as the filtration given by $\mathcal{F}^h_0=\sigma(\xi^h_0,\chi^h_0)$ and $\mathcal{F}^h_n=\sigma(\xi^h_0,\chi^h_0,X_h^{(1)},\dots,X_h^{(n)})$, $n\geq1$.

The sequence $\{\xi^h_n,n\geq0\}$ of dynamics \eqref{eq:sa:nested:alg:xi} decomposes into
\begin{equation}
\label{decomposition:var:sa}
\xi^h_n-\xi^h_\star
=\big(1-\gamma_nV_0''(\xi^0_\star)\big)(\xi^h_{n-1}-\xi^h_\star)+\gamma_ng^h_n+\gamma_nr^h_n+\gamma_n\rho^h_n+\gamma_ne^h_n,
\end{equation}
where
\begin{align}
g^h_n
&=\big(V_0''(\xi^0_\star)-V_h''(\xi^h_\star)\big)(\xi^h_{n-1}-\xi^h_\star),
\label{eq:ghn}\\
r^h_n
&=V_h''(\xi^h_\star)(\xi^h_{n-1}-\xi^h_\star)-V_h'(\xi^h_{n-1}),
\label{eq:rhn}\\
\rho^h_n
&=V_h'(\xi^h_{n-1})-V_h'(\xi^h_\star)-\big(H_1(\xi^h_{n-1},X_h^{(n)})-H_1(\xi^h_\star,X_h^{(n)})\big),
\label{eq:rhohn}\\
e^h_n
&=V_h'(\xi^h_\star)-H_1(\xi^h_\star,X_h^{(n)})=-H_1(\xi^h_\star,X_h^{(n)}).
\label{eq:ehn}
\end{align}
Hence, unrolling \eqref{decomposition:var:sa},
\begin{equation}
\label{eq:xih-xi*}
\begin{aligned}
\xi^h_n-\xi^h_\star=
(\xi^h_0&-\xi^h_\star)\Pi_{1:n}
+\sum_{k=1}^n\gamma_k\Pi_{k+1:n}g^h_k\\
&+\sum_{k=1}^n\gamma_k\Pi_{k+1:n}r^h_k
+\sum_{k=1}^n\gamma_k\Pi_{k+1:n}\rho^h_k
+\sum_{k=1}^n\gamma_k\Pi_{k+1:n}e^h_k,
\end{aligned}
\end{equation}
where
\begin{equation}
\Pi_{k:n}=\prod_{j=k}^n\big(1-\gamma_jV_0''(\xi^0_\star)\big),
\label{eq:Pi}
\end{equation}
with the convention $\prod_\varnothing=1$.

Likewise, the sequence $\{\chi^h_n,n\geq0\}$ of dynamics \eqref{eq:sa:nested:alg:C} rewrites
\begin{equation}
\label{eq:chih-chi*}
\begin{aligned}
\chi^h_n-\chi^h_\star
&=\frac1n\sum_{k=1}^n\Big(\xi^h_{k-1}+\frac1{1-\alpha}(X_h^{(k)}-\xi^h_{k-1})^+\Big)-V_h(\xi^h_\star)\\
&=\frac1n\sum_{k=1}^n\theta^h_k+\frac1n\sum_{k=1}^n\zeta^h_k+\frac1n\sum_{k=1}^n\eta^h_k,
\end{aligned}
\end{equation}
where
\begin{align}
\theta^h_k
&=\xi_{k-1}^h-\xi^h_\star
+\frac1{1-\alpha}\big((X_h^{(k)}-\xi_{k-1}^h)^+-(X_h^{(k)}-\xi^h_\star)^+\big)-\big(V_h(\xi_{k-1}^h)-V_h(\xi^h_\star)\big),
\label{eq:thetahk}\\
\zeta^h_k
&=V_h(\xi_{k-1}^h)-V_h(\xi^h_\star),\nonumber\\
\label{eq:etahk}
\eta^h_k
&=\frac1{1-\alpha}\big((X_h^{(k)}-\xi^h_\star)^+-\mathbb{E}[(X_h^{(k)}-\xi^h_\star)^+]\big).
\end{align}

We first provide a useful upper bound on $\Pi_{k:n}$. Since $\lim_{k\to\infty}{\gamma_k}=0$, there exists $k_0\geq0$ such that $(1-\gamma_jV_0''(\xi^0_\star))>0$, $j\geq k_0$.
Thus, using the inequality $1+x\leq\exp(x)$, $x\in\mathbb{R}$, we obtain that for $n$ large enough,
\begin{equation}
\label{upper:estimate:pi:i:n}
\begin{aligned}
|\Pi_{k:n}|
&=|\Pi_{k:k_0-1}|\prod_{j=k_0 \vee k}^n\big(1-\gamma_jV_0''(\xi^0_\star)\big)\\
&\leq|\Pi_{k:k_0-1}|\exp\bigg(-V_0''(\xi^0_\star)\sum_{j=k_0 \vee k}^n\gamma_j\bigg)
\leq K\exp\bigg(-V_0''(\xi^0_\star)\sum_{j=k}^n\gamma_j\bigg)
\end {aligned}
\end{equation}
where $K=1\vee\max_{1\leq k\leq k_0}|\Pi_{k:k_0-1}|\exp\big(V_0''(\xi^0_\star)\sum_{j=k_0\wedge k}^{k-1}\gamma_j\big)$, with the convention $\sum_\varnothing=0$.

Set $n=\lceil h^{-2}\rceil$.
We study below the contribution of each term on the right hand sides of \eqref{eq:xih-xi*} and \eqref{eq:chih-chi*} to the asymptotic estimation error.
\\

\noindent
\emph{\textbf{Step~1. Study of $\big\{h^{-\beta}(\xi^h_0-\xi^h_\star)\Pi_{1:\lceil h^{-2}\rceil},h\in\mathcal{H}\big\}$.}}
\newline\newline
By assumption and via \eqref{e:thelambda},
\begin{equation}\label{gamma:condition:beta:equal:1}
2\gamma_1V_0''(\xi^0_\star)
\geq2\gamma_1\inf_{h\in\mathcal{H}\cup\{0\}}V_h''(\xi^h_\star)
\geq\frac{16}3\gamma_1\inf_{h\in\mathcal{H}\cup\{0\}}\lambda_h
\geq\gamma_1\lambda>1
\quad\mbox{ if }\quad
\beta=1.
\end{equation}
We deduce via the inequality $\gamma_1^\frac12\gamma_{\lceil h^{-2}\rceil}^{-\frac12}\geq h^{-\beta}$, \eqref{gamma:condition:beta:equal:1}, \eqref{upper:estimate:pi:i:n} and Lemma~\ref{lmm:limsup}(\ref{lmm:limsup:ii}) that
\begin{equation*}
\limsup_{\mathcal{H}\ni h\downarrow0}h^{-\beta}|\Pi_{1:\lceil h^{-2}\rceil}|
\leq C\gamma_1^\frac12\limsup_{n\uparrow\infty}\gamma_n^{-\frac12}\mathrm{e}^{-V_0''(\xi^0_\star)\sum_{j=1}^n\gamma_j}=0.
\end{equation*}

Besides, since $\lim_{\mathcal{H}\ni h\downarrow0}\xi^h_\star=\xi^0_\star$, $\{\xi^h_\star,h\in\mathcal{H}\}$ is bounded, recalling Assumption~\ref{asp:supE[sup]}, $\sup_{h\in\mathcal{H}}{\mathbb{E}[|\xi^h_0-\xi^h_\star|]}\leq\sup_{h\in\mathcal{H}}\mathbb{E}[|\xi^h_0|]+\sup_{h\in\mathcal{H}}{|\xi^h_\star|}<\infty$.

All in all,
\begin{equation*}
h^{-\beta}(\xi^h_0-\xi^h_\star)\Pi_{1:\lceil h^{-2}\rceil}\stackrel{L^1(\mathbb{P})}{\longrightarrow}0
\quad\mbox{ as }\quad
\mathcal{H}\ni h\downarrow0.
\end{equation*}

\noindent
\emph{\textbf{Step~2. Study of $\big\{h^{-\beta}\sum_{k=1}^{\lceil h^{-2}\rceil}\gamma_k\Pi_{k+1:\lceil h^{-2}\rceil}g^h_k,h\in\mathcal{H}\big\}$.}}
\newline\newline
By Lemma~\ref{lmm:error}(\ref{lmm:error:statistical}) and \eqref{upper:estimate:pi:i:n},
\begin{equation*}
\mathbb{E}\bigg[\bigg|\sum_{k=1}^n\gamma_k\Pi_{k+1:n}g^h_k\bigg|\bigg]
\leq C|V_0''(\xi^0_\star)-V_h''(\xi^h_\star)|\sum_{k=1}^n\gamma_k^\frac32\mathrm{e}^{-V_0''(\xi^0_\star)\sum_{j=k+1}^n\gamma_j}.
\end{equation*}
Recalling that $h^{-\beta}\leq\gamma_1^\frac12\gamma_{\lceil h^{-2}\rceil}^{-\frac12}$ and that $2\gamma_1V_0''(\xi^0_\star)>1$ if $\beta=1$ by \eqref{gamma:condition:beta:equal:1}, Lemma~\ref{lmm:limsup}(\ref{lmm:limsup:i}) yields
\begin{equation*}
\limsup_{\mathcal{H}\ni h\downarrow0}h^{-\beta}\sum_{k=1}^{\lceil h^{-2}\rceil}\gamma_k^\frac32\mathrm{e}^{-V_0''(\xi^0_\star)\sum_{j=k+1}^{\lceil h^{-2}\rceil}\gamma_j}
\leq C\gamma_1^\frac12\limsup_{n\uparrow\infty}\gamma_n^{-\frac12}\sum_{k=1}^n\gamma_k^\frac32\mathrm{e}^{-V_0''(\xi^0_\star)\sum_{j=k+1}^n\gamma_j}
\leq C.
\end{equation*}
Besides, by Assumption~\ref{asp:misc}(\ref{asp:misc:ii}), $\big\{V_h''=(1-\alpha)^{-1}f_{X_h},h\in\mathcal{H}\big\}$ converges locally uniformly to $V_0''$ as $\mathcal{H}\ni h\downarrow0$. Moreover, by Lemma~\ref{lmm:error}(\ref{lmm:error:weak}), $\lim_{\mathcal{H}\ni h\downarrow0}\xi^h_\star=\xi^0_\star$. Thus $\lim_{\mathcal{H}\ni h\downarrow0}V_h''(\xi^h_\star)=V_0''(\xi^0_\star)$.
Finally,
\begin{equation*}
h^{-\beta}\sum_{k=1}^{\lceil h^{-2}\rceil}\gamma_k\Pi_{k+1:\lceil h^{-2}\rceil}g^h_k
\stackrel{L^1(\mathbb{P})}{\longrightarrow}0
\quad\mbox{ as }\quad
\mathcal{H}\ni h\downarrow0.
\end{equation*}

\noindent
\emph{\textbf{Step~3. Study of $\big\{h^{-\beta}\sum_{k=1}^{\lceil h^{-2}\rceil}\gamma_k\Pi_{k+1:\lceil h^{-2}\rceil}r^h_k,h\in\mathcal{H}\big\}$.}}
\newline\newline
Using that $V_h'(\xi^h_\star)=0$, by a first order Taylor expansion,
\begin{equation}
\label{eq:Vh':taylor:1}
V_h'(\xi)=V_h''(\xi^h_\star)(\xi-\xi^h_\star)
+(\xi-\xi^h_\star)\int_0^1\big(V_h''\big(\xi^h_\star+t(\xi-\xi^h_\star)\big)-V_h''(\xi^h_\star)\big)\mathrm{d}t.
\end{equation}
The uniform Lipschitz regularity of $\{V_h''=(1-\alpha)^{-1}f_{X_h},h\in\mathcal{H}\}$, by Assumption~\ref{asp:misc}(\ref{asp:misc:iv}), and \eqref{eq:Vh':taylor:1} yield
\begin{equation*}
\mathbb{E}[|r^h_k|]
\leq\frac{\sup_{h'\in\mathcal{H}}[f_{X_{h'}}]_\text{\rm Lip}}{2(1-\alpha)}\,\mathbb{E}[(\xi_{k-1}^h-\xi^h_\star)^2],
\end{equation*}
so that, by Lemma~\ref{lmm:error}(\ref{lmm:error:statistical}),
\begin{equation}\label{eq:E[|r|]<}
\mathbb{E}[|r^h_k|]\leq C\gamma_k.
\end{equation}
Recalling that $2\gamma_1V_0''(\xi^0_\star)>1$ if $\beta=1$ by \eqref{gamma:condition:beta:equal:1}, using \eqref{upper:estimate:pi:i:n} and Lemma~\ref{lmm:limsup}(\ref{lmm:limsup:i}),
\begin{equation*}
\limsup_{\mathcal{H}\ni h\downarrow0}\mathbb{E}\bigg[\bigg|h^{-\beta}\sum_{k=1}^{\lceil h^{-2}\rceil}\gamma_k\Pi_{k+1:{\lceil h^{-2}\rceil}}r^h_k\bigg|\bigg]
\leq C\limsup_{n\uparrow\infty}\gamma_n^{-\frac12}\sum_{k=1}^n\gamma^2_k\mathrm{e}^{-V_0''(\xi^0_\star)\sum_{j=k+1}^n\gamma_j}
=0,
\end{equation*}
i.e.
\begin{equation*}
h^{-\beta}\sum_{k=1}^{\lceil h^{-2}\rceil}\gamma_k\Pi_{k+1:\lceil h^{-2}\rceil}r^h_k
\stackrel{L^1(\mathbb{P})}{\longrightarrow}0
\quad\mbox{ as }\quad
\mathcal{H}\ni h\downarrow0.
\end{equation*}

\noindent
\emph{\textbf{Step~4. Study of $\big\{h^{-\beta}\sum_{k=1}^{\lceil h^{-2}\rceil}\gamma_k\Pi_{k+1:\lceil h^{-2}\rceil}\rho^h_k,h\in\mathcal{H}\big\}$.}}
\newline\newline
Note that
\begin{equation*}
\begin{aligned}
\mathbb{E}[|\rho^h_k|^2]
&=\mathbb{E}\big[\Var\big(H_1(\xi^h_{k-1},X_h^{(k)})-H_1(\xi^h_\star,X_h^{(k)})\big|\mathcal{F}^h_{k-1}\big)\big]\\
&\leq\mathbb{E}\big[\big(H_1(\xi^h_{k-1},X_h^{(k)})-H_1(\xi^h_\star,X_h^{(k)})\big)^2\big].
\end{aligned}
\end{equation*}
By the uniform boundedness of $\{f_{X_h},h\in\mathcal{H}\}$, guaranteed under Assumption~\ref{asp:misc}(\ref{asp:misc:iv}),
\begin{equation*}
\begin{aligned}
\mathbb{E}\big[\big(H_1(\xi^h_{k-1},X_h^{(k)})&-H_1(\xi^h_\star,X_h^{(k)})\big)^2\big]\\
&=\frac1{(1-\alpha)^2}\,\mathbb{E}\Big[\Big|\mathds{1}_{X_h^{(k)} \geq \xi^h_{k-1}}-\mathds{1}_{X_h^{(k)} \geq \xi^h_\star}\Big|\Big]\\
&=\frac1{(1-\alpha)^2}\,\mathbb{E}\Big[\mathbb{E}\Big[\mathds{1}_{\xi^h_{k-1}\leq X_h^{(k)}<\xi^h_\star}+\mathds{1}_{\xi^h_\star\leq X_h^{(k)}<\xi^h_{k-1}}\Big|\mathcal{F}^h_{k-1}\Big]\Big]\\
&=\frac1{(1-\alpha)^2}\,\mathbb{E}\big[\big|F_{X_h}(\xi^h_{k-1})-F_{X_h}(\xi^h_\star)\big|\big]\\
&\leq\frac{\sup_{h'\in\mathcal{H}}\|f_{X_{h'}}\|_\infty}{(1-\alpha)^2}\,\mathbb{E}[(\xi^h_{k-1}-\xi^h_\star)^2]^\frac12.
\end{aligned}
\end{equation*}
Observe that, by Assumption~\ref{asp:supE[sup]} and the fact that $\lim_{\mathcal{H}\ni h\downarrow0}\xi^h_\star=\xi^0_\star$,
\begin{equation*}
\sup_{h\in\mathcal{H}}\mathbb{E}[(\xi^h_0-\xi^h_\star)^2]\leq2(\sup_{h\in\mathcal{H}}{\mathbb{E}[|\xi^h_0|^2]}+\sup_{h\in\mathcal{H}}{|\xi^h_\star|^2})<\infty.
\end{equation*}
Thus, using Lemma~\ref{lmm:error}(\ref{lmm:error:statistical}),
\begin{equation}\label{eq:E[|rho|2]<}
\mathbb{E}[|\rho^h_k|^2]
\leq\mathbb{E}[(H_1(\xi^h_{k-1},X_h^{(k)})-H_1(\xi^h_\star,X_h^{(k)}))^2]
\leq C\gamma_k^\frac12.
\end{equation}

By \eqref{eq:H1} and \eqref{eq:rhohn}, $\mathbb{E}[\rho^{h}_k|\mathcal{F}_{k-1}^h]=0$, so that $\{\rho^h_k,k\geq1\}$ are $\{\mathcal{F}^h_k,k\geq1\}$-martingale increments. Hence, via \eqref{eq:E[|rho|2]<},
\begin{equation*}
\mathbb{E}\bigg[\bigg(\sum_{k=1}^n\gamma_k\Pi_{k+1:n}\rho^h_k\bigg)^2\bigg]
=\sum_{k=1}^n\gamma_k^2|\Pi_{k+1:n}|^2\,\mathbb{E}[|\rho^h_k|^2]
\leq\sum_{k=1}^n\gamma_k^\frac52|\Pi_{k+1:n}|^2.
\end{equation*}
Since $ h^{-\beta}\leq \gamma_1^\frac12\gamma_{\lceil h^{-2}\rceil}^{-\frac12}$ and $2\gamma_1V_0''(\xi^0_\star)>1$ if $\beta=1$ by \eqref{gamma:condition:beta:equal:1}, via \eqref{upper:estimate:pi:i:n} and Lemma~\ref{lmm:limsup}(\ref{lmm:limsup:i}),
\begin{equation*}
\limsup_{\mathcal{H}\ni h\downarrow0}\mathbb{E}\bigg[\bigg(h^{-\beta}\sum_{k=1}^{\lceil h^{-2}\rceil}\gamma_k\Pi_{k+1:\lceil h^{-2}\rceil}\rho^h_k\bigg)^2\bigg]
\leq C\limsup_{n\uparrow\infty}\gamma_n^{-1}\sum_{k=1}^n\gamma_k^\frac52\mathrm{e}^{-2V_0''(\xi^0_\star)\sum_{j=1}^n\gamma_j}
=0.
\end{equation*}
Eventually,
\begin{equation*}
h^{-\beta}\sum_{k=1}^{\lceil h^{-2}\rceil}\gamma_k\Pi_{k+1:\lceil h^{-2}\rceil}\rho^h_k
\stackrel{L^2(\mathbb{P})}{\longrightarrow}0
\quad\mbox{ as }\quad
\mathcal{H}\ni h\downarrow0.
\end{equation*}

\noindent
\emph{\textbf{Step~5. Study of $\Big\{\frac{h^{-1}}{\lceil h^{-2}\rceil}\sum_{k=1}^{\lceil h^{-2}\rceil}\theta^h_k,h\in\mathcal{H}\Big\}$.}}
\newline\newline
Via \eqref{eq:sa:nested:opt} and \eqref{eq:thetahk}, $\mathbb{E}[\theta_k^h|\mathcal{F}_{k-1}^h]=0$, i.e.~$\{\theta^h_k,k\geq1\}$ are $\{\mathcal{F}^h_k,k\geq1\}$-martingale increments. Hence, by Lemma~\ref{lmm:error}(\ref{lmm:error:statistical}),
\begin{equation}
\label{eq:E[sum(thetahk)]<}
\begin{aligned}
\mathbb{E}\bigg[\bigg(\frac1{\sqrt{n}}\sum_{k=1}^n\theta^h_k\bigg)^2\bigg]
&=\frac1n\sum_{k=1}^n\mathbb{E}[|\theta^h_k|^2]\\
&\leq\frac1n\sum_{k=1}^n\mathbb{E}\Big[\Big|\xi_{k-1}^h-\xi^h_\star+\frac1{1-\alpha}\big((X_h^{(k)}-\xi_{k-1}^h)^+-(X_h^{(k)}-\xi^h_\star)^+\big)\Big|^2\Big]\\
&\leq \frac{C}n\sum_{k=0}^{n-1}\mathbb{E}[(\xi^h_k-\xi^h_\star)^2]\\
&\leq C\bigg(\frac1n\,\mathbb{E}[(\xi^h_0-\xi^h_\star)^2]+\frac1n\sum_{k=1}^{n-1}\gamma_k\bigg).
\end{aligned}
\end{equation}
Using that $\sup_{h\in\mathcal{H}}\mathbb{E}[(\xi^h_0-\xi^h_\star)^2]<\infty$ and a comparison between series and integrals,
\begin{equation*}
\frac{h^{-1}}{\lceil h^{-2}\rceil}\sum_{k=1}^{\lceil h^{-2}\rceil}\theta^h_k
\stackrel{L^2(\mathbb{P})}{\longrightarrow}0
\quad\mbox{ as }\quad
\mathcal{H}\ni h\downarrow0.
\end{equation*}

\noindent
\emph{\textbf{Step~6. Study of $\Big\{\frac{h^{-1}}{\lceil h^{-2}\rceil}\sum_{k=1}^{\lceil h^{-2}\rceil}\zeta^h_k,h\in\mathcal{H}\Big\}$.}}
\newline\newline
Using a second order Taylor expansion, the uniform boundedness of $\{V_h''=(1-\alpha)^{-1}f_{X_h},h\in\mathcal{H}\}$ and Lemma~\ref{lmm:error}(\ref{lmm:error:statistical}),
\begin{equation}
\label{eq:E[sum(zetahk)]<}
\begin{aligned}
\mathbb{E}\bigg[\bigg|\frac1{\sqrt{n}}\sum_{k=1}^n\zeta^h_k\bigg|\bigg]
&\leq\frac{\sup_{h'\in\mathcal{H}}\|f_{X_{h'}}\|_\infty}{2(1-\alpha)}\bigg(\frac1{\sqrt{n}}\,\mathbb{E}[(\xi^h_0-\xi^h_\star)^2]
   +\frac1{\sqrt{n}}\sum_{k=2}^n\mathbb{E}[(\xi_{k-1}^h-\xi^h_\star)^2]\bigg)\\
&\leq C\bigg(\frac1{\sqrt{n}}\,\mathbb{E}[(\xi^h_0-\xi^h_\star)^2]
    +\frac1{\sqrt{n}}\sum_{k=1}^{n-1}\gamma_k\bigg).
\end{aligned}
\end{equation}
By a comparison between series and integrals, $\lim_{n\uparrow\infty}\frac1{\sqrt{n}}\sum_{k=1}^{n-1}\gamma_k=0$. Recalling that $\sup_{h\in\mathcal{H}}\mathbb{E}[(\xi^h_0-\xi^h_\star)^2]<\infty$, we get
\begin{equation*}
\frac{h^{-1}}{\lceil h^{-2}\rceil}\sum_{k=1}^{\lceil h^{-2}\rceil}\zeta^h_k
\stackrel{L^1(\mathbb{P})}{\longrightarrow}0
\quad\mbox{ as }\quad
\mathcal{H}\ni h\downarrow0.
\end{equation*}

\noindent
\emph{\textbf{Step~7. Study of $\Big\{\Big(h^{-\beta}\sum_{k=1}^{\lceil h^{-2}\rceil}\gamma_k\Pi_{k+1:\lceil h^{-2}\rceil}e^h_k, \frac{h^{-1}}{\lceil h^{-2}\rceil}\sum_{k=1}^{\lceil h^{-2}\rceil}\eta^h_k\Big), h\in\mathcal{H}\Big\}$.}}
\newline\newline
We seek here to apply the CLT~\cite[Corollary~3.1]{nla.cat-vn2887492} to the martingale array \begin{equation*}
\bigg\{\bigg(h^{-\beta}\sum_{k=1}^{\lceil h^{-2}\rceil}\gamma_k\Pi_{k+1:\lceil h^{-2}\rceil}e^h_k, \frac{h^{-1}}{\lceil h^{-2}\rceil}\sum_{k=1}^{\lceil h^{-2}\rceil}\eta^h_k\bigg),h\in\mathcal{H}\bigg\}.
\end{equation*}
\noindent
\emph{\textbf{Step~7.1. Verification of the conditional Lindeberg condition.}}
\newline
We first check the conditional Lindeberg condition.
By assumption, recalling that $\{\xi^h_\star,h\in\mathcal{H}\}$ is bounded, we have
\begin{equation*}
\sup_{h\in\mathcal{H}}\mathbb{E}[|X_h-\xi^h_\star|^{2+\delta}]\leq 2^{1+\delta}\big(\sup_{h\in\mathcal{H}}\mathbb{E}[|X_h|^{2+\delta}]+\sup_{h\in\mathcal{H}}|\xi^h_\star|^{2+\delta}\big)<\infty.
\end{equation*}

On the one hand, via \eqref{eq:H1}, \eqref{upper:estimate:pi:i:n} and Lemma~\ref{lmm:limsup}(\ref{lmm:limsup:i}), using that $(2+\delta)V_0''(\xi^0_\star)\gamma_1>1+\frac\delta2$ if $\beta=1$ by \eqref{gamma:condition:beta:equal:1},
\begin{equation*}
\begin{aligned}
\limsup_{\mathcal{H}\ni h\downarrow0}
\sum_{k=1}^{\lceil h^{-2}\rceil}
&\mathbb{E}\big[\big|h^{-\beta}\gamma_k\Pi_{k+1:\lceil h^{-2}\rceil}H_1(\xi^h_\star,X_h^{(k)})\big|^{2+\delta}\big]\\
&\leq\gamma_1^{1+\frac\delta2}c_\alpha^{2+\delta}
\limsup_{n\uparrow\infty}\gamma_n^{-(1+\frac{\delta}2)}\sum_{k=1}^n\gamma_k^{2+\delta}\mathrm{e}^{-(2+\delta)V_0''(\xi^0_\star)\sum_{j=k+1}^n\gamma_j}
=0,
\end{aligned}
\end{equation*}
where $c_\alpha=1\vee\alpha/(1-\alpha)$.

On the other hand,
\begin{align*}
\sum_{k=1}^{\lceil h^{-2}\rceil}
\mathbb{E}\Big[\Big|\frac{h^{-1}}{\lceil h^{-2}\rceil}\eta^h_k\Big|^{2+\delta}\Big]
\leq C\,h^{2+\delta}\sum_{k=1}^{\lceil h^{-2}\rceil}\mathbb{E}\big[\big|X_h-\xi^h_\star\big|^{2+\delta}\big] 
&\leq C \sup_{h'\in\mathcal{H}}\mathbb{E}\big[\big|X_{h'}-\xi^{h'}_\star\big|^{2+\delta}\big] h^{\delta}, 
\end{align*}
so that
\begin{equation*}
\limsup_{\mathcal{H}\ni h\downarrow0}
\sum_{k=1}^{\lceil h^{-2}\rceil}
\mathbb{E}\big[\big|\frac{h^{-1}}{\lceil h^{-2}\rceil}\eta^h_k\big|^{2+\delta}\big]=0.
\end{equation*}

The two previous limits thus guarantee the conditional Lindeberg condition.
\\

\noindent
\emph{\textbf{Step~7.2. Convergence of the conditional covariance matrices.}}
\newline
We now prove the convergence of the conditional covariance matrices $\big\{S_h=(S^{i,j}_h)_{1\leq i,j\leq 2},\\h\in\mathcal{H}\big\}$ defined by
\begin{align*}
S^{1,1}_h
& \coloneqq\sum_{k=1}^{\lceil h^{-2}\rceil}h^{-2\beta}\gamma_k^2\Pi_{k+1:\lceil h^{-2}\rceil}^2\mathbb{E}[H_1(\xi^h_\star,X_h)^2]
=\frac\alpha{1-\alpha}\,h^{-2\beta}\gamma_{\lceil h^{-2}\rceil}\Sigma_{\lceil h^{-2}\rceil},\\
S^{2,2}_h & \coloneqq \sum_{k=1}^{\lceil h^{-2}\rceil}\mathbb{E}\bigg[\Big(\frac{h^{-1}}{\lceil h^{-2}\rceil}\eta^h_k\Big)^2\bigg|\mathcal{F}^h_{k-1}\bigg]
=\frac{h^{-2}}{\lceil h^{-2}\rceil}\frac{\Var\big((X_h-\xi^h_\star)^+\big)}{(1-\alpha)^2},\\
S^{1,2}_h=S^{2,1}_h& \coloneqq \frac{h^{-(1+\beta)}}{\lceil h^{-2}\rceil}\sum_{k=1}^{\lceil h^{-2}\rceil}\gamma_k\Pi_{k+1:\lceil h^{-2}\rceil}\,\mathbb{E}\big[e^h_k\eta^h_k\big|\mathcal{F}^h_{k-1}\big],
\end{align*}
with
\begin{equation}
\label{eq:Sigma_n}
\Sigma_n=\frac1{\gamma_n}\sum_{k=1}^n\gamma_k^2\Pi_{k+1:n}^2.
\end{equation}

\noindent
\emph{\textbf{Step~7.2.1. Convergence of $\big\{S^{1,1}_h,h\in \mathcal{H}\big\}$.}}
\newline
Using that $\lim_{\mathcal{H}\ni h\downarrow0}h^{-2\beta}\gamma_{\lceil h^{-2}\rceil}=\gamma_1$,
\begin{equation}\label{eq:helper:1}
\lim_{\mathcal{H}\ni h\downarrow0} S^{1,1}_h=\frac{\gamma_1\alpha}{1-\alpha}\lim_{n\uparrow\infty}\Sigma_n,
\end{equation}
provided that the latter limit exists. We decompose
\begin{equation*}
\begin{aligned}
\Sigma_{n+1}
&=\gamma_{n+1}+\frac{\gamma_n}{\gamma_{n+1}}\big(1-\gamma_{n+1}V_0''(\xi^0_\star)\big)^2\,\Sigma_n\\
&=\Sigma_n+\frac{\gamma_n-\gamma_{n+1}}{\gamma_{n+1}}\Sigma_n+\gamma_n\gamma_{n+1}V_0''(\xi^0_\star)^2\,\Sigma_n+(\gamma_{n+1}-\gamma_n)+\gamma_n\big(1-2V_0''(\xi^0_\star)\Sigma_n\big).
\end{aligned}
\end{equation*}
Asymptotically,
\begin{equation*}
\frac{\gamma_n-\gamma_{n+1}}{\gamma_{n+1}}=\frac{\mathds{1}_{\beta=1}}{\gamma_1}\gamma_n+\oo(\gamma_n),
\quad
\gamma_{n+1}-\gamma_n=\oo(\gamma_n)
\quad\mbox{ and }\quad
\gamma_n\gamma_{n+1}=\oo(\gamma_n).
\end{equation*}
Hence
\begin{equation*}
\Sigma_{n+1}=\Sigma_n+\Big(1-\Big(2V_0''(\xi^0_\star)-\frac{\mathds{1}_{\beta=1}}{\gamma_1}\Big)\Sigma_n\Big)\gamma_n+(\Sigma_n+1)\oo(\gamma_n).
\end{equation*}
The best candidate limit for $\{\Sigma_n,n\geq1\}$ is
\begin{equation}
\label{eq:Sigma*}
\Sigma_\star\coloneqq\frac1{2V_0''(\xi^0_\star)-\frac{\mathds{1}_{\beta=1}}{\gamma_1}}.
\end{equation}
Letting
\begin{equation}\label{eq:Delta:Sigma}
\Delta\Sigma_n\coloneqq\Sigma_n-\Sigma_\star,
\end{equation}
we have
\begin{equation*}
\begin{aligned}
\Delta\Sigma_{n+1}
&=\Delta\Sigma_n+\Big(\frac{\gamma_n-\gamma_{n+1}}{\gamma_{n+1}}-\frac{\mathds{1}_{\beta=1}}{\gamma_1}\gamma_n\Big)\Delta\Sigma_n+\gamma_n\gamma_{n+1}\Delta\Sigma_n-\gamma_n\Big(2V_0''(\xi^0_\star)-\frac{\mathds{1}_{\beta=1}}{\gamma_1}\Big)\Delta\Sigma_n\\
&\quad+\Big(\frac{\gamma_n-\gamma_{n+1}}{\gamma_{n+1}}-\frac{\mathds{1}_{\beta=1}}{\gamma_1}\gamma_n\Big)\Sigma_\star+\gamma_n\gamma_{n+1}\Sigma_\star+(\gamma_{n+1}-\gamma_n)\\
&=\big(1-\mu\gamma_n+\oo(\gamma_n)\big)\Delta\Sigma_n+\oo(\gamma_n),
\end{aligned}
\end{equation*}
where $\mu\coloneqq2V_0''(\xi^0_\star)-\frac{\mathds{1}_{\beta=1}}{\gamma_1}>0$.
Let $\varepsilon>0$. There exists $n_0\geq0$ such that, for $n\geq n_0$, $1-(\mu+\varepsilon)\gamma_n>0$, hence
\begin{equation}
\label{eq:|DeltaSigma|}
|\Delta\Sigma_{n+1}|\leq\big(1-(\mu+\varepsilon)\gamma_n\big)|\Delta\Sigma_n|+\varepsilon\gamma_n.
\end{equation}
Thus, for $n\geq n_0$,
\begin{equation*}
|\Delta\Sigma_n|\leq|\Delta\Sigma_{n_0}|\exp\bigg(-(\mu+\varepsilon)\sum_{k=n_0}^n\gamma_k\bigg)+\varepsilon\sum_{k=n_0}^n\gamma_k\exp\bigg(-(\mu+\varepsilon)\sum_{j=k}^n\gamma_j\bigg),
\end{equation*}
so that by Lemma~\ref{lmm:limsup}, $\limsup_{n\uparrow\infty}|\Delta\Sigma_n|\leq C\varepsilon$, which yields
\begin{equation}
\label{eq:lim:Sigma:n}
\lim_{n\uparrow\infty}\Sigma_n
=\Sigma_\star.
\end{equation}
Combining \eqref{eq:helper:1} and \eqref{eq:lim:Sigma:n},
\begin{equation*}
S^{1,1}_{h}
\stackrel{\Pas}{\longrightarrow}
\frac\alpha{1-\alpha}\,\frac{\gamma_1}{2V_0''(\xi^0_\star)-\frac{\mathds{1}_{\beta=1}}{\gamma_1}}
\quad\mbox{ as }\quad
\mathcal{H}\ni h\downarrow0.
\end{equation*}

\noindent\emph{\textbf{Step~7.2.2. Convergence of $\big\{S^{2,2}_h,h\in\mathcal{H}\big\}$.}}
\newline
Given that $x\mapsto x^+$ is $1$-Lipschitz on $\mathbb{R}$,
\begin{equation*}
\mathbb{E}\big[\big|(X_h-\xi^h_\star)^+-(X_0-\xi^0_\star)^+\big|^2\big]
\leq2\big(\mathbb{E}[(X_h-X_0)^2]+(\xi^h_\star-\xi^0_\star)^2\big).
\end{equation*}
Note that
\begin{equation*}
\mathbb{E}[\Var(\varphi(Y,Z)|Y)]
\leq\mathbb{E}[\Var(\varphi(Y,Z)|Y)]+\Var(\mathbb{E}[\varphi(Y,Z)|Y])
=\Var(\varphi(Y,Z))
<\infty.
\end{equation*}
Hence
\begin{equation}\label{eq:Xh->X0:L2}
\mathbb{E}[(X_h-X_0)^2]
=h\,\mathbb{E}[\Var(\varphi(Y,Z)|Y)]
\to
0
\quad\mbox{ as }\quad
\mathcal{H}\ni h\downarrow0.
\end{equation}
Recalling that $\lim_{\mathcal{H}\ni h\downarrow0}\xi^h_\star=\xi^0_\star$,
it follows that $(X_h-\xi^h_\star)^+$ converges to $(X_0-\xi^0_\star)^+$ in $L^2(\mathbb{P})$ as $\mathcal{H}\ni h\downarrow0$, so that
\begin{equation*}
\Var\big((X_h-\xi^h_\star)^+\big)
\to
\Var\big((X_0-\xi^0_\star)^+\big)
\quad\mbox{ as }\quad
\mathcal{H}\ni h\downarrow0.
\end{equation*}
Eventually,
\begin{equation*}
S^{2,2}_h
\stackrel{\Pas}{\longrightarrow}
\frac{\Var\big((X_0-\xi^0_\star)^+\big)}{(1-\alpha)^2}
\quad\mbox{ as }\quad
\mathcal{H}\ni h\downarrow0.
\end{equation*}

\noindent\emph{\textbf{Step~7.2.3. Convergence of $\big\{S^{1,2}_h,h\in\mathcal{H}\big\}$.}}
\newline
Recalling that $\{(X_h-\xi^h_\star)^+,h\in\mathcal{H}\}$ converges to $(X_0-\xi^0_\star)^+$ in $L^2(\mathbb{P})$ as $\mathcal{H}\ni h\downarrow0$, we deduce that
\begin{equation}
\mathbb{E}\big[e^h_k\eta^h_k\big|\mathcal{F}^h_{k-1}\big]
=\frac\alpha{(1-\alpha)^2}\,\mathbb{E}[(X_h-\xi^h_\star)^+]
\to\frac\alpha{(1-\alpha)^2}\,\mathbb{E}[(X_0-\xi^0_\star)^+]
\quad\mbox{ as }\quad
\mathcal{H}\ni h\downarrow0.
\label{eq:dom-cv}
\end{equation}
Besides,
$\lim_{\mathcal{H}\ni h\downarrow0}\frac{h^{-(1+\beta)}}{\lceil h^{-2}\rceil}=1_{\beta=1}$.
Thus, by Lemma~\ref{lmm:limsup}(\ref{lmm:limsup:i}),
\begin{equation*}
\limsup_{\mathcal{H}\ni h\downarrow0}\frac{h^{-(1+\beta)}}{\lceil h^{-2}\rceil}\sum_{k=1}^{\lceil h^{-2}\rceil}\gamma_k\Pi_{k+1:\lceil h^{-2}\rceil}\,\Big|\mathbb{E}\big[e^h_k\eta^h_k\big|\mathcal{F}^h_{k-1}\big]-\frac\alpha{(1-\alpha)^2}\mathbb{E}[(X_0-\xi^0_\star)^+]\Big|=0,
\end{equation*}
so that
\begin{equation}\label{eq:helper:2}
\lim_{\mathcal{H}\ni h\downarrow0}\frac{h^{-(1+\beta)}}{\lceil h^{-2}\rceil}\sum_{k=1}^{\lceil h^{-2}\rceil}\gamma_k\Pi_{k+1:\lceil h^{-2}\rceil}\mathbb{E}\big[e^h_k\eta^h_k\big|\mathcal{F}^h_{k-1}\big]
=\frac{\alpha\,\mathbb{E}[(X_0-\xi^0_\star)^+]\mathds{1}_{\beta=1}}{(1-\alpha)^2}\lim_{n\uparrow\infty}\widetilde\Sigma_n,
\end{equation}
provided that the latter limit exists, where
\begin{equation*}
\widetilde\Sigma_n\coloneqq\sum_{k=1}^n\gamma_k\Pi_{k+1:n}.
\end{equation*}
Observe that
\begin{equation*}
\widetilde\Sigma_{n+1}
=\gamma_{n+1}+\big(1-\gamma_{n+1}V_0''(\xi^0_\star)\big)\widetilde\Sigma_n
=\widetilde\Sigma_n+\gamma_{n+1}\big(1-V_0''(\xi^0_\star)\widetilde\Sigma_n\big).
\end{equation*}
Let $\widetilde\Sigma_\star\coloneqq\frac1{V_0''(\xi^0_\star)}$ and $\Delta\widetilde\Sigma_n\coloneqq\widetilde\Sigma_n-\widetilde\Sigma_\star$.
Then
\begin{equation*}
\Delta\widetilde\Sigma_{n+1}
=\big(1-\gamma_{n+1}V_0''(\xi^0_\star)\big)\Delta\widetilde\Sigma_n
=-\widetilde\Sigma_\star\prod_{k=1}^{n+1}\big(1-\gamma_kV_0''(\xi^0_\star)\big),
\end{equation*}
so that
\begin{equation*}
|\Delta\widetilde\Sigma_n|\leq | \widetilde\Sigma_\star| \exp\bigg(-V_0''(\xi^0_\star)\sum_{k=1}^n\gamma_k\bigg).
\end{equation*}
Since $\sum_{n\geq1}\gamma_n=\infty$, it ensues that
\begin{equation}\label{eq:helper:3}
\lim_{n\uparrow\infty}{\sum_{k=1}^n\gamma_k\Pi_{k+1:n}}=\widetilde\Sigma_\star.
\end{equation}
Finally, by \eqref{eq:helper:2} and \eqref{eq:helper:3},
\begin{equation*}
S^{1,2}_h=S^{2,1}_h
\stackrel{\Pas}{\longrightarrow}
\frac{\alpha\,\mathbb{E}[(X_0-\xi^0_\star)^+]\mathds{1}_{\beta=1}}{V_0''(\xi^0_\star)(1-\alpha)^2}
\quad\mbox{ as }\quad
\mathcal{H}\ni h\downarrow 0.
\end{equation*}

The proof is now complete.

\section{Proof of Theorem~\ref{thm:avg:nested:clt}}
\label{prf:avg:nested:clt}

We employ similar notation to Theorem~\ref{thm:nested:clt}'s proof, Appendix~\ref{prf:nested:clt}.
\\

By the decomposition \eqref{decomposition:var:sa},
\begin{equation}
\label{decomposition:var:sa:avg}
\xi^h_n-\xi^h_\star
=\frac1{V_h''(\xi^h_\star)}a^h_n+\frac1{V_h''(\xi^h_\star)}r^h_n+\frac1{V_h''(\xi^h_\star)}\rho^h_n+\frac1{V_h''(\xi^h_\star)}e^h_n,
\end{equation}
recalling the definitions \eqref{eq:rhn}--\eqref{eq:ehn}, with $\{a^h_n,n\geq1\}$ given by
\begin{equation}
\label{eq:ahn}
a^h_n=-\frac1{\gamma_n}\big(\xi^h_n-\xi^h_\star-(\xi^h_{n-1}-\xi^h_\star)\big).
\end{equation}
From \eqref{eq:sa:avg:nested:alg:xi} and \eqref{decomposition:var:sa:avg},
\begin{equation*}
\overline{\xi}^h_n-\xi^h_\star
=\frac1{V_h''(\xi^h_\star)n}\sum_{k=1}^n a^h_k+\frac1{V_h''(\xi^h_\star)n}\sum_{k=1}^n r^h_k+\frac1{V_h''(\xi^h_\star)n}\sum_{k=1}^n\rho^h_k+\frac1{V_h''(\xi^h_\star)n}\sum_{k=1}^ne^h_k.
\end{equation*}

Hereafter, we set $n = \lceil h^{-2}\rceil$. We study next the participation of each term of the above decomposition in the asymptotic estimation error.\\

\noindent
\emph{\textbf{Step~1. Study of $\Big\{\frac{h^{-1}}{V_h''(\xi^h_\star)\lceil h^{-2}\rceil}\sum_{k=1}^{\lceil h^{-2}\rceil}a^h_k,h\in\mathcal{H}\Big\}$.}}
\newline\newline
By summing by parts,
\begin{equation}
\label{eq:abel}
\begin{aligned}
\frac1n\sum_{k=1}^n\frac1{\gamma_k}\big(\xi^h_k-\xi^h_\star-(\xi^h_{k-1}-\xi^h_\star)\big)
&=\frac1n\Big(\frac1{\gamma_n}(\xi^h_n-\xi^h_\star)-\frac1{\gamma_1}(\xi^h_0-\xi^h_\star)\Big)\\
&\quad+\frac1n\sum_{k=2}^n
\Big(\frac1{\gamma_{k-1}}-\frac1{\gamma_k}\Big)(\xi^h_{k-1}-\xi^h_\star).
\end{aligned}
\end{equation}
We deal with each term on the right hand side above separately.
On the one hand, recalling that $\sup_{h\in\mathcal{H}}\mathbb{E}[|\xi^h_0-\xi^h_\star|]<\infty$, it follows from Lemma~\ref{lmm:error}(\ref{lmm:error:statistical}) that 
\begin{equation*}
\begin{aligned}
\mathbb{E}\bigg[\bigg|\frac1n\Big(\frac1{\gamma_n}(\xi^h_n-\xi^h_\star)&-\frac1{\gamma_1}(\xi^h_0-\xi^h_\star)\Big)\bigg|\bigg]\\
&\leq\mathbb{E}\bigg[\bigg|\frac1n\Big(\frac1{\gamma_n}(\xi^h_n-\xi^h_\star)-\frac1{\gamma_1}(\xi^h_0-\xi^h_\star)\Big)\bigg|^2\bigg]^\frac12\\
&\leq \frac{C}n\Big(\frac1{\gamma_n}\mathbb{E}[(\xi^h_n-\xi^h_\star)^2]^\frac12+\mathbb{E}[|\xi^h_0-\xi^h_\star|]\Big)\\
&\leq \frac{C}n\Big(\frac1{\sqrt{\gamma_n}}+1\Big).
\end{aligned}
\end{equation*}
On the other hand, via Lemma~\ref{lmm:error}(\ref{lmm:error:statistical}) again and a comparison between series and integrals,
\begin{equation}\label{eq:abel:2}
\begin{aligned}
\mathbb{E}\bigg[\bigg|\frac1n\sum_{k=2}^n\Big(\frac1{\gamma_{k-1}}&-\frac1{\gamma_k}\Big)(\xi^h_{k-1}-\xi^h_\star)\bigg|\bigg]\\
&\leq\mathbb{E}\bigg[\bigg(\frac1n\sum_{k=2}^n\Big(\frac1{\gamma_{k-1}}-\frac1{\gamma_k}\Big)(\xi^h_{k-1}-\xi^h_\star)\bigg)^2\bigg]^\frac12\\
&\leq\frac1n\sum_{k=2}^n\Big(\frac1{\gamma_{k-1}}-\frac1{\gamma_k}\Big)\mathbb{E}[(\xi^h_{k-1}-\xi^h_\star)^2]^\frac12\\
&\leq \frac{C}n\sum_{k=2}^n\Big(\frac1{\gamma_{k-1}}-\frac1{\gamma_k}\Big)\gamma_{k-1}^\frac12\\
&\leq \frac{C}{n\sqrt{\gamma_n}}.
\end{aligned}
\end{equation}
Gathering the previous upper bounds,
\begin{equation}
\label{eq:E[a:avg]<}
\mathbb{E}\bigg[\bigg|\frac1n\sum_{k=1}^n\frac1{\gamma_k}a^h_k\bigg|\bigg]
\leq \frac{C}{n\sqrt{\gamma_n}}.
\end{equation}
Now, recalling that $\lim_{\mathcal{H}\ni h\downarrow0}V_h''(\xi^h_\star)=V_0''(\xi^0_\star)$ and that $\beta\in\big(\frac12,1\big)$, we obtain
\begin{equation*}
\frac{h^{-1}}{V_h''(\xi^h_\star)\lceil h^{-2}\rceil}\sum_{k=1}^{\lceil h^{-2}\rceil}a^h_k
\stackrel{L^1(\mathbb{P})}{\longrightarrow}0
\quad\mbox{ as }\quad\mathcal{H}\ni h\downarrow0.
\end{equation*}

\noindent
\emph{\textbf{Step~2. Study of $\Big\{\frac{h^{-1}}{V_h''(\xi^h_\star)\lceil h^{-2}\rceil}\sum_{k=1}^{\lceil h^{-2}\rceil}r^h_k,h\in\mathcal{H}\Big\}$.}}
\newline\newline
Via \eqref{eq:E[|r|]<},
\begin{equation*}
\mathbb{E}\bigg[\bigg|\frac1{\sqrt{n}}\sum_{k=1}^n r^h_k\bigg|\bigg]
\leq \frac{C}{\sqrt{n}}\sum_{k=1}^n\gamma_k.
\end{equation*}
A comparison between series and integrals then yields
\begin{equation*}
\frac{h^{-1}}{V_h''(\xi^h_\star)\lceil h^{-2}\rceil}\sum_{k=1}^{\lceil h^{-2}\rceil}r^h_k
\stackrel{L^1(\mathbb{P})}{\longrightarrow}0
\quad\mbox{ as }\quad\mathcal{H}\ni h\downarrow0.
\end{equation*}

\noindent
\emph{\textbf{Step~3. Study of $\Big\{\frac{h^{-1}}{V_h''(\xi^h_\star)\lceil h^{-2}\rceil}\sum_{k=1}^{\lceil h^{-2}\rceil}\rho^h_k,h\in\mathcal{H}\Big\}$.}}
\newline\newline
Recalling that $\{\rho^h_k,k\geq1\}$ is a sequence of martingale increments, following \eqref{eq:E[|rho|2]<},
\begin{equation*}
\mathbb{E}\bigg[\bigg|\frac1{\sqrt{n}}\sum_{k=1}^n\rho^h_k\bigg|^2\bigg]
\leq \frac{C}n\sum_{k=1}^n \gamma_k^\frac12.
\end{equation*}
Thus, using a comparison between series and integrals,
\begin{equation*}
\frac{h^{-1}}{V_h''(\xi^h_\star)\lceil h^{-2}\rceil}\sum_{k=1}^{\lceil h^{-2}\rceil}\rho^h_k
\stackrel{L^2(\mathbb{P})}{\longrightarrow}0
\quad\mbox{ as }\quad\mathcal{H}\ni h\downarrow0.
\end{equation*}

\noindent
\emph{\textbf{Step~4. Study of $\Big\{ \Big(\frac{h^{-1}}{V_h''(\xi^h_\star)\lceil h^{-2}\rceil}\sum_{k=1}^{\lceil h^{-2}\rceil}e^h_k, \frac{h^{-1}}{\lceil h^{-2}\rceil}\sum_{k=1}^{\lceil h^{-2}\rceil}\eta^h_k\Big),h\in\mathcal{H}\Big\}$.}}
\newline\newline
We apply again the CLT~\cite[Corollary~3.1]{nla.cat-vn2887492} to the martingale arrays
\begin{equation*}
\bigg\{\bigg(\frac{h^{-1}}{V_h''(\xi^h_\star)\lceil h^{-2}\rceil}\sum_{k=1}^{\lceil h^{-2}\rceil}e^h_k ,\frac{h^{-1}}{\lceil h^{-2}\rceil}\sum_{k=1}^{\lceil h^{-2}\rceil}\eta^h_k\bigg),h\in\mathcal{H}\bigg\},
\end{equation*}
recalling the definition \eqref{eq:etahk}.
\\

\noindent
\emph{\textbf{Step~4.1. Verification of the conditional Lindeberg condition.}}
\newline
We start by checking the conditional Lindeberg condition.
Note that Step~7.1 of the proof of Theorem~\ref{thm:nested:clt}, Appendix~\ref{prf:nested:clt}, already guarantees that $\big\{ \frac{h^{-1}}{\lceil h^{-2}\rceil}\sum_{k=1}^{\lceil h^{-2}\rceil}\eta^h_k,h\in\mathcal{H} \big\}$ satisfies this condition.
Moreover, by assumption, via \eqref{eq:ehn} and \eqref{eq:H1},
\begin{equation*}
\begin{aligned}
\sum_{k=1}^{\lceil h^{-2}\rceil}\mathbb{E}\Big[\Big|\frac{h^{-1}}{V_h''(\xi^h_\star)\lceil h^{-2}\rceil}e^h_k\Big|^{2+\delta}\Big]
&\leq\sum_{k=1}^{\lceil h^{-2}\rceil}\frac{\mathbb{E}[|e^h_k|^{2+\delta}]}{|V_h''(\xi^h_\star)|^{2+\delta}\lceil h^{-2}\rceil^{1+\frac\delta2}}\\
&\leq\frac{c_\alpha^{2+\delta}}{|V_h''(\xi^h_\star)|^{2+\delta}\lceil h^{-2}\rceil^\frac\delta2}\to0
\quad\mbox{ as }\quad\mathcal{H}\ni h\downarrow0,
\end{aligned}
\end{equation*}
recalling that $c_\alpha=1\vee\alpha/(1-\alpha)$.
\\

\noindent
\emph{\textbf{Step~4.2. Convergence of the conditional covariance matrices.}}
\newline
We now prove the convergence of the conditional covariance matrices $\big\{S_h=(S_h^{i,j})_{1\leq i,j\leq 2},\\h\in\mathcal{H}\big\}$ defined by
\begin{align*}
S^{1,1}_h
& \coloneqq\sum_{k=1}^{\lceil h^{-2}\rceil}\frac{h^{-2}}{V_h''(\xi^h_\star)^2\lceil h^{-2}\rceil^2}\mathbb{E}[|e^h_k|^2|\mathcal{F}^h_{k-1}]
=\frac\alpha{1-\alpha}
\frac{h^{-2}}{V_h''(\xi^h_\star)^2\lceil h^{-2}\rceil},\\
S^{2,2}_h & \coloneqq \sum_{k=1}^{\lceil h^{-2}\rceil}\mathbb{E}\bigg[\Big(\frac{h^{-1}}{\lceil h^{-2}\rceil}\eta^h_k\Big)^2\bigg|\mathcal{F}^h_{k-1}\bigg]
=\frac{h^{-2}}{\lceil h^{-2}\rceil}\frac{\Var\big((X_h-\xi^h_\star)^+\big)}{(1-\alpha)^2},\\
S^{1,2}_h=S^{2,1}_h& \coloneqq\frac{h^{-2}}{V_h''(\xi^h_\star)\lceil h^{-2}\rceil^2}\sum_{k=1}^{\lceil h^{-2}\rceil}\mathbb{E}\big[e^h_k\eta^h_k\big|\mathcal{F}^h_{k-1}\big].
\end{align*}

First, recalling that $\lim_{\mathcal{H}\ni h\downarrow0}V_h''(\xi^h_\star)=V_0''(\xi^0_\star)=(1-\alpha)^{-1}f_{X_0}(\xi^0_\star)$, we get
\begin{equation*}
S^{1,1}_h
\stackrel{\Pas}{\longrightarrow}
\frac{\alpha(1-\alpha)}{f_{X_0}(\xi^0_\star)^2}
\quad\mbox{ as }\quad
\mathcal{H}\ni h\downarrow0.
\end{equation*}

Next, for the asymptotics of $\big\{S^{2, 2}_h, h \in \mathcal{H}\big\}$, we refer to Step~7.2 of the proof of Theorem~\ref{thm:nested:clt}, Appendix~\ref{prf:nested:clt}.

Finally, using that $\lim_{\mathcal{H}\ni h\downarrow0}V_h''(\xi^h_\star)=V_0''(\xi^0_\star)=(1-\alpha)^{-1}f_{X_0}(\xi^0_\star)$ and \eqref{eq:dom-cv}, we obtain by Ces\`aro's lemma that
\begin{equation*}
S^{1,2}_h=S^{2,1}_h
\stackrel{\Pas}{\longrightarrow}
\frac\alpha{1-\alpha}\frac{\mathbb{E}[(X_0-\xi^0_\star)^+]}{f_{X_0}(\xi^0_\star)}
\quad\mbox{ as }\quad\mathcal{H}\ni h\downarrow0.
\end{equation*}

The proof is now complete.

\section{Proof of Lemma~\ref{lmm:technical}}
\label{prf:technical}

\noindent
{(\ref{lmm:technical-i})}\
Let
\begin{equation*}
\Phi^{(k)}(Y)\coloneqq\varphi(Y,Z^{(k)})-\mathbb{E}[\varphi(Y,Z)|Y],
\end{equation*}
where $\{Z^{(k)},k\geq1\}\stackrel{\text{\rm\tiny i.i.d.}}{\sim}Z$ are independent of $Y$. We then write
\begin{equation*}
\begin{aligned}
X_{h_\ell}-X_{h_{\ell-1}}
&=\frac1{KM^\ell}\sum_{k=1}^{KM^\ell}\Phi^{(k)}(Y)-\frac1{KM^{\ell-1}}\sum_{k=1}^{KM^{\ell-1}}\Phi^{(k)}(Y)\\
&=\Big(1-\frac1M\Big)\bigg(\frac1{KM^{\ell-1}(M-1)}\sum_{k=KM^{\ell-1}+1}^{KM^\ell}\Phi^{(k)}(Y)-\frac1{KM^{\ell-1}}\sum_{k=1}^{KM^{\ell-1}}\Phi^{(k)}(Y)\bigg),
\end{aligned}
\end{equation*}
so that
\begin{equation*}
G_\ell
=\Big(1-\frac1M\Big)^\frac12
\bigg(\frac1{(KM^{\ell-1}(M-1))^\frac12}\sum_{k=KM^{\ell-1}+1}^{KM^\ell}\Phi^{(k)}(Y)
-\frac{(M-1)^\frac12}{(KM^{\ell-1})^\frac12}\sum_{k=1}^{KM^{\ell-1}}\Phi^{(k)}(Y)\bigg).
\end{equation*}
Conditionally on $Y$, we have the CLT
\begin{equation*}
U_\ell\coloneqq
\begin{pmatrix}
\frac1{(KM^{\ell-1}(M-1))^\frac12}\sum_{k=KM^{\ell-1}+1}^{KM^\ell}\Phi^{(k)}(Y)\\
\frac1{(KM^{\ell-1})^\frac12}\sum_{k=1}^{KM^{\ell-1}}\Phi^{(k)}(Y)
\end{pmatrix}
\stackrel[]{\mathcal{L}}{\longrightarrow}
\mathcal{N}\big(0,\mathbb{E}[\Phi^{(1)}(Y)^2|Y]I_2\big),
\quad\mbox{ as }\quad
\ell\uparrow\infty,
\end{equation*}
where $I_2$ stands for the $2\times2$ identity matrix.
Observing that $G_\ell=(1-1/M)^\frac12\langle u_M,U_\ell \rangle$, with $u_M\coloneqq\big(1,-(M-1)^\frac12\big)^\top$, it ensues that, conditionally on $Y$,
\begin{equation}
\label{eq:helper:4}
h_\ell^{-\frac12}(X_{h_\ell}-X_{h_{\ell-1}})
\stackrel[]{\mathcal{L}}{\longrightarrow}
\mathcal{N}\big(0,(M-1)\Var(\varphi(Y,Z)|Y)\big)
\quad\mbox{ as }\quad
\ell\uparrow\infty,
\end{equation}
where we used that, by the law of total variance,
\begin{equation}\label{eq:total:variance}
\mathbb{E}[\Var(\varphi(Y,Z)|Y)]
\leq\mathbb{E}[\Var(\varphi(Y,Z)|Y)]+\Var(\mathbb{E}[\varphi(Y,Z)|Y])
=\Var(\varphi(Y,Z))
<\infty,
\end{equation}
so that $\Var(\varphi(Y,Z)|Y)<\infty$ $\Pas$.
This concludes the proof.
\\

\noindent
{(\ref{lmm:technical-ii})}\
We have
\begin{equation*}
\mathbb{E}\big[\big|\mathds{1}_{X_{h_\ell}>\xi}-\mathds{1}_{X_{h_{\ell-1}}>\xi}\big|\big]
=\mathbb{P}(X_{h_{\ell-1}}\leq\xi < X_{h_\ell})+\mathbb{P}(X_{h_\ell}\leq\xi < X_{h_{\ell-1}}).
\end{equation*}
Introducing the random variable $G_\ell$, we compute
\begin{equation}
\label{eq:helper:5}
\begin{aligned}
\mathbb{P}(X_{h_{\ell-1}}\leq\xi<X_{h_\ell})
&=\mathbb{P}\big(X_{h_{\ell-1}}\leq\xi<X_{h_{\ell-1}}+h_\ell^\frac12G_\ell\big)\\
&=\mathbb{P}\big(X_{h_{\ell-1}}\leq\xi<X_{h_{\ell-1}}+h_\ell^\frac12G_\ell,G_\ell>0\big)\\
&=\mathbb{E}\big[\mathbb{P}\big(\xi-h_\ell^\frac12G_\ell<X_{h_{\ell-1}}\leq\xi,G_\ell>0\big|G_\ell\big)\big]\\
&=\mathbb{E}\big[\mathds{1}_{G_\ell>0}\big(F_{X_{h_{\ell-1}}\,|\,G_\ell}(\xi)-F_{X_{h_{\ell-1}}\,|\,G_\ell}(\xi-h_\ell^\frac12G_\ell)\big)\big]\\
&=h_\ell^\frac12\mathbb{E}[G_\ell^+f_{X_{h_{\ell-1}}|G_\ell}(\xi)]+h_\ell^\frac12\,r_\ell^+,
\end{aligned}
\end{equation}
where
\begin{equation*}
r_\ell^+\coloneqq\int_0^1\mathbb{E}\big[G_\ell^+\big(f_{X_{h_{\ell-1}}|G_\ell}(\xi-th_\ell^\frac12G_\ell)-f_{X_{h_{\ell-1}}|G_\ell}(\xi)\big)\big]\mathrm{d}t.
\end{equation*}
Similarly,
\begin{equation}
\label{eq:helper:6}
\mathbb{P}(X_{h_\ell}\leq\xi< X_{h_{\ell-1}})
=h_\ell^\frac12\mathbb{E}[G_\ell^-f_{X_{h_{\ell-1}}|G_\ell}(\xi)]+h_\ell^\frac12r_\ell^-,
\end{equation}
where
\begin{equation*}
r_\ell^-\coloneqq\int_0^1\mathbb{E}\big[G_\ell^-\big(f_{X_{h_{\ell-1}}|G_\ell}(\xi-th_\ell^\frac12G_\ell)-f_{X_{h_{\ell-1}}|G_\ell}(\xi)\big)\big]\mathrm{d}t.
\end{equation*}
Hence, combining \eqref{eq:helper:5} and \eqref{eq:helper:6},
\begin{equation*}
h_\ell^{-\frac12}\mathbb{E}\big[\big|\mathds{1}_{X_{h_\ell}>\xi}-\mathds{1}_{X_{h_{\ell-1}}>\xi}\big|\big]
=\mathbb{E}[|G_\ell|f_{X_{h_{\ell-1}}|G_\ell}(\xi)]+r_\ell^++r_\ell^-.
\end{equation*}

We will prove that the first term on the right hand side above converges to $\mathbb{E}[|G|f_G(\xi)]$ and that the remaining two vanish as $\ell\uparrow\infty$.

For a fixed $C>0$, we decompose the first term into
\begin{equation}
\begin{aligned}
\label{decomposition:first:term:local:variance:ML}
\mathbb{E}[|G_\ell|f_{X_{h_{\ell-1}}|G_\ell}(\xi)]
&=\mathbb{E}[|G_\ell|(f_{X_{h_{\ell-1}}|G_\ell}(\xi)-f_{G_\ell}(\xi))\mathds{1}_{|G_\ell|\leq C}]\\
&\quad+\mathbb{E}[|G_\ell|(f_{X_{h_{\ell-1}}|G_\ell}(\xi)-f_{G_\ell}(\xi))\mathds{1}_{|G_\ell|>C}]
+\mathbb{E}[|G_\ell|f_{G_\ell}(\xi)].
\end{aligned}
\end{equation}
Using that $\big\{g\mapsto f_{X_{h_{\ell-1}}|G_\ell=g}(\xi),\ell\geq1\big\}$ are bounded uniformly in $\ell\geq1$ and converge locally uniformly towards $g\mapsto f_g(\xi)$, by the dominated convergence theorem,
\begin{equation*}
\lim_{\ell\uparrow\infty}\big|\mathbb{E}[|G_\ell|(f_{X_{h_{\ell-1}}|G_\ell}(\xi)-f_{G_\ell}(\xi))\mathds{1}_{|G_\ell|\leq C}]\big|
\leq\lim_{\ell\uparrow\infty}C\,\mathbb{E}\big[\sup_{|g|\leq C}\big|f_{X_{h_{\ell-1}}|G_\ell=g}(\xi)-f_g(\xi)\big|\big]
=0.
\end{equation*}
Besides, via \eqref{eq:total:variance},
\begin{equation*}
\mathbb{E}[G_\ell^2]
\leq2h_\ell^{-1}\big(\mathbb{E}[(X_{h_\ell}-X_0)^2]+\mathbb{E}[(X_{h_{\ell-1}}-X_0)^2]\big)
\leq2(1+M^{-1})\mathbb{E}[\Var(\varphi(Y,Z)|Y)]
<\infty,
\end{equation*}
so that $\sup_{\ell\geq1}{\mathbb{E}[G_\ell^2]}<\infty$.
Recalling that $g\mapsto f_g(\xi)$ is bounded and that $\{g\mapsto f_{X_{h_{\ell-1}}|G_\ell=g}(\xi),\\\ell\geq1\}$ are bounded uniformly in $\ell\geq1$ by assumption, by the Cauchy-Schwarz and Markov inequalities,
\begin{equation*}
\mathbb{E}[|G_\ell|(f_{X_{h_{\ell-1}}|G_\ell}(\xi)-f_{G_\ell}(\xi))\mathds{1}_{|G_\ell|>C}]
\leq C'\,\mathbb{E}[G_\ell^2]^\frac12\mathbb{P}(|G_\ell|>C)^\frac12
\leq C'C^{-1}\sup_{\ell'\geq1}\mathbb{E}[G_{\ell'}^2],
\end{equation*}
for some positive constant $C'<\infty$.
Furthermore, since $\sup_{\ell\geq1}\mathbb{E}[G_\ell^2]<\infty$, $g\mapsto f_g(\xi)$ is continuous and bounded and $G_\ell\stackrel[\ell\uparrow\infty]{\mathcal{L}}{\longrightarrow} G$, it holds that
\begin{equation*}
\lim_{\ell\uparrow\infty}\mathbb{E}[|G_\ell|f_{G_\ell}(\xi)]=\mathbb{E}[|G|f_G(\xi)].
\end{equation*}
Finally, coming back to the decomposition \eqref{decomposition:first:term:local:variance:ML}, letting $\ell\uparrow\infty$ then $C\uparrow\infty$ yields
\begin{equation*}
\lim_{\ell\uparrow\infty}\mathbb{E}[|G_\ell|f_{X_{h_{\ell-1}}|G_\ell}(\xi)]=\mathbb{E}[|G|f_G(\xi)].
\end{equation*}

Similar lines of reasonings using the tightness of $\{G_\ell,\ell\geq1\}$, the uniform boundedness and local uniform convergence of $\{(x,g)\mapsto f_{X_{h_{\ell-1}}|G_\ell=g}(x),\ell\geq1\}$ and the continuity of $(x,g)\mapsto f_g(x)$ yield
\begin{equation*}
\lim_{\ell\uparrow\infty}r_\ell^+=\lim_{\ell\uparrow\infty}r_\ell^-=0.
\end{equation*}

\noindent
{(\ref{lmm:technical-iii})}\
Since $\mathbb{P}(X_0=\xi)=0$ by the continuity of $F_{X_0}$, a first order Taylor expansion gives
\begin{equation}\label{eq:taylor}
(X_h-\xi)^+=(X_0-\xi)^++\mathds{1}_{X_0>\xi}(X_h-X_0)+a(X_0,X_h-X_0)(X_h-X_0),
\end{equation}
where
\begin{equation*}
a(X_0,X_h-X_0)
=\int_0^1\big(\mathds{1}_{X_0+t(X_h-X_0)>\xi}-\mathds{1}_{X_0>\xi}\big)\mathrm{d}t.
\end{equation*}
Since $X_h\stackrel{\mathbb{P}}{\to}X_0$ as $\mathcal{H}\ni h\downarrow0$ by \eqref{eq:Xh->X0:L2} and $\mathbb{P}(X_0=\xi)=0$,
\begin{equation}\label{eq:a->0}
a(X_0,X_h-X_0)\stackrel[]{\mathbb{P}}{\longrightarrow}0
\quad\mbox{ as }\quad
\mathcal{H}\ni h\downarrow0.
\end{equation}

Using \eqref{eq:taylor},
\begin{equation}\label{eq:*}
\begin{aligned}
(X_{h_\ell}-\xi)^+-&(X_{h_{\ell-1}}-\xi)^+
=\mathds{1}_{X_0>\xi}(X_{h_\ell}-X_{h_{\ell-1}})\\
&+a(X_0,X_{h_\ell}-X_0)(X_{h_\ell}-X_0)
-a(X_0,X_{h_{\ell-1}}-X_0)(X_{h_{\ell-1}}-X_0).
\end{aligned}
\end{equation}
The standard CLT guarantees that both $\big\{h^{-\frac12}_\ell(X_{h_\ell}-X_0),\ell\geq1\big\}$ and $\big\{h^{-\frac12}_\ell(X_{h_{\ell-1}}-X_0),\ell\geq1\big\}$ are tight. Hence, by \eqref{eq:a->0},
\begin{equation}\label{eq:**}
h_\ell^{-\frac12}\big(a(X_0,X_{h_\ell}-X_0)(X_{h_\ell}-X_0)-a(X_0,X_{h_{\ell-1}}-X_0)(X_{h_{\ell-1}}-X_0)\big)
\stackrel[]{{\mathbb{P}}}{\longrightarrow}0
\quad\mbox{ as }\quad
\ell\uparrow\infty.
\end{equation}
Recalling that $X_0$ is $Y$-measurable by definition, using \eqref{eq:helper:4}, we get that conditionally on $Y$,
\begin{equation}\label{eq:***}
h_\ell^{-\frac12}\mathds{1}_{X_0>\xi}(X_{h_\ell}-X_{h_{\ell-1}})
\stackrel[]{\mathcal{L}}{\longrightarrow}
\mathds{1}_{X_0>\xi}\,G
\quad\mbox{ as }\quad
\ell\uparrow\infty.
\end{equation}
Eventually, combining \eqref{eq:*}, \eqref{eq:**} and \eqref{eq:***} and invoking Slutsky's theorem,
\begin{equation*}
h_\ell^{-\frac12}\big((X_{h_\ell}-\xi)^+-(X_{h_{\ell-1}}-\xi)^+\big)
\stackrel[]{\mathcal{L}}{\longrightarrow}
\mathds{1}_{X_0>\xi}\,G
\quad\mbox{ as }\quad\ell\uparrow\infty.
\end{equation*}

\section{Proof of Theorem~\ref{thm:ml:clt}}
\label{prf:ml:clt}

In the following developments, we denote by $C$ a positive constant whose value may change from line to line but does not depend upon $L$.
For simplicity, we drop the level indicating subscripts and superscripts $\ell$ from our notation and denote $\xi^{h_\ell}_n$, $\xi^{h_{\ell-1}}_n$, $X_{h_\ell}$, $X_{h_{\ell-1}}$, etc.~instead of $\xi^{h_\ell,\ell}_n$, $\xi^{h_{\ell-1},\ell}_n$, $X_{h_\ell,\ell}$, $X_{h_{\ell-1},\ell}$, etc.~the different variables intervening at level $\ell$ of the MLSA scheme, bearing in mind that they are levelwise independent.
\\

According to \eqref{eq:xi:ML}, \eqref{eq:xi*^hL} and \eqref{eq:xih-xi*},
\begin{equation}
\label{eq:xiML-xi*(hL)}
\begin{aligned}
\xi^\text{\tiny\rm ML}_\mathbf{N}-\xi^{h_L}_\star
&=\xi^{h_0}_{N_0}-\xi^{h_0}_\star
+\sum_{\ell=1}^L\big(\xi^{h_\ell}_{N_\ell}-\xi^{h_\ell}_\star-\big(\xi^{h_{\ell-1}}_{N_\ell}-\xi^{h_{\ell-1}}_\star\big)\big)\\
&=\xi^{h_0}_{N_0}-\xi^{h_0}_\star
+\sum_{\ell=1}^L\big(\xi^{h_\ell}_0-\xi^{h_\ell}_\star-(\xi^{h_{\ell-1}}_0-\xi^{h_{\ell-1}}_\star)\big)\Pi_{1:N_\ell}\\
&\quad+\sum_{\ell=1}^L\sum_{k=1}^{N_\ell}\gamma_k\Pi_{k+1:N_\ell}(g^{h_\ell}_k-g^{h_{\ell-1}}_k)
+\sum_{\ell=1}^L\sum_{k=1}^{N_\ell}\gamma_k\Pi_{k+1:N_\ell}(r^{h_\ell}_k-r^{h_{\ell-1}}_k)\\
&\quad+\sum_{\ell=1}^L\sum_{k=1}^{N_\ell}\gamma_k\Pi_{k+1:N_\ell}(\rho^{h_\ell}_k-\rho^{h_{\ell-1}}_k)
+\sum_{\ell=1}^L\sum_{k=1}^{N_\ell}\gamma_k\Pi_{k+1:N_\ell}(e^{h_\ell}_k-e^{h_{\ell-1}}_k),
\end{aligned}
\end{equation}
recalling the definitions \eqref{eq:ghn}--\eqref{eq:ehn} and \eqref{eq:Pi}.

Similarly, from \eqref{eq:C:ML}, \eqref{eq:C*^hL} and \eqref{eq:chih-chi*},
\begin{equation*}
\begin{aligned}
\chi^\text{\tiny\rm ML}_\mathbf{N}-\chi^{h_L}_\star
&=\chi^{h_0}_{N_0}-\chi^{h_0}_\star
+\sum_{\ell=1}^L\big(\chi^{h_\ell}_{N_\ell}-\chi^{h_\ell}_\star-\big(\chi^{h_{\ell-1}}_{N_\ell}-\chi^{h_{\ell-1}}_\star\big)\big)\\
&=\chi^{h_0}_{N_0}-\chi^{h_0}_\star
+\sum_{\ell=1}^L\frac1{N_\ell}\sum_{k=1}^{N_\ell}\theta^{h_\ell}_k-\theta^{h_{\ell-1}}_k\\
&\quad+\sum_{\ell=1}^L\frac1{N_\ell}\sum_{k=1}^{N_\ell}\zeta^{h_\ell}_k-\zeta^{h_{\ell-1}}_k
+\sum_{\ell=1}^L\frac1{N_\ell}\sum_{k=1}^{N_\ell}\eta^{h_\ell}_k-\eta^{h_{\ell-1}}_k,
\end{aligned}
\end{equation*}
recalling the definitions \eqref{eq:thetahk}--\eqref{eq:etahk}.
\\

\noindent
\emph{\textbf{Step~1. Study of $\big\{h_L^{-1}\big(\xi^{h_0}_{N_0}-\xi^{h_0}_\star\big),L\geq1\big\}$.}}
\newline\newline
Using Lemma~\ref{lmm:error}(\ref{lmm:error:statistical}) and \eqref{eq:N_ell},
\begin{equation*}
\mathbb{E}\big[\big|h_L^{-1}\big(\xi^{h_0}_{N_0}-\xi^{h_0}_\star\big)\big|^2\big]
\leq Ch_L^{-2}\gamma_{N_0}
\leq Ch_L^\frac{2\beta-1}{2(1+\beta)},
\end{equation*}
so that
\begin{equation*}
h_L^{-1}\big(\xi^{h_0}_{N_0}-\xi^{h_0}_\star\big)
\stackrel{L^2(\mathbb{P})}{\longrightarrow}0
\quad\mbox{ as }\quad
L\uparrow\infty.
\end{equation*}

\noindent
\emph{\textbf{Step~2. Study of $\Big\{h_L^{-1}\sum_{\ell=1}^L\big(\xi^{h_\ell}_0-\xi^{h_\ell}_\star-(\xi^{h_{\ell-1}}_0-\xi^{h_{\ell-1}}_\star)\big)\Pi_{1:N_\ell},L\geq1\Big\}$.}}
\newline\newline
Note that
\begin{equation}\label{eq:condition}
\gamma_1V_0''(\xi^0_\star)\geq\frac83\gamma_1\lambda>1
\quad\mbox{ if }\quad
\beta=1.
\end{equation}
Using that $\sup_{h\in\mathcal{H}}{\mathbb{E}[|\xi^h_0-\xi^h_\star|]}<\infty$, that $\limsup_{n\uparrow\infty}\gamma_n^{-1} |\Pi_{1:n}|=0$ by \eqref{upper:estimate:pi:i:n}, \eqref{eq:condition} and Lemma~\ref{lmm:limsup}(\ref{lmm:limsup:ii}), and \eqref{eq:N_ell},
\begin{equation*}
\begin{aligned}
\mathbb{E}\bigg[\bigg|h_L^{-1}\sum_{\ell=1}^L\big(\xi^{h_\ell}_0-\xi^{h_\ell}_\star&-(\xi^{h_{\ell-1}}_0-\xi^{h_{\ell-1}}_\star)\big)\Pi_{1:N_\ell}\bigg|\bigg]\\
&\leq2\sup_{h\in\mathcal{H}}{\mathbb{E}[|\xi^h_0-\xi^h_\star|]}\,h_L^{-1}\sum_{\ell=1}^L|\Pi_{1:N_\ell}|\\
& \leq Ch_L^{-1}\sum_{\ell=1}^L\gamma_{N_\ell}\\
& \leq Ch_L^{\frac12},
\end{aligned}
\end{equation*}
Thus
\begin{equation*}
h_L^{-1}\sum_{\ell=1}^L\big(\xi^{h_\ell}_0-\xi^{h_\ell}_\star-(\xi^{h_{\ell-1}}_0-\xi^{h_{\ell-1}}_\star)\big)\Pi_{1:N_\ell}
\stackrel{L^1(\mathbb{P})}{\longrightarrow}0
\quad\mbox{ as }\quad
L\uparrow\infty.
\end{equation*}

\noindent
\emph{\textbf{Step~3. Study of $\Big\{h_L^{-1}\sum_{\ell=1}^L\sum_{k=1}^{N_\ell}\gamma_k\Pi_{k+1:N_\ell}(g^{h_\ell}_k-g^{h_{\ell-1}}_k),L\geq1\Big\}$.}}
\newline\newline
Recalling that $\lim_{\mathcal{H}\ni h\downarrow0}\xi^h_\star=\xi^0_\star$, there exists a compact set $\mathcal{K}\subset\mathbb{R}$ such that $\xi^{h_\ell}_\star\in\mathcal{K}$, for all $\ell\geq0$. Thus, by Assumptions~\ref{asp:misc}(\ref{asp:misc:iv}) and~\ref{asp:fh-f0} and Lemma~\ref{lmm:error}(\ref{lmm:error:weak}),
\begin{equation*}
\begin{aligned}
|V_0''(\xi^0_\star)-V_{h_\ell}''(\xi^{h_\ell}_\star)|
&\leq\frac1{1-\alpha}\Big([f_{X_0}]_\mathrm{Lip}|\xi^{h_\ell}_\star-\xi^0_\star|
+\sup_{\xi\in\mathcal{K}}{\big|f_{X_0}(\xi)-f_{X_{h_\ell}}(\xi)\big|}\Big)\\
&\leq C\big(h_\ell+h_\ell^{\frac14+\delta_0}\big)
\leq Ch_\ell^{(\frac14+\delta_0)\wedge1}.
\end{aligned}
\end{equation*}
Besides, via \eqref{eq:ghn} and Lemma~\ref{lmm:error}(\ref{lmm:error:statistical}),
\begin{equation}
\label{eq:E[|g|]<}
\mathbb{E}\big[\big|g^{h_\ell}_k\big|\big]
\leq|V_0''(\xi^0_\star)-V_{h_\ell}''(\xi^{h_\ell}_\star)|\,\mathbb{E}[(\xi^{h_\ell}_{k-1}-\xi^{h_\ell}_\star)^2]^\frac12
\leq C h_\ell^{(\frac14+\delta_0)\wedge1}\gamma_k^\frac12.
\end{equation}
Finally, by \eqref{upper:estimate:pi:i:n}, \eqref{eq:condition}, Lemma~\ref{lmm:limsup}(\ref{lmm:limsup:i}) and \eqref{eq:N_ell}, distinguishing between the two cases $\delta_0<\frac34\wedge\frac{2\beta-1}{4(1+\beta)} = \frac{2\beta-1}{4(1+\beta)} $ and $\delta_0 \geq \frac{2\beta-1}{4(1+\beta)}$,
\begin{equation*}
\begin{aligned}
\mathbb{E}\bigg[\bigg|h_L^{-1}\sum_{\ell=1}^L\sum_{k=1}^{N_\ell}&\gamma_k\Pi_{k+1:N_\ell}(g^{h_\ell}_k-g^{h_{\ell-1}}_k)\bigg|\bigg]\\
&\leq h_L^{-1}\sum_{\ell=1}^L\sum_{k=1}^{N_\ell}\gamma_k|\Pi_{k+1:N_\ell}|\big(\mathbb{E}\big[\big|g^{h_\ell}_k\big|\big]+\mathbb{E}\big[\big|g^{h_{\ell-1}}_k\big|\big]\big)\\
&\leq Ch_L^{-1}\sum_{\ell=1}^Lh_\ell^{(\frac14+\delta_0)\wedge1}\sum_{k=1}^{N_\ell}\gamma_k^\frac32|\Pi_{k+1:N_\ell}|\\
&\leq Ch_L^{-1}\sum_{\ell=1}^Lh_\ell^{(\frac14+\delta_0)\wedge1}\gamma_{N_\ell}^\frac12\\
&\leq Ch_L^{\delta_0 \wedge \frac{2\beta-1}{4(1+\beta)}}.
\end{aligned}
\end{equation*}
Therefore,
\begin{equation*}
h_L^{-1}\sum_{\ell=1}^L\sum_{k=1}^{N_\ell}\gamma_k\Pi_{k+1:N_\ell}(g^{h_\ell}_k-g^{h_{\ell-1}}_k)
\stackrel{L^1(\mathbb{P})}{\longrightarrow}0
\quad\mbox{ as }\quad
L\uparrow\infty.
\end{equation*}

\noindent
\emph{\textbf{Step~4. Study of $\Big\{h_L^{-1}\sum_{\ell=1}^L\sum_{k=1}^{N_\ell}\gamma_k\Pi_{k+1:N_\ell}(r^{h_\ell}_k-r^{h_{\ell-1}}_k),L\geq1\Big\}$.}}
\newline\newline
It follows from \eqref{eq:E[|r|]<}, \eqref{upper:estimate:pi:i:n}, \eqref{eq:condition}, Lemma~\ref{lmm:limsup}(\ref{lmm:limsup:i}) and \eqref{eq:N_ell} that
\begin{equation*}
\begin{aligned}
\mathbb{E}\bigg[\bigg|h_L^{-1}\sum_{\ell=1}^L\sum_{k=1}^{N_\ell}&\gamma_k\Pi_{k+1:N_\ell}(r^{h_\ell}_k-r^{h_{\ell-1}}_k)\bigg|\bigg]\\
&\leq h_L^{-1}\sum_{\ell=1}^L\sum_{k=1}^{N_\ell}\gamma_k|\Pi_{k+1:N_\ell}|\,\big(\mathbb{E}\big[\big|r^{h_\ell}_k\big|\big] + \mathbb{E}\big[\big|r^{h_{\ell-1}}_k\big|\big]\big)\\
&\leq Ch_L^{-1}\sum_{\ell=1}^L\sum_{k=1}^{N_\ell}\gamma_k^2|\Pi_{k+1:N_\ell}|\\
&\leq Ch_L^{-1}\sum_{\ell=1}^L\gamma_{N_\ell}\\
&\leq Ch_L^\frac12.
\end{aligned}
\end{equation*}
Hence
\begin{equation*}
h_L^{-1}\sum_{\ell=1}^L\sum_{k=1}^{N_\ell}\gamma_k\Pi_{k+1:N_\ell}(r^{h_\ell}_k-r^{h_{\ell-1}}_k)
\stackrel{L^1(\mathbb{P})}{\longrightarrow}0
\quad\mbox{ as }\quad
L\uparrow\infty.
\end{equation*}

\noindent
\emph{\textbf{Step~5. Study of $\Big\{h_L^{-1}\sum_{\ell=1}^L\sum_{k=1}^{N_\ell}\gamma_k\Pi_{k+1:N_\ell}(\rho^{h_\ell}_k-\rho^{h_{\ell-1}}_k),L\geq1\Big\}$.}}
\newline\newline
Given that the innovations of the MLSA scheme are levelwise independent, the random variables $\big\{\sum_{k=1}^{N_\ell}\gamma_k\Pi_{k+1:N_\ell}(\rho^{h_\ell}_k-\rho^{h_{\ell-1}}_k),\ell\geq1\big\}$ are independent with zero-mean.
Moreover, at each level $\ell\geq1$, $\{\rho^{h_\ell}_k-\rho^{h_{\ell-1}}_k,k\geq1\}$ are $\{\mathcal{F}^{h_\ell}_k,k\geq1\}$-martingale increments.
Thus, from \eqref{eq:E[|rho|2]<}, \eqref{upper:estimate:pi:i:n}, \eqref{eq:condition}, Lemma~\ref{lmm:limsup}(\ref{lmm:limsup:i}) and \eqref{eq:N_ell},
\begin{equation*}
\begin{aligned}
\mathbb{E}\bigg[\bigg(h_L^{-1}\sum_{\ell=1}^L\sum_{k=1}^{N_\ell}&\gamma_k\Pi_{k+1:N_\ell}(\rho^{h_\ell}_k-\rho^{h_{\ell-1}}_k)\bigg)^2\bigg]\\
&=h_L^{-2}\sum_{\ell=1}^L\sum_{k=1}^{N_\ell}\gamma_k^2\Pi_{k+1:N_\ell}^2\mathbb{E}[(\rho^{h_\ell}_k-\rho^{h_{\ell-1}}_k)^2]\\
&\leq2h_L^{-2}\sum_{\ell=1}^L\sum_{k=1}^{N_\ell}\gamma_k^2\Pi_{k+1:N_\ell}^2\big(\mathbb{E}\big[\big|\rho^{h_\ell}_k\big|^2\big] + \mathbb{E}\big[\big|\rho^{h_{\ell-1}}_k\big|^2\big]\big)\\
&\leq Ch_L^{-2}\sum_{\ell=1}^L\sum_{k=1}^{N_\ell}\gamma_k^\frac52\Pi_{k+1:N_\ell}^2\\
&\leq Ch_L^{-2}\sum_{\ell=1}^L\gamma_{N_\ell}^\frac32\\
&\leq Ch_L^\frac14.
\end{aligned}
\end{equation*}
Therefore,
\begin{equation*}
h_L^{-1}\sum_{\ell=1}^L\sum_{k=1}^{N_\ell}\gamma_k\Pi_{k+1:N_\ell}(\rho^{h_\ell}_k-\rho^{h_{\ell-1}}_k)
\stackrel{L^2(\mathbb{P})}{\longrightarrow}0
\quad\mbox{ as }\quad
L\uparrow\infty.
\end{equation*}

\noindent
\emph{\textbf{Step~6. Study of $\Big\{h_L^{-\frac1\beta-\frac{2\beta-1}{4\beta(1+\beta)}}\sum_{\ell=1}^L\frac1{N_\ell}\sum_{k=1}^{N_\ell}\theta^{h_\ell}_k-\theta^{h_{\ell-1}}_k,L\geq1\Big\}$.}}
\newline\newline
Define
\begin{equation*}
\bar{\gamma}_n=\frac1n\sum_{k=1}^n\gamma_k,\quad n\geq1.
\end{equation*}
$\big\{\frac1{N_\ell}\sum_{k=1}^{N_\ell}\theta^{h_\ell}_k-\theta^{h_{\ell-1}}_k,\ell\geq1\big\}$ are independent and centered. Moreover, for any level $\ell\geq1$, $\big\{\theta^{h_\ell}_k-\theta^{h_{\ell-1}}_k,k\geq1\big\}$ are $\{\mathcal{F}^{h_\ell}_k,k\geq1\}$-martingale increments.
Hence, using \eqref{eq:E[sum(thetahk)]<}, a comparison between series and integrals and \eqref{eq:N_ell},
\begin{equation*}
\begin{aligned}
\mathbb{E}\bigg[\bigg|h_L^{-\frac1\beta-\frac{2\beta-1}{4\beta(1+\beta)}}&\sum_{\ell=1}^L\frac1{N_\ell}\sum_{k=1}^{N_\ell}\theta^{h_\ell}_k-\theta^{h_{\ell-1}}_k\bigg|^2\bigg]\\
&\leq2h_L^{-\frac2\beta-\frac{2\beta-1}{2\beta(1+\beta)}}\sum_{\ell=1}^L\frac1{N_\ell}\bigg(\mathbb{E}\bigg[\bigg(\frac1{\sqrt{N_\ell}}\sum_{k=1}^{N_\ell}\theta^{h_\ell}_k\bigg)^2\bigg] + \mathbb{E}\bigg[\bigg(\frac1{\sqrt{N_\ell}}\sum_{k=1}^{N_\ell}\theta^{h_{\ell-1}}_k\bigg)^2\bigg]\bigg)\\
&\leq Ch_L^{-\frac2\beta-\frac{2\beta-1}{2\beta(1+\beta)}}\sum_{\ell=1}^L\frac{\bar{\gamma}_{N_\ell}}{N_\ell}\\
&\leq C
\begin{cases}
   h_L^\frac{3\beta}{2(1+\beta)},
      &\beta\in\big(\frac12,1\big),\\
   h_L^\frac34\left|\ln{h_L}\right|,
      &\beta=1.
\end{cases}
\end{aligned}
\end{equation*}
Thus
\begin{equation*}
h_L^{-\frac1\beta-\frac{2\beta-1}{4\beta(1+\beta)}}\sum_{\ell=1}^L\frac1{N_\ell}\sum_{k=1}^{N_\ell}\theta^{h_\ell}_k-\theta^{h_{\ell-1}}_k
\stackrel{L^2(\mathbb{P})}{\longrightarrow}0
\quad\mbox{ as }\quad
L\uparrow\infty.
\end{equation*}

\noindent
\emph{\textbf{Step~7. Study of $\Big\{h_L^{-\frac1\beta-\frac{2\beta-1}{4\beta(1+\beta)}}\sum_{\ell=1}^L\frac1{N_\ell}\sum_{k=1}^{N_\ell}\zeta^{h_\ell}_k-\zeta^{h_{\ell-1}}_k,L\geq1\Big\}$.}}
\newline\newline
Using \eqref{eq:E[sum(zetahk)]<}, a comparison between series and integrals and \eqref{eq:N_ell},
\begin{equation*}
\begin{aligned}
\mathbb{E}\bigg[\bigg|h_L^{-\frac1\beta-\frac{2\beta-1}{4\beta(1+\beta)}}\sum_{\ell=1}^L
&\frac1{N_\ell}\sum_{k=1}^{N_\ell}\zeta^{h_\ell}_k-\zeta^{h_{\ell-1}}_k\bigg|\bigg]\\
&\leq h_L^{-\frac1\beta-\frac{2\beta-1}{4\beta(1+\beta)}}\sum_{\ell=1}^L\frac1{\sqrt{N_\ell}}\bigg(\mathbb{E}\bigg[\bigg|\frac1{\sqrt{N_\ell}}\sum_{k=1}^{N_\ell}\zeta^{h_\ell}_k\bigg|\bigg] + \mathbb{E}\bigg[\bigg|\frac1{\sqrt{N_\ell}}\sum_{k=1}^{N_\ell}\zeta^{h_{\ell-1}}_k\bigg|\bigg]\bigg)\\
&\leq Ch_L^{-\frac1\beta-\frac{2\beta-1}{4\beta(1+\beta)}}\sum_{\ell=1}^L\bar{\gamma}_{N_\ell}\\
&\leq C
\begin{cases}
   h_L^\frac9{4\beta(1+\beta)},
      &\beta\in\big(\frac12,1\big),\\
   h_L^\frac38\left|\ln{h_L}\right|,
      &\beta=1.
\end{cases}
\end{aligned}
\end{equation*}
Hence
\begin{equation*}
h_L^{-\frac1\beta-\frac{2\beta-1}{4\beta(1+\beta)}}\sum_{\ell=1}^L\frac1{N_\ell}\sum_{k=1}^{N_\ell}\zeta^{h_\ell}_k-\zeta^{h_{\ell-1}}_k
\stackrel{L^1(\mathbb{P})}{\longrightarrow}0
\quad\mbox{ as }\quad
L\uparrow\infty.
\end{equation*}

\noindent
\emph{\textbf{Step~8. Study of $\Big\{h_L^{-\frac1\beta-\frac{2\beta-1}{4\beta(1+\beta)}}\big(\chi^{h_0}_{N_0}-\chi^{h_0}_\star\big),L\geq1\Big\}$.}}
\newline\newline
It follows from \eqref{eq:N_ell} that
\begin{equation}\label{eq:use:1}
h_L^{-\frac1\beta-\frac{2\beta-1}{4\beta(1+\beta)}}\big(\chi^{h_0}_{N_0}-\chi^{h_0}_\star\big)
=h_0^{-\frac3{4(1+\beta)}}\big(1-M^{-\frac{2\beta-1}{2(1+\beta)}}\big)^\frac1{2\beta}
\frac{\sqrt{N_0}\big(\chi^{h_0}_{N_0}-\chi^{h_0}_\star\big)}{\big(1-M^{-\frac{2\beta-1}{2(1+\beta)}(L+1)}\big)^\frac1{2\beta}}.
\end{equation}
According to the decomposition \eqref{eq:chih-chi*},
\begin{equation*}
\sqrt{N_0}\big(\chi^{h_0}_{N_0}-\chi^{h_0}_\star\big)
=\frac1{\sqrt{N_0}}\sum_{k=1}^{N_0}\theta^{h_0}_k+\frac1{\sqrt{N_0}}\sum_{k=1}^{N_0}\zeta^{h_0}_k+\frac1{\sqrt{N_0}}\sum_{k=1}^{N_0}\eta^{h_0}_k.
\end{equation*}
By \eqref{eq:E[sum(thetahk)]<} and \eqref{eq:E[sum(zetahk)]<},
\begin{equation*}
\frac1{\sqrt{N_0}}\sum_{k=1}^{N_0}\theta^{h_0}_k
\stackrel{L^2(\mathbb{P})}{\longrightarrow}0,
\quad
\frac1{\sqrt{N_0}}\sum_{k=1}^{N_0}\zeta^{h_0}_k
\stackrel{L^1(\mathbb{P})}{\longrightarrow}0
\quad\mbox{ as }\quad
N_0\uparrow\infty.
\end{equation*}
Since $\{\eta^{h_0}_k,k\geq1\}$ are i.i.d.~such that $\mathbb{E}[|\eta^{h_0}_1|^2]<\infty$, the classical CLT yields
\begin{equation}\label{eq:use:end}
\frac1{\sqrt{N_0}}\sum_{k=1}^{N_0}\eta^{h_0}_k
\stackrel[]{\mathcal{L}}{\longrightarrow}
\mathcal{N}\bigg(0,\frac{\Var\big((X_{h_0}-\xi^{h_0}_\star)^+\big)}{(1-\alpha)^2}\bigg)
\quad\mbox{ as }\quad
N_0\uparrow\infty.
\end{equation}
Combining \eqref{eq:use:1}--\eqref{eq:use:end},
\begin{equation*}
\begin{aligned}
h_L^{-\frac1\beta-\frac{2\beta-1}{4\beta(1+\beta)}}\big(\chi^{h_0}_{N_0}&-\chi^{h_0}_\star\big)\\
&\stackrel[]{\mathcal{L}}{\longrightarrow}
\mathcal{N}\bigg(0,h_0^{-\frac3{2(1+\beta)}}\big(1-M^{-\frac{2\beta-1}{2(1+\beta)}}\big)^\frac1\beta\frac{\Var\big((X_{h_0}-\xi^{h_0}_\star)^+\big)}{(1-\alpha)^2}\bigg)
\quad\mbox{ as }\quad
L\uparrow\infty.
\end{aligned}
\end{equation*}

\noindent
\emph{\textbf{Step~9. Study of \[\bigg\{\bigg(h_L^{-1}\sum_{\ell=1}^L\sum_{k=1}^{N_\ell}\gamma_k\Pi_{k+1:N_\ell}(e^{h_\ell}_k-e^{h_{\ell-1}}_k),h_L^{-\frac1\beta-\frac{2\beta-1}{4\beta(1+\beta)}}\sum_{\ell=1}^L\frac1{N_\ell}\sum_{k=1}^{N_\ell}\eta^{h_\ell}_k-\eta^{h_{\ell-1}}_k\bigg),L\geq1\bigg\}.\]
}}
According to \eqref{eq:N_ell}, $N_1\geq\dots\geq N_L$.
Hence
\begin{equation*}
\sum_{\ell=1}^L\sum_{k=1}^{N_\ell}\gamma_k\Pi_{k+1:N_\ell}(e^{h_\ell}_k-e^{h_{\ell-1}}_k)
=\sum_{k=1}^{N_1}\sum_{\ell=1}^L\gamma_k\Pi_{k+1:N_\ell}(e^{h_\ell}_k-e^{h_{\ell-1}}_k)\mathds{1}_{1\leq k\leq N_\ell},
\end{equation*}
and
\begin{equation*}
\sum_{\ell=1}^L\frac1{N_\ell}\sum_{k=1}^{N_\ell}\eta^{h_\ell}_k-\eta^{h_{\ell-1}}_k
=\sum_{k=1}^{N_1}\sum_{\ell=1}^L\frac{\eta^{h_\ell}_k-\eta^{h_{\ell-1}}_k}{N_\ell}\mathds{1}_{1\leq k\leq N_\ell}.
\end{equation*}
We apply the CLT~\cite[Corollary~3.1]{nla.cat-vn2887492} to the martingale array
\begin{equation*}
\begin{aligned}
\bigg\{\bigg(h_L^{-1}\sum_{k=1}^{N_1}\sum_{\ell=1}^L\gamma_k\Pi_{k+1:N_\ell}(e^{h_\ell}_k&-e^{h_{\ell-1}}_k)\mathds{1}_{1\leq k\leq N_\ell},\\&h_L^{-\frac1\beta-\frac{2\beta-1}{4\beta(1+\beta)}}\sum_{k=1}^{N_1}\sum_{\ell=1}^L\gamma_k\Pi_{k+1:N_\ell}(e^{h_\ell}_k-e^{h_{\ell-1}}_k)\mathds{1}_{1\leq k\leq N_\ell}\bigg),L\geq1\bigg\}.
\end{aligned}
\end{equation*}

\noindent
\emph{\textbf{Step~9.1. Verification of the conditional Lindeberg condition.}}
\newline

\noindent
\emph{\textbf{Step~9.1.1. Study of $\Big\{h_L^{-1}\sum_{k=1}^{N_1}\sum_{\ell=1}^L\gamma_k\Pi_{k+1:N_\ell}(e^{h_\ell}_k-e^{h_{\ell-1}}_k)\mathds{1}_{1\leq k\leq N_\ell},L\geq1\Big\}$.}}
\newline\newline
By the levelwise independence of the innovations of the MLSA scheme and given that, for all $k\geq1$, the random variables $\big\{(e^{h_\ell}_k-e^{h_{\ell-1}}_k)\mathds{1}_{1\leq k\leq N_\ell},1\leq\ell\leq L\big\}$ are independent and centered, applying successively the Marcinkiewicz-Zygmund and the Jensen inequalities yields
\begin{align}
\sum_{k=1}^{N_1}\mathbb{E}\bigg[\bigg|h_L^{-1}\sum_{\ell=1}^L&\gamma_k\Pi_{k+1:N_\ell}(e^{h_\ell}_k-e^{h_{\ell-1}}_k)\mathds{1}_{1\leq k\leq N_\ell}\bigg|^{2+\delta}\bigg] \nonumber\\
&\leq Ch_L^{-(2+\delta)}\sum_{k=1}^{N_1}\gamma_k^{2+\delta}\,\mathbb{E}\bigg[\bigg|\sum_{\ell=1}^L\Pi_{k+1:N_\ell}^2(e^{h_\ell}_k-e^{h_{\ell-1}}_k)^2\mathds{1}_{1\leq k\leq N_\ell}\bigg|^{1+\frac\delta2}\bigg] \nonumber \\
&\leq Ch_L^{-(2+\delta)}\sum_{k=1}^{N_1}\gamma_k^{2+\delta}L^\frac\delta2\sum_{\ell=1}^L|\Pi_{k+1:N_\ell}|^{2+\delta}\,\mathbb{E}\big[\big|e^{h_\ell}_k-e^{h_{\ell-1}}_k\big|^{2+\delta}\big]\mathds{1}_{1\leq k\leq N_\ell} \nonumber \\
&=Ch_L^{-(2+\delta)}L^\frac\delta2\sum_{\ell=1}^L\sum_{k=1}^{N_\ell}\gamma_k^{2+\delta}|\Pi_{k+1:N_\ell}|^{2+\delta}\,\mathbb{E}\big[\big|e^{h_\ell}_k-e^{h_{\ell-1}}_k\big|^{2+\delta}\big]. \label{first:bound:increm:martingale:var:ML}
\end{align}
It follows from Lemma~\ref{lmm:technical}(\ref{lmm:technical-ii}), the uniform boundedness of $\big\{f_{X_{h_{\ell-1}}},\ell\ge1\big\}$, by Assumption~\ref{asp:misc}(\ref{asp:misc:iv}), and Lemma \ref{lmm:error}(\ref{lmm:error:weak}) that
\begin{equation}
\label{eq:E[|e|]<}
\begin{aligned}
\mathbb{E}\big[\big|e^{h_\ell}_k & -e^{h_{\ell-1}}_k\big|^{2+\delta}\big]
\\
& = \mathbb{E}\big[\big|H_1(\xi^{h_\ell}_\star,X_{h_\ell}^{(k)})-H_1(\xi^{h_{\ell-1}}_\star,X_{h_{\ell-1}}^{(k)})\big|^{2+\delta}\big]\\
& \leq\frac1{(1-\alpha)^{2+\delta}}\Big(\mathbb{E}\Big[\Big|\mathds{1}_{X_{h_\ell}>\xi^{h_\ell}_\star}-\mathds{1}_{X_{h_{\ell-1}}>\xi^{h_\ell}_\star}\Big|\Big]
+\mathbb{E}\Big[\Big|\mathds{1}_{X_{h_{\ell-1}}>\xi^{h_\ell}_\star}-\mathds{1}_{X_{h_{\ell-1}}>\xi^{h_{\ell-1}}_\star}\Big|\Big]\Big)\\
& \leq C \Big(\mathbb{E}\Big[\Big|\mathds{1}_{X_{h_\ell}>\xi^0_\star}-\mathds{1}_{X_{h_{\ell-1}}>\xi^0_\star}\Big|\Big]+ \mathbb{E}\Big[\Big|\mathds{1}_{X_{h_\ell}>\xi^{h_\ell}_\star}-\mathds{1}_{X_{h_{\ell}}>\xi^0_\star}\Big|\Big] \\
& \quad +  \mathbb{E}\Big[\Big|\mathds{1}_{X_{h_{\ell-1}}>\xi^{h_{\ell-1}}_\star}-\mathds{1}_{X_{h_{\ell-1}}>\xi^0_\star}\Big|\Big]
+\mathbb{E}\Big[\Big|\mathds{1}_{X_{h_{\ell-1}}>\xi^{h_\ell}_\star}-\mathds{1}_{X_{h_{\ell-1}}>\xi^{h_{\ell-1}}_\star}\Big|\Big]\Big)\\
& \leq C\big(h_\ell^{\frac12} + \mathbb{E}\big[\big|F_{X_{h_{\ell}}}(\xi^{h_{\ell}}_{\star})-F_{X_{h_{\ell}}}(\xi^0_\star)\big|\big] + \mathbb{E}\big[\big|F_{X_{h_{\ell-1}}}(\xi^{h_{\ell-1}}_{\star})-F_{X_{h_{\ell-1}}}(\xi^0_\star)\big|\big]\\
& \quad + \mathbb{E}\big[\big|F_{X_{h_{\ell-1}}}(\xi^{h_{\ell}}_{\star})-F_{X_{h_{\ell-1}}}(\xi^{h_{\ell-1}}_\star)\big|\big]\big)\\
& \leq C (h_\ell^\frac12 + |\xi^{h_{\ell}}_{\star}- \xi_{\star}| + |\xi^{h_{\ell-1}}_{\star}- \xi_{\star}| + |\xi^{h_{\ell}}_{\star}- \xi^{h_{\ell-1}}_{\star}|)\\
& \leq Ch_\ell^\frac12.
\end{aligned}
\end{equation}
Plugging the previous upper estimate into \eqref{first:bound:increm:martingale:var:ML} then using \eqref{eq:Pi}, that $(2+\delta)\gamma_1V_0''(\xi^0_\star)>1+\delta$ by \eqref{eq:condition}, Lemma~\ref{lmm:limsup}(\ref{lmm:limsup:i}) and \eqref{eq:N_ell},
\begin{equation*}
\begin{aligned}
\sum_{k=1}^{N_1}\mathbb{E}\bigg[\bigg|\sum_{\ell=1}^L&h_L^{-1}\gamma_k\Pi_{k+1:N_\ell}(e^{h_\ell}_k-e^{h_{\ell-1}}_k)\mathds{1}_{1\leq k\leq N_\ell}\bigg|^{2+\delta}\bigg]\\
&\leq Ch_L^{-(2+\delta)}L^\frac\delta2\sum_{\ell=1}^L\gamma_{N_\ell}^{1+\delta}h_\ell^\frac12\\
&\leq Ch_L^{\frac\delta2}L^{\frac\delta2}.
\end{aligned}
\end{equation*}
Hence
\begin{equation*}
\limsup_{L\uparrow\infty}
\sum_{k=1}^{N_1}\mathbb{E}\bigg[\bigg|\sum_{\ell=1}^Lh_L^{-1}\gamma_k\Pi_{k+1:N_\ell}(e^{h_\ell}_k-e^{h_{\ell-1}}_k)\mathds{1}_{1\leq k\leq N_\ell}\bigg|^{2+\delta}\bigg]
=0.
\end{equation*}

\noindent
\emph{\textbf{Step~9.1.2. Study of $\Big\{h_L^{-\frac1\beta-\frac{2\beta-1}{4\beta(1+\beta)}}\sum_{k=1}^{N_1}\sum_{\ell=1}^L\gamma_k\Pi_{k+1:N_\ell}(e^{h_\ell}_k-e^{h_{\ell-1}}_k)\mathds{1}_{1\leq k\leq N_\ell},L\geq1\Big\}$.}}
\newline\newline
Similarly,
\begin{align}
\sum_{k=1}^{N_1}\mathbb{E}\bigg[\bigg|\sum_{\ell=1}^L&h_L^{-\frac1\beta-\frac{2\beta-1}{4\beta(1+\beta)}}\frac{\eta^{h_\ell}_k-\eta^{h_{\ell-1}}_k}{N_\ell}\mathds{1}_{1\leq k\leq N_\ell}\bigg|^{2+\delta}\bigg] \nonumber \\
&\leq Ch_L^{-(2+\delta)(\frac1\beta+\frac{2\beta-1}{4\beta(1+\beta)})}\sum_{k=1}^{N_1}\mathbb{E}\bigg[\bigg|\sum_{\ell=1}^L\frac{(\eta^{h_\ell}_k-\eta^{h_{\ell-1}}_k)^2}{N_\ell^2}\mathds{1}_{1\leq k\leq N_\ell}\bigg|^{1+\frac\delta2}\bigg] \nonumber \\
&\leq Ch_L^{-(2+\delta)(\frac1\beta+\frac{2\beta-1}{4\beta(1+\beta)})}\sum_{k=1}^{N_1}L^\frac\delta2\sum_{\ell=1}^L\frac{\mathbb{E}\big[\big|\eta^{h_\ell}_k-\eta^{h_{\ell-1}}_k\big|^{2+\delta}\big]}{N_\ell^{2+\delta}}\mathds{1}_{1\leq k\leq N_\ell} \nonumber \\
&= Ch_L^{-(2+\delta)(\frac1\beta+\frac{2\beta-1}{4\beta(1+\beta)})}L^\frac\delta2\sum_{\ell=1}^L\frac1{N_\ell^{2+\delta}}\sum_{k=1}^{N_\ell}\mathbb{E}\big[\big|\eta^{h_\ell}_k-\eta^{h_{\ell-1}}_k\big|^{2+\delta}\big]. \label{bound:lindeberg:cond:ES:ML}
\end{align}
Note that
\begin{equation}\label{eq:eta^2+d<}
\begin{aligned}
\mathbb{E}\big[\big|\eta^{h_\ell}_k&-\eta^{h_{\ell-1}}_k\big|^{2+\delta}\big] \\
& \leq\frac{2^{1+\delta}}{(1-\alpha)^{2+\delta}}\,\mathbb{E}\big[\big|(X_{h_\ell}-\xi^{h_\ell}_\star)^+-(X_{h_{\ell-1}}-\xi^{h_{\ell-1}}_\star)^+\big|^{2+\delta}\big]\\
&\leq\frac{2^{2+2\delta}}{(1-\alpha)^{2+\delta}}\big(\mathbb{E}[|X_{h_\ell}-X_{h_{\ell-1}}|^{2+\delta}] + |\xi^{h_\ell}_\star-\xi^{h_{\ell-1}}_\star|^{2+\delta}\big)\\
& \leq Ch_\ell^{1+\frac\delta2},
\end{aligned}
\end{equation}
where we used that $x\mapsto x^+$, $x\in\mathbb{R}$, is $1$-Lipschitz and Lemmas~\ref{lmm:aux}(\ref{lmm:aux-i}) and~\ref{lmm:error}(\ref{lmm:error:weak}). Therefore, plugging the previous upper estimate into \eqref{bound:lindeberg:cond:ES:ML} and using \eqref{eq:N_ell}, we obtain
\begin{equation*}
\begin{aligned}
\sum_{k=1}^{N_1}\mathbb{E}\bigg[\bigg|\sum_{\ell=1}^L&h_L^{-\frac1\beta-\frac{2\beta-1}{4\beta(1+\beta)}}\frac{\eta^{h_\ell}_k-\eta^{h_{\ell-1}}_k}{N_\ell}\mathds{1}_{1\leq k\leq N_\ell}\bigg|^{2+\delta}\bigg]\\
&\leq Ch_L^{-(2+\delta)(\frac1\beta+\frac{2\beta-1}{4\beta(1+\beta)})}L^\frac\delta2\sum_{\ell=1}^L\frac{h_\ell^{1+\frac\delta2}}{N_\ell^{1+\delta}}\\
&\leq Ch_L^{\delta(\frac1\beta+\frac{2\beta-1}{4\beta(1+\beta)})}\left|\ln{h_L}\right|^\frac\delta2\sum_{\ell=1}^Lh_\ell^\frac{-(2-\beta)\delta+2\beta-1}{2(1+\beta)}.
\end{aligned}
\end{equation*}
Without loss of generality, by taking $\delta\in \big(0,\frac{2\beta-1}{2-\beta}\big)$, we get
\begin{equation*}
\sum_{k=1}^{N_1}\mathbb{E}\bigg[\bigg|\sum_{\ell=1}^Lh_L^{-\frac1\beta-\frac{2\beta-1}{4\beta(1+\beta)}}\frac{\eta^{h_\ell}_k-\eta^{h_{\ell-1}}_k}{N_\ell}\mathds{1}_{1\leq k\leq N_\ell}\bigg|^{2+\delta}\bigg]
\leq Ch_L^{\delta(\frac1\beta+\frac{2\beta-1}{4\beta(1+\beta)})}\left|\ln{h_L}\right|^\frac{\delta}2,
\end{equation*}
so that
\begin{equation*}
\limsup_{L\uparrow\infty}\sum_{k=1}^{N_1}\mathbb{E}\bigg[\bigg|\sum_{\ell=1}^Lh_L^{-\frac1\beta-\frac{2\beta-1}{4\beta(1+\beta)}}\frac{\eta^{h_\ell}_k-\eta^{h_{\ell-1}}_k}{N_\ell}\mathds{1}_{1\leq k\leq N_\ell}\bigg|^{2+\delta}\bigg]=0.
\end{equation*}

The conditional Lindeberg condition is subsequently satisfied.
\\

\noindent
\emph{\textbf{Step~9.2. Convergence of the conditional covariance matrices.}}
\newline
We now prove the convergence of the conditional covariance matrices $\big\{S_L=(S_L^{i,j})_{1\leq i,j\leq 2},\\L\geq 1\big\}$ defined by
\begin{align*}
S^{1,1}_L
& \coloneqq \sum_{k=1}^{N_1}\mathbb{E}\bigg[\bigg(\sum_{\ell=1}^Lh_L^{-1}\gamma_k\Pi_{k+1:N_\ell}(e^{h_\ell}_k-e^{h_{\ell-1}}_k)\mathds{1}_{1\leq k\leq N_\ell}\bigg)^2\bigg|\mathcal{F}^{h_L}_{k-1}\bigg],\\
S^{2,2}_L & \coloneqq \sum_{k=1}^{N_1}\mathbb{E}\bigg[\bigg(\sum_{\ell=1}^Lh_L^{-\frac1\beta-\frac{2\beta-1}{4\beta(1+\beta)}}\frac{\eta^{h_\ell}_k-\eta^{h_{\ell-1}}_k}{N_\ell}\mathds{1}_{1\leq k\leq N_\ell}\bigg)^2\bigg|\mathcal{F}^{h_L}_{k-1}\bigg],\\
S^{1,2}_L = S^{2,1}_L & \coloneqq \sum_{k=1}^{N_1} \mathbb{E}\bigg[\sum_{\ell=1}^L \frac{h_L^{-1-\frac1\beta-\frac{2\beta-1}{4\beta(1+\beta)}}}{N_\ell}\gamma_k\Pi_{k+1:N_\ell}(e^{h_\ell}_k-e^{h_{\ell-1}}_k)(\eta^{h_\ell}_k-\eta^{h_{\ell-1}}_k)\mathds{1}_{1\leq k\leq N_\ell}\bigg|\mathcal{F}^{h_L}_{k-1}\bigg].
\end{align*}

\noindent
\emph{\textbf{Step~9.2.1. Convergence of $\{S^{1,1}_L,L\geq1\}$.}}
\newline
$\big\{\sum_{\ell=1}^L\gamma_k\Pi_{k+1:N_\ell}(e^{h_\ell}_k-e^{h_{\ell-1}}_k)\mathds{1}_{1\leq k\leq N_\ell}),k\geq1\big\}$ are $\{\mathcal{F}^{h_L}_k,k\geq1\}$-martingale increments.
Moreover, $\big\{(e^{h_\ell}_k-e^{h_{\ell-1}}_k)\mathds{1}_{1\leq k\leq N_\ell},1\leq\ell\leq L\big\}$ are independent and centered.
Hence
\begin{equation*}
S^{1,1}_L
=\sum_{k=1}^{N_1}\mathbb{E}\bigg[\bigg(\sum_{\ell=1}^Lh_L^{-1}\gamma_k\Pi_{k+1:N_\ell}(e^{h_\ell}_k-e^{h_{\ell-1}}_k)\mathds{1}_{1\leq k\leq N_\ell}\bigg)^2\bigg]
=\sum_{\ell=1}^L U_\ell,
\end{equation*}
with
\begin{equation*}
\begin{aligned}
U_\ell
&\coloneqq\sum_{k=1}^{N_\ell}h_L^{-2}\gamma_k^2\Pi_{k+1:N_\ell}^2\mathbb{E}[(e^{h_\ell}_1-e^{h_{\ell-1}}_1)^2]\\
&\stackrel[L\uparrow\infty]{}{\sim}\gamma_1\bigg(\sum_{\ell=0}^Lh_\ell^{-\frac{2\beta-1}{2(1+\beta)}}\bigg)^{-1}h_\ell^{-\frac{2\beta-1}{2(1+\beta)}}\bigg(\gamma_{N_\ell}^{-1}\sum_{k=1}^{N_\ell}\gamma_k^2\Pi_{k+1:N_\ell}^2\bigg)h_\ell^{-\frac12}\mathbb{E}[(e^{h_\ell}_1-e^{h_{\ell-1}}_1)^2]\\
&\stackrel[L\uparrow\infty]{}{\sim}\gamma_1\bigg(\sum_{\ell=0}^Lh_\ell^{-\frac{2\beta-1}{2(1+\beta)}}\bigg)^{-1}h_\ell^{-\frac{2\beta-1}{2(1+\beta)}}\Sigma_{N_\ell}h_\ell^{-\frac12}\mathbb{E}[(e^{h_\ell}_1-e^{h_{\ell-1}}_1)^2],
\end{aligned}
\end{equation*}
where we used \eqref{eq:N_ell} and \eqref{eq:Sigma_n}.
Recalling the definitions \eqref{eq:Sigma*} and \eqref{eq:Delta:Sigma},
\begin{equation*}
\begin{aligned}
S^{1,1}_L
&\stackrel[L\uparrow\infty]{}{\sim}\gamma_1\bigg(\sum_{\ell=0}^Lh_\ell^{-\frac{2\beta-1}{2(1+\beta)}}\bigg)^{-1}\sum_{\ell=1}^Lh_\ell^{-\frac{2\beta-1}{2(1+\beta)}}\Sigma_\star h_\ell^{-\frac12}\mathbb{E}[(e^{h_\ell}_1-e^{h_{\ell-1}}_1)^2]\\
&\quad+\gamma_1\bigg(\sum_{\ell=0}^Lh_\ell^{-\frac{2\beta-1}{2(1+\beta)}}\bigg)^{-1}\sum_{\ell=1}^Lh_\ell^{-\frac{2\beta-1}{2(1+\beta)}}\Delta\Sigma_{N_\ell}h_\ell^{-\frac12}\mathbb{E}[(e^{h_\ell}_1-e^{h_{\ell-1}}_1)^2].
\end{aligned}
\end{equation*}

On the one hand, the uniform boundedness of $\{f_{X_{h_\ell}},\ell\geq1\}$, by Assumption~\ref{asp:misc}(\ref{asp:misc:iv}), and Lemma~\ref{lmm:error}(\ref{lmm:error:weak}) yield
\begin{equation*}
\begin{aligned}
h_\ell^{-\frac12}\Big|\mathbb{E}[(e^{h_\ell}_1&-e^{h_{\ell-1}}_1)^2]
-\frac1{(1-\alpha)^2}\,\mathbb{E}\big[\big|\mathds{1}_{X_{h_\ell}>\xi^0_\star}-\mathds{1}_{X_{h_{\ell-1}}-\xi^0_\star}\big|\big]\Big|\\
&\leq\frac{h_\ell^{-\frac12}}{(1-\alpha)^2}\big(\mathbb{E}\big[\big|\mathds{1}_{X_{h_\ell}>\xi^{h_\ell}_\star}-\mathds{1}_{X_{h_\ell}>\xi^0_\star}\big|\big]
+\mathbb{E}\big[\big|\mathds{1}_{X_{h_{\ell-1}}>\xi^{h_{\ell-1}}_\star}-\mathds{1}_{X_{h_{\ell-1}}>\xi^0_\star}\big|\big]\big)\\
&\leq\frac{h_\ell^{-\frac12}}{(1-\alpha)^2}\big(|F_{X_{h_\ell}}(\xi^{h_\ell}_\star)-F_{X_{h_\ell}}(\xi^0_\star)|
+|F_{X_{h_{\ell-1}}}(\xi^{h_{\ell-1}}_\star)-F_{X_{h_{\ell-1}}}(\xi^0_\star)|\big)\\
&\leq\frac{\sup_{\ell'\geq1}{\|f_{X_{h_{\ell'}}}\|_\infty}}{(1-\alpha)^2}h_\ell^{-\frac12}\big(|\xi^{h_\ell}_\star-\xi^0_\star|+ |\xi^{h_{\ell-1}}_\star-\xi^0_\star|\big)\\
&\leq Ch_\ell^\frac12.
\end{aligned}
\end{equation*}
Thus, by Lemma~\ref{lmm:technical}(\ref{lmm:technical-ii}),
\begin{equation}
\label{eq:limE[e]}
\lim_{\ell\uparrow\infty}{h_\ell^{-\frac12}\mathbb{E}[(e^{h_\ell}_1-e^{h_{\ell-1}}_1)^2]}
=\frac1{(1-\alpha)^2}\lim_{\ell\uparrow\infty}{h_\ell^{-\frac12}\mathbb{E}\big[\big|\mathds{1}_{X_{h_\ell}>\xi^0_\star}-\mathds{1}_{X_{h_{\ell-1}}>\xi^0_\star}\big|\big]}
=\frac{\mathbb{E}[|G| f_G(\xi^0_\star)]}{(1-\alpha)^2}.
\end{equation}

On the other hand, reusing the notation in Step~7.1 of Theorem~\ref{thm:nested:clt}'s proof, Appendix~\ref{prf:nested:clt}, for $\varepsilon>0$, there exists $n_0\geq0$ such that, for $n\geq n_0$, $1-(\mu+\varepsilon)\gamma_n>0$ and $|\Delta\Sigma_n|\leq\varepsilon$.
There also exists $L_0\geq1$ such that, for $L\geq L_0$, one has $N_1\geq\dots\geq N_L\geq n_0$, so that $|\Delta\Sigma_{N_L}|\leq\varepsilon$.
Let $L\geq L_0$.
By \eqref{eq:|DeltaSigma|} and Lemma~\ref{lmm:limsup}(\ref{lmm:limsup:i}), for $n\geq N_L$,
\begin{equation*}
\begin{aligned}
|\Delta\Sigma_n|
&\leq|\Delta\Sigma_{N_L}|\exp\bigg(-(\mu+\varepsilon)\sum_{k=N_L}^n\gamma_k\bigg)+\varepsilon\sum_{k=N_L}^n\gamma_k\exp\bigg(-(\mu+\varepsilon)\sum_{j=k}^n\gamma_j\bigg)\\
&\leq|\Delta\Sigma_{N_L}|+C\varepsilon
\leq C\varepsilon.
\end{aligned}
\end{equation*}
In particular, $\sup_{1\leq\ell\leq L}|\Delta\Sigma_{N_\ell}|\leq C\varepsilon$,
hence
\begin{equation}\label{eq:limsup:sup:Delta:Sigma}
\limsup_{L\uparrow\infty}\sup_{1\leq\ell\leq L}|\Delta\Sigma_{N_\ell}|=0.
\end{equation}

To sum up, using \eqref{eq:limE[e]} and \eqref{eq:limsup:sup:Delta:Sigma}, Ces\`aro's lemma (`$./\infty$' version) yields
\begin{equation*}
\begin{aligned}
\limsup_{L\uparrow\infty}\bigg|\gamma_1\bigg(\sum_{\ell=0}^Lh_\ell^{-\frac{2\beta-1}{2(1+\beta)}}\bigg)^{-1}\sum_{\ell=1}^L&h_\ell^{-\frac{2\beta-1}{2(1+\beta)}}\Delta\Sigma_{N_\ell}h_\ell^{-\frac12}\mathbb{E}[(e^{h_\ell}_1-e^{h_{\ell-1}}_1)^2]\bigg|\\
&\leq C\limsup_{L\uparrow\infty}\sup_{1\leq\ell\leq L}|\Delta\Sigma_{N_\ell}|
=0.
\end{aligned}
\end{equation*}
Thus, reusing \eqref{eq:limE[e]}, by Ces\`aro's lemma (`$./\infty$' version),
\begin{equation*}
\begin{aligned}
\lim_{L\uparrow\infty}S^{1,1}_L
&=\lim_{L\uparrow\infty}\gamma_1\bigg(\sum_{\ell=0}^Lh_\ell^{-\frac{2\beta-1}{2(1+\beta)}}\bigg)^{-1}
\bigg(\sum_{\ell=1}^Lh_\ell^{-\frac{2\beta-1}{2(1+\beta)}}\Sigma_\star h_\ell^{-\frac12}\,\mathbb{E}[(e^{h_\ell}_k-e^{h_{\ell-1}}_k)^2]\bigg)\\
&=\frac{\gamma_1\mathbb{E}[|G|f_G(\xi^0_\star)]}{(1-\alpha)^2\big(2V_0''(\xi^0_\star)-\frac{\mathds{1}_{\beta=1}}{\gamma_1}\big)}.
\end{aligned}
\end{equation*}

\noindent
\emph{\textbf{Step~9.2.2. Convergence of $\{S^{2,2}_L,L\geq1\}$.}}
\newline
Similarly, one has
\begin{equation*}
S^{2,2}_L
=\sum_{k=1}^{N_1}\mathbb{E}\bigg[\bigg(\sum_{\ell=1}^Lh_L^{-\frac1\beta-\frac{2\beta-1}{4\beta(1+\beta)}}\frac{\eta^{h_\ell}_k-\eta^{h_{\ell-1}}_k}{N_\ell}\mathds{1}_{1\leq k\leq N_\ell}\bigg)^2\bigg]
=\sum_{\ell=1}^L W_\ell,
\end{equation*}
where, using \eqref{eq:etahk},
\begin{equation*}
\begin{aligned}
W_\ell
&\coloneqq
\frac{h_L^{-\frac2\beta-\frac{2\beta-1}{2\beta(1+\beta)}}}{N_\ell^2}\sum_{k=1}^{N_\ell}\mathbb{E}[(\eta^{h_\ell}_k-\eta^{h_{\ell-1}}_k)^2]\\
&=\frac{h_L^{-\frac2\beta-\frac{2\beta-1}{2\beta(1+\beta)}}}{N_\ell} \mathbb{E}[(\eta^{h_\ell}_1-\eta^{h_{\ell-1}}_1)^2]\\
&=\frac{h_L^{-\frac2\beta-\frac{2\beta-1}{2\beta(1+\beta)}}}{(1-\alpha)^2}\frac{h_\ell}{N_\ell}h_\ell^{-1}\Var\big((X_{h_\ell}-\xi^{h_\ell}_\star)^+-(X_{h_{\ell-1}}-\xi^{h_{\ell-1}}_\star)^+\big).
\end{aligned}
\end{equation*}

By the $1$-Lipschitz property of $x\mapsto x^+$, $x\in\mathbb{R}$, and Lemma~\ref{lmm:error}(\ref{lmm:error:weak}),
\begin{equation*}
\begin{aligned}
h_\ell^{-\frac12}\mathbb{E}\big[\big|(X_{h_\ell}-\xi^{h_\ell}_\star)^+&-(X_{h_{\ell-1}}-\xi^{h_{\ell-1}}_\star)^+
-\big((X_{h_\ell}-\xi^0_\star)^+-(X_{h_{\ell-1}}-\xi^0_\star)^+\big)\big|\big]\\
&\leq h_\ell^{-\frac12}\big(|\xi^{h_\ell}_\star-\xi^0_\star|+|\xi^{h_{\ell-1}}_\star-\xi^0_\star|\big)
\leq Ch_\ell^\frac12.
\end{aligned}
\end{equation*}
Thus, by Lemma~\ref{lmm:technical}(\ref{lmm:technical-iii}) and Slutsky's theorem,
\begin{equation}\label{eq:weak:convergence}
h_\ell^{-\frac12}\big((X_{h_\ell}-\xi^{h_\ell}_\star)^+-(X_{h_{\ell-1}}-\xi^{h_{\ell-1}}_\star)^+\big)
\stackrel[\ell\uparrow\infty]{\mathcal{L}}{\longrightarrow}
\mathds{1}_{X_0>\xi^0_\star}\,G.
\end{equation}
Besides, the $1$-Lipschitz property of $x\mapsto x^+$, $x\in\mathbb{R}$, and Lemmas~\ref{lmm:aux}(\ref{lmm:aux-i}) and~\ref{lmm:error}(\ref{lmm:error:weak}) guarantee that, for $2<p\leq 2+\delta$,
\begin{equation}\label{eq:uniform:integrability}
\sup_{\ell\geq1}h_\ell^{-\frac{p}{2}}\,\mathbb{E}\big[\big|(X_{h_\ell}-\xi^{h_\ell}_\star)^+-(X_{h_{\ell-1}}-\xi^{h_{\ell-1}}_\star)^+\big|^p\big]<\infty.
\end{equation}
Eventually, the weak convergence \eqref{eq:weak:convergence} and the uniform integrability \eqref{eq:uniform:integrability} yield
\begin{equation}
\label{eq:limVarX+}
\lim_{\ell\uparrow\infty}{h_\ell^{-1}\,\Var\big((X_{h_\ell}-\xi^{h_\ell}_\star)^+-(X_{h_{\ell-1}}-\xi^{h_{\ell-1}}_\star)^+\big)}
=\Var(\mathds{1}_{X_0>\xi^0_\star}\,G).
\end{equation}

Ultimately, via \eqref{eq:N_ell} and Ces{\`a}ro's lemma (`$0/0$' version),
\begin{equation*}
\begin{aligned}
\lim_{L\uparrow\infty}S^{2,2}_L
&=\lim_{L\uparrow\infty}\frac{h_L^{-\frac2\beta-\frac{2\beta-1}{2\beta(1+\beta)}}}{(1-\alpha)^2}\bigg(\sum_{\ell=1}^L\frac{h_\ell}{N_\ell}\bigg)\\
&\quad\times\lim_{L\uparrow\infty}\bigg(\sum_{\ell=1}^L\frac{h_\ell}{N_\ell}\bigg)^{-1}
\bigg(\sum_{\ell=1}^L\frac{h_\ell}{N_\ell}h_\ell^{-1}\Var\big((X_{h_\ell}-\xi^{h_\ell}_\star)^+-(X_{h_{\ell-1}}-\xi^{h_{\ell-1}}_\star)^+\big)\bigg)\\
&=\frac{h_0^\frac{2\beta-1}{2(1+\beta)}\big(M^\frac{2\beta-1}{2(1+\beta)}-1\big)^{\frac1\beta-1}}{(1-\alpha)^2}\\
&\quad\times\lim_{L\uparrow\infty}\bigg(\sum_{\ell=1}^Lh_\ell^\frac{2\beta-1}{2(1+\beta)}\bigg)^{-1}
\bigg(\sum_{\ell=1}^Lh_\ell^\frac{2\beta-1}{2(1+\beta)}h_\ell^{-1}\Var\big((X_{h_\ell}-\xi^{h_\ell}_\star)^+-(X_{h_{\ell-1}}-\xi^{h_{\ell-1}}_\star)^+\big)\bigg)\\
&=\frac{h_0^\frac{2\beta-1}{2(1+\beta)}\big(M^\frac{2\beta-1}{2(1+\beta)}-1\big)^{\frac1\beta-1}}{(1-\alpha)^2}\Var(\mathds{1}_{X_0>\xi^0_\star}\,G).
\end{aligned}
\end{equation*}

\noindent
\emph{\textbf{Step~9.2.3. Convergence of $\{S^{1,2}_L,L\geq1\}$.}}
\newline
We prove hereby that $\big\{S^{1,2}_L,L\geq1\big\}$ converges to $0$ in $L^{1}(\mathbb{P})$.
It follows from \eqref{eq:limE[e]} and \eqref{eq:limVarX+} that
\begin{align}
\mathbb{E}[(e^{h_\ell}_1-e^{h_{\ell-1}}_1)^2]&=\frac1{(1-\alpha)^2}\,\mathbb{E}\big[\big|\mathds{1}_{X_{h_\ell}>\xi^0_\star}-\mathds{1}_{X_{h_{\ell-1}}>\xi^0_\star}\big|\big]\leq Ch_\ell^\frac12,
\label{eq:E[e2]}\\
\mathbb{E}[(\eta^{h_\ell}_1-\eta^{h_{\ell-1}}_1)^2]&=\frac1{(1-\alpha)^2}\Var\big((X_{h_\ell}-\xi^{h_\ell}_\star)^+-(X_{h_{\ell-1}}-\xi^{h_{\ell-1}}_\star)^+\big)\leq Ch_\ell,\nonumber
\end{align}
so that, by the Cauchy-Schwarz inequality,
\begin{equation}
\label{eq:E[|e.eta|]<}
\mathbb{E}\big[\big|(e^{h_\ell}_k-e^{h_{\ell-1}}_k)(\eta^{h_\ell}_k-\eta^{h_{\ell-1}}_k)\big|\big]
\leq Ch_\ell^\frac34.
\end{equation}
Using \eqref{eq:E[|e.eta|]<}, \eqref{upper:estimate:pi:i:n},
the fact $\gamma_1V_0''(\xi^0_\star)>0$ by Remark~\ref{rmk:nested:misc}(\ref{rmk:V''(xi*)>0}), Lemma~\ref{lmm:limsup}(\ref{lmm:limsup:i}) and \eqref{eq:N_ell}, we obtain
\begin{equation*}
\begin{aligned}
\big|S^{1,2}_L\big| & \leq h_L^{-1-\frac1\beta-\frac{2\beta-1}{4\beta(1+\beta)}}\sum_{\ell=1}^L\frac1{N_\ell}\sum_{k=1}^{N_\ell}\gamma_k |\Pi_{k+1:N_\ell}| \, \mathbb{E}\big[\big|(e^{h_\ell}_k-e^{h_{\ell-1}}_k)(\eta^{h_\ell}_k-\eta^{h_{\ell-1}}_k)\big|\big]\\
&\leq Ch_L^{-1-\frac1\beta-\frac{2\beta-1}{4\beta(1+\beta)}}\sum_{\ell=1}^L\frac{h_\ell^\frac34}{N_\ell}\sum_{k=1}^{N_\ell}\gamma_k|\Pi_{k+1:N_\ell}|\\
&\leq Ch_L^{-1+\frac1\beta+\frac{2\beta-1}{4\beta(1+\beta)}}\sum_{\ell=1}^Lh_\ell^{-\frac{3(1-\beta)}{4(1+\beta)}}\\
&\leq C
\begin{cases}
h_L^\frac{-\beta^2-\beta+3}{4\beta(1+\beta)},
      &\beta\in\big(\frac12,1\big),\\
   h_L^\frac18\left|\ln{h_L}\right|,
      &\beta=1.
\end{cases}
\end{aligned}
\end{equation*}
Thus
\begin{equation*}
S^{1,2}_L=S^{2,1}_L
\stackrel{\Pas}{\longrightarrow}0
\quad\mbox{ as }\quad
L\uparrow\infty.
\end{equation*}

All in all, by~\cite[Corollary~3.1]{nla.cat-vn2887492},
\begin{equation*}
\bigg(h_L^{-1}\sum_{\ell=1}^L\sum_{k=1}^{N_\ell}\gamma_k\Pi_{k+1:N_\ell}(e^{h_\ell}_k-e^{h_{\ell-1}}_k), h_L^{-\frac1\beta-\frac{2\beta-1}{4\beta(1+\beta)}}\sum_{\ell=1}^L\frac1{N_\ell}\sum_{k=1}^{N_\ell}\eta^{h_\ell}_k-\eta^{h_{\ell-1}}_k\bigg) \stackrel[L\uparrow\infty]{\mathcal{L}}{\rightarrow} \mathcal{N}(0, \widetilde{\Sigma}_\beta),
\end{equation*}
where
\begin{equation*}
\widetilde{\Sigma}_\beta\coloneqq\begin{pmatrix}
\frac{\gamma_1\mathbb{E}[|G|f_G(\xi^0_\star)]}{(1-\alpha)(2f_{X_0}(\xi^0_\star)-(1-\alpha)\gamma_1^{-1}\mathds{1}_{\beta=1})}
&0\\
0
&\frac{h_0^\frac{2\beta-1}{2(1+\beta)}\big(M^\frac{2\beta-1}{2(1+\beta)}-1\big)^\frac1\beta}{(1-\alpha)^2}
\frac{\Var(\mathds{1}_{X_0>\xi\star}\,G)}{M^\frac{2\beta-1}{2(1+\beta)}-1}
\end{pmatrix}.
\end{equation*}
Noting that the two sequences studied in Steps~8 and~9 are independent, by independency of the MLSA levels, concludes the proof.

\section{Proof of Theorem~\ref{thm:avg:ml:clt}}
\label{prf:avg:ml:clt}

Throughout the following developments, $C<\infty$ denotes a positive constant whose value may change from line to line but does not depend upon $L$.
As in Appendix~\ref{prf:ml:clt}, to alleviate notation, the level indicating subscripts and superscripts $\ell$ are dropped from the variables intervening at level $\ell$, keeping in mind that they are by design levelwise independent.
\\

Recalling the definitions \eqref{eq:ahn} and \eqref{eq:ghn}--\eqref{eq:ehn}, the decomposition \eqref{decomposition:var:sa} yields
\begin{equation}
\label{decomposition:var:sa:avg:bis}
\xi^h_n-\xi^h_\star
=\frac1{V_0''(\xi^0_\star)}a^h_n+\frac1{V_0''(\xi^0_\star)}g^h_n
+\frac1{V_0''(\xi^0_\star)}r^h_n+\frac1{V_0''(\xi^0_\star)}\rho^h_n+\frac1{V_0''(\xi^0_\star)}e^h_n.
\end{equation} 
Averaging out the previous identity,
\begin{equation*}
\begin{aligned}
\overline{\xi}^h_n-\xi^h_\star
&=\frac1{V_0''(\xi^0_\star)n}\sum_{k=1}^n a^h_k+\frac1{V_0''(\xi^0_\star)n}\sum_{k=1}^n g^h_k\\
&\quad+\frac1{V_0''(\xi^0_\star)n}\sum_{k=1}^nr^h_k+\frac1{V_0''(\xi^0_\star)n}\sum_{k=1}^n\rho^h_k+\frac1{V_0''(\xi^0_\star)n}\sum_{k=1}^n e^h_k.
\end{aligned}
\end{equation*}
Via \eqref{eq:xi:ML:avg}, \eqref{eq:xi*^hL} and \eqref{decomposition:var:sa:avg:bis},
\begin{align}
\overline{\xi}^\text{\tiny\rm ML}_\mathbf{N}-\xi^{h_L}_\star
&=\overline{\xi}^{h_0}_{N_0}-\xi^{h_0}_\star
+\sum_{\ell=1}^L\big(\overline{\xi}^{h_\ell}_{N_\ell}-\xi^{h_\ell}_\star-\big(\overline{\xi}^{h_{\ell-1}}_{N_\ell}-\xi^{h_{\ell-1}}_\star\big)\big)\nonumber\\
&=\overline{\xi}^{h_0}_{N_0}-\xi^{h_0}_\star+\frac1{V_0''(\xi^0_\star)}\sum_{\ell=1}^L\frac1{N_\ell}\sum_{k=1}^{N_\ell}a^{h_\ell}_k-a^{h_{\ell-1}}_k\nonumber\\
&\quad+\frac1{V_0''(\xi^0_\star)}\sum_{\ell=1}^L\frac1{N_\ell}\sum_{k=1}^{N_\ell}g^{h_\ell}_k-g^{h_{\ell-1}}_k+\frac1{V_0''(\xi^0_\star)}\sum_{\ell=1}^L\frac1{N_\ell}\sum_{k=1}^{N_\ell}r^{h_\ell}_k-r^{h_{\ell-1}}_k\nonumber\\
&\quad+\frac1{V_0''(\xi^0_\star)}\sum_{\ell=1}^L\frac1{N_\ell}\sum_{k=1}^{N_\ell}\rho^{h_\ell}_k-\rho^{h_{\ell-1}}_k+\frac1{V_0''(\xi^0_\star)}\sum_{\ell=1}^L\frac1{N_\ell}\sum_{k=1}^{N_\ell}e^{h_\ell}_k-e^{h_{\ell-1}}_k.
\label{eq:amlsa:split}
\end{align}

\noindent
\emph{\textbf{Step~1. Study of $\big\{h_L^{-1}\big(\overline{\xi}^{h_0}_{N_0}-\xi^{h_0}_\star\big),L\geq1\big\}$.}}
\newline\newline
Owing to \eqref{eq:sa:avg:nested:alg:xi}, Lemma~\ref{lmm:error}(\ref{lmm:error:statistical}), a comparison between series and integrals and the fact that $\beta<1$,
\begin{equation*}
\mathbb{E}\big[\big(\overline{\xi}^{h_0}_{N_0}-\xi^{h_0}_\star\big)^2\big]
\leq\frac1{N_0}\sum_{k=1}^{N_0}\mathbb{E}[(\xi^{h_0}_k-\xi^{h_0}_\star)^2]
\leq \frac{C}{N_0}\sum_{k=1}^{N_0}\gamma_k
\leq C\gamma_{N_0}.
\end{equation*}
Hence, using \eqref{eq:N_ell:avg},
\begin{equation*}
\mathbb{E}\big[\big|h_L^{-1}\big(\overline{\xi}^{h_0}_{N_0}-\xi^{h_0}_\star\big)\big|^2\big]
\leq Ch_L^{-2}\gamma_{N_0}
\leq Ch_L^{-2+\frac{9\beta}4}.
\end{equation*}
Recalling that $\beta\in\big(\frac89,1\big)$,
\begin{equation*}
h_L^{-1}\big(\overline{\xi}^{h_0}_{N_0}-\xi^{h_0}_\star\big)
\stackrel{L^2(\mathbb{P})}{\longrightarrow}0
\quad\mbox{ as }\quad
L\uparrow\infty.
\end{equation*}

\noindent
\emph{\textbf{Step~2. Study of $\Big\{h_L^{-1}\sum_{\ell=1}^L\frac1{N_\ell}\sum_{k=1}^{N_\ell}a^{h_\ell}_k-a^{h_{\ell-1}}_k,L\geq1\Big\}$.}}
\newline\newline
From \eqref{eq:E[a:avg]<} and \eqref{eq:N_ell:avg},
\begin{equation*}
\mathbb{E}\bigg[\bigg|h_L^{-1}\sum_{\ell=1}^L\frac1{N_\ell}\sum_{k=1}^{N_\ell}a^{h_\ell}_k-a^{h_{\ell-1}}_k\bigg|\bigg]
\leq Ch_L^{-1}\sum_{\ell=1}^L\frac1{N_\ell\sqrt{\gamma_{N_\ell}}}
\leq Ch_L^\frac{3\beta+2}4,
\end{equation*}
hence
\begin{equation*}
h_L^{-1}\sum_{\ell=1}^L\frac1{N_\ell}\sum_{k=1}^{N_\ell}a^{h_\ell}_k-a^{h_{\ell-1}}_k
\stackrel{L^1(\mathbb{P})}{\longrightarrow}0
\quad\mbox{ as }\quad
L\uparrow\infty.
\end{equation*}

\noindent
\emph{\textbf{Step~3. Study of $\Big\{h_L^{-1}\sum_{\ell=1}^L\frac1{N_\ell}\sum_{k=1}^{N_\ell}g^{h_\ell}_k-g^{h_{\ell-1}}_k,L\geq1\Big\}$.}}
\newline\newline
Using \eqref{eq:E[|g|]<} and a comparison between series and integrals, we obtain
\begin{equation*}
\begin{aligned}
\mathbb{E}\bigg[\bigg|h_L^{-1}\sum_{\ell=1}^L\frac1{N_\ell}\sum_{k=1}^{N_\ell}&g^{h_\ell}_k-g^{h_{\ell-1}}_k\bigg|\bigg]\\
&\leq2h_L^{-1}\sum_{\ell=1}^L\frac1{N_\ell}\sum_{k=1}^{N_\ell}(\mathbb{E}\big[\big|g^{h_\ell}_k\big|\big]+\mathbb{E}\big[\big|g^{h_{\ell-1}}_k\big|\big])\\
&\leq Ch_L^{-1}\sum_{\ell=1}^Lh_\ell^{(\frac14+\delta_0)\wedge1}\gamma_{N_\ell}^\frac12.
\end{aligned}
\end{equation*}
If $\delta_0\geq\frac34$, by \eqref{eq:N_ell:avg} and the fact that $\beta<1$,
\begin{equation*}
\begin{aligned}
\mathbb{E}\bigg[\bigg|h_L^{-1}\sum_{\ell=1}^L\frac1{N_\ell}\sum_{k=1}^{N_\ell} g^{h_\ell}_k-g^{h_{\ell-1}}_k\bigg|\bigg] &\leq Ch_L^{-1}\sum_{\ell=1}^Lh_\ell\gamma_{N_\ell}^\frac12\\
&\leq Ch_L^{-1+\frac98\beta}\sum_{\ell=1}^Lh_\ell^{1-\frac38\beta}\\
&\leq Ch_L^{-1+\frac98\beta},
\end{aligned}
\end{equation*}
Otherwise, if $\frac18\leq\delta_0<\frac34$, reusing \eqref{eq:N_ell:avg},
\begin{equation*}
\begin{aligned}
\mathbb{E}\bigg[\bigg|h_L^{-1}\sum_{\ell=1}^L\frac1{N_\ell}\sum_{k=1}^{N_\ell}&g^{h_\ell}_k-g^{h_{\ell-1}}_k\bigg|\bigg]\\
&\leq Ch_L^{-1}\sum_{\ell=1}^Lh_\ell^{\frac14+\delta_0}\gamma_{N_\ell}^\frac12\\
&\leq Ch_L^{-1+\frac98\beta}\sum_{\ell=1}^Lh_\ell^{\frac14+\delta_0-\frac38\beta}\\
&\leq C
\begin{cases}
h_L^{-1+\frac98\beta},
&\delta_0>\frac18,\\
h_L^{-1+\frac98\beta}\left|\ln{h_L}\right|,
&\delta_0=\frac18.
\end{cases}
\end{aligned}
\end{equation*}
Taking into account that $\beta\in\big(\frac89,1\big)$, we conclude
\begin{equation*}
h_L^{-1}\sum_{\ell=1}^L\frac1{N_\ell}\sum_{k=1}^{N_\ell}g^{h_\ell}_k-g^{h_{\ell-1}}_k
\stackrel{L^1(\mathbb{P})}{\longrightarrow}0
\quad\mbox{ as }\quad
L\uparrow\infty.
\end{equation*}

\noindent
\emph{\textbf{Step~4. Study of $\Big\{h_L^{-1}\sum_{\ell=1}^L\frac1{N_\ell}\sum_{k=1}^{N_\ell}r^{h_\ell}_k-r^{h_{\ell-1}}_k,L\geq1\Big\}$.}}
\newline\newline
Using \eqref{eq:E[|r|]<}, a comparison between series and integrals and \eqref{eq:N_ell:avg}, we obtain
\begin{equation*}
\begin{aligned}
\mathbb{E}\bigg[\bigg|h_L^{-1}\sum_{\ell=1}^L\frac1{N_\ell}\sum_{k=1}^{N_\ell}&r^{h_\ell}_k-r^{h_{\ell-1}}_k\bigg|\bigg]\\
&\leq2h_L^{-1}\sum_{\ell=1}^L\frac1{N_\ell}\sum_{k=1}^{N_\ell}(\mathbb{E}\big[\big|r^{h_\ell}_k\big|\big]+\mathbb{E}\big[\big|r^{h_{\ell-1}}_k\big|\big])\\
&\leq Ch_L^{-1}\sum_{\ell=1}^L\gamma_{N_\ell}\\
&\leq Ch_L^{-1+\frac32\beta}.
\end{aligned}
\end{equation*}
In view of the fact that $\beta\in\big(\frac89,1\big)$, we deduce
\begin{equation*}
h_L^{-1}\sum_{\ell=1}^L\frac1{N_\ell}\sum_{k=1}^{N_\ell}r^{h_\ell}_k-r^{h_{\ell-1}}_k
\stackrel{L^1(\mathbb{P})}{\longrightarrow}0
\quad\mbox{ as }\quad
L\uparrow\infty.
\end{equation*}

\noindent
\emph{\textbf{Step~5. Study of $\Big\{h_L^{-1}\sum_{\ell=1}^L\frac1{N_\ell}\sum_{k=1}^{N_\ell}\rho^{h_\ell}_k-\rho^{h_{\ell-1}}_k,L\geq1\Big\}$.}}
\newline\newline
Reusing similar arguments to Step~5 of the proof of Theorem~\ref{thm:ml:clt}, Appendix~\ref{prf:ml:clt}, we obtain via \eqref{eq:E[|rho|2]<}, \eqref{eq:N_ell:avg} and a comparison between series and integrals,
\begin{equation*}
\begin{aligned}
\mathbb{E}\bigg[\bigg(h_L^{-1}\sum_{\ell=1}^L\frac1{N_\ell}\sum_{k=1}^{N_\ell}&\rho^{h_\ell}_k-\rho^{h_{\ell-1}}_k\bigg)^2\bigg]\\
&=h_L^{-2}\sum_{\ell=1}^L\frac1{N_\ell^2}\sum_{k=1}^{N_\ell}\mathbb{E}[(\rho^{h_\ell}_k-\rho^{h_{\ell-1}}_k)^2]\\
&\leq2h_L^{-2}\sum_{\ell=1}^L\frac1{N_\ell^2}\sum_{k=1}^{N_\ell}(\mathbb{E}\big[\big|\rho^{h_\ell}_k\big|^2\big] + \mathbb{E}\big[\big|\rho^{h_{\ell-1}}_k\big|^2\big])\\
&\leq Ch_L^{-2}\sum_{\ell=1}^L\frac{\gamma_{N_\ell}^\frac12}{N_\ell}\\
&\leq Ch_L^{\frac{3\beta-2}4}.
\end{aligned}
\end{equation*}
Thus, given that $\beta\in\big(\frac89,1\big)$,
\begin{equation*}
h_L^{-1}\sum_{\ell=1}^L\frac1{N_\ell}\sum_{k=1}^{N_\ell}\rho^{h_\ell}_k-\rho^{h_{\ell-1}}_k
\stackrel{L^2(\mathbb{P})}{\longrightarrow}0
\quad\mbox{ as }\quad
L\uparrow\infty.
\end{equation*}

\noindent
\emph{\textbf{Step~6. Study of $\Big\{h_L^{-\frac98}\sum_{\ell=1}^L\frac1{N_\ell}\sum_{k=1}^{N_\ell}\theta^{h_\ell}_k-\theta^{h_{\ell-1}}_k,L\geq1\Big\}$.}}
\newline\newline
$\big\{\frac1{N_\ell}\sum_{k=1}^{N_\ell}\theta^{h_\ell}_k-\theta^{h_{\ell-1}}_k,\ell\geq1\big\}$ are independent and centered.
Hence, using \eqref{eq:E[sum(thetahk)]<}, a comparison between series and integrals and \eqref{eq:N_ell:avg},
\begin{equation*}
\begin{aligned}
\mathbb{E}\bigg[\bigg|h_L^{-\frac98}\sum_{\ell=1}^L\frac1{N_\ell}\sum_{k=1}^{N_\ell}&\theta^{h_\ell}_k-\theta^{h_{\ell-1}}_k\bigg|^2\bigg]\\
&\leq2h_L^{-\frac94}\sum_{\ell=1}^L\frac1{N_\ell}\bigg(\mathbb{E}\bigg[\bigg(\frac1{\sqrt{N_\ell}}\sum_{k=1}^{N_\ell}\theta^{h_\ell}_k\bigg)^2\bigg] + \mathbb{E}\bigg[\bigg(\frac1{\sqrt{N_\ell}}\sum_{k=1}^{N_\ell}\theta^{h_{\ell-1}}_k\bigg)^2\bigg]\bigg)\\
&\leq Ch_L^{-\frac94}\sum_{\ell=1}^L\frac{\gamma_{N_\ell}}{N_\ell}\\
&\leq Ch_L^{-\frac34+\frac32\beta}.
\end{aligned}
\end{equation*}
Therefore,
\begin{equation*}
h_L^{-\frac98}\sum_{\ell=1}^L\frac1{N_\ell}\sum_{k=1}^{N_\ell}\theta^{h_\ell}_k-\theta^{h_{\ell-1}}_k
\stackrel{L^2(\mathbb{P})}{\longrightarrow}0
\quad\mbox{ as }\quad
L\uparrow\infty.
\end{equation*}

\noindent
\emph{\textbf{Step~7. Study of $\Big\{h_L^{-\frac98}\sum_{\ell=1}^L\frac1{N_\ell}\sum_{k=1}^{N_\ell}\zeta^{h_\ell}_k-\zeta^{h_{\ell-1}}_k,L\geq1\Big\}$.}}
\newline\newline
Using \eqref{eq:E[sum(zetahk)]<}, we obtain
\begin{equation*}
\begin{aligned}
\mathbb{E}\bigg[\bigg|h_L^{-\frac98}\sum_{\ell=1}^L
\frac1{N_\ell}\sum_{k=1}^{N_\ell}&\zeta^{h_\ell}_k-\zeta^{h_{\ell-1}}_k\bigg|\bigg]\\
&\leq h_L^{-\frac98}\sum_{\ell=1}^L\frac1{\sqrt{N_\ell}}\bigg(\mathbb{E}\bigg[\bigg|\frac1{\sqrt{N_\ell}}\sum_{k=1}^{N_\ell}\zeta^{h_\ell}_k\bigg|\bigg] + \mathbb{E}\bigg[\bigg|\frac1{\sqrt{N_\ell}}\sum_{k=1}^{N_\ell}\zeta^{h_{\ell-1}}_k\bigg|\bigg]\bigg)\\
&\leq Ch_L^{-\frac98}\sum_{\ell=1}^L\gamma_{N_\ell}\\
&\leq Ch_L^{-\frac98+\frac32\beta}.
\end{aligned}
\end{equation*}
Hence, since $\beta\in\big(\frac89,1\big)$,
\begin{equation*}
h_L^{-\frac98}\sum_{\ell=1}^L\frac1{N_\ell}\sum_{k=1}^{N_\ell}\zeta^{h_\ell}_k-\zeta^{h_{\ell-1}}_k
\stackrel{L^1(\mathbb{P})}{\longrightarrow}0
\quad\mbox{ as }\quad
L\uparrow\infty.
\end{equation*}

\noindent
\emph{\textbf{Step~8. Study of $\Big\{h_L^{-\frac98}\big(\chi^{h_0}_{N_0}-\chi^{h_0}_\star\big),L\geq1\Big\}$.}}
\newline\newline
It follows from \eqref{eq:N_ell:avg} that
\begin{equation*}
h_L^{-\frac98}\big(\chi^{h_0}_{N_0}-\chi^{h_0}_\star\big)
=h_0^{-\frac38}\big(1-M^{-\frac14}\big)^\frac12
\frac{\sqrt{N_0}\big(\chi^{h_0}_{N_0}-\chi^{h_0}_\star\big)}{\big(1-M^{-\frac14(L+1)}\big)^\frac12}.
\end{equation*}
According to the decomposition \eqref{eq:chih-chi*},
\begin{equation*}
\sqrt{N_0}\big(\chi^{h_0}_{N_0}-\chi^{h_0}_\star\big)
=\frac1{\sqrt{N_0}}\sum_{k=1}^{N_0}\theta^{h_0}_k+\frac1{\sqrt{N_0}}\sum_{k=1}^{N_0}\zeta^{h_0}_k+\frac1{\sqrt{N_0}}\sum_{k=1}^{N_0}\eta^{h_0}_k.
\end{equation*}
From \eqref{eq:E[sum(thetahk)]<} and \eqref{eq:E[sum(zetahk)]<}, we get
\begin{gather*}
\frac1{\sqrt{N_0}}\sum_{k=1}^{N_0}\theta^{h_0}_k
\stackrel{L^2(\mathbb{P})}{\longrightarrow}0
\quad\mbox{ as }\quad
N_0\uparrow\infty,\\
\frac1{\sqrt{N_0}}\sum_{k=1}^{N_0}\zeta^{h_0}_k
\stackrel{L^1(\mathbb{P})}{\longrightarrow}0
\quad\mbox{ as }\quad
N_0\uparrow\infty.
\end{gather*}
Since $\{\eta^{h_0}_k,k\geq1\}$ is i.i.d.~such that $\mathbb{E}[|\eta^{h_0}_1|^2]<\infty$ by \eqref{eq:etahk} and the fact that, by assumption, $\mathbb{E}[|\varphi(Y,Z)|^{2+\delta}]<\infty$, by the classical CLT,
\begin{equation*}
\frac1{\sqrt{N_0}}\sum_{k=1}^{N_0}\eta^{h_0}_k
\stackrel[]{\mathcal{L}}{\longrightarrow}
\mathcal{N}\Big(0,\frac{\Var\big((X_{h_0}-\xi^{h_0}_\star)^+\big)}{(1-\alpha)^2}\Big)
\quad\mbox{ as }\quad
N_0\uparrow\infty.
\end{equation*}
Combining the previous results,
\begin{equation*}
h_L^{-\frac98}\big(\chi^{h_0}_{N_0}-\chi^{h_0}_\star\big)
\stackrel[]{\mathcal{L}}{\longrightarrow}
\mathcal{N}\Big(0,h_0^{-\frac34}\big(1-M^{-\frac14}\big)\frac{\Var\big((X_{h_0}-\xi^{h_0}_\star)^+\big)}{(1-\alpha)^2}\Big)
\quad\mbox{ as }\quad
L\uparrow\infty.
\end{equation*}

\noindent
\emph{\textbf{Step~9. Study of \[\bigg\{\bigg(h_L^{-1}\sum_{\ell=1}^L\frac1{N_\ell}\sum_{k=1}^{N_\ell}e^{h_\ell}_k-e^{h_{\ell-1}}_k, h_L^{-\frac98}\sum_{\ell=1}^L\frac1{N_\ell}\sum_{k=1}^{N_\ell}\eta^{h_\ell}_k-\eta^{h_{\ell-1}}_k\bigg),L\geq1\bigg\}.\]}}
Since $N_1\geq\dots\geq N_L$ by \eqref{eq:N_ell:avg},
\begin{equation*}
\sum_{\ell=1}^L\frac1{N_\ell}\sum_{k=1}^{N_\ell}e^{h_\ell}_k-e^{h_{\ell-1}}_k
=\sum_{k=1}^{N_1}\sum_{\ell=1}^L\frac{e^{h_\ell}_k-e^{h_{\ell-1}}_k}{N_\ell}\mathds{1}_{1\leq k\leq N_\ell}.
\end{equation*}
We apply the CLT~\cite[Corollary~3.1]{nla.cat-vn2887492} to the martingale arrays \begin{equation*}
\bigg\{\bigg(h_L^{-1}\sum_{k=1}^{N_1}\sum_{\ell=1}^L\frac{e^{h_\ell}_k-e^{h_{\ell-1}}_k}{N_\ell}\mathds{1}_{1\leq k\leq N_\ell}, h_L^{-\frac98}\sum_{k=1}^{N_1}\sum_{\ell=1}^L\frac{\eta^{h_\ell}_k-\eta^{h_{\ell-1}}_k}{N_\ell}\mathds{1}_{1\leq k\leq N_\ell}\bigg),L\geq1\bigg\}.
\end{equation*}

\noindent
\emph{\textbf{Step~9.1. Verification of the conditional Lindeberg condition.}}
\newline

\noindent
\emph{\textbf{Step~9.1.1. Study of $\Big\{h_L^{-1}\sum_{k=1}^{N_1}\sum_{\ell=1}^L\frac{e^{h_\ell}_k-e^{h_{\ell-1}}_k}{N_\ell}\mathds{1}_{1\leq k\leq N_\ell},L\geq1\Big\}$.}}
\newline\newline
Since, for all $k\geq1$, $\big\{\frac1{N_\ell}(e^{h_\ell}_k-e^{h_{\ell-1}}_k)\mathds{1}_{1\leq k\leq N_\ell},1\leq\ell\leq L\big\}$ are independent and centered, using the Marcinkiewicz-Zygmund inequality, Jensen's inequality and \eqref{eq:E[|e|]<},
\begin{equation*}
\begin{aligned}
\sum_{k=1}^{N_1}\mathbb{E}\bigg[\bigg|\sum_{\ell=1}^L&h_L^{-1}\frac{e^{h_\ell}_k-e^{h_{\ell-1}}_k}{N_\ell}\mathds{1}_{1\leq k\leq N_\ell}\bigg|^{2+\delta}\bigg]\\
&\leq Ch_L^{-(2+\delta)}\sum_{k=1}^{N_1}\mathbb{E}\bigg[\bigg|\sum_{\ell=1}^L\frac{(e^{h_\ell}_k-e^{h_{\ell-1}}_k)^2}{N_\ell^2}\mathds{1}_{1\leq k\leq N_\ell}\bigg|^{1+\frac\delta2}\bigg]\\
&\leq Ch_L^{-(2+\delta)}\sum_{k=1}^{N_1}L^\frac\delta2\sum_{\ell=1}^L\frac{\mathbb{E}\big[\big|e^{h_\ell}_k-e^{h_{\ell-1}}_k\big|^{2+\delta}\big]}{N_\ell^{2+\delta}}\mathds{1}_{1\leq k\leq N_\ell}\\
&\leq Ch_L^{-(2+\delta)}L^\frac\delta2\sum_{\ell=1}^L\frac{h_\ell^{\frac12}}{N_\ell^{1+\delta}}.
\end{aligned}
\end{equation*}
Thus, by \eqref{eq:N_ell:avg},
\begin{equation*}
\begin{aligned}
\sum_{k=0}^{N_1}\mathbb{E}\bigg[\bigg|\sum_{\ell=1}^L&h_L^{-1}\frac{e^{h_\ell}_k-e^{h_{\ell-1}}_k}{N_\ell}\mathds{1}_{1\leq k\leq N_\ell}\bigg|^{2+\delta}\bigg]\\
&\leq Ch_L^{-(2+\delta)}L^\frac\delta2\sum_{\ell=1}^L\frac{h_\ell^\frac12}{N_\ell^{1+\delta}}\\
&\leq Ch_L^{\frac{1+5\delta}4}L^\frac\delta2\sum_{\ell=1}^Lh_\ell^{\frac{1-3\delta}4}\\
&\leq C
\begin{cases}
h_L^{\frac{1+5\delta}4}L^\frac\delta2,
&\delta<\frac13,\\
h_L^{\frac{1+5\delta}4}L^{1+\frac\delta2},
&\delta=\frac13,\\
h_L^{\frac12(1+\delta)}L^\frac\delta2,
&\delta>\frac13.
\end{cases}
\end{aligned}
\end{equation*}
Hence
\begin{equation*}
\limsup_{L\uparrow\infty}\sum_{k=0}^{N_1}\mathbb{E}\bigg[\bigg|\sum_{\ell=1}^Lh_L^{-1}\frac{e^{h_\ell}_k-e^{h_{\ell-1}}_k}{N_\ell}\mathds{1}_{1\leq k\leq N_\ell}\bigg|^{2+\delta}\bigg]=0,
\end{equation*}

\noindent
\emph{\textbf{Step~9.1.2. Study of $\Big\{h_L^{-\frac98}\sum_{k=1}^{N_1}\sum_{\ell=1}^L\frac{\eta^{h_\ell}_k-\eta^{h_{\ell-1}}_k}{N_\ell}\mathds{1}_{1\leq k\leq N_\ell},L\geq1\Big\}$.}}
\newline\newline
Proceeding as in Step~9.1.2. in Theorem~\ref{thm:ml:clt}'s proof, Appendix~\ref{prf:ml:clt}, using in particular \eqref{eq:eta^2+d<}, that $\sup_{h \in \mathcal{H}}\mathbb{E}[|X_h|^{2+\delta}]< \infty$ by assumption and \eqref{eq:N_ell:avg},
\begin{equation*}
\begin{aligned}
\sum_{k=1}^{N_1}\mathbb{E}\bigg[\bigg|\sum_{\ell=1}^L&h_L^{-\frac98}\frac{\eta^{h_\ell}_k-\eta^{h_{\ell-1}}_k}{N_\ell}\mathds{1}_{1\leq k\leq N_\ell}\bigg|^{2+\delta}\bigg]\\
&\leq Ch_L^{-\frac98(2+\delta)}L^\frac\delta2\sum_{\ell=1}^L\frac1{N_\ell^{2+\delta}}\sum_{k=1}^{N_\ell}\mathbb{E}\big[\big|\eta^{h_\ell}_k-\eta^{h_{\ell-1}}_k\big|^{2+\delta}\big]\\
&\leq Ch_L^{-\frac98(2+\delta)}L^\frac\delta2\sum_{\ell=1}^L\frac{h_\ell^{1+\frac\delta2}}{N_\ell^{1+\delta}}\\
&\leq C
\begin{cases}
h_L^{\frac98\delta}L^\frac\delta2,
&\delta<1,\\
h_L^{\frac98\delta}L^{1+\frac\delta2},
&\delta=1,\\
h_L^{\frac14+\frac78\delta}L^\frac\delta2,
&\delta>1.
\end{cases}
\end{aligned}
\end{equation*}
Therefore,
\begin{equation*}
\limsup_{L\uparrow\infty}\sum_{k=1}^{N_1}\mathbb{E}\bigg[\bigg|\sum_{\ell=1}^Lh_L^{-\frac98}\frac{\eta^{h_\ell}_k-\eta^{h_{\ell-1}}_k}{N_\ell}\mathds{1}_{1\leq k\leq N_\ell}\bigg|^{2+\delta}\bigg]=0.
\end{equation*}

The conditional Lindeberg condition is thus satisfied.
\\

\noindent
\emph{\textbf{Step~9.2. Convergence of the conditional covariance matrices.}}
\newline
It remains to investigate the asymptotics of the conditional covariance matrices $\big\{S_L=(S_L^{i,j})_{1\leq i,j\leq 2}, L\geq 1\big\}$ defined by
\begin{align*}
S^{1,1}_L
& \coloneqq \sum_{k=1}^{N_1}\mathbb{E}\bigg[\bigg(\sum_{\ell=1}^Lh_L^{-1}\frac{e^{h_\ell}_k-e^{h_{\ell-1}}_k}{N_\ell}\mathds{1}_{1\leq k\leq N_\ell}\bigg)^2\bigg|\mathcal{F}^{h_L}_{k-1}\bigg],\\
S^{2,2}_L & \coloneqq \sum_{k=1}^{N_1}\mathbb{E}\bigg[\bigg(\sum_{\ell=1}^Lh_L^{-\frac98}\frac{\eta^{h_\ell}_k-\eta^{h_{\ell-1}}_k}{N_\ell}\mathds{1}_{1\leq k\leq N_\ell}\bigg)^2\bigg|\mathcal{F}^{h_L}_{k-1}\bigg],\\
S^{1,2}_L = S^{2,1}_L & \coloneqq \sum_{k=1}^{N_1} \mathbb{E}\bigg[\sum_{\ell=1}^L \frac{h_L^{-\frac{17}8}}{N_\ell^2}(e^{h_\ell}_k-e^{h_{\ell-1}}_k)(\eta^{h_\ell}_k-\eta^{h_{\ell-1}}_k)\mathds{1}_{1\leq k\leq N_\ell}\bigg|\mathcal{F}^{h_L}_{k-1}\bigg].
\end{align*}

\noindent
\emph{\textbf{Step~9.2.1. Convergence of $\{S^{1,1}_L,L\geq1\}$.}}
\newline
We write
\begin{equation*}
S^{1,1}_L
=\sum_{k=1}^{N_1}\mathbb{E}\bigg[\bigg(\sum_{\ell=1}^Lh_L^{-1}\frac{e^{h_\ell}_k-e^{h_{\ell-1}}_k}{N_\ell}\mathds{1}_{1\leq k\leq N_\ell}\bigg)^2\bigg]
= \sum_{\ell=1}^L \overline{U}_\ell,
\end{equation*}
where
\begin{equation*}
\overline{U}_\ell
\coloneqq\frac{h_L^{-2}}{N_\ell^2}\sum_{k=1}^{N_\ell}\mathbb{E}[(e^{h_\ell}_k-e^{h_{\ell-1}}_k)^2]\\
=\frac{h_L^{-2}}{N_\ell}\mathbb{E}[(e^{h_\ell}_1-e^{h_{\ell-1}}_1)^2].
\end{equation*}
By \eqref{eq:N_ell:avg}, \eqref{eq:limE[e]} and Ces\`aro's lemma (`$./\infty$' version),
\begin{equation*}
\begin{aligned}
\lim_{L\uparrow\infty}
S^{1,1}_L
&=\lim_{L\uparrow\infty}h_L^{-2}\bigg(\sum_{\ell=1}^L\frac{h_\ell^\frac12}{N_\ell}\bigg)
\times\lim_{L\uparrow\infty}\bigg(\sum_{\ell=1}^L\frac{h_\ell^\frac12}{N_\ell}\bigg)^{-1}
\bigg(\sum_{\ell=1}^L\frac{h_\ell^\frac12}{N_\ell}h_\ell^{-\frac12}\mathbb{E}[(e^{h_\ell}_1-e^{h_{\ell-1}}_1)^2]\bigg)\\
&=\frac1{\big(1-M^{-\frac14}\big)}
\times\lim_{L\uparrow\infty}\bigg(\sum_{\ell=1}^Lh_\ell^{-\frac14}\bigg)^{-1}
\bigg(\sum_{\ell=1}^Lh_\ell^{-\frac14}h_\ell^{-\frac12}\mathbb{E}[(e^{h_\ell}_1-e^{h_{\ell-1}}_1)^2]\bigg)\\
&=\frac{\mathbb{E}[|G|f_G(\xi^0_\star)]}{(1-\alpha)^2\big(1-M^{-\frac14}\big)}.
\end{aligned}
\end{equation*}

\noindent
\emph{\textbf{Step~9.2.2. Convergence of $\{S^{2,2}_L,L\geq1\}$.}}
\newline
We have
\begin{equation*}
S^{2,2}_L
=\sum_{k=1}^{N_1}\mathbb{E}\bigg[\bigg(\sum_{\ell=1}^Lh_L^{-\frac98}\frac{\eta^{h_\ell}_k-\eta^{h_{\ell-1}}_k}{N_\ell}\mathds{1}_{1\leq k\leq N_\ell}\bigg)^2\bigg]
= \sum_{\ell=1}^L\overline{W}_\ell,
\end{equation*}
with
\begin{equation*}
\begin{aligned}
\overline W_\ell
&\coloneqq \frac{h_L^{-\frac94}}{N_\ell} \mathbb{E}[(\eta^{h_\ell}_1-\eta^{h_{\ell-1}}_1)^2]\\
&=\frac{h_L^{-\frac94}}{(1-\alpha)^2}\frac{h_\ell}{N_\ell}h_\ell^{-1}\Var\big((X_{h_\ell}-\xi^{h_\ell}_\star)^+-(X_{h_{\ell-1}}-\xi^{h_{\ell-1}}_\star)^+\big).
\end{aligned}
\end{equation*}
Via \eqref{eq:N_ell:avg} , \eqref{eq:limVarX+} and Ces{\`a}ro's lemma (`$0/0$' version),
\begin{equation*}
\begin{aligned}
\lim_{L\uparrow\infty}S^{2,2}_L
&=\lim_{L\uparrow\infty}\frac{h_L^{-\frac94}}{(1-\alpha)^2}\bigg(\sum_{\ell=1}^L\frac{h_\ell}{N_\ell}\bigg)\\
&\quad\times\lim_{L\uparrow\infty}\bigg(\sum_{\ell=1}^L\frac{h_\ell}{N_\ell}\bigg)^{-1}
\bigg(\sum_{\ell=1}^L\frac{h_\ell}{N_\ell}h_\ell^{-1}\Var\big((X_{h_\ell}-\xi^{h_\ell}_\star)^+-(X_{h_{\ell-1}}-\xi^{h_{\ell-1}}_\star)^+\big)\bigg)\\
&=\frac{h_0^\frac14}{(1-\alpha)^2M^\frac14}\\
&\quad\times\lim_{L\uparrow\infty}\bigg(\sum_{\ell=1}^Lh_\ell^\frac14\bigg)^{-1}
\bigg(\sum_{\ell=1}^Lh_\ell^\frac14h_\ell^{-1}\Var\big((X_{h_\ell}-\xi^{h_\ell}_\star)^+-(X_{h_{\ell-1}}-\xi^{h_{\ell-1}}_\star)^+\big)\bigg)\\
&=\frac{h_0^\frac14}{(1-\alpha)^2M^\frac14}\Var(\mathds{1}_{X_0>\xi^0_\star}\,G).
\end{aligned}
\end{equation*}

\noindent
\emph{\textbf{Step~9.2.3. Convergence of $\{S^{1,2}_L,L\geq1\}$.}}
\newline
Using \eqref{eq:E[|e.eta|]<} and \eqref{eq:N_ell:avg},
\begin{equation*}
\begin{aligned}
\big|S^{1,2}_L\big|
&\leq h_L^{-\frac{17}8}\sum_{\ell=1}^L\frac1{N_\ell^2}\sum_{k=1}^{N_\ell}\mathbb{E}\big[\big|(e^{h_\ell}_k-e^{h_{\ell-1}}_k)(\eta^{h_\ell}_k-\eta^{h_{\ell-1}}_k)\big|\big]\\
&\leq Ch_L^{-\frac{17}8}\sum_{\ell=1}^L\frac{h_\ell^\frac34}{N_\ell}\\
&\leq Ch_L^{\frac18}L,
\end{aligned}
\end{equation*}
Hence
\begin{equation*}
S^{1,2}_L=S^{2,1}_L
\stackrel{\Pas}{\longrightarrow}0
\quad\mbox{ as }\quad
L\uparrow\infty.
\end{equation*}

Invoking the independence of the sequences studied in Steps~8 and~9 completes the proof.

\section{Square Convergence of the Averaged MLSA Algorithm}

To fully apprehend its properties, we hereby study the $L^2(\mathbb{P})$-error and complexity of the VaR AMLSA algorithm \eqref{eq:xi:ML:avg}.

\subsection{Convergence Rate Analysis}

\begin{lemma}[{\cite[Theorem~2.7(ii) \& Proposition~3.2(ii)]{mlsa}}]\label{lmm:recall}\

\begin{enumerate}[(i)]
    \item\label{lmm:variance-cv}
Suppose that $\varphi(Y,Z)\in L^2(\mathbb{P})$, that Assumption~\ref{asp:misc} holds and that
\begin{equation*}
\sup_{h\in\mathcal{H}}{\mathbb{E}\Big[|\xi^h_0|^4\exp\Big(\frac{16}{1-\alpha}c_\alpha\sup_{h\in\mathcal{H}}\|f_{X_h}\|_\infty|\xi_0^h|\Big)\Big]}<\infty,
\end{equation*}
where $c_\alpha=1\vee\frac\alpha{1-\alpha}$. Set
\begin{equation*}
\lambda_2 = \inf_{h \in \mathcal{H}\cup\{0\}}\frac34V_h''(\xi^h_\star)\wedge\|V_h''\|_\infty\frac{V_h''(\xi^h_\star)^4}{[V_h'']_\mathrm{Lip}^2}>0.
\end{equation*}
If $\gamma_n=\gamma_1n^{-\beta}$, $n\geq1$, $\gamma_1>0$, $\beta\in(0,1]$,
with $\lambda_2\gamma_1>2$ if $\beta=1$, then there exists a positive constant $C<\infty$ such that, for any positive integer $n$,
\begin{equation*}
\sup_{h\in\mathcal{H}}\mathbb{E}[(\xi^h_n-\xi^h_\star)^4]\leq C\gamma_n^2.
\end{equation*}

\item\label{lmm:lipschitz}
Let $\big\{G_\ell\coloneqq h_\ell^{-\frac12}(X_{h_\ell}-X_{h_{\ell-1}}),\ell\geq1\big\}$, define $\{F_{X_{h_{\ell-1}}\,|\,G_\ell=g}\colon x\mapsto\mathbb{P}(X_{h_{\ell-1}}\leq x\,|\,G_\ell=g),g\in\supp(\mathbb{P}_{G_\ell}),\ell\geq1\}$ and consider the random variables $\{K_\ell\coloneqq K_\ell(G_\ell),\ell\geq1\}$, where
\begin{equation*}
K_\ell(g)\coloneqq\sup_{x\neq y}\,\frac{\big|F_{X_{h_{\ell-1}}\,|\,G_\ell=g}(x)-F_{X_{h_{\ell-1}}\,|\,G_\ell=g}(y)\big|}{|x-y|},
\quad\ell\geq1,g\in\supp(\mathbb{P}_{G_\ell}).
\end{equation*}
Assume that $\{K_\ell,\ell\geq1\}$ satisfy
\begin{equation*}
\sup_{\ell\geq1}\,\mathbb{E}[|G_\ell|K_\ell]<\infty.
\end{equation*}
Then
\begin{equation*}
\sup_{\ell\geq1,\xi\in\mathbb{R}}\,
h_\ell^{-\frac12}\mathbb{E}\big[\big|\mathbf1_{X_{h_\ell}>\xi}-\mathbf1_{X_{h_{\ell-1}}>\xi}\big|\big]
<\infty.
\end{equation*}

\end{enumerate}
\end{lemma}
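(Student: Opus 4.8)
The plan is to treat the two parts separately, since they rest on unrelated mechanisms. Part~(\ref{lmm:lipschitz}) is a purely non-asymptotic conditioning estimate, a uniform-in-$(\ell,\xi)$ refinement of the computation behind Lemma~\ref{lmm:technical}(\ref{lmm:technical-ii}); it requires no smoothness beyond the Lipschitz regularity encoded in $K_\ell$. Part~(\ref{lmm:variance-cv}) is a fourth-moment Lyapunov analysis of the VaR recursion \eqref{eq:sa:nested:alg:xi}, structurally parallel to the $L^2(\mathbb{P})$-control of Lemma~\ref{lmm:error}(\ref{lmm:error:statistical}) but propagating a fourth power, which is exactly why the mean-reversion threshold doubles to $\lambda_2\gamma_1>2$ and the admissible initialisation is strengthened to a fourth-moment exponential integrability condition. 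I would dispatch Part~(\ref{lmm:lipschitz}) first, as it is self-contained, then feed the $L^2$-bound of Lemma~\ref{lmm:error}(\ref{lmm:error:statistical}) into Part~(\ref{lmm:variance-cv}).

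For Part~(\ref{lmm:lipschitz}), write $X_{h_\ell}=X_{h_{\ell-1}}+h_\ell^{1/2}G_\ell$ and split
\[
\mathbb{E}\big[\big|\mathds{1}_{X_{h_\ell}>\xi}-\mathds{1}_{X_{h_{\ell-1}}>\xi}\big|\big]
=\mathbb{P}\big(X_{h_{\ell-1}}\leq\xi<X_{h_\ell}\big)+\mathbb{P}\big(X_{h_\ell}\leq\xi<X_{h_{\ell-1}}\big),
\]
then condition on $G_\ell$ exactly as in \eqref{eq:helper:5}--\eqref{eq:helper:6}. The first probability is nonzero only on $\{G_\ell>0\}$ and equals $\mathbb{E}\big[\mathds{1}_{G_\ell>0}\big(F_{X_{h_{\ell-1}}|G_\ell}(\xi)-F_{X_{h_{\ell-1}}|G_\ell}(\xi-h_\ell^{1/2}G_\ell)\big)\big]$, which by the very definition of the conditional Lipschitz constant $K_\ell$ is at most $h_\ell^{1/2}\mathbb{E}[G_\ell^+K_\ell]$; the second is bounded analogously by $h_\ell^{1/2}\mathbb{E}[G_\ell^-K_\ell]$. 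Adding these and dividing by $h_\ell^{1/2}$ gives $h_\ell^{-1/2}\mathbb{E}\big[\big|\mathds{1}_{X_{h_\ell}>\xi}-\mathds{1}_{X_{h_{\ell-1}}>\xi}\big|\big]\leq\mathbb{E}[|G_\ell|K_\ell]$, a bound free of $\xi$; taking the supremum over $\ell$ and using the hypothesis $\sup_\ell\mathbb{E}[|G_\ell|K_\ell]<\infty$ concludes. The Lipschitz estimate simply substitutes for the first-order Taylor expansion of \eqref{eq:helper:5}, so no convergence of conditional densities is needed here.

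For Part~(\ref{lmm:variance-cv}), set $\Delta_n=\xi^h_n-\xi^h_\star$ and start from the raw recursion \eqref{eq:sa:nested:alg:xi}, which I would write as $\Delta_{n+1}=\Delta_n-\gamma_{n+1}V_h'(\xi^h_n)-\gamma_{n+1}M_{n+1}$, where $M_{n+1}=H_1(\xi^h_n,X_h^{(n+1)})-V_h'(\xi^h_n)$ is an $\{\mathcal{F}^h_n\}$-martingale increment, bounded because $|H_1|\leq c_\alpha$ by \eqref{eq:H1}. Expanding $\Delta_{n+1}^4$ and taking conditional expectation kills the odd martingale cross terms and, using the boundedness of $M_{n+1}$, yields
\[
\mathbb{E}\big[\Delta_{n+1}^4\,\big|\,\mathcal{F}^h_n\big]
\leq\Delta_n^4-4\gamma_{n+1}\Delta_n^3V_h'(\xi^h_n)+C\gamma_{n+1}^2\Delta_n^2+C\gamma_{n+1}^4.
\]
The mean-reversion of the drift comes from the uniform Lyapunov property \cite[Lemma~D.1]{mlsa}, which bounds $\Delta_nV_h'(\xi^h_n)$ below by a multiple of $\Delta_n^2$ away from $\xi^h_\star$ and, through $V_h''(\xi^h_\star)>0$, near it, so that $\Delta_n^3V_h'(\xi^h_n)\gtrsim\Delta_n^4$ up to a remainder absorbed by the other terms. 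Writing $u_n=\sup_{h\in\mathcal{H}}\mathbb{E}[\Delta_n^4]$ and feeding in $\mathbb{E}[\Delta_n^2]\leq C\gamma_n$ from Lemma~\ref{lmm:error}(\ref{lmm:error:statistical}) to recast $C\gamma_{n+1}^2\Delta_n^2$ as a source of order $\gamma_{n+1}^3$, the recursion reduces to $u_{n+1}\leq(1-\kappa\gamma_{n+1}+\oo(\gamma_{n+1}))u_n+C\gamma_{n+1}^3$ for a positive contraction rate $\kappa$ governed by $V_h''(\xi^h_\star)$. Unrolling and invoking Lemma~\ref{lmm:limsup}(\ref{lmm:limsup:i}) with exponent $b=2$---the constant $\lambda_2$ being defined so that the case $\beta=1$ of that lemma is precisely the threshold $\lambda_2\gamma_1>2$---then delivers $u_n\leq C\gamma_n^2$, uniformly in $h\in\mathcal{H}$.

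The main obstacle is the pathology that already complicates the $L^2$ analysis: the gradient $H_1$ in \eqref{eq:H1} is discontinuous and $V_h$ is strictly but not strongly convex, so the contraction $\Delta_nV_h'(\xi^h_n)\gtrsim\Delta_n^2$ holds only locally and must be globalised. This is where the uniform positivity of the densities (Assumption~\ref{asp:misc}(\ref{asp:misc:iii})) and the fourth-moment exponential integrability of $\xi^h_0$ are indispensable: together they furnish a Lyapunov function whose fourth-power behaviour is dominated uniformly in $h\in\mathcal{H}$, preventing rare excursions far from $\xi^h_\star$ from inflating the fourth moment. Keeping every constant genuinely independent of $h$, so that the final estimate is uniform over $\mathcal{H}$, is the real bookkeeping burden; once the $L^2$ bound is in hand as an input, the fourth-moment recursion itself is mechanical.
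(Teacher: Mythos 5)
This lemma is not proved in the paper at all: it is recalled verbatim from \cite[Theorem~2.7(ii) \& Proposition~3.2(ii)]{mlsa}, so there is no internal proof to compare against. Judged on its own merits, your part~(\ref{lmm:lipschitz}) is correct and complete: the split into the two one-sided probabilities, the conditioning on $G_\ell$ exactly as in \eqref{eq:helper:5}--\eqref{eq:helper:6}, and the replacement of the first-order Taylor expansion by the Lipschitz bound $F_{X_{h_{\ell-1}}|G_\ell}(\xi)-F_{X_{h_{\ell-1}}|G_\ell}(\xi-h_\ell^{1/2}G_\ell)\leq K_\ell\,h_\ell^{1/2}G_\ell^+$ on $\{G_\ell>0\}$ give the $\xi$-free bound $\mathbb{E}[|G_\ell|K_\ell]$, and the hypothesis $\sup_\ell\mathbb{E}[|G_\ell|K_\ell]<\infty$ does the rest. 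You correctly observe that no existence or convergence of conditional densities is needed here, only the Lipschitz modulus.

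Part~(\ref{lmm:variance-cv}) has the right architecture (fourth-power expansion, cancellation of the odd martingale terms, boundedness of $H_1$, recycling of the $L^2$ bound to turn $\gamma_{n+1}^2\Delta_n^2$ into a $\gamma_{n+1}^3$ source, and Lemma~\ref{lmm:limsup}(\ref{lmm:limsup:i}) with $b=2$, which is consistent with the threshold moving from $1$ to $2$). But the one step you wave at is the one that fails as you state it: you claim the Lyapunov property ``bounds $\Delta_nV_h'(\xi^h_n)$ below by a multiple of $\Delta_n^2$ away from $\xi^h_\star$''. It cannot: by \eqref{eq:H1}, $V_h'$ takes values in $[-\alpha/(1-\alpha),1]$, so $\Delta_nV_h'(\xi^h_n)$ grows at most \emph{linearly} in $|\Delta_n|$ at infinity, and the pointwise inequality $\Delta_n^3V_h'(\xi^h_n)\gtrsim\Delta_n^4$ is simply false for large $|\Delta_n|$. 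This lack of strong convexity is precisely why the recursion cannot be closed with the polynomial Lyapunov function $x\mapsto x^4$ alone, and why the hypothesis imposes exponential integrability of $\xi^h_0$ (with the constant $16/(1-\alpha)$ rather than the $4/(1-\alpha)$ of Assumption~\ref{asp:supE[sup]}): the argument of \cite{mlsa} first controls an exponential Lyapunov function of the iterates uniformly in $n$ and $h$ to tame the excursions, and only then exploits the local quadratic contraction near $\xi^h_\star$, where $\inf_{h}V_h''(\xi^h_\star)>0$ by Remark~\ref{rmk:nested:misc}(\ref{rmk:V''(xi*)>0}). Your target recursion $u_{n+1}\leq(1-\kappa\gamma_{n+1})u_n+C\gamma_{n+1}^3$ is the correct one, but it is not reachable by the pointwise drift bound you invoke; as written, this is a genuine gap rather than mere bookkeeping.
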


\begin{theorem}
Assume that the frameworks of Lemmas~\ref{lmm:recall}(\ref{lmm:variance-cv}) and~\ref{lmm:recall}(\ref{lmm:lipschitz}) hold and that Assumption~\ref{asp:fh-f0} is satisfied.
If $\gamma_n = \gamma_1 n^{-\beta}$, $n\geq1$, $\gamma>0$, $\beta\in \big(\frac12,1\big)$, then there exists a positive constant $C<\infty$ such that, for any positive integer $L$ and any positive integer sequence $\mathbf{N}=\{N_\ell,0\leq\ell\leq L\}$, 
\begin{equation}
\label{eq:xi:bar:L2}
\begin{aligned}
\mathbb{E}\big[\big(\overline{\xi}^\text{\tiny\rm ML}_\mathbf{N}&-\xi^{h_L}_\star\big)^2\big]
\leq C \bigg(
\gamma_{N_0}
+\bigg(\sum_{\ell=1}^L\frac1{N_\ell\sqrt{\gamma_{N_\ell}}}\bigg)^2\\
&+\bigg(\sum_{\ell=1}^Lh_\ell^\frac{1+4\delta_0\wedge3}4\sqrt{\gamma_{N_\ell}}\bigg)^2
+\bigg(\sum_{\ell=1}^L\gamma_{N_\ell}\bigg)^2
+\sum_{\ell=1}^L\frac{\sqrt{\gamma_{N_\ell}}}{N_\ell}
+\sum_{\ell=1}^L\frac{h_\ell^\frac12}{N_\ell} \bigg).
\end{aligned}
\end{equation}
\end{theorem}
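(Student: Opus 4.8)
The plan is to start from the level-by-level decomposition \eqref{eq:amlsa:split} of $\overline{\xi}^\text{\tiny\rm ML}_\mathbf{N}-\xi^{h_L}_\star$ already established in the proof of Theorem~\ref{thm:avg:ml:clt}, which writes the global averaged VaR error as a sum of six contributions: the initial-level term $\overline{\xi}^{h_0}_{N_0}-\xi^{h_0}_\star$, together with the telescoped, level-wise averaged increments of $a$, $g$, $r$, $\rho$ and $e$ (recalling \eqref{eq:ahn} and \eqref{eq:ghn}--\eqref{eq:ehn}), each carrying a harmless prefactor $V_0''(\xi^0_\star)^{-1}$ which is a fixed positive constant absorbed into $C$. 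Since $(\sum_{i=1}^6 T_i)^2\leq 6\sum_{i=1}^6 T_i^2$, I would bound the $L^2(\mathbb{P})$-norm of each contribution separately; the six resulting estimates match, one-to-one and in order, the six terms on the right-hand side of \eqref{eq:xi:bar:L2}.

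For the non-centered contributions -- the initial level and the $a$-, $g$- and $r$-sums -- the plan is to use the triangle inequality in $L^2(\mathbb{P})$ across levels (and, within each level, across the iteration index $k$), which is what produces their \emph{squared} sums. The initial-level term is controlled by $\gamma_{N_0}$ exactly as in Step~1 of Theorem~\ref{thm:avg:ml:clt}'s proof via Lemma~\ref{lmm:error}(\ref{lmm:error:statistical}). The $a$-sum is handled through the Abel summation identity \eqref{eq:abel}, whose estimates are already carried out in $L^2(\mathbb{P})$ down to \eqref{eq:E[a:avg]<}, giving a per-level bound $C(N_\ell\sqrt{\gamma_{N_\ell}})^{-1}$ and hence $(\sum_{\ell=1}^L (N_\ell\sqrt{\gamma_{N_\ell}})^{-1})^2$. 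The $g$-sum combines the $\OO(h_\ell^{(1/4+\delta_0)\wedge 1})$ bound on $|V_0''(\xi^0_\star)-V_{h_\ell}''(\xi^{h_\ell}_\star)|$ (from Assumptions~\ref{asp:misc}(\ref{asp:misc:iv}) and~\ref{asp:fh-f0} and Lemma~\ref{lmm:error}(\ref{lmm:error:weak})) with the $L^2$ control, yielding $\mathbb{E}[|g^{h_\ell}_k|^2]^{1/2}\leq Ch_\ell^{(1/4+\delta_0)\wedge 1}\gamma_k^{1/2}$; after averaging, $\frac1{N_\ell}\sum_{k}\gamma_k^{1/2}\asymp\gamma_{N_\ell}^{1/2}$ (comparison with integrals, valid since $\beta<1$) and the announced squared sum follows upon noting $\tfrac{1+4\delta_0\wedge 3}4=(\tfrac14+\delta_0)\wedge 1$.

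The decisive point is the $r$-sum, and this is where I expect the main obstacle. Being the second-order Taylor remainder of $V_h'$, the quantity $r^{h}_k$ is \emph{quadratic} in the iterate error, $|r^h_k|\leq C(\xi^h_{k-1}-\xi^h_\star)^2$, so its $L^2(\mathbb{P})$-norm involves the fourth moment of the iterate error; the $L^1$ bound \eqref{eq:E[|r|]<} used for the CLT no longer suffices. This is precisely the role of the square-convergence hypothesis: Lemma~\ref{lmm:recall}(\ref{lmm:variance-cv}) provides $\sup_h\mathbb{E}[(\xi^h_n-\xi^h_\star)^4]\leq C\gamma_n^2$, whence $\mathbb{E}[|r^h_k|^2]^{1/2}\leq C\gamma_k$ and, after averaging, the level bound $C\gamma_{N_\ell}$ and the squared sum $(\sum_{\ell=1}^L\gamma_{N_\ell})^2$. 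Upgrading from the $L^1$-type arguments of Theorem~\ref{thm:avg:ml:clt} to $L^2$ ones built on this $L^4$ control is the crux of the proof, and it also accounts for the restriction $\beta\in(\tfrac12,1)$ and the fourth-moment integrability of $\xi^h_0$ inherited from Lemma~\ref{lmm:recall}(\ref{lmm:variance-cv}).

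Finally, the $\rho$- and $e$-sums are to be treated by orthogonality rather than the triangle inequality, which is why they appear as \emph{single} (unsquared) sums. At each level $\{\rho^{h_\ell}_k-\rho^{h_{\ell-1}}_k\}_k$ and $\{e^{h_\ell}_k-e^{h_{\ell-1}}_k\}_k$ are $\{\mathcal{F}^{h_\ell}_k\}$-martingale increments, and the level-wise averages are centered and independent across levels; hence all cross terms vanish and the second moment of the sum over levels equals the sum of the per-level second moments. Using $\mathbb{E}[|\rho^{h_\ell}_k-\rho^{h_{\ell-1}}_k|^2]\leq C\gamma_k^{1/2}$ from \eqref{eq:E[|rho|2]<} and $\mathbb{E}[|e^{h_\ell}_k-e^{h_{\ell-1}}_k|^2]\leq Ch_\ell^{1/2}$ from \eqref{eq:E[e2]}, together with $\frac1{N_\ell^2}\sum_{k=1}^{N_\ell}\gamma_k^{1/2}\asymp\gamma_{N_\ell}^{1/2}/N_\ell$, yields the last two terms $\sum_\ell\sqrt{\gamma_{N_\ell}}/N_\ell$ and $\sum_\ell h_\ell^{1/2}/N_\ell$. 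Collecting the six bounds delivers \eqref{eq:xi:bar:L2}.
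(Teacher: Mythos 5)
Your proposal is correct and follows essentially the same route as the paper's own proof: the same six-term decomposition from \eqref{eq:amlsa:split}, triangle inequality in $L^2(\mathbb{P})$ for the initial-level, $a$-, $g$- and $r$-terms (with the fourth-moment bound of Lemma~\ref{lmm:recall}(\ref{lmm:variance-cv}) doing exactly the work you identify for the quadratic remainder $r^h_k$), and levelwise independence plus martingale orthogonality for the $\rho$- and $e$-terms via \eqref{eq:E[|rho|2]<} and \eqref{eq:E[e2]} together with Lemma~\ref{lmm:recall}(\ref{lmm:lipschitz}). No gaps.
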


\begin{proof}
In the developments below, we denote by $C<\infty$ a positive constant whose value may vary from line to line but does not depend upon $L$.
\\

We come back to the decomposition \eqref{eq:amlsa:split} and analyze each term separately.
\\

\noindent
\emph{\textbf{Step~1. Study of $\big\{\overline{\xi}^{h_0}_{N_0}-\xi^{h_0}_\star,L\geq1\big\}$.}}
\newline\newline
By Lemma~\ref{lmm:error}(\ref{lmm:error:statistical}) and a comparison between series and integrals, recalling that $\beta<1$,
\begin{equation*}
\mathbb{E}\big[\big(\overline{\xi}^{h_0}_{N_0}-\xi^{h_0}_\star\big)^2\big]
\leq\frac1{N_0}\sum_{k=1}^{N_0}\mathbb{E}[(\xi^{h_0}_k-\xi^{h_0}_\star)^2]
\leq \frac{C}{N_0}\sum_{k=1}^{N_0}\gamma_k
\leq C\gamma_{N_0}.
\end{equation*}

\noindent
\emph{\textbf{Step~2. Study of $\Big\{\sum_{\ell=1}^L\frac1{N_\ell}\sum_{k=1}^{N_\ell}a^{h_\ell}_k-a^{h_{\ell-1}}_k,L\geq1\Big\}$.}}
\newline\newline
We have
\begin{equation*}
\mathbb{E}\bigg[\bigg(\sum_{\ell=1}^L\frac1{N_\ell}\sum_{k=1}^{N_\ell}a^{h_\ell}_k-a^{h_{\ell-1}}_k\bigg)^2\bigg]^\frac12
\leq\sum_{\ell=1}^L\mathbb{E}\bigg[\bigg(\frac1{N_\ell}\sum_{k=1}^{N_\ell}a^{h_\ell}_k\bigg)^2\bigg]^\frac12+\mathbb{E}\bigg[\bigg(\frac1{N_\ell}\sum_{k=1}^{N_\ell}a^{h_{\ell-1}}_k\bigg)^2\bigg]^\frac12.
\end{equation*}
From \eqref{eq:ahn}, \eqref{eq:abel}--\eqref{eq:abel:2} and Lemma~\ref{lmm:error}(\ref{lmm:error:statistical}), we get
\begin{equation*}
\begin{aligned}
\mathbb{E}\bigg[\bigg(\frac1n\sum_{k=1}^na^h_k\bigg)^2\bigg]^\frac12
&\leq\frac1n\Big(\frac1{\gamma_n}\mathbb{E}[(\xi^h_n-\xi^h_\star)^2]^\frac12+\frac1{\gamma_1}\mathbb{E}[(\xi^h_0-\xi^h_\star)^2]^\frac12\Big)\\
&\quad+\frac1n\sum_{k=1}^{n-1}\Big(\frac1{\gamma_{k+1}}-\frac1{\gamma_k}\Big)\mathbb{E}[(\xi^h_k-\xi^h_\star)^2]^\frac12\\
&\leq C\bigg(\frac1n\Big(\frac1{\sqrt{\gamma_n}}+1\Big)+\frac1n\sum_{k=1}^{n-1}\Big(\frac1{\gamma_{k+1}}-\frac1{\gamma_k}\Big)\sqrt{\gamma_k}\bigg)\\
&\leq \frac{C}{n\sqrt{\gamma_n}}.
\end{aligned}
\end{equation*}
Therefore,
\begin{equation*}
\mathbb{E}\bigg[\bigg(\sum_{\ell=1}^L\frac1{N_\ell}\sum_{k=1}^{N\ell}a^{h_\ell}_k-a^{h_{\ell-1}}_k\bigg)^2\bigg]
\leq C\bigg(\sum_{\ell=1}^L\frac1{N_\ell\sqrt{\gamma_{N_\ell}}}\bigg)^2.
\end{equation*}

\noindent
\emph{\textbf{Step~3. Study of $\Big\{\sum_{\ell=1}^L\frac1{N_\ell}\sum_{k=1}^{N_\ell}g^{h_\ell}_k-g^{h_{\ell-1}}_k,L\geq1\Big\}$.}}
\newline\newline
We have
\begin{equation*}
\mathbb{E}\bigg[\bigg(\sum_{\ell=1}^L\frac1{N_\ell}\sum_{k=1}^{N_\ell}g^{h_\ell}_k-g^{h_{\ell-1}}_k\bigg)^2\bigg]^\frac12
\leq\sum_{\ell=1}^L\frac1{N_\ell}\sum_{k=1}^{N_\ell}\mathbb{E}[(g^{h_\ell}_k)^2]^\frac12+\mathbb{E}[(g^{h_{\ell-1}}_k)^2]^\frac12.
\end{equation*}
By Assumption~\ref{asp:fh-f0}, the uniform Lipschitz regularity of $\{V_{h_\ell},\ell\geq1\}$ by Assumption~\ref{asp:misc}(\ref{asp:misc:iv}) and Lemma~\ref{lmm:error}(\ref{lmm:error:weak}),
\begin{equation}\label{useful}
|V_0''(\xi^0_\star)-V_{h_\ell}''(\xi^{h_\ell}_\star)| \leq C\big(h_\ell^{(\frac14+\delta_0)\wedge1} + h_\ell\big).
\end{equation}
Using \eqref{eq:ghn}, \eqref{useful} and Lemma~\ref{lmm:error}(\ref{lmm:error:statistical}),
\begin{equation*}
\mathbb{E}[(g^{h_\ell}_k)^2]^\frac12
\leq|V_0''(\xi^0_\star)-V_{h_\ell}''(\xi^{h_\ell}_\star)|\,\mathbb{E}[(\xi^{h_\ell}_{k-1}-\xi^{h_\ell}_\star)^2]^\frac12
\leq Ch_\ell^{(\frac14+\delta_0)\wedge1}\gamma_k^\frac12,
\quad k\geq1.
\end{equation*}
A comparison between series and integrals thus yields
\begin{equation*}
\mathbb{E}\bigg[\bigg(\sum_{\ell=1}^L\frac1{N_\ell}\sum_{k=1}^{N_\ell}g^{h_\ell}_k-g^{h_{\ell-1}}_k\bigg)^2\bigg]\\
\leq C\bigg(\sum_{\ell=1}^Lh_\ell^{\frac14+(\delta_0\wedge\frac34)}\gamma_{N_\ell}^\frac12\bigg)^2.
\end{equation*}

\noindent
\emph{\textbf{Step~4. Study of $\Big\{\sum_{\ell=1}^L\frac1{N_\ell}\sum_{k=1}^{N_\ell}r^{h_\ell}_k-r^{h_{\ell-1}}_k,L\geq1\Big\}$.}}
\newline\newline
Note that
\begin{equation*}
\mathbb{E}\bigg[\bigg(\sum_{\ell=1}^L\frac1{N_\ell}\sum_{k=1}^{N_\ell}r^{h_\ell}_k-r^{h_{\ell-1}}_k\bigg)^2\bigg]^\frac12
\leq\sum_{\ell=1}^L\frac1{N_\ell}\sum_{k=1}^{N_\ell}\mathbb{E}[(r^{h_\ell}_k)^2]^\frac12+\mathbb{E}[(r^{h_{\ell-1}}_k)^2]^\frac12.
\end{equation*}
From \eqref{eq:rhn}, \eqref{eq:Vh':taylor:1}, the uniform Lipschitz regularity of $\{V_{h_\ell}'',\ell\geq1\}$, stemming from Assumption~\ref{asp:misc}(\ref{asp:misc:iv}), and Lemma~\ref{lmm:recall}(\ref{lmm:variance-cv}), we get
\begin{equation*}
\mathbb{E}[|r^h_k|^2]
\leq C\,\mathbb{E}[(\xi^h_{k-1}-\xi^h_\star)^4]
\leq C\gamma_k^2,
\quad k\geq1.
\end{equation*}
Combining the two previous inequalities and using a comparison between series and integrals, we conclude
\begin{equation*}
\mathbb{E}\bigg[\bigg(\sum_{\ell=1}^L\frac1{N_\ell}\sum_{k=1}^{N_\ell}r^{h_\ell}_k-r^{h_{\ell-1}}_k\bigg)^2\bigg]
\leq C\bigg(\sum_{\ell=1}^L\gamma_{N_\ell}\bigg)^2.
\end{equation*}

\noindent
\emph{\textbf{Step~5. Study of $\Big\{\sum_{\ell=1}^L\frac1{N_\ell}\sum_{k=1}^{N_\ell}\rho^{h_\ell}_k-\rho^{h_{\ell-1}}_k,L\geq1\Big\}$.}}
\newline\newline
Recall that $\big\{\frac1{N_\ell}\sum_{k=1}^{N_\ell}(\rho^{h_\ell}_k-\rho^{h_{\ell-1}}_k),\ell\geq1\big\}$ are independent and centered and that, at each level $\ell\geq1$, $\{\rho^{h_\ell}_k-\rho^{h_{\ell-1}}_k,k\geq1\}$ are $\{\mathcal{F}^{h_\ell}_k,k\geq1\}$-martingale increments.
Using \eqref{eq:E[|rho|2]<} and a comparison between series and integrals,
\begin{equation*}
\mathbb{E}\bigg[\bigg(\sum_{\ell=1}^L\frac1{N_\ell}\sum_{k=1}^{N_\ell}\rho^{h_\ell}_k-\rho^{h_{\ell-1}}_k\bigg)^2\bigg]
\leq2\sum_{\ell=1}^L\frac1{N_\ell^2}\sum_{k=1}^{N_\ell}\mathbb{E}\big[\big|\rho^{h_\ell}_k\big|^2\big]+\mathbb{E}\big[\big|\rho^{h_{\ell-1}}_k\big|^2\big]
\leq C\sum_{\ell=1}^L\frac{\sqrt{\gamma_{N_\ell}}}{N_\ell}.
\end{equation*}

\noindent
\emph{\textbf{Step~6. Study of $\Big\{\sum_{\ell=1}^L\frac1{N_\ell}\sum_{k=1}^{N_\ell}e^{h_\ell}_k-e^{h_{\ell-1}}_k,L\geq1\Big\}$.}}
\newline\newline
Since $\big\{\frac1{N_\ell}\sum_{k=1}^{N_\ell}(e^{h_\ell}_k-e^{h_{\ell-1}}_k),\ell\geq1\big\}$ are independent and centered and, for any level $\ell\geq1$, $\big\{e^{h_\ell}_k-e^{h_{\ell-1}}_k,k\geq1\big\}$ are $\{\mathcal{F}^{h_\ell}_k,k\geq1\}$-martingale increments,
using \eqref{eq:E[e2]} and Lemma~\ref{lmm:recall}(\ref{lmm:lipschitz}),
\begin{equation*}
\mathbb{E}\bigg[\bigg(\sum_{\ell=1}^L\frac1{N_\ell}\sum_{k=1}^{N_\ell}e^{h_\ell}_k-e^{h_{\ell-1}}_k\bigg)^2\bigg]
=\sum_{\ell=1}^L\frac1{N_\ell^2}\sum_{k=1}^{N_\ell}\mathbb{E}[(e^{h_\ell}_k-e^{h_{\ell-1}}_k)^2]
\leq C\sum_{\ell=1}^L\frac{h_\ell^\frac12}{N_\ell}.
\end{equation*}
\end{proof}

\subsection{Complexity Analysis}
\label{apx:B2}

The global error of the VaR AMLSA scheme \eqref{eq:xi:ML:avg} decomposes into a statistical and a bias errors:
\begin{equation*}
\overline\xi^\text{\tiny\rm ML}_\mathbf{N}-\xi^0_\star=\big(\overline\xi^\text{\tiny\rm ML}_\mathbf{N}-\xi^{h_L}_\star\big)+(\xi^{h_L}_\star-\xi^0_\star).
\end{equation*}

Let $\varepsilon\in(0,1)$ be a prescribed accuracy.
Lemma~\ref{lmm:error}(\ref{lmm:error:weak}) guarantees the bias of estimating $\xi^0_\star$ by $\overline\xi^\text{\tiny\rm ML}_\mathbf{N}$ is of order $h_L$, hence, if $h_0>\varepsilon$, we have to choose the number of levels
\begin{equation*}
L=\bigg\lceil\frac{\ln{(h_0/\varepsilon)}}{\ln{M}}\bigg\rceil
\quad\mbox{ so that }\quad h_L=\frac{h_0}{M^L}\leq\varepsilon.
\end{equation*}

The cost of the VaR AMLSA scheme satisfies
\begin{equation*}
\Cost_\text{\tiny\rm AMLSA}\leq C\sum_{\ell=0}^L\frac{N_\ell}{h_\ell},
\end{equation*}
for some positive constant $C<\infty$ independent of $L$. To retrieve the optimal iterations amounts $\mathbf{N}=\{N_\ell,0\leq\ell\leq L\}$, we test out several leading term candidates for the upper bound \eqref{eq:xi:bar:L2}, of which we retain three compelling cases. We minimize the computational cost while constraining the chosen candidate to an order of $\varepsilon^2$.

We first consider the problem
\begin{equation*}
\begin{array}{lll}
\minimize_{N_0,\dots,N_L>0}
&\sum_{\ell=0}^LN_\ell h_\ell^{-1}\\
\subjectTo
&\sum_{\ell=0}^Lh_\ell^{(1+4\delta_0\wedge3)/4}\sqrt{\gamma_{N_\ell}}=C^{-1}\varepsilon,
\end{array}
\end{equation*}
yielding
\begin{equation*}
N_\ell=
\bigg\lceil C^\frac2\beta\gamma_1^\frac1\beta\varepsilon^{-\frac2\beta}
   \bigg(\sum_{\ell'=0}^Lh_{\ell'}^{-\frac{2\beta-1-4\delta_0\wedge3}{2(2+\beta)}}\bigg)^\frac2\beta h_\ell^\frac{5+4\delta_0\wedge3}{2(2+\beta)}\bigg\rceil,
   \quad 0\leq\ell\leq L.
\end{equation*}
Such a choice adds the constraints
\begin{equation*}
\frac12<\beta\leq\frac23
\quad\mbox{ and }\quad
\delta_0\geq\frac34
\end{equation*}
to guarantee that the upper estimate \eqref{eq:xi:bar:L2} is of order $\varepsilon^2$.
The corresponding complexity satisfies
\begin{equation*}
\Cost^\beta_\text{\rm\tiny AMLSA}
\leq C\varepsilon^{-\frac2\beta}
\stackrel[]{}{\to}\varepsilon^{-3}
\quad\mbox{ as }\quad
\beta\uparrow\frac23.
\end{equation*}

We next look at the problem
\begin{equation*}
\begin{array}{lll}
\minimize_{N_0,\dots,N_L>0}
&\sum_{\ell=0}^LN_\ell h_\ell^{-1}\\
\subjectTo
&\sum_{\ell=0}^LN_\ell^{-1}\gamma_{N_\ell}^{-\frac12}=C^{-1}\varepsilon.
\end{array}
\end{equation*}
The minimizers are given by
\begin{equation*}
N_\ell=
\bigg\lceil C^\frac2{2-\beta}\,\gamma_1^{-\frac1{2-\beta}}\varepsilon^{-\frac2{2-\beta}}
   \bigg(\sum_{\ell'=0}^Lh_{\ell'}^{-\frac{2-\beta}{4-\beta}}\bigg)^\frac2{2-\beta}h_\ell^\frac2{4-\beta}\bigg\rceil,
   \quad 0\leq\ell\leq L.
\end{equation*}
If the additional conditions
\begin{equation*}
\frac23\leq\beta<1
\quad\mbox{ and }\quad
\delta_0>\frac1{12}
\end{equation*}
are met, one attains the order $\varepsilon^2$ for the global $L^2(\mathbb{P})$-error \eqref{eq:xi:bar:L2}, with an optimal cost satisfying
\begin{equation*}
\Cost^\beta_\text{\rm\tiny AMLSA}
\leq C\varepsilon^{-\frac2{2-\beta}-1}
\stackrel[]{}{\to}\varepsilon^{-\frac52}
\quad\mbox{ as }\quad
\beta\downarrow\frac23.
\end{equation*}

The final case worth exploring is
\begin{equation*}
\begin{array}{lll}
\minimize_{N_0,\dots,N_L>0}
&\sum_{\ell=0}^LN_\ell h_\ell^{-1}\\
\subjectTo
&\sum_{\ell=0}^Lh_\ell^{\frac12}N_\ell^{-1}=C^{-1}\varepsilon^2.
\end{array}
\end{equation*}
This yields, under constraints stated in Theorem~\ref{thm:avg:ml:clt},
\begin{equation*}
N_\ell=
\bigg\lceil C^2\varepsilon^{-2}\bigg(\sum_{\ell'=0}^Lh_{\ell'}^{-\frac14}\bigg)h_\ell^\frac34\bigg\rceil,
   \quad 0\leq\ell\leq L.
\end{equation*}
The convergence analysis of this last heuristic is deferred to Theorem~\ref{thm:avg:ml:clt} and the ensuing complexity is discussed in Theorem~\ref{thm:cost:aml}.

\end{document}